%% file: COSTA_formal_arxiv.tex
\documentclass[11pt]{article}
\usepackage[margin=1in]{geometry}
\usepackage[T1]{fontenc}
\usepackage[utf8]{inputenc}
\usepackage{amsmath,amssymb,amsthm,mathtools,bm}
\usepackage{booktabs}
\usepackage{array}
\usepackage{enumitem}
\usepackage{algorithm}
\usepackage{algorithmic}
\usepackage{graphicx}
\usepackage{subcaption}
\usepackage{float}
\usepackage{placeins}
\usepackage{microtype}
\usepackage{xurl}
\usepackage{xcolor}
\usepackage[round,authoryear]{natbib}
\definecolor{mydarkblue}{rgb}{0,0.08,0.45}
\usepackage[colorlinks=true,linkcolor=mydarkblue,citecolor=mydarkblue,urlcolor=mydarkblue]{hyperref}
\usepackage{fontawesome5}

\setlist{nosep,leftmargin=2.0em}


\theoremstyle{plain}
\newtheorem{theorem}{Theorem}[section]
\newtheorem{proposition}[theorem]{Proposition}
\newtheorem{lemma}[theorem]{Lemma}
\newtheorem{assumption}[theorem]{Assumption}
\newtheorem{corollary}[theorem]{Corollary}
\theoremstyle{definition}
\newtheorem{definition}[theorem]{Definition}
\theoremstyle{remark}
\newtheorem{remark}[theorem]{Remark}
\newtheorem{example}[theorem]{Example}

\DeclareMathOperator{\Cov}{Cov}
\DeclareMathOperator{\Var}{Var}
\DeclareMathOperator{\E}{\mathbb{E}}
\DeclareMathOperator{\tr}{tr}
\DeclareMathOperator{\Poisson}{Poisson}
\newcommand{\one}{\mathbf{1}}
\newcommand{\ind}{\mathbb{I}}
\newcommand{\Zcal}{\mathcal{Z}}
\newcommand{\Ocal}{\mathcal{O}}
\newcommand{\Rnorm}{\mathcal{R}}

\newcommand{\comp}{\mathrm{net}}
\newcommand{\tp}{t_p}
\newcommand{\Gp}{\mathcal{G}_p}
\newcommand{\dlat}{d_{\mathrm{lat}}}
\newcommand{\dI}{d_I}
\newcommand{\dT}{d_T}
\newcommand{\lag}{\mathrm{lag}}
\newcommand{\COSTA}{COSTA}
\newcommand{\AI}{A_I}

\title{
{\bfseries COSTA: Covariance-Optimized Design and Causal Inference under Network-Temporal Interference}
}
\author{%
Qianyi Chen\textsuperscript{1} \quad
Bo Li\textsuperscript{1}\thanks{Corresponding author.} \quad
Yongli Qin\textsuperscript{2} \quad
Jinyong Ma\textsuperscript{2}\\[0.35em]
\textsuperscript{1}\,School of Economics and Management, Tsinghua University\\
\textsuperscript{2}\,ByteDance\\[0.25em]
\texttt{cqy22@mails.tsinghua.edu.cn} \quad \texttt{libo@sem.tsinghua.edu.cn}\\
\texttt{yongli.qin@bytedance.com} \quad \texttt{jinyongma@bytedance.com}%
}
\date{}
\begin{document}
\maketitle

\begin{abstract}
Experiments on networks observed over time face network spillovers, temporal carryover, and dependence deliberately introduced by the design. We propose \COSTA{}---\emph{Covariance-Optimized Spatiotemporal Treatment Allocation}---a joint Bernoulli design for unit--time assignments. Under common treatment marginals and a nonnegative linear network--temporal exposure model, Horvitz--Thompson bias for the sustained all-treated versus all-control contrast is exactly the negative expected weight of an assignment cut. A covariance-level variance envelope yields an MSE bound that can be optimized directly over assignment covariance. To scale this design, we introduce a thresholded-Gaussian Kronecker parameterization that mirrors the network and temporal exposure operators while preserving valid Bernoulli marginals.

We next develop inference theory for the joint effects of designed treatment dependence and interference-induced outcome dependence. Canonical correlations between latent blocks generating separated HT contributions supply the coefficients required by graph-$\psi$ central limit and network-HAC theory; a spectral-floor and far-row-mass condition gives a primitive sufficient check. The framework covers sparse, block, Kronecker, locally factored, and other structured covariance sequences satisfying these conditions. Semi-synthetic RetailRocket and MovieLens experiments show substantial default-setting RMSE reductions and well-calibrated model-assisted design-centered intervals across linear, nonlinear, and demand-substitution outcome surfaces.
\end{abstract}

\section{Introduction}
\label{sec:introduction}

Estimating a global average treatment effect (GATE) from one cross-sectional network experiment is intrinsically difficult. The target compares coherent global treatment with coherent global control, whereas independent randomization exposes most units to mixed local treatment environments. Unless interference is tightly restricted or the design creates coherent exposure, an estimator may be precise around a design expectation that remains far from the global contrast \citep{aronowsamii2017,leung2022ani,chen2023ocd}. Repeated experiments provide additional leverage, including variation across treatment proportions \citep{chen2024justrampup}, but temporal replication is not free: assignments may affect later outcomes through persistence, learning, inventory, congestion, or delayed response. Switchback and balanced multi-unit designs address important temporal cases \citep{bojinov2023switchback,missault2025rbsd}; general repeated network experiments must handle spatial spillovers and temporal interference jointly.

We index an experimental cell by $a=(i,b)$, where $i$ is a network or spatial unit and $b$ is a temporal block. The design chooses a common-marginal joint Bernoulli law for $Z=(Z_{ib})$ through its normalized covariance $\Rnorm=\Cov(Z)/\{p(1-p)\}$. Under a linear network--temporal working model, the design-induced bias of a baseline-adjusted Horvitz--Thompson (HT) estimator is an exact exposure-weighted function of $\Rnorm-\one\one^\top$. Positive covariance on exposure-relevant network and lag edges makes observed environments resemble global treatment or control. Positive covariance everywhere, however, approaches a rollout and eliminates experimental contrast. \COSTA{} optimizes this tension using a thresholded-Gaussian covariance model, soft balance and switching constraints, and an optional pairwise shrinkage penalty. The theory targets general network--temporal exposure operators; the semi-synthetic study later instantiates it in item-side multiple-unit switchbacks as one empirically important case.

The inferential problem has two distinct sources of dependence. First, covariance design intentionally correlates treatment indicators, possibly beyond direct exposure edges. Second, network and temporal interference make each outcome contribution a function of a neighborhood of assignments, so contributions may share treatment inputs even under independent Bernoulli randomization. These mechanisms compound under covariance optimization: nonlinear local contributions are evaluated on a correlated assignment field. CLT and HAC theory must therefore control dependence after this field passes through the interference map, rather than treatment covariance or outcome locality alone.

The paper separates three questions:
\[
\mathrm{IC\mbox{-}0}:\quad \frac{\widehat\tau-\mathbb E_D\widehat\tau}{\sqrt V}\Rightarrow N(0,1),\qquad
\mathrm{IC\mbox{-}1}:\quad \frac{|\mathbb E_D\widehat\tau-\tau|}{\sqrt V}\to0,
\]
\[
\mathrm{IC\mbox{-}2}:\quad \widehat V/V\to1.
\]
These requirements are distinct: IC-0 establishes design-centered normality, IC-1 validates the causal GATE as the center, and IC-2 makes studentization feasible from observed data.

Our contributions are fourfold.
\begin{enumerate}[label=(\roman*),leftmargin=1.8em,itemsep=0.20em,topsep=0.25em]
\item Building on the cross-sectional $p=1/2$ covariance-bias calculation of \citet{chen2023ocd}, we derive a weighted common-$p$ bias identity for arbitrary joint Bernoulli network--temporal designs. Its expected-cut representation yields componentwise zero-bias collapse and path and general-graph lower bounds, including the lag-bias cost of required switching.
\item Under the linear working model, a covariance-level variance envelope combines with exact bias to give an optimizable MSE upper bound. A causal-structure-driven thresholded-Gaussian Kronecker parameterization makes the otherwise high-dimensional design scalable by reusing unit factors across blocks and temporal factors across units.
\item We develop CLTs, causal-centering conditions, and variance estimators for the two-layer dependence induced by interference and deliberately correlated treatments. Gaussian block canonical correlation links structured assignment covariance to graph-$\psi$ weak dependence; heterogeneous-mean HAC and design-aware covariance-tail correction provide routes to feasible studentization under structured dependence.
\item We systematically evaluate multiple-unit switchbacks on RetailRocket and MovieLens. Point-estimation results span linear, nonlinear, and demand-substitution outcomes, while a separate inference study reports design-centered and raw-GATE coverage, recentered GATE coverage, interval width, and standard-error calibration.
\end{enumerate}

\subsection{Related work and positioning}
\label{subsec:related_work}

Design-based interference inference uses exposure mappings, inverse weighting, clustering, and local-dependence asymptotics \citep{hudgens2008toward,aronowsamii2017,athey2018exact,liu2014large,liu2016ipw}. Graph-cluster and optimized network designs instead seek coherent exposure or favorable dependence \citep{ugander2013graph,basseairoldi2018model,jagadeesan2020designs}. The cross-sectional $p=1/2$ covariance-MSE formulation of \citet{chen2023ocd} is our direct design precursor; we extend it to common $p$, weighted network--time exposure, cut-based lower bounds, Kronecker scaling, and inference under deliberately dependent assignments. Cluster-scale MSE criteria, approximate-neighborhood asymptotics, and graph-cut surrogates provide complementary bias--variance analyses under spatial or network decay \citep{leung2022aos,leung2022ani,liwager2022random,zhu2025causalgraphcut}.

Repeated-randomization work studies switchbacks, panels, ramp-up sequences, and multiple-randomization designs \citep{bojinov2023switchback,ni2023panel,masoero2024multiple,missault2025rbsd,chen2024justrampup}. These methods show how time creates additional design variation, but network spillovers and temporal carryover must be handled jointly when connected units are repeatedly assigned.

Dependent randomization creates a separate inferential problem even without interference; correlated Gram--Schmidt assignment, for example, requires its own CLT and variance estimator \citep{harshaw2024gramschmidt}. Interference acts downstream by turning each outcome or HT contribution into a function of nearby assignments. General graph-indexed weak-dependence theory then links neighborhood growth and dependence decay to CLTs and HAC consistency \citep{doukhanlouhichi1999,jenishprucha2012,kojevnikov2021network}. We use the graph-$\psi$ results of \citet{kojevnikov2021network} as the probabilistic engine and supply the missing design-side bridge for covariance-engineered Gaussian assignments.

Recent work further clarifies adjacent but distinct problems. Randomized graph clusters and conflict-graph methods improve exposure probabilities or manage cross-cluster spillovers \citep{uganderyin2023rgcr,leung2025crosscluster,kandiros2024conflict}. Finite-sample variance work develops estimable conservative quadratic bounds under interference and complex designs \citep{harshaw2026optimizedvariance}, whereas our IC-2 analysis asks when local or model-assisted estimators are ratio-consistent for the design-centered variance. Network jackknife, exposure-agnostic procedures, null-exposure tests, and alternative estimands provide complementary tools when the sustained global contrast is difficult to recover \citep{parkwager2026jackknife,he2026exposureagnostic,puelz2022graph,choi2024estimands}. COSTA links an exact arbitrary-law bias representation to contribution dependence for deliberately correlated network--time assignments.

A broader inference literature relaxes knowledge of the interference graph or replaces an exact exposure map by bounded-neighborhood, spatial, policy-learning, message-passing, or design-based testing structures \citep{ma2021gnn,shirani2024message,wang2025spatialunknown,lu2026bounded,park2026spillovertesting}. These approaches target uncertainty about the interference mechanism itself. Our main problem is different but complementary: the network--temporal exposure architecture is specified for design, while the treatment law is deliberately correlated and must be carried through to the dependence of the resulting contribution field.

On the design side, staggered rollouts can identify effects under interference without a known network, and recent clustering or robust covariance criteria explicitly balance interference, homophily, and design misspecification \citep{cortez2022staggered,viviano2026causalclustering,thiyageswaran2026robustcovariance}. COSTA instead starts from a target-specific covariance-bias identity, jointly optimizes network and temporal exposure, and asks which structured covariance sequences support valid design-centered and causal inference.

\paragraph{Organization.}
Sections~\ref{sec:setting}--\ref{sec:param_opt} define the experiment, derive the finite-sample design identities, and construct the covariance parameterization. Section~\ref{sec:inference} develops inference for covariance-structured assignments and separates centered normality, causal centering, and feasible variance estimation. Section~\ref{sec:experiments} reports point-estimation and inference evidence. Proofs, implementation details, and secondary extensions are collected after the main article.

\section{Setting and Notation}
\label{sec:setting}

\subsection{Experimental environment}

We consider experiments with $N$ eligible units indexed by $i\in[N]=\{1,\ldots,N\}$ and $B$ temporal blocks indexed by $b\in[B]=\{1,\ldots,B\}$. A unit may be a person, item, seller, route, region, creator, school, or any other entity on which treatment can be assigned. The experimenter observes or constructs a pre-treatment network that summarizes likely cross-unit spillover. Time is represented by design blocks rather than by an asymptotically long time series. The formulation allows any finite block horizon, and the asymptotic theory states explicitly when $B_N$ may vary with $N$.

Let $\Zcal=[N]\times[B]$ be the set of cells whose treatment assignments are randomized. We reserve $Z$ for the random treatment array and write $a=(i,b)$ and $c=(j,r)$ for generic unit--block cells. The total number of assignment cells is $L=NB$. With one-lag temporal carryover, the first block is used only to generate lagged treatment exposure for block 2. Outcomes are therefore evaluated on the post-burn-in set $\Ocal=[N]\times\{2,\ldots,B\}$, with $M=|\Ocal|=N(B-1)$. Block 1 is still randomized, but it is not included in the estimand; this avoids defining a block-1 outcome in terms of an unrandomized pre-experiment state $Z_{i0}$. If temporal carryover is absent, one may instead evaluate all randomized cells and set $\Ocal=\Zcal$.

\subsection{Treatment probability and covariance notation}

For each assignment cell,
\[
    Z_a=Z_{ib}\in\{0,1\},\qquad \E[Z_a]=p,
\]
where $p\in(0,1)$ is the planned treatment share. Write
\[
    q=1-p,
    \qquad
    v=pq.
\]
The raw treatment covariance matrix is $\Sigma=\Cov(Z)\in\mathbb R^{L\times L}$. We normalize covariance entrywise as
\[
    \Rnorm_{ac}=\frac{\Cov(Z_a,Z_c)}{p(1-p)}=\frac{\Sigma_{ac}}{v}.
\]
Thus $\Rnorm_{aa}=1$. When $p=1/2$, $\Rnorm$ is the covariance of the centered sign treatment $2Z_a-1$; when $p\ne1/2$, it is the Bernoulli-$p$ normalized covariance. This normalized covariance is the design object that enters the bias formulas below.

The main text works directly with unit--block potential outcomes. Appendix~\ref{app:primitive_observations} records how primitive user--item or opportunity--unit observations can be aggregated into this unit--block layer in the recommendation-system validation setting.

\subsection{Network-time exposure graph}

Let $\AI=(A_{ij})_{i,j\in[N]}$ be a weighted adjacency matrix with $A_{ij}\ge0$ and $A_{ii}=0$ by default. The graph may be constructed from social links, spatial adjacency, mobility flows, demand overlap, semantic similarity, collaborative-filtering embeddings, category substitution, query overlap, or other pre-treatment information. In recommendation experiments we call this a competition graph, but the method only requires an exposure graph.

The network-time design uses two exposure matrices indexed by assignment cells. Rows outside $\Ocal$ are set to zero. The contemporaneous network-spillover matrix is
\begin{equation}
    H^{\comp}_{(i,b),(j,r)}=A_{ij}\ind\{b\ge2,\ r=b,\ j\ne i\}.
    \label{eq:hcomp}
\end{equation}
The one-lag temporal matrix is
\begin{equation}
    H^{\lag}_{(i,b),(j,r)}=\ind\{b\ge2,\ j=i,\ r=b-1\}.
    \label{eq:hlag}
\end{equation}
Thus neighboring units in the same block may affect each other, and the previous assignment of the same unit may affect its current outcome. The symmetrized exposure graph is used only for locality and HAC calculations: it is a network-time dependency graph for asymptotic and variance-estimation arguments, not an application-specific semantic restriction.
\section{Potential Outcomes, Estimand, and Estimator}
\label{sec:model_estimator}

\subsection{Working potential outcome model}

For $a=(i,b)\in\Ocal$, let the structural potential-outcome schedule be
\begin{equation}
    Y_{ib}^{\star}(z)=\mu_{ib}+\beta_{ib}z_{ib}
    +\gamma\sum_{j\ne i}A_{ij}z_{jb}
    +\eta z_{i,b-1},
    \label{eq:item_model}
\end{equation}
and let the observed outcome be
\begin{equation}
    Y_{ib}^{\mathrm{obs}}=Y_{ib}^{\star}(Z)+\varepsilon_{ib}.
    \label{eq:observed_outcome_noise}
\end{equation}
The disturbance $\varepsilon_{ib}$ is not part of the causal schedule or the GATE. In a finite-population randomization analysis it may be fixed before assignment, in which case all results are conditional on its realization. If it is treated as random measurement noise, every bias statement below assumes at least $\E\{\psi_{ib}(Z_{ib})\varepsilon_{ib}\}=0$; the convenient sufficient condition is $\E(\varepsilon_{ib}\mid Z)=0$. Variance and primitive dependence results impose their additional noise conditions explicitly.

The structural equation is a design model rather than a claim that real platform outcomes are exactly linear. Its role is to expose which entries of the assignment covariance matrix determine design bias and to motivate the envelope-driven variance surrogate used in the optimization objective. Here $\mu_{ib}$ is the structural baseline outcome, $\beta_{ib}$ is the direct treatment effect, $\gamma$ is contemporaneous network spillover interference, and $\eta$ is one-lag own-unit temporal carryover. The temporal term is intentionally minimal: previous treatment affects the next outcome for the same unit only. Section~\ref{sec:experiments} later evaluates the resulting design under this model and two misspecified data-generating processes.

Equivalently, the structural schedule is
\begin{equation}
    Y_a^{\star}(z)=\mu_a+\beta_a z_a+
    \gamma\sum_cH^{\comp}_{ac}z_c+
    \eta\sum_cH^{\lag}_{ac}z_c,
    \qquad a\in\Ocal.
    \label{eq:two_H_model}
\end{equation}
For derivations, we use the generic one-component shorthand
\begin{equation}
    Y_a^{\star}(z)=\mu_a+\beta_a z_a+\theta\sum_cH_{ac}z_c,
    \qquad a\in\Ocal,
    \label{eq:single_H_model}
\end{equation}
where $H\ge0$ has zero rows outside $\Ocal$. Equation~\eqref{eq:single_H_model} only rescales the exposure component. If $\gamma>0$ and $\eta\ge0$, set
\begin{equation}
    \theta=\gamma,\qquad
    \omega_{\mathrm{rel}}=\eta/\gamma,\qquad
    H=H^{\comp}+\omega_{\mathrm{rel}}H^{\lag}.
    \label{eq:relative_reweighting}
\end{equation}
Then $\theta H=\gamma H^{\comp}+\eta H^{\lag}$, so the one-component notation exactly reproduces \eqref{eq:two_H_model}. More generally, for any chosen scale $\theta>0$ one can write $H=(\gamma/\theta)H^{\comp}+(\eta/\theta)H^{\lag}$ whenever the two coefficients are nonnegative. Thus $\omega_{\mathrm{rel}}$ is not a new causal parameter; it is the relative carryover-to-network effect size used to put the temporal and network exposure operators on a common scale. Because $\gamma$ and $\eta$ are rarely known at design time, the COSTA objective below keeps the network and temporal bias proxies separate and lets their weights be tuned. If either exposure effect can be negative, one should keep $H^{\comp}$ and $H^{\lag}$ separate, decompose the signed component into positive and negative parts before applying a nonnegative envelope, or define the bias proxy directly for the signed operator.

\begin{remark}[What the linear working model does and does not claim]
\label{rem:misspecified_outcomes}
The covariance identity is exact for the linear exposure component used in \eqref{eq:single_H_model}. If the true potential outcome surface can be decomposed as
\[
    Y_a^{\star}(z)=\mu_a+\beta_a z_a+\theta H_a^\top z+r_a(z),
\]
then COSTA controls the design bias contributed by \(\theta H_a^\top z\), while the remaining bias contains the residual term involving \(r_a(z)\). The nonlinear and demand-substitution simulations in Section~\ref{sec:experiments} therefore assess robustness to residual misspecification rather than extend the linear bias result.
\end{remark}

\subsection{Lag-adjusted GATE}

The target estimand is the lag-adjusted global average treatment effect over post-burn-in cells:
\begin{equation}
    \tau=\frac{1}{M}\sum_{a\in\Ocal}\{Y_a^{\star}(\one)-Y_a^{\star}(\bm 0)\}.
    \label{eq:estimand}
\end{equation}
Under \eqref{eq:two_H_model},
\begin{equation}
    \tau=\frac{1}{M}\sum_{a\in\Ocal}\left(
    \beta_a+\gamma\sum_cH^{\comp}_{ac}+\eta\sum_cH^{\lag}_{ac}\right).
    \label{eq:estimand_twoH}
\end{equation}
For $B=4$, this is an average over blocks $2,3,4$. The estimand is finite-period and does not require $B\to\infty$.

For weighted network outcomes, we also use a traffic-weighted version in experiments. Given pre-treatment weights $w_a=w_{ib}\ge0$, for example $w_{ib}=m_{ib}$, define
\[
    \tau_w=\frac{1}{W}\sum_{a\in\Ocal}w_a\{Y_a^{\star}(\one)-Y_a^{\star}(\bm 0)\},\qquad W=\sum_{a\in\Ocal}w_a>0.
\]
The weighted formulas are stated formally below. This avoids treating the target used in the empirical analysis as a merely informal extension of the unweighted theorem.

For the design identities, define the common-marginal HT score and baseline-adjusted estimator
\begin{equation}
    \psi_a(Z_a)=\frac{Z_a}{p}-\frac{1-Z_a}{1-p}=\frac{Z_a-p}{p(1-p)}=\frac{Z_a-p}{v},
    \qquad
    \widehat\tau_{\tilde\alpha}=\frac{1}{M}\sum_{a\in\Ocal}\psi_a(Z_a)\{Y_a^{\mathrm{obs}}-\tilde\alpha_a\}.
    \label{eq:psi_estimator_main}
\end{equation}
The fixed adjustment $\tilde\alpha_a$ is chosen before assignment and therefore does not affect the HT bias identity; additional details, including the traffic-weighted version and basic HT moments, are collected in Appendix~\ref{app:additional_discussion}.

\subsection{Exact bias, assignment-polynomial representation, and covariance-level variance bound}

The first result answers a finite-sample design question: which features of a common-marginal assignment law determine bias for the sustained all-treated versus all-control contrast? The short theorem isolates the exact bias identity. The following paragraph then rewrites the assignment-measurable statistic as a linear-plus-quadratic polynomial, explaining why covariance determines exact bias but not exact variance. Theorem~\ref{thm:variance_envelope} deliberately trades sharpness for a covariance-only upper envelope that can guide design. Proofs are in Appendix~\ref{app:derivations}.

For this subsection, consider the structural schedule
\begin{equation}
    Y_a^{\star}(z)=\mu_a+\beta_a z_a+\theta\sum_cH_{ac}z_c,
    \qquad a\in\Ocal,
    \label{eq:thm_one_component_model}
\end{equation}
where $H\ge0$ has zero rows outside $\Ocal$, and let $Y_a^{\rm obs}=Y_a^{\star}(Z)+\varepsilon_a$. If measurement noise is random rather than fixed under the randomization law, impose
\begin{equation}
    \E\{\psi_a(Z_a)\varepsilon_a\}=0,
    \qquad a\in\Ocal.
    \label{eq:noise_ht_orthogonality}
\end{equation}
This follows, for example, from $\E(\varepsilon_a\mid Z)=0$. For any normalized covariance $\Rnorm$, write
\begin{equation}
    b_H(\Rnorm):=\langle H,\Rnorm-\one\one^\top\rangle
    =\sum_{a,c}H_{ac}(\Rnorm_{ac}-1).
    \label{eq:thm_bH}
\end{equation}

For the remainder of this subsection, assume the assignment law has common marginal $p\in(0,1)$, write $q=1-p$, $v=pq$, and set $\Rnorm=\Cov(Z)/v$. Take $\tilde\alpha_a$ to be fixed before assignment, and suppose the disturbances are fixed under randomization or satisfy \eqref{eq:noise_ht_orthogonality}.

\begin{theorem}[HT bias under common marginals]
\label{thm:main}
\begin{equation}
    \E[\widehat\tau_{\tilde\alpha}]-\tau
    =\frac{\theta}{M}b_H(\Rnorm).
    \label{eq:bias_main}
\end{equation}
\end{theorem}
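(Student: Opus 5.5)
The plan is a direct term-by-term expectation computation. I will expand $\widehat\tau_{\tilde\alpha}$ using the structural schedule \eqref{eq:thm_one_component_model} and compute the expectation of each piece. The only moment facts needed are $\E[\psi_a(Z_a)]=0$ and $\E[\psi_a(Z_a)Z_c]=\Cov(Z_a,Z_c)/v=\Rnorm_{ac}$. The second follows from $\psi_a=(Z_a-p)/v$ together with the common marginal $p$.

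First, I write
\[
M\widehat\tau_{\tilde\alpha}=\sum_{a\in\Ocal}\psi_a(Z_a)\Big\{\mu_a-\tilde\alpha_a+\beta_aZ_a+\theta\sum_cH_{ac}Z_c+\varepsilon_a\Big\}.
\]
The fixed terms $\mu_a-\tilde\alpha_a$ contribute $(\mu_a-\tilde\alpha_a)\E\psi_a=0$. The noise term has zero expectation in either regime: if $\varepsilon_a$ is random, this is \eqref{eq:noise_ht_orthogonality}; if it is fixed, it is again a constant times $\E\psi_a=0$.

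Next, the direct term gives $\beta_a\E[\psi_aZ_a]=\beta_a\Rnorm_{aa}=\beta_a$. The exposure term gives $\theta\sum_cH_{ac}\Rnorm_{ac}$. Hence
\[
\E[\widehat\tau_{\tilde\alpha}]=\frac1M\sum_{a\in\Ocal}\Big(\beta_a+\theta\sum_cH_{ac}\Rnorm_{ac}\Big).
\]
By \eqref{eq:estimand}, $\tau=\frac1M\sum_{a\in\Ocal}(\beta_a+\theta\sum_cH_{ac})$, since $Y_a^\star(\one)-Y_a^\star(\bm0)=\beta_a+\theta\sum_cH_{ac}$. Subtracting the two expressions gives $\frac{\theta}{M}\sum_{a\in\Ocal}\sum_cH_{ac}(\Rnorm_{ac}-1)$.

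Finally, $H$ has zero rows outside $\Ocal$, so this double sum extends to all $a\in\Zcal$. It therefore equals $b_H(\Rnorm)$ as defined in \eqref{eq:thm_bH}, which gives \eqref{eq:bias_main}.

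There is no real obstacle here. The only points needing care are the bookkeeping that column indices $c$ range over all of $\Zcal$ (lagged block-1 cells included) while rows are restricted to $\Ocal$, and the treatment of the noise term in both the fixed and random regimes.
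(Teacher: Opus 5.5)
Your proposal is correct and follows essentially the same route as the paper's proof in Appendix~\ref{subsec:main_bias_proof}. Both expand the adjusted outcome, remove the baseline, adjustment, and noise terms via $\E\psi_a=0$ (or the orthogonality condition), apply $\E[\psi_aZ_c]=\Rnorm_{ac}$, and subtract the estimand; your explicit remark that the zero rows of $H$ outside $\Ocal$ let the double sum equal $b_H(\Rnorm)$ is a step the paper leaves implicit.
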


\paragraph{Meaning.}
The theorem gives an exact finite-sample answer: under common marginals, HT bias for the sustained global contrast is determined by exposure-weighted assignment covariance. It applies to any joint Bernoulli law satisfying the stated conditions and does not require the Gaussian parameterization introduced later. \emph{Proof.} See Appendix~\ref{subsec:main_bias_proof}.

\paragraph{Exact assignment-polynomial representation.}
Extend $\beta$, $\mu$, and $\tilde\alpha$ by zero outside $\Ocal$. Let $(\widehat\tau_{\tilde\alpha})^{(0)}$ retain fixed disturbances or set random measurement noise to zero, and define $r_a^{\tilde\alpha}=\mu_a+\varepsilon_a-\tilde\alpha_a$ in the fixed-error regime and $r_a^{\tilde\alpha}=\mu_a-\tilde\alpha_a$ otherwise. With $\deg_H=H^\top\one$ and
\begin{equation}
    g_p^{\tilde\alpha}=\frac{r^{\tilde\alpha}}{v}+\frac{\beta}{p}-\frac{\theta\deg_H}{q},
    \label{eq:gp_generic_adjustment}
\end{equation}
there is a constant $C$, nonrandom under the assignment law, such that
\begin{equation}
    M(\widehat\tau_{\tilde\alpha})^{(0)}
    =(g_p^{\tilde\alpha})^\top Z+\frac{\theta}{v}Z^\top HZ+C.
    \label{eq:exact_estimator_algebra}
\end{equation}
Consequently,
\begin{equation}
    \Var\{(\widehat\tau_{\tilde\alpha})^{(0)}\}
    =M^{-2}\Var\left\{(g_p^{\tilde\alpha})^\top Z+\frac{\theta}{v}Z^\top HZ\right\}.
    \label{eq:exact_variance_main}
\end{equation}
The statistic is quadratic in the assignments, so its exact variance generally involves moments through order four. The next theorem therefore uses a covariance-only upper bound rather than another equality. \emph{Derivation.} See Appendix~\ref{subsec:main_polynomial_proof}.

\begin{remark}[Necessity of HT--noise orthogonality]
\label{rem:noise_orthogonality_necessity}
Without fixed-error conditioning or \eqref{eq:noise_ht_orthogonality}, the exact formula is
\begin{equation}
    \E[\widehat\tau_{\tilde\alpha}]-\tau
    =\frac{\theta}{M}b_H(\Rnorm)
      +\frac1M\sum_{a\in\Ocal}\E\{\psi_a(Z_a)\varepsilon_a\}.
    \label{eq:bias_with_endogenous_noise}
\end{equation}
The extra term need not be small. For example, if
$\varepsilon_a=Z_a-p$, then
$\E\{\psi_a(Z_a)\varepsilon_a\}=1$. Thus the noise condition is part of the theorem, not merely a proof convenience.
\end{remark}

\begin{corollary}[Weighted HT bias identity]
\label{cor:weighted_ht_identity}
Let $w_a\ge0$ be fixed, $W=\sum_{a\in\Ocal}w_a>0$, and
\[
    \widehat\tau_{w,\tilde\alpha}^{HT}
    =W^{-1}\sum_{a\in\Ocal}w_a\psi_a(Z_a)
      \{Y_a^{\rm obs}-\tilde\alpha_a\}.
\]
Under Theorem~\ref{thm:main},
\begin{equation}
    \E\widehat\tau_{w,\tilde\alpha}^{HT}-\tau_w
    =\frac{\theta}{W}\sum_{a,c}w_aH_{ac}(\Rnorm_{ac}-1).
    \label{eq:weighted_exact_bias}
\end{equation}
\end{corollary}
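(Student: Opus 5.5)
The plan is to repeat the cellwise computation behind Theorem~\ref{thm:main}, carrying the fixed weights $w_a$ along. The weights are nonrandom under the assignment law, so expectation passes through them. Linearity then reduces everything to moments of $\psi_a(Z_a)$ multiplied by $Z_a$ or $Z_c$.

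For each $a\in\Ocal$, split
\[
\psi_a(Z_a)\{Y_a^{\rm obs}-\tilde\alpha_a\}=\psi_a(Z_a)(\mu_a-\tilde\alpha_a)+\beta_a\psi_a(Z_a)Z_a+\theta\sum_cH_{ac}\psi_a(Z_a)Z_c+\psi_a(Z_a)\varepsilon_a .
\]
The four terms are handled as follows.
\begin{itemize}
\item \emph{Baseline term.} Since $\E\psi_a(Z_a)=0$ and $\tilde\alpha_a$ is fixed, $\E\{\psi_a(Z_a)(\mu_a-\tilde\alpha_a)\}=0$.
\item \emph{Direct-effect term.} Using $Z_a^2=Z_a$, we get $\E\{\psi_a(Z_a)Z_a\}=(p-p^2)/v=1$.
\item \emph{Exposure term.} Since $\psi_a(Z_a)=(Z_a-p)/v$, we have $\E\{\psi_a(Z_a)Z_c\}=\Cov(Z_a,Z_c)/v=\Rnorm_{ac}$.
\item \emph{Noise term.} This vanishes by \eqref{eq:noise_ht_orthogonality}. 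In the fixed-error regime it is a constant absorbed into the zero-mean baseline term.
\end{itemize}
Hence each cell contributes
\[
\E\{\psi_a(Z_a)(Y_a^{\rm obs}-\tilde\alpha_a)\}=\beta_a+\theta\sum_cH_{ac}\Rnorm_{ac}.
\]

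Under \eqref{eq:thm_one_component_model}, each cell's contribution to the target is
\[
Y_a^\star(\one)-Y_a^\star(\bm0)=\beta_a+\theta\sum_cH_{ac}.
\]
Multiply both the expectation and the target by $w_a/W$, sum over $a\in\Ocal$, and subtract. The $\beta_a$ terms cancel, and what remains is $\frac{\theta}{W}\sum_{a}w_a\sum_cH_{ac}(\Rnorm_{ac}-1)$. Rows of $H$ outside $\Ocal$ are zero, so extending the sum over $a$ to all cells changes nothing. This gives \eqref{eq:weighted_exact_bias}.

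There is no real obstacle here; the only care needed is the bookkeeping. The cross moment $\E\{\psi_a(Z_a)Z_c\}$ must equal $\Rnorm_{ac}$ for $c\neq a$; this relies on the common marginal $p$, because the score uses the same $p$ as $Z_a$'s marginal. One should also check that the diagonal case $c=a$ is consistent: $H_{aa}$ need not be zero under the generic shorthand, but $\Rnorm_{aa}=1$ agrees with the direct computation $\E\{\psi_a(Z_a)Z_a\}=1$. As a sanity check, setting $w_a\equiv1$ gives $W=M$ and recovers Theorem~\ref{thm:main}.
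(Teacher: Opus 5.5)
Your proposal is correct and matches the paper's proof: carry the fixed weights $w_a$ through the cellwise HT identities $\E\psi_a(Z_a)=0$ and $\E\{\psi_a(Z_a)Z_c\}=\Rnorm_{ac}$, drop the baseline and noise terms via the zero-mean score and \eqref{eq:noise_ht_orthogonality}, then subtract the weighted global contrast $\tau_w$. Your extra checks (the diagonal case $c=a$ and the reduction to Theorem~\ref{thm:main} when $w_a\equiv1$) are consistent with this argument.
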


\paragraph{Meaning.}
Target weights determine which exposure edges matter most for bias. The identity therefore supports a target-weighted covariance design; it does not imply that an unweighted implementation already optimizes a weighted estimand. \emph{Proof.} See Appendix~\ref{subsec:weighted_derivations}.

For subsequent precision analysis, let $D_w=\operatorname{diag}(w)$ and define
\begin{equation}
    g_{p,w}^{\tilde\alpha}
    =\frac{w\odot r^{\tilde\alpha}}{v}
      +\frac{w\odot\beta}{p}
      -\frac{\theta H^\top w}{q}.
    \label{eq:weighted_g_vector}
\end{equation}
Appendix~\ref{subsec:weighted_derivations} gives the exact weighted assignment-polynomial representation and the independent-noise variance term. These formulas are used by the weighted covariance envelope in Corollary~\ref{cor:weighted_variance_envelope}.

The exact weighted identity extends to heterogeneous signed exposure operators; the full statement and algebra are given in Appendix~\ref{app:heterogeneous_robust_design_extensions}. The main text focuses on the nonnegative exposure case because only there does the bias become a one-sided expected-cut cost. The exact representation is not yet an optimization criterion, because its quadratic assignment term brings in third- and fourth-order design moments. The next result deliberately sacrifices sharpness for a covariance-only upper envelope.

\begin{theorem}[Covariance-level variance envelope]
\label{thm:variance_envelope}
Under the setup preceding Theorem~\ref{thm:main}, let $s\in\mathbb{R}^L_+$ be a design-chosen nonnegative envelope satisfying $s\ge \deg_H$ componentwise. For fixed disturbances, use the fixed-error definition of $r^{\tilde\alpha}$ above; for random measurement noise, apply the envelope to the no-noise component. If the envelope condition
\begin{equation}
    |g_{p,c}^{\tilde\alpha}|\le \omega_g s_c,
    \qquad c=1,\ldots,L,
    \label{eq:comparability}
\end{equation}
holds for some finite $\omega_g\ge0$, and, in the random-noise regime, if $\varepsilon_a$ are conditionally mean zero, independent of the assignment and mutually independent, with variances $\sigma_a^2$, then
\begin{equation}
    \Var(\widehat\tau_{\tilde\alpha})
    \le
    \frac{2}{M^2}
    \left(\omega_g^2+\frac{\theta^2}{p^2q^2}\right)
    s^\top\left(v\Rnorm+p^2\one\one^\top\right)s
    +\frac{1}{M^2v}\sum_{a\in\Ocal}\sigma_a^2.
    \label{eq:variance_bound_main}
\end{equation}
If the analysis is conditional on a fixed disturbance realization, the final random-noise term is absent and that realization is already included in $r^{\tilde\alpha}$. Therefore the MSE is bounded by the squared bias in \eqref{eq:bias_main} plus \eqref{eq:variance_bound_main}.
\end{theorem}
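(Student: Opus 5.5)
The proof will start from the exact assignment-polynomial representation \eqref{eq:exact_estimator_algebra}. It bounds the variance of the linear and quadratic parts separately by their raw second moments, using the fact that $Z\ge0$ entrywise. The key observation is that the matrix appearing in \eqref{eq:variance_bound_main} is exactly the second-moment matrix of the assignment vector:
\[
\E[ZZ^\top]=\Cov(Z)+p^2\one\one^\top=v\Rnorm+p^2\one\one^\top .
\]
This matrix is entrywise nonnegative because $Z_aZ_c\in\{0,1\}$. The constant $C$ in \eqref{eq:exact_estimator_algebra} does not affect the variance.

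\emph{Step 1 (no-noise component).} Write $X=(g_p^{\tilde\alpha})^\top Z$ and $Q=\frac{\theta}{v}Z^\top HZ$. In the fixed-error regime the disturbances are absorbed into $r^{\tilde\alpha}$, so this is the whole statistic. By \eqref{eq:exact_variance_main} and $\Var(X+Q)\le2\Var(X)+2\Var(Q)$, it suffices to bound $\Var(X)$ and $\Var(Q)$.

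For the linear part, $\Var(X)\le\E[X^2]$. Since $Z\ge0$, we have $|X|\le\sum_c|g_{p,c}^{\tilde\alpha}|Z_c\le\omega_g\,s^\top Z$ by \eqref{eq:comparability}. Hence
\[
\Var(X)\le\omega_g^2\,\E[(s^\top Z)^2]=\omega_g^2\,s^\top(v\Rnorm+p^2\one\one^\top)s .
\]
For the quadratic part, $H\ge0$ and $Z_a\in\{0,1\}$ give
\[
0\le Z^\top HZ=\sum_cZ_c\sum_aH_{ac}Z_a\le\sum_c(\deg_H)_cZ_c\le s^\top Z .
\]
Therefore
\[
\Var(Q)\le\frac{\theta^2}{v^2}\E[(s^\top Z)^2]=\frac{\theta^2}{p^2q^2}\,s^\top(v\Rnorm+p^2\one\one^\top)s .
\]
Adding the two bounds, multiplying by $2$, and dividing by $M^2$ gives the first term of \eqref{eq:variance_bound_main}.

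\emph{Step 2 (random noise).} In the random-noise regime, write $\widehat\tau_{\tilde\alpha}=(\widehat\tau_{\tilde\alpha})^{(0)}+M^{-1}\sum_{a\in\Ocal}\psi_a(Z_a)\varepsilon_a$. Conditional mean zero and independence from $Z$ make the cross-covariance with the $Z$-measurable part vanish, by conditioning on $Z$. Mutual independence of the $\varepsilon_a$ kills off-diagonal terms. The noise variance is therefore $M^{-2}\sum_a\E[\psi_a^2]\sigma_a^2$. Since $\E[\psi_a(Z_a)^2]=\Var(Z_a)/v^2=1/v$, this equals the last term of \eqref{eq:variance_bound_main}.

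\emph{Step 3 (MSE).} The MSE statement follows from the decomposition $\mathrm{MSE}=(\text{bias})^2+\Var$ together with Theorem~\ref{thm:main}.

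The main obstacle is bookkeeping rather than analysis. One must confirm that $C$ in \eqref{eq:exact_estimator_algebra} is genuinely nonrandom, and that the envelope is applied to the correct $g_p^{\tilde\alpha}$ in each noise regime. The fixed-error regime is the more delicate one, since there the $\varepsilon_a$ enter $g$ and the envelope condition must already include them. The positivity argument itself is short once $\E[ZZ^\top]$ is recognized as the matrix appearing in the bound.
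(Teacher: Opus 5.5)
Your proof is correct and follows the paper's argument. You dominate both $|(g_p^{\tilde\alpha})^\top Z|$ and $Z^\top HZ$ by multiples of $s^\top Z$ using $Z\ge0$, $H\ge0$ and $Z_a\le1$, pass to the second-moment matrix $\E[ZZ^\top]=v\Rnorm+p^2\one\one^\top$, and add the independent-noise term $\E[\psi_a^2]=1/v$. The only cosmetic difference is that you split first via $\Var(X+Q)\le2\Var(X)+2\Var(Q)$, whereas the paper applies $(x+y)^2\le2x^2+2y^2$ pointwise and then uses $\Var(W)\le\E[W^2]$; both give the same constant.
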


\paragraph{Meaning.}
\emph{Question:} how can design use only second-order information when exact variance uses higher moments? \emph{Role:} this theorem supplies a conservative covariance-level upper envelope under explicit dominance and comparability conditions. \emph{Boundary:} it is neither an exact variance formula nor automatically numerically sharp; an implemented loss must be checked separately against the quadratic direction appearing here. \emph{Proof.} See Appendix~\ref{subsec:variance_envelope_proof}.

\begin{corollary}[Covariance-optimizable MSE bound]
\label{cor:mse_covariance_bound}
Under Theorems~\ref{thm:main} and~\ref{thm:variance_envelope},
\begin{equation}
\operatorname{MSE}(\widehat\tau_{\tilde\alpha})
\le \frac{\theta^2}{M^2}b_H(\Rnorm)^2
+\frac{2}{M^2}\left(\omega_g^2+\frac{\theta^2}{p^2q^2}\right)
s^\top(v\Rnorm+p^2\one\one^\top)s
+\frac{1}{M^2v}\sum_{a\in\Ocal}\sigma_a^2,
\label{eq:mse_covariance_bound}
\end{equation}
with the final term absent when disturbances are fixed under randomization.
\end{corollary}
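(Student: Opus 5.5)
The plan is the usual bias--variance decomposition, with both pieces supplied by the two preceding theorems. First, write
\[
\operatorname{MSE}(\widehat\tau_{\tilde\alpha})=\E\{(\widehat\tau_{\tilde\alpha}-\tau)^2\}=\{\E[\widehat\tau_{\tilde\alpha}]-\tau\}^2+\Var(\widehat\tau_{\tilde\alpha}).
\]
The identity needs only finite second moments, and these hold automatically because $Z$ takes finitely many values. In the random-noise regime it also uses the assumed finite variances $\sigma_a^2$.

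Second, Theorem~\ref{thm:main} applies because the setup preceding it is in force. It gives $\E[\widehat\tau_{\tilde\alpha}]-\tau=\theta b_H(\Rnorm)/M$. Squaring produces the first term $\theta^2 b_H(\Rnorm)^2/M^2$ exactly.

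Third, substitute the upper bound \eqref{eq:variance_bound_main} from Theorem~\ref{thm:variance_envelope} for the variance. Its hypotheses are the envelope $s\ge\deg_H$, the comparability condition \eqref{eq:comparability}, and, under random noise, the independence and conditional-mean-zero conditions. All of these are assumed through ``Under Theorems~\ref{thm:main} and~\ref{thm:variance_envelope}''. Since the squared bias is an equality and the variance bound is an inequality, adding them gives \eqref{eq:mse_covariance_bound}.

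In the fixed-disturbance regime, the final term in Theorem~\ref{thm:variance_envelope} is absent and the realized disturbances sit inside $r^{\tilde\alpha}$. The same addition then gives the bound without the noise term.

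The only point that needs checking is consistency between the two noise regimes. The bias identity must hold in the random-noise regime, and it does: the noise assumptions of Theorem~\ref{thm:variance_envelope}, namely $\E(\varepsilon_a\mid Z)=0$ with independence from $Z$, imply the orthogonality condition \eqref{eq:noise_ht_orthogonality}. So Theorem~\ref{thm:main} applies with no extra noise term. Both theorems therefore refer to the same estimator and the same expectation, and no cross term arises in the decomposition.
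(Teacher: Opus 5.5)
Your proposal is correct and follows the paper's own argument: the corollary is the decomposition $\operatorname{MSE}=(\E\widehat\tau_{\tilde\alpha}-\tau)^2+\Var(\widehat\tau_{\tilde\alpha})$, with the bias given exactly by Theorem~\ref{thm:main} and the variance bounded by \eqref{eq:variance_bound_main}, which is exactly what the paper states in the closing sentence of Theorem~\ref{thm:variance_envelope}. Your check that conditionally mean-zero noise implies \eqref{eq:noise_ht_orthogonality} is harmless but not needed, since that condition is already part of the standing setup preceding Theorem~\ref{thm:main}.
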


Once the working-model scales and envelope direction are fixed, the right-hand side depends on the randomization law only through $\Rnorm$. This is the formal basis for covariance optimization: the exact bias and a conservative precision envelope are expressed in the same second-order design variable. The weighted analogue replaces $H^\top\one$ by $H^\top w$ and requires the positive rescaling stated in Corollary~\ref{cor:weighted_variance_envelope}.

The exact variance expression contains linear, quadratic, cubic, and fourth-order functions of the assignment vector. COSTA does not try to optimize those higher-order moments directly. Instead, the envelope bound motivates a second-order, envelope-driven variance surrogate that depends only on the design covariance matrix. This is the key reduction: after the working-model bias identity identifies which entries of $\Rnorm$ reduce exposure mismatch, the surrogate indicates which covariance directions consume randomization contrast. When $\theta>0$ and $H\ge0$, the bias term is necessarily nonpositive, not merely typically negative: every feasible common-marginal Bernoulli law satisfies $\Rnorm_{ac}\le1$. Section~\ref{subsec:disagreement_geometry} strengthens this observation by representing the bias as an expected weighted cut. Increasing covariance on exposure edges reduces that negative bias, but it can increase treatment-load variance; negative covariance can improve balance and reduce the envelope-driven surrogate, but it worsens exposure mismatch. The feasible range also depends on the treatment share $p$: normalized covariance is always bounded above by one, while its negative lower bound is $-\min\{p/(1-p),(1-p)/p\}$. Hence the bias--variance tradeoff is structurally different when treatment is rare than when $p$ is near one half.

\begin{remark}[Structural baseline and fixed adjustment]
The structural baseline $\mu_a$ is part of the potential outcome model: it is the outcome component that would remain after removing direct treatment, network spillover, temporal carryover, and noise. The fixed adjustment $\tilde\alpha_a$ is what the analyst actually subtracts from the observed outcome, for example a pre-experiment forecast or historical mean. These objects need not be equal. Because the HT score has mean zero, the residual $r_a^{\tilde\alpha}=\mu_a-\tilde\alpha_a$ does not create design bias. It does affect variance: a poor adjustment leaves predictable baseline variation inside $g_p^{\tilde\alpha}$. In the ideal adjustment case $\tilde\alpha=\mu$, condition \eqref{eq:comparability} is implied by
\[
    \left|\beta_c-\frac{p}{q}\theta \deg_{H,c}\right|\le \omega_\beta s_c
\]
with $\omega_g=\omega_\beta/p$. Thus adjustment quality primarily affects precision, not the covariance-bias identity.
\end{remark}

When $p=1/2$, $v=1/4$ and $\Rnorm=4\Sigma$. The bias formula reduces to
\[
    \E[\widehat\tau_{\tilde\alpha}]-\tau
    =\frac{\theta}{M}\{4\langle H,\Sigma\rangle-\langle H,\one\one^\top\rangle\},
\]
which is the unit--time analogue of the usual OCD expression \citep{chen2023ocd}. The bound in \eqref{eq:variance_bound_main} reduces continuously to the $p=1/2$ case, but keeps direct-effect variation and interference strength as separate quantities.

\subsection{Assignment-disagreement geometry and causal bias}
\label{subsec:disagreement_geometry}

The covariance identity admits a sharper geometric interpretation that is valid for every joint Bernoulli law with common marginal $p$, not only for Gaussian copulas or cluster randomization. Define
\begin{equation}
    D_Z(a,c):=\Pr(Z_a\ne Z_c),
    \qquad
    d_Z(a,c):=\frac{D_Z(a,c)}{2v}.
    \label{eq:disagreement_metric_definition}
\end{equation}

\begin{proposition}[Common-marginal expected-cut geometry for positive exposure]
\label{prop:expected_cut_geometry}
For every common-marginal Bernoulli assignment law,
\begin{equation}
    d_Z(a,c)=1-\Rnorm_{ac}.
    \label{eq:disagreement_covariance_identity}
\end{equation}
The function $d_Z$ is a semimetric: it is nonnegative and symmetric, vanishes on the diagonal, and satisfies
\begin{equation}
    d_Z(a,c)\le d_Z(a,b)+d_Z(b,c)
    \qquad\text{for all }a,b,c.
    \label{eq:disagreement_triangle}
\end{equation}
Moreover, the vector $\{D_Z(a,c):a<c\}$ is a convex combination of realized cut vectors and therefore belongs to the cut polytope. The common-marginal restriction selects a subset of that polytope. For the weighted structural model of Corollary~\ref{cor:weighted_ht_identity},
\begin{equation}
    \E\widehat\tau_{w,\tilde\alpha}^{HT}-\tau_w
    =-\frac{\theta}{W}\sum_{a,c}w_aH_{ac}d_Z(a,c)
    =-\frac{\theta}{2vW}\sum_{a,c}w_aH_{ac}\Pr(Z_a\ne Z_c).
    \label{eq:bias_expected_cut}
\end{equation}
Consequently, positive exposure makes the raw-HT design bias for the structural GATE nonpositive.
\end{proposition}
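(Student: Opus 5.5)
The plan is to derive everything from one elementary computation on pairs of Bernoulli variables. For cells $a,c$ with common marginal $p$, I will write $\Pr(Z_a\ne Z_c)=\E[(Z_a-Z_c)^2]$, using that $Z\in\{0,1\}$ makes $(Z_a-Z_c)^2$ the indicator of disagreement. Expanding gives
\[
\E[(Z_a-Z_c)^2]=\Var(Z_a)+\Var(Z_c)-2\Cov(Z_a,Z_c)=2v-2v\Rnorm_{ac}.
\]
The means cancel because the marginals are equal, which is exactly where the common-$p$ hypothesis enters. Dividing by $2v$ yields \eqref{eq:disagreement_covariance_identity}. The case $a=c$ is trivial, since $\Rnorm_{aa}=1$.

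Next I will establish the semimetric properties and the cut-polytope statement, and all of these follow from the representation $D_Z(a,c)=\E[\ind\{Z_a\ne Z_c\}]$.
\begin{enumerate}[label=(\roman*)]
\item Nonnegativity, symmetry and vanishing on the diagonal are immediate from this representation.
\item For the triangle inequality \eqref{eq:disagreement_triangle}, I will use the pointwise inclusion $\{Z_a\ne Z_c\}\subseteq\{Z_a\ne Z_b\}\cup\{Z_b\ne Z_c\}$. This holds because if $Z_a=Z_b=Z_c$ fails only between $a$ and $c$, then one of the two other pairs must also disagree. Taking probabilities and applying the union bound, then dividing by $2v$, gives the inequality.
\item For the polytope claim, I will condition on the realization $z\in\{0,1\}^L$. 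Each realization defines the set $S=\{a:z_a=1\}$, and $\ind\{z_a\ne z_c\}=\ind\{|S\cap\{a,c\}|=1\}$ is the cut vector $\delta(S)$. Since the law is a probability measure on finitely many $z$, $D_Z=\sum_z\Pr(Z=z)\,\delta(S_z)$ is a convex combination of cut vectors. Imposing the common marginal $p$ places a linear constraint on the mixing weights, so it selects a subset of the polytope; no further argument is needed there.
\end{enumerate}

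For the bias formula \eqref{eq:bias_expected_cut}, I will invoke Corollary~\ref{cor:weighted_ht_identity}, which gives bias $=\frac{\theta}{W}\sum_{a,c}w_aH_{ac}(\Rnorm_{ac}-1)$. Substituting $\Rnorm_{ac}-1=-d_Z(a,c)$ and $d_Z=D_Z/(2v)$ gives both displayed forms. Nonpositivity then follows because $\theta>0$, $w\ge0$, $H\ge0$ and $D_Z\ge0$.

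I do not expect a genuine obstacle. The only point needing care is to keep the common-marginal assumption visible in the first display, because with unequal marginals an extra $(p_a-p_c)^2$ term would appear. The other place for care is the entry-by-entry indexing of $H$, which has zero rows outside $\Ocal$, when substituting into the corollary.
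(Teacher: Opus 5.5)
Your proposal is correct and follows essentially the same route as the paper. The paper likewise computes $\Pr(Z_a\ne Z_c)=\E(Z_a-Z_c)^2=2v(1-\Rnorm_{ac})$, though it expands $2p-2\E(Z_aZ_c)$ directly rather than through variances; it derives the semimetric and cut-polytope claims from the realized cut indicators $\ind\{z_a\ne z_c\}$ and their expectation, and obtains the bias formula by substituting $\Rnorm_{ac}-1=-d_Z(a,c)$ into Corollary~\ref{cor:weighted_ht_identity}, with nonpositivity from $\theta>0$, $w\ge0$, $H\ge0$.
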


\paragraph{Meaning.}
This proposition turns the signed covariance identity into a nonnegative representation: for positive exposure, every assignment disagreement across an exposure edge incurs causal bias for the global contrast. It is the finite-sample design insight of the paper; it does not extend as a one-sided cut cost to signed exposure without additional cancellation analysis. \emph{Proof.} See Appendix~\ref{subsec:weighted_derivations}.

The cut-polytope connection for symmetric Bernoulli correlation vectors is classical \citep{hubermaric2017cut}; Proposition~\ref{prop:expected_cut_geometry} uses only the elementary realized-cut representation and applies at every common treatment share. Its causal implication is specific: the design bias is the cost of the expected treatment boundary measured on exposure edges.

For an undirected representation, define the symmetrized exposure weight
\begin{equation}
    h_{\{a,c\}}:=w_aH_{ac}+w_cH_{ca},
    \qquad
    \mathcal E_H:=\{\{a,c\}:h_{\{a,c\}}>0\}.
    \label{eq:symmetrized_exposure_weight}
\end{equation}
Diagonal exposure entries are immaterial because $d_Z(a,a)=0$. Then the magnitude of the positive-exposure bias equals
\begin{equation}
    \frac{\theta}{W}\sum_{e\in\mathcal E_H}h_e d_Z(e).
    \label{eq:undirected_cut_bias}
\end{equation}

\begin{corollary}[Zero-bias component collapse]
\label{cor:zero_bias_collapse}
Under the common-marginal assignment law, structural model, and noise conditions of Proposition~\ref{prop:expected_cut_geometry}, suppose $\theta>0$ and $h_e>0$ on every edge of $\mathcal G_H=(\Zcal,\mathcal E_H)$. The raw HT estimator is exactly design-unbiased for the structural GATE if and only if
\begin{equation}
    \Pr(Z_a\ne Z_c)=0
    \qquad\text{for every }\{a,c\}\in\mathcal E_H.
    \label{eq:zero_bias_edge_agreement}
\end{equation}
Hence assignments are almost surely constant on every connected component of $\mathcal G_H$. If $\mathcal G_H$ is connected, then
\begin{equation}
    Z=S\one,
    \qquad S\sim\operatorname{Bernoulli}(p).
    \label{eq:global_rollout_collapse}
\end{equation}
Conditional on fixed potential outcomes and disturbances, the design-generated estimator then has at most two support points and cannot have a nondegenerate continuous Gaussian limit under overlap.
\end{corollary}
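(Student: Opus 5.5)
The plan is to read everything off the undirected expected-cut representation \eqref{eq:undirected_cut_bias} in Proposition~\ref{prop:expected_cut_geometry}, then propagate edge-level agreement along paths.

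\emph{Step 1 (the equivalence).} Proposition~\ref{prop:expected_cut_geometry} gives
\[
\E\widehat\tau_{w,\tilde\alpha}^{HT}-\tau_w=-\frac{\theta}{2vW}\sum_{e\in\mathcal E_H}h_e\Pr(Z_a\ne Z_c).
\]
Here $\theta>0$, $v>0$, $W>0$, and $h_e>0$ on every edge of $\mathcal E_H$. So the bias is a negative multiple of a sum of nonnegative terms with strictly positive coefficients, and it vanishes exactly when every summand vanishes, which is \eqref{eq:zero_bias_edge_agreement}. The passage from the directed sum over $(a,c)$ to the undirected sum uses symmetry of $\Pr(Z_a\ne Z_c)$ and the fact that diagonal terms vanish. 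Edges outside $\mathcal E_H$ have $h_e=0$ and $w_aH_{ac}\ge0$, so they contribute nothing.

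\emph{Step 2 (component constancy).} Take $a,c$ in the same connected component and a path $a=a_0,a_1,\dots,a_k=c$ in $\mathcal G_H$. The event $\{Z_a\ne Z_c\}$ is contained in $\bigcup_{\ell}\{Z_{a_{\ell-1}}\ne Z_{a_\ell}\}$, so a union bound gives $\Pr(Z_a\ne Z_c)=0$. This is the triangle inequality \eqref{eq:disagreement_triangle} iterated along the path. Since $\Zcal$ is finite, there are finitely many pairs, and intersecting the finitely many probability-one events shows that $Z$ is almost surely constant on each component simultaneously.

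\emph{Step 3 (connected case).} If $\mathcal G_H$ is connected, then almost surely $Z=S\one$, where $S:=Z_{a_0}$ for a fixed cell $a_0$. The common marginal gives $S\sim\operatorname{Bernoulli}(p)$. Conversely, the law $Z=S\one$ satisfies \eqref{eq:zero_bias_edge_agreement}, so this direction is consistent with Step 1.

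\emph{Step 4 (degeneracy).} Condition on fixed potential outcomes and disturbances. Then $\widehat\tau_{\tilde\alpha}$ is a deterministic function of $Z$, and $Z$ takes only the two values $\one$ and $\bm 0$. So the estimator has at most two support points. Any sequence of such laws, after centering and scaling, converges (along subsequences) only to distributions supported on at most two points. No nondegenerate normal law has that form, so a nondegenerate continuous Gaussian limit is ruled out. Overlap ($0<p<1$) only ensures both points carry mass, so the degenerate two-point structure is genuine rather than a point mass.

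I expect no step to be a real obstacle. The only care needed is in Step 1: make sure that edges of $\mathcal E_H$ are exactly those with $h_e>0$, so strict positivity of the coefficients holds, and that the noise condition of the proposition is inherited. In Step 4, phrase the ``no Gaussian limit'' claim as a statement about the support of any weak limit.
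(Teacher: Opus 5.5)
Your proposal is correct and follows essentially the same route as the paper: the expected-cut sum with strictly positive edge weights vanishes exactly when every edge has zero disagreement probability; almost-sure edge agreement propagates along paths (your union bound is the paper's transitivity step); $S$ is identified with the common value of the assignments; and a two-point-support argument rules out a Gaussian limit. Your Step~4, which uses the fact that laws with at most two support points are closed under weak convergence, is slightly cleaner than the paper's version and does not actually need overlap; the paper invokes overlap only to keep both atoms nonvanishing.
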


\paragraph{Meaning.}
Exact zero bias is not free. On every connected component carrying positive exposure weight, it forces assignments to move together almost surely. The collapse is componentwise rather than necessarily global, and the next path bound quantifies what remains when zero bias is relaxed. \emph{Proof.} See Appendix~\ref{subsec:weighted_derivations}.

Exact zero bias is deliberately extreme, but the same geometry gives quantitative lower bounds. Consider a path of $n$ assignment cells labeled $1,\ldots,n$.

\begin{proposition}[Path disagreement and bias lower bound]
\label{prop:path_bias_lower_bound}
For every common-marginal Bernoulli assignment law on the path and every integer $1\le m<n$,
\begin{equation}
    \sum_{k=1}^{n-1}d_Z(k,k+1)
    \ge \frac1m\sum_{i=1}^{n-m}d_Z(i,i+m).
    \label{eq:path_disagreement_inequality}
\end{equation}
If $\sup_{1\le i\le n-m}\Rnorm_{i,i+m}\le\epsilon_m<1$, then
\begin{equation}
    \frac1{n-1}\sum_{k=1}^{n-1}\{1-\Rnorm_{k,k+1}\}
    \ge \frac{n-m}{m(n-1)}(1-\epsilon_m).
    \label{eq:path_average_disagreement_bound}
\end{equation}
In the unweighted positive-exposure model with $W=n$, $H_{k,k+1}\ge h_0>0$, and $\theta\ge\theta_0>0$,
\begin{equation}
    |\E\widehat\tau^{HT}-\tau|
    \ge \frac{\theta_0h_0}{n}\frac{n-m}{m}(1-\epsilon_m).
    \label{eq:path_bias_rate_bound}
\end{equation}
\end{proposition}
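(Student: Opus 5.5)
The plan is to combine the semimetric property of $d_Z$ from Proposition~\ref{prop:expected_cut_geometry} with a telescoping count along the path. Fix $m$ and, for each $1\le i\le n-m$, apply the triangle inequality \eqref{eq:disagreement_triangle} repeatedly along the chain $i,i+1,\ldots,i+m$:
\[
    d_Z(i,i+m)\le\sum_{k=i}^{i+m-1}d_Z(k,k+1).
\]
Summing over $i=1,\ldots,n-m$ bounds the right-hand side of \eqref{eq:path_disagreement_inequality} (times $m$) by a weighted sum of the edge terms. The weight on edge $(k,k+1)$ is the number of windows $\{i,\ldots,i+m-1\}$ containing $k$. This multiplicity is at most $m$. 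Since every $d_Z\ge0$, we get $\sum_i d_Z(i,i+m)\le m\sum_{k=1}^{n-1}d_Z(k,k+1)$, which is \eqref{eq:path_disagreement_inequality}. The only point needing care is this multiplicity count, which must hold uniformly, including near the endpoints.

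For \eqref{eq:path_average_disagreement_bound}, substitute the identity $d_Z=1-\Rnorm$ from \eqref{eq:disagreement_covariance_identity}. Each term on the right then satisfies $1-\Rnorm_{i,i+m}\ge1-\epsilon_m$. Summing over the $n-m$ indices and dividing by $m(n-1)$ gives the claim.

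For \eqref{eq:path_bias_rate_bound}, use the expected-cut representation \eqref{eq:bias_expected_cut} with $w\equiv1$ and $W=n$:
\[
    |\E\widehat\tau^{HT}-\tau|=\frac{\theta}{n}\sum_{a,c}H_{ac}d_Z(a,c).
\]
Because $H\ge0$ and $d_Z\ge0$, we can drop every term except the path edges $(k,k+1)$. This leaves a lower bound of $\theta h_0 n^{-1}\sum_k d_Z(k,k+1)$, and we then use $\theta\ge\theta_0$. Finally, the previous display gives $\sum_k d_Z(k,k+1)\ge (n-m)(1-\epsilon_m)/m$, which yields the stated bound.

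Two bookkeeping points need checking. First, the exposure entries must sit in an orientation that \eqref{eq:bias_expected_cut} accepts; since $d_Z$ is symmetric, $H_{k,k+1}$ or $H_{k+1,k}$ contributes equally. Second, the noise and common-marginal hypotheses of Proposition~\ref{prop:expected_cut_geometry} must be in force. I do not expect a substantive obstacle; the main risk is the off-by-one in the window multiplicity.
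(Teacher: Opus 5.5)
Your proposal is correct and follows the paper's proof essentially line for line. Both arguments chain the triangle inequality over each length-$m$ window and note that each edge has multiplicity at most $m$. Both then substitute $d_Z=1-\Rnorm$, and finish by applying the expected-cut bias identity with all off-path exposure terms dropped.
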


\paragraph{Meaning.}
On a path, decorrelating assignments over a distance $m$ requires enough local disagreement to create an order-$1/m$ bias cost. This is a path-specific lower bound, not a universal lower bound for arbitrary graphs. \emph{Proof.} See Appendix~\ref{subsec:weighted_derivations}.

\begin{remark}[Order sharpness on a path]
\label{rem:path_bound_sharp}
When $n$ is a multiple of $m$, partition the path into consecutive blocks of length $m$ and assign independent $\operatorname{Bernoulli}(p)$ labels to the blocks. Then $d_Z(i,i+m)=1$ for every admissible $i$, while $d_Z(k,k+1)$ is zero except at the $n/m-1$ block boundaries, where it equals one. Thus both sides of \eqref{eq:path_disagreement_inequality} are of order $n/m$. The $m^{-1}$ rate is therefore sharp up to boundary constants and is attained by standard local clustering.
\end{remark}

The path argument extends to arbitrary exposure graphs. Let $\mathcal P$ collect terminal pairs in $G_H$ with required assignment separations $d_Z(u,v)\ge\delta_{uv}$ and define
\begin{equation}
\operatorname{MC}(G_H,h,\mathcal P,\delta)
:=\min_{x_e\ge0}\sum_{e\in E_H}h_ex_e
\quad\text{subject to}\quad
\sum_{e\in P}x_e\ge\delta_{uv}
\label{eq:fractional_multicut_program}
\end{equation}
for every $(u,v)\in\mathcal P$ and every path $P:u\leadsto v$.

\begin{proposition}[General-graph multicut bias bound]
\label{prop:fractional_multicut_bias}
Under positive exposure,
\begin{equation}
\left|\mathbb E\widehat\tau^{HT}_{w,\tilde\alpha}-\tau_w\right|
\ge \frac{\theta}{W}\operatorname{MC}(G_H,h,\mathcal P,\delta).
\label{eq:fractional_multicut_bias_bound}
\end{equation}
\end{proposition}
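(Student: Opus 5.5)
The plan is to show that the edge values $x_e:=d_Z(e)$ of the actual design form a feasible point of the fractional multicut program \eqref{eq:fractional_multicut_program}. The bias magnitude is then at least the program's minimum.

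\emph{Step 1: rewrite the bias as the multicut objective.} By Proposition~\ref{prop:expected_cut_geometry} and the undirected form \eqref{eq:undirected_cut_bias}, under positive exposure
\[
\bigl|\E\widehat\tau^{HT}_{w,\tilde\alpha}-\tau_w\bigr|=\frac{\theta}{W}\sum_{e\in\mathcal E_H}h_e\,d_Z(e).
\]
Here $\theta>0$, $h_e\ge0$, and $d_Z\ge0$, so the absolute value simply removes the sign in \eqref{eq:bias_expected_cut}. Diagonal entries drop out because $d_Z(a,a)=0$. I will set $x_e=d_Z(e)$ for $e\in E_H=\mathcal E_H$; this is nonnegative, so it satisfies the sign constraint of the program. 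The right-hand side above is then exactly $(\theta/W)\sum_e h_e x_e$.

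\emph{Step 2: feasibility via the triangle inequality.} Fix a terminal pair $(u,v)\in\mathcal P$ and any path $P:u=a_0,a_1,\ldots,a_k=v$ in $G_H$. Iterating the semimetric triangle inequality \eqref{eq:disagreement_triangle} along the path gives
\[
\sum_{e\in P}x_e=\sum_{t=1}^{k}d_Z(a_{t-1},a_t)\ge d_Z(u,v)\ge\delta_{uv}.
\]
The last inequality is the hypothesis on the required assignment separations. Every path constraint therefore holds, and $x$ is feasible.

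\emph{Step 3: conclude.} Feasibility gives $\sum_e h_ex_e\ge\operatorname{MC}(G_H,h,\mathcal P,\delta)$. Multiplying by $\theta/W>0$ yields \eqref{eq:fractional_multicut_bias_bound}.

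\emph{Main obstacle.} The only delicate point is bookkeeping. The edge set and weights of $G_H$ must be identified with the symmetrized $\mathcal E_H$ and $h_e$ of \eqref{eq:symmetrized_exposure_weight}. One must also check that a path traversing an edge in either direction uses the same $x_e$, which holds because $d_Z$ is symmetric. Finally, if a terminal pair is disconnected in $G_H$, its constraint set is empty and imposes nothing. In that case the program still lower-bounds the bias, since any infeasibility would make the minimum $+\infty$ only if a constraint could not be met, and with no paths none applies.
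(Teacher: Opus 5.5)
Your proof is correct and follows the paper's argument essentially verbatim. You set $x_e=d_Z(e)$, use the semimetric triangle inequality along each terminal-pair path to get feasibility, identify the objective with the undirected expected-cut bias magnitude \eqref{eq:undirected_cut_bias}, and multiply $\sum_e h_ex_e\ge\operatorname{MC}$ by $\theta/W$. Your closing remark on disconnected terminal pairs is worded confusingly, but its point is right and matches the paper's restriction to connected pairs: a pair with no paths contributes no constraints, so feasibility is unaffected.
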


\paragraph{Meaning.}
The semimetric edge lengths $x_e=d_Z(e)$ satisfy every path constraint, so any design that separates designated terminal pairs must pay at least the graph-specific multicut cost on positive exposure edges. The path result is the one-dimensional special case. Appendix~\ref{app:general_graph_cut_extensions} gives the path-packing dual and explicit grid and expansion corollaries. \emph{Proof.} See Appendix~\ref{subsec:weighted_derivations}.

\section{Covariance-Optimized Design}
\label{sec:objective}

\subsection{Design intuition and target-weighted mismatch}

Positive covariance on a network or lag exposure edge makes the observed local environment more similar to global treatment or global control. Positive covariance everywhere, however, approaches a rollout and removes experimental contrast. The design should therefore concentrate covariance on exposure-relevant pairs while preserving treatment-load stability and adjacent-block variation.

For $k\in\{\comp,\lag\}$, define the weighted bias numerator, exposure mass, and normalized mismatch
\begin{equation}
b_k^{(w)}(\Rnorm)=\sum_{a,c}w_aH^k_{ac}(\Rnorm_{ac}-1),\qquad
E_k^{(w)}=\sum_{a,c}w_aH^k_{ac},\qquad
\bar b_k^{(w)}(\Rnorm)=\frac{b_k^{(w)}(\Rnorm)}{E_k^{(w)}},
\label{eq:weighted_bias_terms_design}
\end{equation}
omitting a channel with zero exposure mass. Target weighting matters under skewed traffic: an unweighted design can look well aligned on average while high-weight cells retain substantial mismatch.

\subsection{Target-weighted objective and robust regularization}
\label{subsec:design_objective}

The exact bias identity and Corollary~\ref{cor:mse_covariance_bound} make $\Rnorm$ the only design variable once the working-model scales are fixed. The resulting criterion is nevertheless a causal working-model surrogate rather than a fully specified outcome model, so COSTA supplements the targeted mismatch and precision terms with soft controls that remain useful under broader outcome variation.

Let $\one_b$ select all assignment cells in block $b$ and $\one_i$ all blocks of unit $i$. Then
\begin{equation}
v\one_b^\top\Rnorm\one_b=\Var\!\left(\sum_iZ_{ib}\right),\qquad
v\one_i^\top\Rnorm\one_i=\Var\!\left(\sum_bZ_{ib}\right).
\label{eq:block_item_balance}
\end{equation}
The first controls treatment-load fluctuations that interact with common block shocks; the second controls cumulative unit exposure that interacts with persistent unit heterogeneity. These are soft analogues of the double-balance logic in multiple-unit switchbacks \citep{masoero2024multiple,missault2025rbsd}. Exact margins may be infeasible for general $p$ and short horizons, and hard balance can conflict with the positive covariance needed on high-weight exposure edges.

For two common-marginal Bernoulli assignments,
\begin{equation}
\Pr(Z_a\ne Z_c)=2v(1-\Rnorm_{ac}).
\label{eq:switch_rate}
\end{equation}
Define the target-weighted adjacent-block switching rate
\begin{equation}
\bar s_{\lag}^{(w)}=\{E_{\lag}^{(w)}\}^{-1}\sum_{a,c}w_aH^{\lag}_{ac}\Pr(Z_a\ne Z_c).
\label{eq:weighted_lag_switch_rate}
\end{equation}

\begin{proposition}[Switching and carryover bias identity]
\label{prop:switching_carryover_frontier}
With $B_{\lag}^{(w)}=(\eta/W)b_{\lag}^{(w)}(\Rnorm)$,
\begin{equation}
\frac{b_{\lag}^{(w)}(\Rnorm)}{E_{\lag}^{(w)}}=-\frac{\bar s_{\lag}^{(w)}}{2v},\qquad
B_{\lag}^{(w)}=-\frac{\eta E_{\lag}^{(w)}}{2vW}\bar s_{\lag}^{(w)}.
\label{eq:switching_carryover_frontier}
\end{equation}
\end{proposition}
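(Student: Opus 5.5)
The plan is a direct substitution of the pairwise switching identity \eqref{eq:switch_rate} (equivalently, the identity $d_Z=1-\Rnorm$ from Proposition~\ref{prop:expected_cut_geometry}) into the definition of $b_{\lag}^{(w)}$.

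\textbf{Step 1: the pairwise identity.} For any two cells $a,c$ with common marginal $p$, I would first verify
\[
\Pr(Z_a\ne Z_c)=\E[Z_a]+\E[Z_c]-2\E[Z_aZ_c]=2p-2(p^2+v\Rnorm_{ac})=2v(1-\Rnorm_{ac}).
\]
This uses $\Cov(Z_a,Z_c)=v\Rnorm_{ac}$ and $p-p^2=v$. Equivalently, $\Rnorm_{ac}-1=-\Pr(Z_a\ne Z_c)/(2v)$.

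\textbf{Step 2: the normalized lag mismatch.} Multiply this by $w_aH^{\lag}_{ac}$ and sum over all pairs $(a,c)$. This gives
\[
b_{\lag}^{(w)}(\Rnorm)=-\frac{1}{2v}\sum_{a,c}w_aH^{\lag}_{ac}\Pr(Z_a\ne Z_c)=-\frac{E_{\lag}^{(w)}}{2v}\,\bar s_{\lag}^{(w)},
\]
where the last equality is the definition \eqref{eq:weighted_lag_switch_rate}. Dividing by $E_{\lag}^{(w)}$ yields the first identity in \eqref{eq:switching_carryover_frontier}.

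\textbf{Step 3: the bias form.} Multiply the identity from Step 2 by $\eta/W$ to obtain the stated expression for $B_{\lag}^{(w)}$.

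\textbf{Consistency check.} It remains to confirm that $B_{\lag}^{(w)}$ really is the lag-channel contribution to the weighted bias. Apply Corollary~\ref{cor:weighted_ht_identity} with $\theta H=\gamma H^{\comp}+\eta H^{\lag}$. The bias is linear in $H$, so it splits into a network term $(\gamma/W)b_{\comp}^{(w)}$ and a lag term $(\eta/W)b_{\lag}^{(w)}$.

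\textbf{Main obstacle.} Nothing here is substantive. The only care needed is bookkeeping: $E_{\lag}^{(w)}>0$ must hold so that the division is legitimate (the paper omits a channel with zero exposure mass), and the common-marginal assumption must apply to the lagged cell $(i,b-1)$ as well. This is the case because $H^{\lag}$ references cells in $\Zcal$, including block~1, all of which have marginal $p$.
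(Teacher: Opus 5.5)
Your proposal is correct and follows the paper's proof: you substitute the pairwise identity $\Pr(Z_a\ne Z_c)=2v(1-\Rnorm_{ac})$ into the definition of $\bar s_{\lag}^{(w)}$, which gives $\bar s_{\lag}^{(w)}=-2v\,b_{\lag}^{(w)}(\Rnorm)/E_{\lag}^{(w)}$, and then rearrange to obtain both identities. Your worry about division is harmless here: every outcome row of $H^{\lag}$ has exactly one unit entry, so $E_{\lag}^{(w)}=W>0$ automatically.
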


\paragraph{Meaning.}
Operational switching and carryover bias are the same disagreement quantity on different scales. A nonvanishing switching requirement under nonvanishing carryover therefore needs debiasing, a bias-aware interval, or a target other than sustained all-treated versus all-control exposure. \emph{Proof.} See Appendix~\ref{subsec:weighted_derivations}.

For a prespecified minimum switching rate $s_{\mathrm{sw,min}}$, define
\begin{equation}
\bar{\Rnorm}_{\lag}(\Rnorm)=\frac{1}{N(B-1)}\sum_{i=1}^N\sum_{b=2}^B\Rnorm_{(i,b),(i,b-1)},\qquad
\rho_{\max}=1-\frac{s_{\mathrm{sw,min}}}{2v}.
\label{eq:rmax}
\end{equation}
The hinge $[\bar{\Rnorm}_{\lag}(\Rnorm)-\rho_{\max}]_+^2$ prevents a nearly static rollout from buying lower lag mismatch by sacrificing adjacent-block contrast.

Let $d_k^{(w)}=(H^k)^\top w$. A convenient formal precision direction is
\begin{equation}
e_w=\kappa_{\comp}\frac{d_{\comp}^{(w)}}{E_{\comp}^{(w)}}+\kappa_{\lag}\frac{d_{\lag}^{(w)}}{E_{\lag}^{(w)}}+\zeta_{\mathrm{ridge}}\one,
\label{eq:ew_def}
\end{equation}
with a zero-mass channel omitted. When $s_w=c_we_w$ satisfies the weighted envelope conditions, $c_w^2$ is absorbed into the precision weight. The target-weighted objective is
\begin{align}
\mathcal L(\Rnorm;p)=&\ \lambda_{b,\comp}\{\bar b_{\comp}^{(w)}(\Rnorm)\}^2
+\lambda_{b,\lag}\{\bar b_{\lag}^{(w)}(\Rnorm)\}^2
+\lambda_V e_w^\top(v\Rnorm+p^2\one\one^\top)e_w\nonumber\\
&+\lambda_{\rm blk}\sum_b\one_b^\top\Rnorm\one_b
+\lambda_{\rm unit}\sum_i\one_i^\top\Rnorm\one_i
+\lambda_{\rm sw}[\bar{\Rnorm}_{\lag}(\Rnorm)-\rho_{\max}]_+^2.
\label{eq:ideal_objective}
\end{align}
The two mismatch channels are squared separately so that network and carryover errors cannot cancel; Appendix~\ref{app:heterogeneous_robust_design_extensions} gives the corresponding ellipsoidal worst-case interpretation. The remaining terms are regularizers rather than additional exact MSE identities. The block- and unit-load quadratics import the precision logic of balanced multi-unit switchbacks, while the switching hinge rules out near-static assignments.

\paragraph{Prespecified weights and ablations.}
For every reported experiment,
\begin{equation}
(\lambda_{b,\comp},\lambda_{b,\lag},\lambda_V,\lambda_{\rm blk},\lambda_{\rm unit},\lambda_{\rm sw})
=(1,1,0.05,0.05,0,10),\qquad s_{\rm sw,min}=0.10.
\label{eq:default_objective_weights}
\end{equation}
The normalized network and lag mismatches are the primary causal criteria and receive unit weight. The precision term is a conservative, scale-dependent envelope direction rather than an exact variance component: setting $\lambda_V=1$ would treat it as commensurate with the squared mismatch terms and can let it dominate, so $0.05$ provides mild precision regularization while preserving causal alignment. We likewise use $0.05$ for the active block-load control, set the optional full unit-load term to zero because the implemented optimizer does not evaluate that complete quadratic, and place weight $10$ on the smaller-scale switching hinge to enforce the $0.10$ target. All coefficients were fixed before simulated outcomes were generated and reused across both datasets and all three outcome models. Prespecified leave-one-term-out ablations on both datasets remove the active balance and switching controls in turn; they document their RMSE benefits rather than tune the defaults.

\subsection{Relation to canonical designs}

Independent Bernoulli randomization has $\Rnorm_{ac}=0$ off the diagonal, maximizing local contrast but leaving every positive exposure edge with the full disagreement cost $1-\Rnorm_{ac}=1$. Cluster randomization sets many within-cluster correlations near one; it can work well for separated communities but treats all pairs in a large cluster similarly and can be inefficient when important edges cross cluster boundaries or cluster sizes differ. COSTA instead aligns high-weight exposure pairs without forcing an entire cluster to move together.

Balanced switchbacks motivate the load and switching controls above, but they do not identify which units affect one another. Static optimized covariance design acts within one period: independently optimized blocks leave lag bias unchanged, whereas reusing an assignment suppresses switching. COSTA targets both channels jointly. These comparisons concern finite-sample MSE, not inferential compatibility; Section~\ref{sec:inference} verifies the additional conditions needed for IC-0, IC-1, and IC-2.

\section{Gaussian-Copula and Kronecker Parameterization}
\label{sec:param_opt}

\subsection{\texorpdfstring{General-$p$ Gaussian copula}{General-p Gaussian copula}}

Let $G\sim N(0,\Omega)$ and threshold
\begin{equation}
Z_a=\ind\{G_a\le\Phi^{-1}(p)\}.
\label{eq:gaussian_threshold_assignment}
\end{equation}
Then $Z_a\sim\operatorname{Bernoulli}(p)$ and
\begin{equation}
\Rnorm_{ac}=\Gp(\Omega_{ac})
=\frac{\Phi_2\{\Phi^{-1}(p),\Phi^{-1}(p);\Omega_{ac}\}-p^2}{p(1-p)}.
\label{eq:gaussian_copula_map}
\end{equation}
At $p=1/2$, $\Gp(\rho)=2\arcsin(\rho)/\pi$. Monotonicity carries latent-correlation changes into binary covariance while retaining a valid positive-semidefinite Gaussian law. A fully flexible cell-factor construction assigns a unit vector $u_a$ to each assignment cell and sets $\Omega_{ac}=u_a^\top u_c$; it is useful for moderate problems but has order $NBd$ parameters and repeats network estimation block by block.

\subsection{Kronecker network--time covariance}

A fully flexible $NB\times NB$ correlation matrix has order $N^2B^2$ pairwise entries and a global positive-semidefinite constraint. Even free cell embeddings ignore the repeated causal architecture: network exposure connects units within a block, whereas lag exposure connects blocks within a unit. COSTA therefore uses unit factors $f_i\in\mathbb R^{d_I}$ and time factors $t_b\in\mathbb R^{d_T}$ with unit norms and imposes
\begin{equation}
\Omega_{(i,b),(j,r)}=(f_i^\top f_j)(t_b^\top t_r).
\label{eq:kronecker_cell_corr}
\end{equation}
Thus $\Omega_{(i,b),(j,b)}=f_i^\top f_j$, $\Omega_{(i,b),(i,r)}=t_b^\top t_r$, and other space--time correlations are their product. Unit structure is shared across blocks and temporal structure across units, matching the two exposure operators directly.

Equivalently, the cell factor is $f_i\otimes t_b$. The parameter count falls to $Nd_I+Bd_T$, and optimization can evaluate only observed network edges, adjacent lags, and fixed diagnostic batches rather than all cell pairs. The construction is causal-structure-driven rather than merely low rank: it directly parameterizes the covariance directions identified by the bias and MSE bounds, while the remaining correlations arise from a valid product Gram matrix. Gaussian scores can be sampled without forming an $NB\times NB$ matrix. Common Kronecker factors can nevertheless transmit weak covariance across many distant cells, so Section~\ref{sec:inference} treats computational scalability and inferential locality separately.

\begin{remark}[When the Kronecker form may be restrictive]
The separable form implies that the unit covariance pattern is stable across blocks up to a temporal-pair scalar. This is natural when the network exposure graph is stable. Strongly time-varying networks may instead require block-specific unit factors, a sum of Kronecker components, or sparse edge corrections.
\end{remark}

\paragraph{COSTA-LocalPenalty.}
Kronecker sharing is computationally valuable, but it can diffuse small correlations over many unit--time pairs that are neither network edges nor adjacent lags. Such covariance contributes little to the working-model exposure objective, may increase estimator variance, and can create a nonlocal covariance tail omitted by a short HAC window. COSTA-LocalPenalty addresses this finite-sample issue by shrinking sampled distant pairwise covariances while retaining the same valid Gaussian-copula law.

Before optimization, fix batches of ordered distinct-unit pairs, time pairs separated by at least two blocks, and their product-cell combinations. For unit pairs $(i_\ell,j_\ell)$, time pairs $(b_\ell,r_\ell)$, and product pairs $(i'_\ell,j'_\ell,b'_\ell,r'_\ell)$, define
\begin{equation}
R^I_\ell=\Gp(f_{i_\ell}^\top f_{j_\ell}),\qquad
R^T_\ell=\Gp(t_{b_\ell}^\top t_{r_\ell}),\qquad
R^P_\ell=\Gp\{(f_{i'_\ell}^\top f_{j'_\ell})(t_{b'_\ell}^\top t_{r'_\ell})\},
\label{eq:local_penalty_components}
\end{equation}
and set
\begin{equation}
J_{\rm loc}=m_I^{-1}\sum_\ell(R^I_\ell)^2+m_T^{-1}\sum_\ell(R^T_\ell)^2+m_P^{-1}\sum_\ell(R^P_\ell)^2,
\qquad
\mathcal L_{\rm LP}(\Rnorm;p)=\mathcal L(\Rnorm;p)+\lambda_{\rm loc}J_{\rm loc}.
\label{eq:local_penalty_exact}
\end{equation}
The batches are fixed before optimization, so the criterion is stable and its cost scales with sampled pairs rather than $(NB)^2$. Because the unit pairs are generic distinct pairs rather than graph-certified nonedges, $J_{\rm loc}$ is a finite-sample pairwise regularizer, not an IC-0 certificate. We prespecify $\lambda_{\rm loc}=1$. The experiments show that this adaptation usually leaves point-estimation RMSE close to COSTA while materially narrowing model-assisted intervals; the fixed-batch sweep reports the associated locality--accuracy trade-off.

\subsection{Sampling and optimization}

Draw a standard Gaussian vector $\xi\in\mathbb R^{d_Id_T}$ and set
\begin{equation}
G_{ib}=(f_i\otimes t_b)^\top\xi.
\label{eq:kron_sampling_score}
\end{equation}
Equivalently, reshape $\xi$ into a Gaussian matrix $E\in\mathbb R^{d_I\times d_T}$ and compute $G_{ib}=f_i^\top Et_b$. Assign $Z_{ib}=\ind\{G_{ib}\le\Phi^{-1}(p)\}$. The shared Gaussian matrix creates both unit and temporal dependence without materializing the full covariance matrix.

Algorithm~\ref{alg:stocd} optimizes row-normalized unit and time factors and samples assignments by thresholding the resulting Gaussian scores. Exact finite-batch construction, sparse precision and sampled block-load proxies, and the complete theory--implementation map are given in Appendix~\ref{subsec:implementation_details_appendix}; no full unit-load quadratic is evaluated.

\begin{algorithm}[!htbp]
\caption{COSTA covariance optimization and assignment sampling}
\label{alg:stocd}
\begin{algorithmic}[1]
\REQUIRE Network matrix $\AI$, blocks $B$, treatment probability $p$, dimensions $(d_I,d_T)$, objective weights, switching target, and optional COSTA-LocalPenalty batches.
\STATE Initialize unit factors $F\in\mathbb R^{N\times d_I}$ and time factors $T\in\mathbb R^{B\times d_T}$; normalize rows.
\REPEAT
    \STATE Compute latent network-edge and lag correlations and map them through $\mathcal G_p$.
    \STATE Evaluate mismatch, precision, balance, switching, and optional pairwise-shrinkage terms.
    \STATE Take an optimizer step and renormalize all factor rows.
\UNTIL convergence
\STATE Sample Gaussian factors, form Kronecker scores, and threshold at $\Phi^{-1}(p)$.
\RETURN $Z\in\{0,1\}^{N\times B}$.
\end{algorithmic}
\end{algorithm}

\subsection{Computational scaling and parameter choices}
\label{subsec:scaling}

The Kronecker parameterization uses $O(Nd_I+Bd_T)$ parameters rather than $O(NBd)$ free cell embeddings or an unrestricted $O(N^2B^2)$ covariance matrix. For fixed $B$, the optimization is essentially linear in the number of units because only sparse exposure edges, adjacent lags, and fixed pair batches are evaluated. The network dimension $d_I$ controls the number of unit-side covariance directions; the time dimension is smaller because $B$ is finite. The default finite-sample design uses $d_I=16$ and $d_T=B-1$, with dimension and reduction ablations reported in Section~\ref{sec:experiments} and the appendix.

The shared-factor representation implies $\operatorname{rank}(\Omega)\le d_Id_T$. Low rank is a useful finite-sample regularizer but not an automatic weak-dependence certificate: distant sets can recover the same latent factors even when pairwise correlations are small. Proposition~\ref{prop:global_factor_certifiability} and Corollary~\ref{cor:tight_frame_block_saturation} formalize this limitation. Pairwise shrinkage can improve finite-sample variance, but formal inference still requires the separated-set conditions in Section~\ref{sec:inference}.

\section{Inference for Covariance-Structured Assignments}
\label{sec:inference}

Assignment covariance and interference jointly determine the centered HT contribution field. The dependence arises in two steps. First, the assignment law may deliberately correlate treatments across unit--time cells. Second, network and temporal interference make each contribution a nonlinear function of assignments in a local exposure neighborhood. Even independent treatment indicators can therefore generate dependent contributions when those neighborhoods overlap; covariance engineering adds dependence between the assignment neighborhoods themselves. The inferential object is the random field produced by this composition.

This section asks four separate questions. First, when do the resulting centered HT contributions satisfy an existing network central limit theorem? Second, how can a structured assignment covariance verify the required dependence condition for \emph{groups} of contributions, rather than only for individual treatment pairs? Third, when is the design expectation close enough to the GATE for the centered limit to be causal? Fourth, how can the design-centered variance be estimated after allowing the contribution means to vary across cells?

Our probabilistic engine is the graph-$\psi$ theory of \citet{kojevnikov2021network}, which begins once a dependence coefficient for the contribution field is available. The new results provide the bridge from design to that coefficient: the outcome-locality architecture identifies the Gaussian assignment blocks generating each contribution, canonical correlations control dependence between two separated blocks, and a spectral far-row-mass bound supplies a primitive sufficient condition. Because the argument separates the local outcome map from the covariance sequence, it applies to COSTA and to other structured Gaussian covariance models whenever the same conditions hold. The section then combines this dependence analysis with the exact bias identity and develops feasible variance-estimation routes.

The analysis is conditionally design-based in a possibly random pre-treatment environment. Let $\mathcal F_N$ contain pre-treatment data, the graph and optimized design, target weights, baseline adjustments, the structural potential-outcome schedule, and all disturbances treated as fixed under randomization. Write
\begin{equation}
\mathbb P_N(\cdot):=\Pr(\cdot\mid\mathcal F_N),\qquad
\mathbb E_N,\quad\Cov_N,\quad\Var_N,\qquad
\|U\|_{r,N}:=\{\mathbb E_N|U|^r\}^{1/r}.
\label{eq:conditional_probability_operators}
\end{equation}
A \emph{good-environment sequence} is a collection $E_N^{\rm good}\in\mathcal F_N$ with $\Pr(E_N^{\rm good})\to1$. All graph, design, and pilot quantities below are $\mathcal F_N$-measurable, and deterministic-looking restrictions need only hold on $E_N^{\rm good}$; displayed rates involving only such quantities are understood under the outer environment law. An external pilot completed before the current randomization is included in $\mathcal F_N$, so its realized fit is fixed under $\mathbb P_N$, while its $O_p$ rate refers to the outer pilot/environment randomness. Deterministic environments are covered by taking $E_N^{\rm good}$ to be the whole sample space.

For a randomization statistic $U_N$, write
\begin{align}
U_N\rightsquigarrow_{\mathcal F}N(0,1)
&\quad\Longleftrightarrow\quad
\sup_{t\in\mathbb R}
\left|\mathbb P_N(U_N\le t)-\Phi(t)\right|\xrightarrow{p}0,
\label{eq:conditional_weak_convergence_notation}\\
R_N\xrightarrow{p\mid\mathcal F}0
&\quad\Longleftrightarrow\quad
\mathbb P_N(|R_N|>\epsilon)\xrightarrow{p}0
\quad\text{for every }\epsilon>0.
\label{eq:conditional_probability_convergence_notation}
\end{align}
The corresponding $o_{p\mid\mathcal F}$ and $O_{p\mid\mathcal F}$ notation has its usual conditional-probability meaning. The outer probability in these displays integrates only the random environment. Because the conditional Kolmogorov error is bounded by one, the first convergence implies the usual unconditional Gaussian limit; Appendix~\ref{app:conditional_and_primitive_extensions} records the transfer formally. The main Gaussian certificate conditions on all outcome noise. The same coefficient survives additional independent own-cell noise, as shown in the appendix; shared or locally shared random noise must instead enter an enlarged joint certificate.

\begin{figure}[H]
\centering
\fbox{\begin{minipage}{0.94\linewidth}
\textbf{From covariance structure to inference.}

\textbf{IC-0:} structured assignment covariance $+$ local outcome neighborhoods $\Longrightarrow$ dependence bounds for separated HT contributions $\Longrightarrow$ graph-$\psi$ rates $\Longrightarrow$ design-centered normality.

\vspace{0.75em}
\textbf{IC-1:} expected-cut bias at the standard-error scale determines whether the design expectation may be replaced by the GATE.

\vspace{0.75em}
\textbf{IC-2:} valid network-HAC $+$ accurate contribution means $\Longrightarrow$ feasible studentization.
\end{minipage}}
\caption{The theoretical sequence. IC-0 controls the contribution dependence created jointly by the assignment covariance and the network--temporal outcome map; IC-1 is the separate causal-centering check; IC-2 additionally requires a valid variance estimator and feasible contribution-mean centering.}
\label{fig:design_verification_chain}
\end{figure}

\subsection{Inferential layers and the existing probabilistic engine}
\label{subsec:inference_layers_engine}

We consider a triangular array with
\begin{equation}
N\to\infty,\qquad B_N\ge2,\qquad
M_N:=|\Ocal_N|=N(B_N-1),\qquad L_N:=|\Zcal_N|=NB_N,
\qquad 1<\frac{L_N}{M_N}\le2.
\label{eq:inference_asymptotic_regime}
\end{equation}
Let
\begin{equation}
\mathcal G_N=(\Zcal_N,\mathcal E_N),\qquad
\mathcal E_N^{\rm nat}\subseteq\mathcal E_N,\qquad
\mathcal G_N\text{ connected},
\label{eq:connected_dependence_supergraph}
\end{equation}
be a pre-treatment dependence supergraph containing the natural network--time exposure graph. Added edges affect only the inferential metric; their enlarged balls and shells must be paid for in the rates below. Write $d_N^{\Zcal}$ for shortest-path distance in the full assignment-cell graph $\mathcal G_N$, and let $\mathcal G_N^{\Ocal}:=\mathcal G_N[\Ocal_N]$ be the subgraph induced by the contribution cells.

The contribution array is indexed only by post-burn-in outcome cells, whereas the Gaussian assignment lives on all unit--block cells. The following condition prevents burn-in or auxiliary assignment cells from creating shortest-path shortcuts that are invisible to the contribution graph; without it, separation of contribution indices would not imply separation of the Gaussian neighborhoods generating their scores.

\begin{assumption}[Contribution-metric compatibility]
\label{ass:contribution_metric_compatibility}
The induced graph $\mathcal G_N^{\Ocal}$ is connected and is geodesically closed in $\mathcal G_N$:
\begin{equation}
 d_{\mathcal G_N^{\Ocal}}(a,c)=d_N^{\Zcal}(a,c),
 \qquad a,c\in\Ocal_N.
\label{eq:geodesic_closure}
\end{equation}
We write $d_N$ for this common shortest-path distance on $\Ocal_N$.
\end{assumption}

For $a\in\Ocal_N$, define
\begin{equation}
\mathcal B_N(a;s)=\{c\in\Ocal_N:d_N(a,c)\le s\},\qquad
\partial\mathcal B_N(a;s)=\{c\in\Ocal_N:d_N(a,c)=s\}.
\label{eq:cell_balls_shells}
\end{equation}

\begin{remark}[Compatibility of the assignment and contribution index sets]
\label{rem:index_set_compatibility}
Assumption~\ref{ass:contribution_metric_compatibility} makes the KMS invocation literal: the contribution array is indexed by the connected graph $\mathcal G_N^{\Ocal}$, and its graph distance agrees with the full assignment-cell distance whenever both endpoints are outcome cells. The standard one-lag construction used here has this property when contemporaneous network edges are replicated in every block: any path between cells with block index at least two that visits the burn-in block can be projected from block one to block two without increasing its length. Added inferential edges need not preserve this argument and must be checked separately. If geodesic closure is not imposed, a valid alternative is to extend the contribution array by zeros to all of $\Zcal_N$ and verify every ball, shell, canonical-correlation separation, and HAC topology condition on the full graph. No equivalence of those topology rates follows merely from $L_N/M_N\le2$.
\end{remark}

Let $w_{Na}\ge0$, $W_N=\sum_{a\in\Ocal_N}w_{Na}>0$, and define the triangular-array score and adjusted observed outcome by
\begin{equation}
\psi_{Na}(z):=\frac{z}{p_N}-\frac{1-z}{1-p_N},\qquad
X_{Na}^{\rm obs}:=Y_{Na}^{\rm obs}-\widetilde\alpha_{Na}.
\label{eq:triangular_score_adjusted_outcome}
\end{equation}
Set
\begin{equation}
\omega_{Na}:=\frac{M_Nw_{Na}}{W_N},\qquad
Q_{Na}:=\omega_{Na}\psi_{Na}(Z_{Na})X_{Na}^{\rm obs},\qquad
\bar q_{Na}:=\mathbb E_N Q_{Na},\qquad
\xi_{Na}:=Q_{Na}-\bar q_{Na}.
\label{eq:stabilized_contributions}
\end{equation}
Then
\begin{equation}
\widehat\tau^{HT}_{w,\tilde\alpha}=M_N^{-1}\sum_aQ_{Na},\qquad
T_N:=\sum_a\xi_{Na},\qquad
\sigma_N^2:=\Var_N(T_N),\qquad
V_N:=\sigma_N^2/M_N^2.
\label{eq:inference_variance_scaling}
\end{equation}

\begin{definition}[Three inferential compatibility levels]
\label{def:ic_hierarchy}
A design--estimator sequence is \emph{IC-0} if
\[
(\widehat\tau^{HT}_{w,\tilde\alpha}-\mathbb E_N\widehat\tau^{HT}_{w,\tilde\alpha})/\sqrt{V_N}
\rightsquigarrow_{\mathcal F}N(0,1).
\]
It is \emph{IC-1} if IC-0 holds and
$|\mathbb E_N\widehat\tau^{HT}_{w,\tilde\alpha}-\tau_{w,N}|/\sqrt{V_N}\xrightarrow{p}0$.
It is \emph{IC-2} if IC-1 holds and a data-based $\widehat V_N$ satisfies
$\widehat V_N/V_N\xrightarrow{p\mid\mathcal F}1$. A one-sided conservative variance calibration may yield conservative coverage without being IC-2.
\end{definition}

The remainder of this subsection first supplies conditions for IC-0. Subsection~\ref{subsec:bias_locality_trilemma} then shows why IC-0 need not imply IC-1, and Subsection~\ref{subsec:network_hac} gives conditional sufficient routes toward IC-2.

\begin{definition}[Graph-$\psi$ dependence]
\label{def:graph_psi_dependence}
Conditional on $\mathcal F_N$, the centered array is graph-$\psi$ dependent with $\mathcal F_N$-measurable coefficients $\theta_N(s)\in[0,1]$ if $\theta_N(0)=1$, $\theta_N(s)$ is nonincreasing, and for disjoint nonempty $A,B\subseteq\Ocal_N$ with $d_N(A,B)\ge s$ and bounded Lipschitz $f,g$,
\begin{equation}
|\Cov_N\{f(\xi_{N,A}),g(\xi_{N,B})\}|
\le C_\psi |A||B|
\{\|f\|_\infty+\operatorname{Lip}(f)\}
\{\|g\|_\infty+\operatorname{Lip}(g)\}\theta_N(s).
\label{eq:graph_psi_definition}
\end{equation}
\end{definition}

For $k\ge1$, define
\begin{align}
\delta_N^{\partial}(s;k)
&:=M_N^{-1}\sum_{a\in\Ocal_N}|\partial\mathcal B_N(a;s)|^k,
\label{eq:delta_shell_definition}\\
\Delta_N(s,m;k)
&:=M_N^{-1}\sum_{a\in\Ocal_N}
\max_{c\in\partial\mathcal B_N(a;s)}
|\mathcal B_N(a;m)\setminus\mathcal B_N(c;s-1)|^k,
\label{eq:Delta_shell_definition}\\
c_N(s,m;k)
&:=\inf_{\alpha>1}
\{\Delta_N(s,m;k\alpha)\}^{1/\alpha}
\left\{\delta_N^{\partial}\left(s;\frac{\alpha}{\alpha-1}\right)\right\}^{1-1/\alpha},
\label{eq:c_topology_definition}
\end{align}
with empty-shell maxima equal to zero and $\mathcal B_N(c;-1)=\varnothing$.

\begin{assumption}[Overlap and moments]
\label{ass:weights_moments}
For constants $\underline p>0$, $\nu>4$, and $C_\xi<\infty$, on $E_N^{\rm good}$,
\begin{equation}
p_N\in[\underline p,1-\underline p],\qquad
\max_{a\in\Ocal_N}\|\xi_{Na}\|_{\nu,N}\le C_\xi,
\qquad \sigma_N^2>0.
\label{eq:overlap_moment_condition}
\end{equation}
\end{assumption}

\begin{assumption}[KMS topology--dependence rates]
\label{ass:weak_dependence_rates}
The array is graph-$\psi$ dependent, and there exists a deterministic sequence of positive integers $m_N\to\infty$ such that, for each $k\in\{1,2\}$,
\begin{equation}
\frac{M_N}{\sigma_N^{2+k}}
\sum_{s\ge0}c_N(s,m_N;k)\theta_N(s)^{1-(2+k)/\nu}\xrightarrow{p}0,
\qquad k=1,2,
\label{eq:weak_dep_clt_rate_1}
\end{equation}
and, for that same sequence,
\begin{equation}
\frac{M_N^2}{\sigma_N}\theta_N(m_N)^{1-1/\nu}\xrightarrow{p}0.
\label{eq:weak_dep_clt_rate_2}
\end{equation}
\end{assumption}

\begin{theorem}[Design-based specialization of the KMS network CLT]
\label{thm:cell_clt}
Under Assumptions~\ref{ass:contribution_metric_compatibility}, \ref{ass:weights_moments}, and \ref{ass:weak_dependence_rates},
\begin{equation}
\frac{\widehat\tau^{HT}_{w,\tilde\alpha}-\mathbb E_N\widehat\tau^{HT}_{w,\tilde\alpha}}{\sqrt{V_N}}
=\frac{T_N}{\sigma_N}\rightsquigarrow_{\mathcal F} N(0,1).
\label{eq:clt_centered}
\end{equation}
If also
\begin{equation}
\frac{|\mathbb E_N\widehat\tau^{HT}_{w,\tilde\alpha}-\tau_{w,N}|}{\sqrt{V_N}}\xrightarrow{p}0,
\label{eq:bias_negligible}
\end{equation}
then
\begin{equation}
\frac{\widehat\tau^{HT}_{w,\tilde\alpha}-\tau_{w,N}}{\sqrt{V_N}}
\rightsquigarrow_{\mathcal F} N(0,1).
\label{eq:clt_tau}
\end{equation}
\end{theorem}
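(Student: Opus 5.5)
The plan is to reduce the statement to a direct, conditional-on-$\mathcal F_N$ application of the Kojevnikov--Marmer--Song network CLT (their Theorem 3.2 for graph-$\psi$ arrays). The argument then finishes with a Slutsky-type step for the causal centering.

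First, the identity in \eqref{eq:clt_centered} is algebraic. By \eqref{eq:inference_variance_scaling},
\[
\widehat\tau^{HT}_{w,\tilde\alpha}-\mathbb E_N\widehat\tau^{HT}_{w,\tilde\alpha}=M_N^{-1}\sum_a(Q_{Na}-\bar q_{Na})=T_N/M_N .
\]
Since $\sqrt{V_N}=\sigma_N/M_N$, the ratio equals $T_N/\sigma_N$. It therefore suffices to show $T_N/\sigma_N\rightsquigarrow_{\mathcal F}N(0,1)$.

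Second, I verify the KMS hypotheses pointwise in the environment. Fix a realization $\omega\in E_N^{\rm good}$. Under $\mathbb P_N$ the array $\{\xi_{Na}\}_{a\in\Ocal_N}$ is a mean-zero triangular array indexed by the vertices of the graph $\mathcal G_N^{\Ocal}$. By Assumption~\ref{ass:contribution_metric_compatibility}, this graph is connected and its distance is $d_N$. This is exactly the index structure KMS require. Because of geodesic closure, the balls, shells, $\delta_N^\partial$, $\Delta_N$ and $c_N$ defined on $\Ocal_N$ coincide with the KMS topology quantities, and the graph-$\psi$ inequality \eqref{eq:graph_psi_definition} is their $\psi$-dependence condition with coefficients $\theta_N(s)$. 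Assumption~\ref{ass:weights_moments} supplies their uniform $L^\nu$ moment bound with $\nu>4$ and $\sigma_N^2>0$. Assumption~\ref{ass:weak_dependence_rates} supplies their two rate conditions, with $m_N$ deterministic.

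The main obstacle is that KMS state their conditions deterministically and conclude weak convergence. Here the rates \eqref{eq:weak_dep_clt_rate_1}--\eqref{eq:weak_dep_clt_rate_2} only converge in outer probability, and the conclusion is a Kolmogorov-distance statement in probability. I would handle this through the quantitative form of their proof. Their argument (Stein's method / Lindeberg-type with the characteristic-function bound) produces, for each $N$, an explicit bound on $\sup_t|\mathbb P_N(T_N/\sigma_N\le t)-\Phi(t)|$ by a continuous increasing function of the left-hand sides of the rate conditions plus constants depending only on $C_\xi,\nu,C_\psi$. If only a weak-convergence statement is available, I would instead use the subsequence characterization of convergence in probability. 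Every subsequence has a further subsequence along which the rate quantities tend to zero almost surely on $E_N^{\rm good}$. Along it KMS apply pathwise, and P\'olya's theorem upgrades this to uniform convergence of the CDFs because $\Phi$ is continuous. Hence the conditional Kolmogorov error tends to $0$ almost surely along the sub-subsequence, which gives convergence in probability. The event $(E_N^{\rm good})^c$ has vanishing probability and the error is bounded by one, so it is negligible.

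Finally, for \eqref{eq:clt_tau} I write
\[
(\widehat\tau^{HT}_{w,\tilde\alpha}-\tau_{w,N})/\sqrt{V_N}=T_N/\sigma_N+\beta_N,
\qquad
\beta_N:=(\mathbb E_N\widehat\tau^{HT}_{w,\tilde\alpha}-\tau_{w,N})/\sqrt{V_N}.
\]
The term $\beta_N$ is $\mathcal F_N$-measurable, so it is a deterministic shift under $\mathbb P_N$. Then
\[
\sup_t|\mathbb P_N(T_N/\sigma_N+\beta_N\le t)-\Phi(t)|\le \sup_t|\mathbb P_N(T_N/\sigma_N\le t)-\Phi(t)|+\sup_t|\Phi(t-\beta_N)-\Phi(t)| .
\]
The last term is at most $|\beta_N|/\sqrt{2\pi}$, which tends to $0$ in probability by \eqref{eq:bias_negligible}. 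This completes the plan.
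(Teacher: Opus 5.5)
Your proposal is correct and essentially matches the paper's proof: the algebraic reduction to $T_N/\sigma_N$, a pathwise application of the deterministic KMS theorem on good environments via the sub-subsequence characterization of convergence in probability (with P\'olya upgrading to Kolmogorov distance), and a Slutsky-type step for the $\mathcal F_N$-measurable bias shift. Rely on the subsequence route rather than the hoped-for explicit Kolmogorov bound, since KMS state their CLT as a weak-convergence result and such a bound would have to be extracted separately; your estimate $\sup_t|\Phi(t-\beta_N)-\Phi(t)|\le|\beta_N|/\sqrt{2\pi}$ is a slightly more explicit substitute for the paper's conditional Slutsky plus repeated subsequence argument.
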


\paragraph{Meaning.}
\emph{Question:} when do local centered HT contributions obey an existing network CLT? \emph{Role:} the theorem delivers IC-0 after the topology, moment, variance, and graph-$\psi$ rates are verified. \emph{Boundary:} it centers at the design expectation; replacing that center by the GATE is the separate IC-1 condition displayed in \eqref{eq:bias_negligible}. \emph{Proof.} See Appendix~\ref{subsec:clt_proofs}.

Theorem~\ref{thm:cell_clt} is a direct design-based specialization of the existing KMS probabilistic result. The paper's new theoretical work begins by connecting assignment structure to the contribution-dependence coefficient. Under root-$M_N$ sum variance, polynomial graph-ball growth and geometric dependence verify the KMS rates with $m_N\asymp\log M_N$; Appendix~\ref{app:conditional_and_primitive_extensions} gives the rate calculation.

\subsection{Dependence control for separated sets of contributions}
\label{subsec:design_side_certificate}

The interference map first determines which assignments generate each contribution. This local-measurability interface also turns finite-range assignment designs into finite-range contribution arrays.

\begin{assumption}[Conditional local contributions]
\label{ass:local_outcomes}
For a finite $r_Y$, conditional on $\mathcal F_N$, $Q_{Na}$ is measurable with respect to assignments in
$\mathcal B_N^{\Zcal}(a;r_Y):=\{c\in\Zcal_N:d_N^{\Zcal}(a,c)\le r_Y\}$ and $\mathcal F_N$-measurable quantities. Proposition~\ref{prop:gaussian_canonical_certificate} makes no claim for additional random outcome noise outside $\mathcal F_N$. Appendix~\ref{app:conditional_and_primitive_extensions} shows that independent own-cell noise preserves the same coefficient; shared or locally shared noise requires an enlarged joint assignment--noise certificate.
\end{assumption}

For the semi-synthetic demand model, a finite radius can be asserted only on a supergraph containing the $A^\star$ edges, lag-two own-unit links, and lagged-competitor paths. The default sampler is not certified on that enlarged graph; those simulations therefore provide robustness evidence for finite-sample MSE rather than an inference certificate.

Finite-range designs now admit a concrete sufficient result.

\begin{corollary}[Finite-range local-factor IC-0]
\label{cor:local_factor_ic0}
Suppose Assumption~\ref{ass:local_outcomes} holds and latent assignments admit
\begin{equation}
G_{Na}=\sum_{u\in\Lambda_{Na}}\ell_{Na,u}\epsilon_{Nu},\qquad \sum_u\ell_{Na,u}^2=1,
\label{eq:local_factor_sampler}
\end{equation}
with independent Gaussian shocks and disjoint loading supports whenever assignment-cell distance exceeds a fixed $r_F$. If $\sigma_N^2\asymp M_N$, fourth moments are uniformly bounded, and
\begin{equation}
\max_{a\in\Ocal_N}|\mathcal B_N(a;2r_Y+r_F)|=o_p(M_N^{1/4}),
\label{eq:local_factor_ic0_rate}
\end{equation}
then the design-centered limit in Theorem~\ref{thm:cell_clt} holds.
\end{corollary}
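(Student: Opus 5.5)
The plan is to reduce the corollary to Theorem~\ref{thm:cell_clt} by showing that the contribution array is \emph{finite-range dependent} on $\mathcal G_N^{\Ocal}$. In that case the graph-$\psi$ coefficient can be taken as $\theta_N(s)=\ind\{s\le R\}$ with $R:=2r_Y+r_F$. Once this holds, Assumption~\ref{ass:weak_dependence_rates} becomes a pure topology condition, and \eqref{eq:local_factor_ic0_rate} should be exactly what makes it hold.

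\textbf{Step 1: finite-range dependence.} Take $a,c\in\Ocal_N$ with $d_N(a,c)>R$. By Assumption~\ref{ass:contribution_metric_compatibility}, $d_N^{\Zcal}(a,c)=d_N(a,c)$. Then for any $a'\in\mathcal B^{\Zcal}_N(a;r_Y)$ and $c'\in\mathcal B^{\Zcal}_N(c;r_Y)$, the triangle inequality gives $d_N^{\Zcal}(a',c')>r_F$. So the loading supports $\Lambda_{Na'}$ and $\Lambda_{Nc'}$ are disjoint.

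Now take disjoint sets $A,B\subseteq\Ocal_N$ with $d_N(A,B)>R$. The Gaussian scores generating $\{Q_{Na}\}_{a\in A}$ are functions of shocks in $\bigcup_{a'}\Lambda_{Na'}$, and those generating $\{Q_{Nc}\}_{c\in B}$ are functions of a disjoint family of shocks. The thresholded assignments inherit this. By Assumption~\ref{ass:local_outcomes}, $\xi_{N,A}$ and $\xi_{N,B}$ are therefore conditionally independent given $\mathcal F_N$, and every covariance in \eqref{eq:graph_psi_definition} vanishes for $s>R$. For $s\le R$, \eqref{eq:graph_psi_definition} holds trivially with $\theta_N(s)=1$, using $|\Cov_N(f,g)|\le 2\|f\|_\infty\|g\|_\infty$. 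This gives graph-$\psi$ dependence with $\theta_N(s)=\ind\{s\le R\}$.

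\textbf{Step 2: verify the KMS rates.} I would choose $m_N\to\infty$ slowly, for instance $m_N=R+1$ eventually, or any divergent sequence exceeding $R$. Then $\theta_N(m_N)=0$, so \eqref{eq:weak_dep_clt_rate_2} holds. In \eqref{eq:weak_dep_clt_rate_1} only the terms with $s\le R$ survive, and $\theta_N(s)^{1-(2+k)/\nu}=1$ there. Here $\nu$ is the moment index; the fourth-moment assumption is read as supplying $\nu>4$ as in Assumption~\ref{ass:weights_moments}.

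The remaining task is to bound $c_N(s,m_N;k)$ for $s\le R$. Letting $\alpha\to\infty$ in \eqref{eq:c_topology_definition} bounds it by roughly $\max_a|\mathcal B_N(a;m_N)|^{k}$ times a shell-size average. This is awkward, because it depends on $m_N$. It is cleaner to use the simpler finite-range route behind KMS: with $m$-dependence at range $R$, a Stein-type CLT (e.g.\ the dependency-graph bound) needs
\[
M_N D_N^2/\sigma_N^3\to0 \quad\text{and}\quad M_N D_N^3/\sigma_N^4\to0,
\]
where $D_N=\max_a|\mathcal B_N(a;R)|$.

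\textbf{Step 3: match to the stated rate.} With $\sigma_N^2\asymp M_N$, the two conditions become $D_N^2 M_N^{-1/2}\to0$ and $D_N^3M_N^{-1}\to0$. Both are implied by $D_N=o_p(M_N^{1/4})$, which is exactly \eqref{eq:local_factor_ic0_rate}. The fourth-moment bound controls the third- and fourth-moment terms in the Stein bound.

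I expect the main obstacle to be this step. One has to either show that $c_N(s,m_N;k)$ can be bounded by ball sizes at radius $R$ rather than $m_N$ (since $\theta_N$ vanishes beyond $R$, one would take $m_N=R$ in the KMS argument), or invoke the dependency-graph CLT directly in the conditional form $\rightsquigarrow_{\mathcal F}$. I would first try the direct dependency-graph Kolmogorov bound applied conditionally on $\mathcal F_N$ on $E_N^{\rm good}$, and then pass to outer probability using the boundedness of the Kolmogorov distance.
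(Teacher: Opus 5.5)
Your proposal is correct and, once you commit to the dependency-graph branch, it is essentially the paper's proof. Disjoint shock supports plus the triangle inequality give conditional independence beyond distance $2r_Y+r_F$, and the paper then invokes Corollary~\ref{cor:finite_range_clt}: the Baldi--Rinott bound with $A_{3,N}=O\{(D_N+1)^2M_N^{-1/2}\}$ and $A_{4,N}=O\{(D_N+1)^3M_N^{-1}\}$ gives exactly your two conditions. You should drop the KMS detour entirely. It requires $m_N\to\infty$, so $m_N=R+1$ is not admissible, and control of radius-$R$ balls alone does not bound $c_N(s,m_N;k)$. The dependency-graph route also needs only uniformly bounded fourth moments, so no $\nu>4$ reading is required.
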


Disjoint factor supports make contributions beyond radius $2r_Y+r_F$ conditionally independent, so a dependency-graph normal approximation applies \citep{baldirinott1989}. This also identifies a certified design class over which the finite-sample COSTA objective can be optimized: local-factor loadings with a fixed primitive-shock radius. The reported low-rank Kronecker sampler is evaluated for finite-sample MSE and is not automatically covered; causal centering and studentization remain separate. Exact locality is not necessary: Proposition~\ref{prop:local_approximation_transfer} transfers the CLT from local approximants when their aggregate conditional-$L^2$ error is negligible relative to $\sigma_N$.

Pairwise assignment correlations cannot answer the required dependence question because each HT contribution is a nonlinear function of a neighborhood of assignments. The relevant proof object is therefore maximal dependence between the Gaussian blocks generating two separated sets of contributions.

Let $G_N\sim N(0,\Omega_N)$ and $Z_{Na}=\ind\{G_{Na}\le\Phi^{-1}(p_N)\}$. Throughout this section, $\Omega_N$ denotes the \emph{latent Gaussian correlation}, $\Rnorm_N=\Gp(\Omega_N)$ denotes the \emph{binary normalized covariance}, and $\theta_N^{\rm GC}$ denotes a \emph{setwise sufficient upper bound} for the actual graph-$\psi$ coefficient. These three objects are not interchangeable. For $U,V\subseteq\Zcal_N$, define
\begin{equation}
\rho_{\Omega_N}(U,V)
:=\sup_{x,y}
\frac{|x^\top\Omega_{N,UV}y|}
{(x^\top\Omega_{N,UU}x)^{1/2}(y^\top\Omega_{N,VV}y)^{1/2}},
\label{eq:gaussian_canonical_correlation}
\end{equation}
with zero-denominator ratios set to zero, and write
$A^{+r_Y}=\cup_{a\in A}\mathcal B_N^{\Zcal}(a;r_Y)$.

\begin{proposition}[Canonical-correlation design certificate]
\label{prop:gaussian_canonical_certificate}
Under Assumptions~\ref{ass:contribution_metric_compatibility} and~\ref{ass:local_outcomes}, the contribution array is graph-$\psi$ dependent with
\begin{equation}
\theta_N^{\rm GC}(0):=1,\qquad
\theta_N^{\rm GC}(s):=
\sup_{\substack{\varnothing\ne A,B\subseteq\Ocal_N,\ A\cap B=\varnothing\\d_N(A,B)\ge s}}
\rho_{\Omega_N}(A^{+r_Y},B^{+r_Y}),\quad s\ge1,
\label{eq:gaussian_theta_certificate}
\end{equation}
where an empty supremum is zero. Hence Theorem~\ref{thm:cell_clt} applies whenever the KMS rates hold with $\theta_N=\theta_N^{\rm GC}$. The result covers the discontinuous threshold map because the multidimensional Gaussian maximal-correlation (Gebelein) inequality applies to square-integrable measurable transformations of the two Gaussian blocks \citep{veraar2009correlation}.
\end{proposition}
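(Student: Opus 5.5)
We verify Definition~\ref{def:graph_psi_dependence} directly for $\theta_N=\theta_N^{\rm GC}$. The case $s=0$ is trivial: take $\theta_N^{\rm GC}(0)=1$ and bound $|\Cov_N\{f(\xi_{N,A}),g(\xi_{N,B})\}|\le 2\|f\|_\infty\|g\|_\infty$, which is absorbed by $C_\psi|A||B|\ge C_\psi$ for any $C_\psi\ge 2$. Monotonicity in $s$ is immediate because the feasible set of pairs $(A,B)$ shrinks as $s$ increases, and $\theta_N^{\rm GC}(s)\in[0,1]$ because canonical correlations lie in $[0,1]$. The substantive case is therefore $s\ge1$.

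Fix disjoint nonempty $A,B\subseteq\Ocal_N$ with $d_N(A,B)\ge s$, and fix bounded Lipschitz $f,g$. By Assumption~\ref{ass:local_outcomes}, conditional on $\mathcal F_N$, each $Q_{Na}$ with $a\in A$ is a measurable function of $\{Z_{Nc}:c\in A^{+r_Y}\}$ and $\mathcal F_N$-measurable quantities. Since $\bar q_{Na}$ is $\mathcal F_N$-measurable, the same holds for $\xi_{Na}$. Each $Z_{Nc}$ is in turn the measurable threshold function $\ind\{G_{Nc}\le\Phi^{-1}(p_N)\}$. Hence $F:=f(\xi_{N,A})=\tilde f(G_{N,A^{+r_Y}})$ and $G':=g(\xi_{N,B})=\tilde g(G_{N,B^{+r_Y}})$ for Borel functions $\tilde f,\tilde g$. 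These functions are bounded by $\|f\|_\infty$ and $\|g\|_\infty$, so they are square-integrable. The discontinuity of the threshold is irrelevant here, because only measurability and boundedness are used.

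Next apply the multidimensional Gebelein inequality for jointly Gaussian vectors \citep{veraar2009correlation}. For the Gaussian blocks $X=G_{N,U}$ and $Y=G_{N,V}$, with $U=A^{+r_Y}$ and $V=B^{+r_Y}$, it gives $|\Cov(\phi(X),\chi(Y))|\le\rho_{\Omega_N}(U,V)\,\|\phi(X)-\E\phi(X)\|_2\,\|\chi(Y)-\E\chi(Y)\|_2$. The maximal correlation of two jointly Gaussian vectors equals the first canonical correlation \eqref{eq:gaussian_canonical_correlation}. If $U$ and $V$ overlap, the statement remains valid with $\rho=1$, which the supremum covers. Degenerate (singular) blocks are handled by the zero-denominator convention: restrict to the nondegenerate span, or note that directions with $x^\top\Omega_{UU}x=0$ correspond to almost surely zero combinations and contribute nothing.

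Bounding the standard deviations by the sup norms gives
\[
|\Cov_N(F,G')|\le\|f\|_\infty\|g\|_\infty\,\rho_{\Omega_N}(A^{+r_Y},B^{+r_Y})\le\{\|f\|_\infty+\operatorname{Lip}(f)\}\{\|g\|_\infty+\operatorname{Lip}(g)\}\,|A||B|\,\theta_N^{\rm GC}(s),
\]
where the last step takes the supremum in \eqref{eq:gaussian_theta_certificate} and uses $|A||B|\ge1$. This is \eqref{eq:graph_psi_definition} with $C_\psi=2$.

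Assumption~\ref{ass:contribution_metric_compatibility} is what makes these index sets legitimate. It ensures that $d_N$ on $\Ocal_N$ coincides with $d_N^{\Zcal}$, so $A$ and $B$ are separated in the same metric that defines the neighborhoods $A^{+r_Y}$. It also ensures that the contribution array is indexed by the connected graph required by the KMS result. Combining graph-$\psi$ dependence with Theorem~\ref{thm:cell_clt} then gives the final claim whenever the rates of Assumption~\ref{ass:weak_dependence_rates} hold with this coefficient.

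The main obstacle is making the Gebelein step rigorous conditionally on $\mathcal F_N$. This requires $\Omega_N$ and $p_N$ to be $\mathcal F_N$-measurable and the Gaussian draw to be independent of the non-design randomness in $\mathcal F_N$, so that $G_N\mid\mathcal F_N\sim N(0,\Omega_N)$. It also requires checking that the inequality holds for singular covariance blocks and for vector-valued (not only scalar) Gaussian blocks. I would cite the vector version of the inequality and reduce the singular case to a nondegenerate linear reparametrization of each block.
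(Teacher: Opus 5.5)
Your proposal is correct and follows the paper's proof essentially step for step: $\xi_{N,A}$ and $\xi_{N,B}$ are measurable functions of $G_{N,A^{+r_Y}}$ and $G_{N,B^{+r_Y}}$ through the threshold map, the multidimensional Gebelein inequality is applied with singular blocks handled by restricting to covariance ranges, and $\Var\{f(\xi_{N,A})\}\le\|f\|_\infty^2$ together with $|A||B|\ge1$ dominates the graph-$\psi$ right-hand side. Your extra checks (the $s=0$ case, monotonicity, overlapping neighborhoods giving $\rho=1$, and the conditional Gaussian law given $\mathcal F_N$) are details the paper leaves implicit rather than a different route.
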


\paragraph{Meaning.}
The relevant dependence object is not a single pairwise assignment correlation. It is maximal dependence between the Gaussian blocks that generate two separated sets of nonlinear local contributions. The proposition gives a sufficient upper bound for graph-$\psi$ dependence; failure of this bound does not prove nonnormality. \emph{Proof.} See Appendix~\ref{subsec:gaussian_certificate_proofs}.

This is a setwise rather than pairwise condition: it controls nonlinear transformations of arbitrary separated sets of contributions. It is sufficient, not necessary. A statistic may have another conditional, stable, or mixed-normal limit even when this uniform approach fails.

Define the absolute far-row mass
\begin{equation}
\alpha_N(t):=\sup_{u\in\Zcal_N}\sum_{v:d_N^{\Zcal}(u,v)\ge t}|\Omega_{N,uv}|.
\label{eq:absolute_far_row_mass}
\end{equation}

The canonical-correlation supremum is the relevant setwise object, but it is difficult to verify directly over all separated set pairs. The next proposition replaces it with a conservative primitive bound based on a spectral floor and absolute far-row mass.

\begin{proposition}[Spectral far-row-mass certificate]
\label{prop:spectral_rowsum_certificate}
Under Assumption~\ref{ass:contribution_metric_compatibility}, if $\lambda_{\min}(\Omega_N)\ge\kappa_N>0$, then for $s>2r_Y$,
\begin{equation}
\theta_N^{\rm GC}(s)\le
\min\left\{1,\frac{\alpha_N(s-2r_Y)}{\kappa_N}\right\}.
\label{eq:spectral_rowsum_bound}
\end{equation}
Thus a uniform spectral floor and geometric far-row-mass decay give geometric setwise dependence. Average far-pair covariance is not a substitute for this rowwise bound.
\end{proposition}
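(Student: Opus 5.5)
Let $U=A^{+r_Y}$ and $V=B^{+r_Y}$ for disjoint nonempty $A,B\subseteq\Ocal_N$ with $d_N(A,B)\ge s>2r_Y$. By the definition of $\theta_N^{\rm GC}$ in Proposition~\ref{prop:gaussian_canonical_certificate}, it suffices to bound $\rho_{\Omega_N}(U,V)$ uniformly over such pairs. The trivial bound $\rho\le1$ follows from Cauchy--Schwarz for the positive semidefinite $\Omega_N$, so the work is the second term of the minimum.

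\emph{Step 1: separation of the enlarged blocks.} Take $u\in U$ and $v\in V$, with $u\in\mathcal B_N^{\Zcal}(a;r_Y)$, $v\in\mathcal B_N^{\Zcal}(c;r_Y)$, $a\in A$, $c\in B$. Assumption~\ref{ass:contribution_metric_compatibility} gives $d_N^{\Zcal}(a,c)=d_N(a,c)\ge s$. The triangle inequality in $\mathcal G_N$ then gives
\[
d_N^{\Zcal}(u,v)\ge s-2r_Y\ge1 .
\]
So $U\cap V=\varnothing$, and every entry of $\Omega_{N,UV}$ is a far pair at distance at least $t:=s-2r_Y$.

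\emph{Step 2: spectral floor on the denominators.} $\Omega_{N,UU}$ is a principal submatrix, so by Cauchy interlacing $\lambda_{\min}(\Omega_{N,UU})\ge\kappa_N$, and likewise for $V$. Hence
\[
x^\top\Omega_{N,UU}x\ge\kappa_N\|x\|_2^2,\qquad y^\top\Omega_{N,VV}y\ge\kappa_N\|y\|_2^2 .
\]
Therefore $\rho_{\Omega_N}(U,V)\le\|\Omega_{N,UV}\|_{2}/\kappa_N$. The zero-denominator convention never triggers, since $\kappa_N>0$.

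\emph{Step 3: Schur/row-sum bound on the numerator.} Apply the Schur test, $\|K\|_2\le\sqrt{\|K\|_{1}\|K\|_{\infty}}$. The maximal absolute row sum of $\Omega_{N,UV}$ is at most $\sup_u\sum_{v:d^{\Zcal}(u,v)\ge t}|\Omega_{N,uv}|\le\alpha_N(t)$. By symmetry of $\Omega_N$ and of $d_N^{\Zcal}$, the maximal absolute column sum obeys the same bound. Thus $\|\Omega_{N,UV}\|_2\le\alpha_N(s-2r_Y)$, and combining with Step 2 gives the displayed bound. The bound is uniform in $(A,B)$, so it passes to the supremum.

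\emph{Corollary claim.} If $\kappa_N\ge\kappa>0$ and $\alpha_N(t)\le C\lambda^t$, then $\theta_N^{\rm GC}(s)\le(C\lambda^{-2r_Y}/\kappa)\lambda^{s}$, which is geometric in $s$.

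\emph{Why an average does not suffice.} A far-pair average can be small while a single row carries $O(1)$ far mass. For example, one cell can be strongly correlated with many distant cells, for instance through a shared Kronecker factor. Then the block pair consisting of that cell and its far partners has canonical correlation bounded away from zero. I would sketch this example rather than prove it formally.

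The main obstacle is Step 1, not Steps 2–3. It depends on geodesic closure: without it, burn-in shortcuts could make $U$ and $V$ close in $\mathcal G_N$ even when $A$ and $B$ are far in $\mathcal G_N^{\Ocal}$. Steps 2–3 are standard once the far-pair property is established.
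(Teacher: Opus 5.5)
Your proof is correct and follows the paper's argument essentially step for step. You use geodesic closure plus the triangle inequality to separate $A^{+r_Y}$ and $B^{+r_Y}$ by $s-2r_Y$, pass the spectral floor to principal submatrices to get $\rho_{\Omega_N}(U,V)\le\kappa_N^{-1}\|\Omega_{N,UV}\|$, and bound that norm by $\alpha_N(s-2r_Y)$ via the Schur row/column-sum test. The only cosmetic difference is that you bound the variational ratio directly, whereas the paper writes canonical correlation as the operator norm of the whitened cross block $\Omega_{N,UU}^{-1/2}\Omega_{N,UV}\Omega_{N,VV}^{-1/2}$.
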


\paragraph{Meaning.}
Canonical correlation is the relevant setwise dependence object but is difficult to maximize over all separated sets. This proposition replaces it by a conservative primitive check using a spectral floor and absolute far-row mass. Separate topology and variance rates are still needed for a CLT, and kernel conditions are still needed for HAC. \emph{Proof.} See Appendix~\ref{subsec:gaussian_certificate_proofs}.

The primitive check can be reduced factorwise for Kronecker covariance. Suppose $\Omega_N=\Omega_{I,N}\otimes\Omega_{T,N}$ and the assignment metric is additive,
$d_N^{\Zcal}\{(i,b),(j,r)\}=d_{I,N}(i,j)+d_{T,N}(b,r)$. Define
\[
S_{X,N}=\sup_u\sum_v|\Omega_{X,N,uv}|,\qquad
\alpha_{X,N}(t)=\sup_u\sum_{v:d_{X,N}(u,v)\ge t}|\Omega_{X,N,uv}|,
\quad X\in\{I,T\}.
\]
\begin{corollary}[Kronecker reduction of far-row mass]
\label{cor:kronecker_rowsum}
The cell-level far-row mass satisfies
\begin{equation}
\alpha_N(t)\le \alpha_{I,N}(\lceil t/2\rceil)S_{T,N}+S_{I,N}\alpha_{T,N}(\lceil t/2\rceil).
\label{eq:kronecker_rowsum}
\end{equation}
If the factor matrices have spectral floors $\kappa_{I,N}$ and $\kappa_{T,N}$, then $\Omega_N$ has floor $\kappa_{I,N}\kappa_{T,N}$, so Proposition~\ref{prop:spectral_rowsum_certificate} applies with this product.
\end{corollary}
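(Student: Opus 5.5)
The plan has two parts: the far-row-mass bound, proved by a pigeonhole on the additive metric, and the spectral floor, read off from the eigenstructure of a Kronecker product.

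\textbf{Far-row mass.} Fix a cell $u=(i,b)$ and a threshold $t$. Entrywise, $|\Omega_{N,(i,b),(j,r)}|=|\Omega_{I,N,ij}|\,|\Omega_{T,N,br}|$. Because the metric is additive, any cell $(j,r)$ with
\[
d_{I,N}(i,j)+d_{T,N}(b,r)\ge t
\]
must satisfy $d_{I,N}(i,j)\ge\lceil t/2\rceil$ or $d_{T,N}(b,r)\ge\lceil t/2\rceil$. Indeed, if both were at most $\lceil t/2\rceil-1$, the sum would be at most $2\lceil t/2\rceil-2<t$. The far set is therefore covered by the union of two rectangles, $\{j:d_I\ge\lceil t/2\rceil\}\times[B_N]$ and $[N]\times\{r:d_T\ge\lceil t/2\rceil\}$. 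Bounding the sum over the union by the sum over the two rectangles (overlap only helps), each rectangle sum factorizes:
\[
\Bigl(\sum_{j:d_I\ge\lceil t/2\rceil}|\Omega_{I,ij}|\Bigr)\Bigl(\sum_r|\Omega_{T,br}|\Bigr)\le\alpha_{I,N}(\lceil t/2\rceil)\,S_{T,N},
\]
and symmetrically the second rectangle is bounded by $S_{I,N}\,\alpha_{T,N}(\lceil t/2\rceil)$. Taking the supremum over $u$ gives \eqref{eq:kronecker_rowsum}.

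\textbf{Spectral floor.} Both factors are symmetric positive semidefinite, indeed positive definite given their floors. The eigenvalues of $\Omega_{I,N}\otimes\Omega_{T,N}$ are exactly the products $\lambda_k(\Omega_{I,N})\lambda_l(\Omega_{T,N})$, with eigenvectors formed as Kronecker products of the factor eigenvectors. Every such product is at least $\kappa_{I,N}\kappa_{T,N}>0$. Since $\Omega_N$ is a correlation matrix with unit diagonal, Proposition~\ref{prop:spectral_rowsum_certificate} applies directly with $\kappa_N=\kappa_{I,N}\kappa_{T,N}$, provided its standing assumption (Assumption~\ref{ass:contribution_metric_compatibility}) holds.

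\textbf{Main obstacle.} Neither step is hard. The only point needing care is the ceiling arithmetic in the pigeonhole, which must remain valid for odd $t$ and for $t\le1$; for $t\le1$ the bound is trivial, since the right-hand side then dominates the full row sum $S_{I,N}S_{T,N}$.
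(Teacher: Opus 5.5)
Your proof is correct and follows the paper's argument: the pigeonhole on the additive metric covers the far set by two rectangles, the absolute Kronecker row sums factorize, and the eigenvalues of a Kronecker product multiply. One harmless slip: $t\le1$ needs no special case, since your pigeonhole already holds for every $t$, and the justification you give is false at $t=1$ (with $\Omega_{I,N}$ the identity, the right-hand side is $\alpha_{T,N}(1)<S_{T,N}=S_{I,N}S_{T,N}$), although the bound itself still holds there.
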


Additive distance at least $t$ implies that the unit pair or the time pair is separated by at least $\lceil t/2\rceil$; summing the absolute Kronecker products gives \eqref{eq:kronecker_rowsum}, while eigenvalues of a Kronecker product multiply. Thus spatial and temporal decay can be checked separately. Fixed-rank Gram factors generally have zero spectral floor, so local loadings, growing dimensions, structured regularization, or another dependence argument may still be required.

When the spectral floor is supplied asymptotically only by a Gaussian nugget, Corollary~\ref{cor:nugget_supported_certificate_frontier} shows that root-$M_N$ raw-GATE IC-1 and this row-sum route jointly require $\alpha_N(s_N-2r_Y)=o_p(\lambda_N)=o_p(M_N^{-1})$. This joint rate need not hold when the structured covariance itself supplies a fixed spectral floor.

\subsection{Why global-factor samplers may fail the certificate}
\label{subsec:global_factor_barriers}

For a correlation matrix $\Omega\in\mathbb R^{L\times L}$ define
\begin{equation}
r_{\rm eff}(\Omega):=\frac{\{\tr(\Omega)\}^2}{\|\Omega\|_F^2}=\frac{L^2}{\|\Omega\|_F^2},
\label{eq:effective_rank}
\end{equation}
and let $D_N^{\Zcal}(h)=\max_{a\in\Zcal_N}|\{c\in\Zcal_N:d_N^{\Zcal}(a,c)\le h\}|$.

The row-mass requirement is intentionally stronger than pairwise decay. The next results show why weaker summaries---small pairwise correlations, bounded nominal factor dimension, or even increasing effective rank---cannot generally replace a uniform bound over separated sets.

\begin{proposition}[Global-factor certifiability barriers]
\label{prop:global_factor_certifiability}
The following statements hold.
\begin{enumerate}[label=(\roman*),leftmargin=2.6em]
\item \emph{Far-mass/effective-rank barrier.} For every $h\ge0$,
\label{prop:rank_barrier}
\begin{equation}
\sum_{d_N^{\Zcal}(a,c)>h}\Omega_{N,ac}^2
\ge \frac{L_N^2}{r_{\rm eff}(\Omega_N)}-L_ND_N^{\Zcal}(h)
\ge \frac{L_N^2}{\operatorname{rank}(\Omega_N)}-L_ND_N^{\Zcal}(h).
\label{eq:rank_barrier_bound}
\end{equation}
Moreover, suppose there is a sequence of nonnegative integers $h_N$ such that
\begin{equation}
\frac{D_N^{\Zcal}(h_N)}{L_N}\xrightarrow{p}0,
\qquad
\epsilon_N^{\rm far}:=
\max_{d_N^{\Zcal}(a,c)>h_N}|\Omega_{N,ac}|\xrightarrow{p}0.
\label{eq:rank_barrier_locality_conditions}
\end{equation}
Then $r_{\rm eff}(\Omega_N)\xrightarrow{p}\infty$. Thus a bounded-rank sequence cannot have both sparse $h_N$-balls and uniform far-pair decay. For the Kronecker sampler,
\begin{equation}
\operatorname{rank}(\Omega_N)\le d_{I,N}d_{T,N}.
\label{eq:kronecker_rank_bound}
\end{equation}
\item \emph{Exact shared-factor saturation.} If $\Omega_N=U_NU_N^\top$ and disjoint blocks $A,B$ each span the full factor space, then
\label{cor:shared_factor_saturation}
\begin{equation}
\rho_{\Omega_N}(A,B)=1.
\label{eq:shared_factor_saturation}
\end{equation}
\item \emph{Weak common-factor saturation.} If
\begin{equation}
\Omega_N=a_NU_NU_N^\top+(1-a_N)I_{L_N},\qquad0<a_N<1,
\label{eq:weak_common_factor_model}
\end{equation}
with unit-norm rows and full-column-rank block matrices, then
\label{prop:weak_common_factor_saturation}
\begin{equation}
\rho_{\Omega_N}(A,B)\ge
\frac{a_N}
{\sqrt{a_N+(1-a_N)/\lambda_{\min}(U_{N,A}^\top U_{N,A})}
 \sqrt{a_N+(1-a_N)/\lambda_{\min}(U_{N,B}^\top U_{N,B})}}.
\label{eq:weak_common_factor_lower_bound}
\end{equation}
If both block Gram eigenvalues are of their natural linear orders and $a_N\min(|A|,|B|)\to\infty$, the canonical correlation tends to one even when $a_N\to0$ and $r_{\rm eff}(\Omega_N)\to\infty$.
\end{enumerate}
\end{proposition}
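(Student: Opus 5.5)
\emph{Part~(i).} This part is Frobenius bookkeeping. Since $\tr(\Omega_N)=L_N$, the definition gives $\|\Omega_N\|_F^2=L_N^2/r_{\rm eff}(\Omega_N)$. I would split $\sum_{a,c}\Omega_{N,ac}^2$ into near pairs with $d_N^{\Zcal}(a,c)\le h$ and far pairs.

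\begin{itemize}
\item Each row has at most $D_N^{\Zcal}(h)$ near partners, and every entry of a correlation matrix satisfies $|\Omega_{N,ac}|\le1$. So the near part is at most $L_ND_N^{\Zcal}(h)$, which gives the first inequality in \eqref{eq:rank_barrier_bound}.
\item For the second inequality, write $\lambda_1,\dots,\lambda_k$ for the nonzero eigenvalues. Cauchy--Schwarz gives $(\sum\lambda_j)^2\le k\sum\lambda_j^2$, so $r_{\rm eff}\le\operatorname{rank}$.
\item For the asymptotic claim, bound the far part by $(\epsilon_N^{\rm far})^2L_N^2$, since there are at most $L_N^2$ pairs. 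Dividing by $L_N^2$ gives
\[
r_{\rm eff}(\Omega_N)^{-1}\le(\epsilon_N^{\rm far})^2+D_N^{\Zcal}(h_N)/L_N\xrightarrow{p}0 .
\]
\item The Kronecker rank bound \eqref{eq:kronecker_rank_bound} follows because $\Omega_N$ is the Gram matrix of the vectors $f_i\otimes t_b\in\mathbb R^{d_Id_T}$. Equivalently, it is the Kronecker product of two Gram matrices of ranks at most $d_I$ and $d_T$.
\end{itemize}

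\emph{Part~(ii).} The upper bound $\rho\le1$ is Cauchy--Schwarz in the $\Omega_N$ inner product. For equality, let $k$ be the factor dimension and fix any nonzero $w\in\mathbb R^k$. Because the rows of $U_{N,A}$ and of $U_{N,B}$ span $\mathbb R^k$, there exist $x$ and $y$ with $U_{N,A}^\top x=w=U_{N,B}^\top y$. Then
\[
x^\top\Omega_{N,AB}y=x^\top\Omega_{N,AA}x=y^\top\Omega_{N,BB}y=\|w\|^2>0,
\]
so the ratio in \eqref{eq:gaussian_canonical_correlation} equals one.

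\emph{Part~(iii).} I would use the same construction with the least-norm preimage. Set $G_A=U_{N,A}^\top U_{N,A}$, which is invertible by the full-column-rank assumption, and $x=U_{N,A}G_A^{-1}w$. Define $y$ analogously on $B$.

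\begin{itemize}
\item Since $A\cap B=\varnothing$, the identity contributes nothing off the diagonal, so $\Omega_{N,AB}=a_NU_{N,A}U_{N,B}^\top$. Hence $x^\top\Omega_{N,AB}y=a_N\|w\|^2$.
\item On the diagonal block, $x^\top\Omega_{N,AA}x=a_N\|w\|^2+(1-a_N)w^\top G_A^{-1}w$. This is at most $\|w\|^2\{a_N+(1-a_N)/\lambda_{\min}(G_A)\}$, and the same holds for $B$.
\item Dividing and cancelling $\|w\|^2$ yields \eqref{eq:weak_common_factor_lower_bound}.
\item For the limit, "natural linear order" means $\lambda_{\min}(G_A)\ge c|A|$ for some $c>0$, and similarly for $B$. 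Then $(1-a_N)/\lambda_{\min}(G_A)\le 1/(c|A|)=o(a_N)$ whenever $a_N|A|\to\infty$. Each square root is therefore $\sqrt{a_N}\{1+o(1)\}$, and the ratio tends to one.
\end{itemize}

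Finally, I would check that this is compatible with $a_N\to0$ and $r_{\rm eff}(\Omega_N)\to\infty$.
\begin{itemize}
\item Expanding the square and using unit-norm rows, $\|\Omega_N\|_F^2\le L_N+a_N^2\|U_NU_N^\top\|_F^2$, and $\|U_NU_N^\top\|_F^2\le L_N^2$.
\item Hence $r_{\rm eff}(\Omega_N)\ge \min\{L_N,a_N^{-2}\}/2$, which diverges whenever $a_N\to0$.
\end{itemize}
Taking $|A|\asymp|B|\asymp L_N$ and, say, $a_N=L_N^{-1/2}$ satisfies every hypothesis at once.

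The main obstacle is this last step. The conclusion requires pinning down what "natural linear order" means precisely enough that $(1-a_N)/\lambda_{\min}=o(a_N)$ holds. It also requires exhibiting the compatible scaling of $a_N$ and $|A|$ so that the effective-rank divergence holds simultaneously, rather than merely asserting it.
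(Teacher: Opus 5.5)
Your proposal is correct and follows the paper's proof essentially step for step. Part (i) uses the same near/far Frobenius split and eigenvalue Cauchy--Schwarz, and parts (ii)--(iii) use the same preimage construction $U_{N,A}^\top x=w=U_{N,B}^\top y$, with the least-norm choice $x=U_{N,A}G_A^{-1}w$ in (iii); the paper phrases (ii) as recovering $\zeta_N$ measurably from $G_{N,A}$ and $G_{N,B}$, but the functions it uses are exactly your linear forms, and your added bound $r_{\rm eff}(\Omega_N)\ge\tfrac12\min\{L_N,a_N^{-2}\}$ (which tacitly also needs $L_N\to\infty$) is a clean supplement, since the paper's proof of (iii) does not address effective rank and checks it only for the tight-frame example in Corollary~\ref{cor:tight_frame_block_saturation}.
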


\paragraph{Meaning.}
Small pairwise correlations, nominally low factor dimension, and favorable-looking rank summaries can coexist with strong dependence between large separated sets. These examples refute those summaries as universal sufficient conditions; they do not establish that every resulting estimator is nonnormal or that no factor-aware limit theory exists. \emph{Proof.} See Appendix~\ref{subsec:gaussian_certificate_proofs}.

\begin{corollary}[Separated tight-frame-plus-nugget obstruction]
\label{cor:tight_frame_block_saturation}
For every fixed factor dimension $r$, there exist size sequences $L_N\to\infty$, integer separations $s_N\to\infty$ with $s_N=o(L_N)$, connected assignment graphs with $\Ocal_N=\Zcal_N$, and blocks $A_N,B_N$ satisfying
\begin{equation}
|A_N|,|B_N|\asymp L_N,
\qquad d_N^{\Zcal}(A_N,B_N)\ge s_N>2r_Y,
\label{eq:tight_frame_separation}
\end{equation}
and unit-row tight frames $U_N\in\mathbb R^{L_N\times r}$ such that, for
$\Omega_N=L_N^{-1/4}U_NU_N^\top+(1-L_N^{-1/4})I_{L_N}$,
\begin{equation}
\max_{a\ne c}|\Omega_{N,ac}|\to0,
\qquad r_{\rm eff}(\Omega_N)\asymp rL_N^{1/2}\to\infty,
\qquad \theta_N^{\rm GC}(s_N)\to1.
\label{eq:tight_frame_certificate_failure}
\end{equation}
Thus pairwise decay and diverging effective rank do not certify decay of the latent Gaussian separated-set coefficient. The graph/frame construction and calculation are in Appendix~\ref{app:conditional_and_primitive_extensions}. This is a certifiability obstruction; by itself it does not prove that thresholded HT contributions have maximal correlation tending to one or that every estimator is nonnormal.
\end{corollary}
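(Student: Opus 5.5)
The construction has three ingredients: a graph with two large well-separated blocks, a unit-row tight frame that is balanced on each block, and a direct appeal to Proposition~\ref{prop:global_factor_certifiability}\ref{prop:weak_common_factor_saturation} with $a_N=L_N^{-1/4}$. Together these give $\theta_N^{\rm GC}(s_N)\to1$.

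\emph{Graph.} Take $\Ocal_N=\Zcal_N$ and let the assignment graph be a path on $L_N$ vertices; geodesic closure in Assumption~\ref{ass:contribution_metric_compatibility} is then trivial. Let $A_N$ be the first $\lfloor L_N/3\rfloor$ vertices and $B_N$ the last $\lfloor L_N/3\rfloor$. Then $d_N^{\Zcal}(A_N,B_N)\asymp L_N/3$. Choose $s_N=\lfloor\sqrt{L_N}\rfloor$, which satisfies $s_N\to\infty$ and $s_N=o(L_N)$, and gives $d(A_N,B_N)\ge s_N>2r_Y$ for large $N$. These are the blocks we plug into the supremum \eqref{eq:gaussian_theta_certificate}. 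Since $\theta_N^{\rm GC}$ is a supremum over sets containing the pair $(A_N^{\rm core},B_N^{\rm core})$, a lower bound on $\rho_{\Omega_N}(A_N^{+r_Y},B_N^{+r_Y})$ suffices. Enlarging the sets cannot decrease canonical correlation, so it is enough to bound $\rho_{\Omega_N}(A_N,B_N)$ from below.

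\emph{Frame.} Use a harmonic-type tight frame with rows $u_a\in\mathbb R^r$ of unit norm. One choice is to cycle $u_a$ through a fixed finite unit-norm tight frame $\{e_1,\dots,e_K\}$ of $\mathbb R^r$, for instance an orthonormal basis, so $K=r$. Taking $L_N$ a multiple of $3K$, every contiguous third of the path contains each frame vector equally often. Hence $U_{N,A}^\top U_{N,A}=(|A_N|/r)I_r$, and the same holds for $B_N$ and for the whole matrix, which makes $U_N$ tight with $U_N^\top U_N=(L_N/r)I_r$. The block Gram minimum eigenvalues are then $\asymp L_N/r$, their natural linear order.

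\emph{Calculations.} (i) Pairwise decay: for $a\ne c$ we have $|\Omega_{N,ac}|=L_N^{-1/4}|u_a^\top u_c|\le L_N^{-1/4}\to0$. (ii) Effective rank: $\|\Omega_N\|_F^2=\tr(\Omega_N^2)$. The eigenvalues of $\Omega_N$ are $a_NL_N/r+(1-a_N)$, with multiplicity $r$, and $1-a_N$ on the remaining $L_N-r$ directions. This gives $\|\Omega_N\|_F^2\approx r(a_NL_N/r)^2+L_N=L_N^{3/2}/r+L_N$, so $r_{\rm eff}=L_N^2/\|\Omega_N\|_F^2\asymp rL_N^{1/2}$. (iii) Saturation: substitute $\lambda_{\min}=|A_N|/r$ into \eqref{eq:weak_common_factor_lower_bound}. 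This yields $\rho\ge a_N/\{a_N+(1-a_N)r/|A_N|\}=1/\{1+(1-a_N)r/(a_N|A_N|)\}$. Since $a_N|A_N|\asymp L_N^{3/4}/3\to\infty$, the bound tends to one. Because $\theta_N^{\rm GC}\le1$, we conclude $\theta_N^{\rm GC}(s_N)\to1$.

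\emph{Main obstacle.} The delicate step is keeping frame balance and graph separation compatible, so that each separated block still spans $\mathbb R^r$ uniformly with linear-order Gram eigenvalues. The periodic path construction handles this. The effective-rank computation needs exact eigenvalues, which tightness of $U_N$ supplies. The other steps are routine.
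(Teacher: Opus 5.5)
Your proposal is correct and follows essentially the paper's argument: a unit-row tight frame built from coordinate vectors placed equally often on each block, part (iii) of Proposition~\ref{prop:global_factor_certifiability} with $a_N=L_N^{-1/4}$, the explicit two-level spectrum of $\Omega_N$ to get $r_{\rm eff}(\Omega_N)\sim rL_N^{1/2}$, and monotonicity of canonical correlation when passing from $(A_N,B_N)$ to $(A_N^{+r_Y},B_N^{+r_Y})$. The only difference is the graph. You use a single periodic path whose end-thirds are at distance of order $L_N$, and then pick $s_N=\lfloor\sqrt{L_N}\rfloor$ below that distance. The paper instead joins two communities by a bridge of exactly $s_N$ edges. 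Yours is a harmless simplification that avoids the paper's extra divisibility bookkeeping.
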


\paragraph{Meaning.}
Even increasing effective rank can hide a recoverable shared block factor. Effective rank is therefore a descriptive diagnostic, not a substitute for a uniform bound over separated sets. \emph{Proof.} See Appendix~\ref{subsec:gaussian_certificate_proofs}.

The matrix inequalities in Proposition~\ref{prop:global_factor_certifiability} use standard frame-potential and Gaussian-factor algebra; the inferential implication is the new point. Small pairwise covariance and growing effective rank are diagnostics, not a uniform separated-set condition. Sampled block correlations are lower bounds on the supremum in \eqref{eq:gaussian_theta_certificate}: one large draw can refute a proposed decay rate, but a finite collection of small draws cannot certify it. Implementations should therefore report \texttt{CERTIFIED}, \texttt{NOT CERTIFIED}, or \texttt{ROUTE REFUTED}, together with the proof object or stress test supporting that route-specific verdict.

The fixed-dimensional Kronecker sampler and its soft pairwise locality penalty are therefore not automatically certified. We call the penalized implementation \emph{COSTA-LocalPenalty}; it acquires IC-0, IC-1, or IC-2 status only after the corresponding conditions are verified. Failure of this latent Gaussian certificate does not prove that every statistic under the design is nonnormal; it shows only that the uniform local graph-$\psi$ route is unavailable without additional statistic-specific or factor-aware theory.

\subsection{When centered inference cannot target the raw GATE}
\label{subsec:bias_locality_trilemma}

The distinction between IC-0 and IC-1 is structural. The next result combines the path disagreement inequality in Proposition~\ref{prop:path_bias_lower_bound} with the same topology rate used for the centered CLT and an explicit assignment-decorrelation premise. It is sampler-agnostic and concerns only the raw HT estimator for the all-treated versus all-control GATE.

IC-0 centers only at the design expectation. To decide whether the same Gaussian approximation is causal, the expected-cut bias must be compared with the standard-error scale. The next theorem performs exactly this comparison on a canonical path.

\begin{theorem}[Raw-GATE incompatibility under far-assignment decorrelation on a path]
\label{thm:path_trilemma}
For each $N$, let $\Ocal_N=\Zcal_N=\{1,\ldots,M_N\}$ and let $\mathcal G_N$ be the path graph. Assume:
\begin{enumerate}[label=(\roman*),leftmargin=2.2em]
\item the common-marginal assignment law, structural model, and disturbance-orthogonality conditions of Proposition~\ref{prop:expected_cut_geometry} hold; $w_{Nk}=1$; writing $\vartheta_N$ for the structural exposure coefficient, $\vartheta_N\ge\vartheta_0>0$; and the nearest-neighbor exposure weights satisfy $H_{N,k,k+1}\ge h_0>0$ for $k=1,\ldots,M_N-1$;
\item $\sigma_N^2\asymp M_N$ and Assumption~\ref{ass:weights_moments} holds;
\item Assumption~\ref{ass:weak_dependence_rates} holds with its deterministic positive-integer sequence $m_N\to\infty$, satisfying $m_N<M_N$ eventually; and
\item assignment correlations decorrelate at that same scale:
\begin{equation}
\varepsilon_N^Z:=\sup_{1\le i\le M_N-m_N}|\Rnorm_{N,i,i+m_N}|\xrightarrow{p}0.
\label{eq:far_assignment_decorrelation}
\end{equation}
\end{enumerate}
Then \eqref{eq:clt_centered} holds,
\begin{equation}
m_N=o(M_N^{1/2}),
\label{eq:path_radius_upper_rootM}
\end{equation}
and
\begin{equation}
\frac{|\mathbb E_N\widehat\tau_N^{HT}-\tau_N|}{\sqrt{V_N}}\xrightarrow{p}\infty.
\label{eq:path_trilemma_divergence}
\end{equation}
Thus the sequence is IC-0 but not IC-1.
\end{theorem}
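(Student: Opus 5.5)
The plan is to prove the three conclusions separately: the centered limit from Theorem~\ref{thm:cell_clt}, the radius bound \eqref{eq:path_radius_upper_rootM} from the KMS rate \eqref{eq:weak_dep_clt_rate_1} with $k=2$, and the divergence \eqref{eq:path_trilemma_divergence} from the path disagreement bound of Proposition~\ref{prop:path_bias_lower_bound}.

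\textbf{Centered limit.} Since $\Ocal_N=\Zcal_N$ and the graph is a path, the induced contribution graph is the whole path. Geodesic closure \eqref{eq:geodesic_closure} therefore holds trivially, so Assumption~\ref{ass:contribution_metric_compatibility} is satisfied. Hypotheses (ii) and (iii) supply the other two assumptions of Theorem~\ref{thm:cell_clt}, which gives \eqref{eq:clt_centered} directly.

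\textbf{Radius bound.} I will evaluate the $s=0$ term of the sum in \eqref{eq:weak_dep_clt_rate_1} with $k=2$. Here $\theta_N(0)=1$, and the shell $\partial\mathcal B_N(a;0)=\{a\}$, so $\delta_N^{\partial}(0;\cdot)=1$. Because $\mathcal B_N(c;-1)=\varnothing$, we get $\Delta_N(0,m_N;2\alpha)=M_N^{-1}\sum_a|\mathcal B_N(a;m_N)|^{2\alpha}$. On a path with $m_N<M_N$, every ball of radius $m_N$, including those at the endpoints, contains at least $m_N+1$ vertices. Hence $\Delta_N(0,m_N;2\alpha)\ge (m_N+1)^{2\alpha}$ for every $\alpha>1$, and so $c_N(0,m_N;2)\ge (m_N+1)^2$.

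All summands in \eqref{eq:weak_dep_clt_rate_1} are nonnegative, so the rate forces $M_N m_N^2/\sigma_N^4\xrightarrow{p}0$. With $\sigma_N^2\asymp M_N$ this reads $m_N^2/M_N\to0$. Since $m_N$ is deterministic and the displayed quantity is deterministic up to the $\asymp$ constants, convergence in probability yields the deterministic statement \eqref{eq:path_radius_upper_rootM}.

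\textbf{Bias divergence.} I apply Proposition~\ref{prop:path_bias_lower_bound} with $n=M_N$, $m=m_N$ and $\epsilon_m=\varepsilon_N^Z$. The proposition uses the one-sided supremum $\sup\Rnorm\le\epsilon_m$, and $\varepsilon_N^Z$ bounds the absolute values, so the substitution is valid. Possible additional exposure edges beyond nearest neighbours only add nonnegative terms to the expected-cut bias \eqref{eq:bias_expected_cut}, so restricting to the path edges gives a lower bound. With $W=M_N$ this yields
\[
|\mathbb E_N\widehat\tau^{HT}_N-\tau_N|\ge \frac{\vartheta_0h_0}{M_N}\frac{M_N-m_N}{m_N}(1-\varepsilon_N^Z).
\]
Work on the event $\{\varepsilon_N^Z\le1/2\}\cap E_N^{\rm good}$, whose probability tends to one. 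Since $m_N=o(M_N)$, on this event the bias is bounded below by a constant times $m_N^{-1}$. Meanwhile $\sqrt{V_N}=\sigma_N/M_N\asymp M_N^{-1/2}$. The standardized bias is therefore at least a constant times $M_N^{1/2}/m_N$, which tends to infinity by the radius bound. This establishes \eqref{eq:path_trilemma_divergence} and hence that the sequence is IC-0 but not IC-1.

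\textbf{Main obstacle.} The delicate step is the radius bound: one must extract it honestly from the infimum over $\alpha$ in $c_N$ and handle the boundary cells. The observation that balls near the path endpoints still contain at least $m_N+1$ points resolves both issues, together with isolating the single $s=0$ term of a nonnegative sum.
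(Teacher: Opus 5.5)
Your proposal is correct and follows the paper's own argument: the centered limit from Theorem~\ref{thm:cell_clt}, the bound $m_N=o(M_N^{1/2})$ from the $s=0$ term of the KMS rate, and the divergence from Proposition~\ref{prop:path_bias_lower_bound} together with $\sqrt{V_N}\asymp M_N^{-1/2}$. The only differences are cosmetic. You use the $k=2$ rate with a pointwise lower bound on ball sizes, whereas the paper uses $k=1$ with the power-mean inequality, and both give $m_N=o(M_N^{1/2})$. You also check geodesic closure explicitly, which the paper leaves implicit.
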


\paragraph{Meaning.}
\emph{Question:} does a local centered CLT automatically justify raw-GATE inference? \emph{Role:} in the stated positive path and root-$M_N$ regime, the theorem compares design bias directly with the standard-error scale and shows a route-specific incompatibility between IC-0 locality and raw-HT IC-1. \emph{Boundary:} debiasing, different estimands, nonlocal limits, signed cancellation, or other variance scales are outside the obstruction. \emph{Proof.} See Appendix~\ref{subsec:causal_centering_proofs}.

\begin{corollary}[Gaussian-certificate path incompatibility]
\label{cor:gaussian_path_trilemma}
In Theorem~\ref{thm:path_trilemma}, suppose assignments are thresholded Gaussian, Assumption~\ref{ass:local_outcomes} holds, and the KMS rates are verified using $\theta_N^{\rm GC}$. Then the second rate implies \eqref{eq:far_assignment_decorrelation}, so the theorem applies.
\end{corollary}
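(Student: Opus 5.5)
The plan is to read the second KMS rate \eqref{eq:weak_dep_clt_rate_2} as a quantitative statement that $\theta_N^{\rm GC}(m_N)\to0$. I will then show that $\theta_N^{\rm GC}(m_N)$ dominates every assignment correlation at path distance $m_N$. Once \eqref{eq:far_assignment_decorrelation} is established, every hypothesis of Theorem~\ref{thm:path_trilemma} is in place, so its conclusions follow.

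\emph{Step 1 (the coefficient vanishes).} By hypothesis (ii) of Theorem~\ref{thm:path_trilemma}, $\sigma_N^2\asymp M_N$, so $M_N^2/\sigma_N\asymp M_N^{3/2}\to\infty$. The rate \eqref{eq:weak_dep_clt_rate_2} with $\theta_N=\theta_N^{\rm GC}$ therefore forces
\[
\theta_N^{\rm GC}(m_N)^{1-1/\nu}=o_p(M_N^{-3/2}).
\]
Since $\nu>4$, the exponent $1-1/\nu$ is positive. Because $\theta_N^{\rm GC}\in[0,1]$ is $\mathcal F_N$-measurable, this gives $\theta_N^{\rm GC}(m_N)\xrightarrow{p}0$, in fact at a polynomial rate.

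\emph{Step 2 (pairwise binary correlation is bounded by the setwise coefficient).} Fix $i\le M_N-m_N$ and take the singletons $A=\{i\}$ and $B=\{i+m_N\}$. On the path with $\Ocal_N=\Zcal_N$, these are disjoint and nonempty with $d_N(A,B)=m_N\ge1$, so they are admissible in the supremum \eqref{eq:gaussian_theta_certificate}. The Bernoulli variables have common variance $v$, so $\Rnorm_{N,i,i+m_N}$ is exactly the correlation of $Z_{Ni}=\ind\{G_{Ni}\le\Phi^{-1}(p_N)\}$ and $Z_{N,i+m_N}$. Each of these is a square-integrable measurable function of a single Gaussian coordinate. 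The Gebelein inequality invoked in Proposition~\ref{prop:gaussian_canonical_certificate} then gives
\[
|\Rnorm_{N,i,i+m_N}|\le\rho_{\Omega_N}(\{i\},\{i+m_N\}).
\]
The canonical correlation \eqref{eq:gaussian_canonical_correlation} is monotone under enlarging either index set, since the supremum runs over more vectors $x,y$ (pad with zeros). Hence
\[
\rho_{\Omega_N}(\{i\},\{i+m_N\})\le\rho_{\Omega_N}(A^{+r_Y},B^{+r_Y})\le\theta_N^{\rm GC}(m_N).
\]
This bound holds uniformly in $i$, so $\varepsilon_N^Z\le\theta_N^{\rm GC}(m_N)$. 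Alternatively, one can use $|\Gp(\rho)|\le|\rho|$ together with $|\Omega_{N,i,i+m_N}|\le\rho_{\Omega_N}$. The Gebelein route avoids having to prove that inequality for general $p$.

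\emph{Step 3 (conclude).} Combining Steps 1 and 2 gives $\varepsilon_N^Z\xrightarrow{p}0$, which is hypothesis (iv). Proposition~\ref{prop:gaussian_canonical_certificate} supplies the graph-$\psi$ property, and the KMS rates are assumed verified with $\theta_N^{\rm GC}$, so hypothesis (iii) holds. Hypotheses (i)--(ii) are inherited from the statement, so Theorem~\ref{thm:path_trilemma} applies.

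The only delicate point is Step 2. One must make sure that the singleton pair is genuinely admissible and that the Gebelein bound is applied to the discontinuous threshold maps rather than to the latent correlations. I expect no real obstacle beyond stating this monotonicity carefully.
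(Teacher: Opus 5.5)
Your proof is correct, and its first two moves match the paper's proof.
\begin{itemize}
\item The second KMS rate, combined with $M_N^2/\sigma_N\asymp M_N^{3/2}$, forces $\theta_N^{\rm GC}(m_N)\xrightarrow{p}0$.
\item The singleton pair $\{i\},\{i+m_N\}$, together with monotonicity of canonical correlation under enlarging the index sets, places every distance-$m_N$ pair under that coefficient.
\end{itemize}
The routes differ only in the final transfer from latent to binary correlation.

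The paper uses just the trivial entrywise bound $|\Omega_{N,i,i+m_N}|\le\rho_{\Omega_N}(A^{+r_Y},B^{+r_Y})$, obtained by taking coordinate vectors in \eqref{eq:gaussian_canonical_correlation}. It then invokes $\mathcal G_p(0)=0$ and uniform continuity of $(p,\rho)\mapsto\mathcal G_p(\rho)$ on the compact overlap set to conclude $\sup_i|\Rnorm_{N,i,i+m_N}|\xrightarrow{p}0$.

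You instead apply the Gebelein inequality to the two threshold indicators, which is exactly the bivariate statement $|\mathcal G_p(\rho)|\le|\rho|$. This yields the deterministic bound $\varepsilon_N^Z\le\theta_N^{\rm GC}(m_N)$.

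Each route has an advantage.
\begin{itemize}
\item Yours is slightly sharper: it inherits the polynomial rate from your Step~1 and needs only $p_N\in(0,1)$ to identify $\Rnorm$ with a correlation.
\item The paper's continuity argument relies on $p_N$ staying in a compact subset of $(0,1)$ and gives no rate, though overlap is assumed anyway. In exchange, it needs nothing about the threshold map beyond continuity of $\mathcal G_p$ at zero, so this step does not invoke a maximal-correlation inequality at all.
\end{itemize}
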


\paragraph{Meaning.}
This corollary instantiates the path obstruction with the Gaussian separated-set bound. It says that a particular sufficient Gaussian-locality approach cannot simultaneously certify raw-GATE centering in the stated regime; it is not a universal impossibility theorem for thresholded-Gaussian designs. \emph{Proof.} See Appendix~\ref{subsec:causal_centering_proofs}.

The scope is deliberately narrow: a canonical path geometry, positive nonvanishing local interference, the raw HT estimator, root-$M$ scaling, same-scale far-assignment decorrelation, and the stated local graph-$\psi$ approach. The theorem does not cover signed spillovers, debiased estimators, alternative exposure estimands, superlinear sum-variance regimes, or global-factor stable/mixed-normal theory. Appendix~\ref{app:causal_corrections_extensions} gives the variance-window refinement, a fixed-range bias floor, and external/model-assisted debiasing with pilot-uncertainty propagation.

\paragraph{Restoring causal centering.}
The path theorem identifies a missing design-bias component rather than a failure of centered normality. Under the exact two-channel model, define
\begin{equation}
b_{k,N}^{(w)}(\Rnorm_N)=\sum_{a,c}w_{Na}H^k_{N,ac}(\Rnorm_{N,ac}-1),\qquad k\in\{\comp,\lag\},
\label{eq:unnormalized_bias_loadings}
\end{equation}
set $c_N=W_N^{-1}\{b_{\comp,N}^{(w)}(\Rnorm_N),b_{\lag,N}^{(w)}(\Rnorm_N)\}^\top$, and let $\delta_N=(\gamma_N,\eta_N)^\top$. The exact bias is $\delta_N^\top c_N$, separating a known design loading from exposure coefficients that must be estimated or bounded.

\begin{proposition}[External or model-assisted debiasing]
\label{prop:external_debiasing}
Define the known design-bias loading and exposure-effect vector
\begin{equation}
c_N:=\frac1{W_N}
\begin{pmatrix}
 b^{(w)}_{\comp,N}(\Rnorm_N)\\
 b^{(w)}_{\lag,N}(\Rnorm_N)
\end{pmatrix},
\qquad
\delta_N:=\begin{pmatrix}\gamma_N\\\eta_N\end{pmatrix}.
\label{eq:design_bias_loading_vector}
\end{equation}
Let $\widehat\delta_N$ be estimated from pre-assignment information or an external sample and define
\begin{equation}
\widehat\tau_N^{\rm BC}:=\widehat\tau^{HT}_{w,\tilde\alpha}-\widehat\delta_N^\top c_N.
\label{eq:debiased_ht_estimator}
\end{equation}
If the design-centered CLT holds and
\begin{equation}
c_N^\top(\widehat\delta_N-\delta_N)=o_p(\sqrt{V_N}),
\label{eq:debiased_nuisance_rate}
\end{equation}
then
\begin{equation}
\frac{\widehat\tau_N^{\rm BC}-\tau_{w,N}}{\sqrt{V_N}}\Rightarrow N(0,1).
\label{eq:debiased_causal_clt}
\end{equation}
\end{proposition}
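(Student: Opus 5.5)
The plan is a Slutsky-type decomposition around the design-centered limit. First, we use the exact weighted bias identity from Corollary~\ref{cor:weighted_ht_identity}, specialized to the two-channel model \eqref{eq:two_H_model}. Applying it once with $\theta H=\gamma_N H^{\comp}$ and once with $\eta_N H^{\lag}$, and adding by linearity of expectation, gives
\[
\mathbb E_N\widehat\tau^{HT}_{w,\tilde\alpha}-\tau_{w,N}
=\frac{\gamma_N}{W_N}b^{(w)}_{\comp,N}(\Rnorm_N)+\frac{\eta_N}{W_N}b^{(w)}_{\lag,N}(\Rnorm_N)=\delta_N^\top c_N.
\]
Linearity is needed here so that no sign restriction on $(\gamma_N,\eta_N)$ is required for the identity itself. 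The loading $c_N$ is a function of $\Rnorm_N$, the weights, and the exposure operators only, so it is $\mathcal F_N$-measurable and known.

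Next I write
\[
\widehat\tau_N^{\rm BC}-\tau_{w,N}
=\bigl(\widehat\tau^{HT}_{w,\tilde\alpha}-\mathbb E_N\widehat\tau^{HT}_{w,\tilde\alpha}\bigr)
+\bigl(\mathbb E_N\widehat\tau^{HT}_{w,\tilde\alpha}-\tau_{w,N}-\delta_N^\top c_N\bigr)
-c_N^\top(\widehat\delta_N-\delta_N).
\]
By the identity above, the middle term is exactly zero. Dividing by $\sqrt{V_N}$, the first term converges to $N(0,1)$ by the assumed design-centered CLT; this is \eqref{eq:clt_centered}, from Theorem~\ref{thm:cell_clt} or any IC-0 route. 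By \eqref{eq:debiased_nuisance_rate}, the last term is $o_p(1)$. Slutsky's lemma then yields \eqref{eq:debiased_causal_clt}.

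The main obstacle is reconciling the two probability layers. The CLT is stated conditionally on $\mathcal F_N$ in the Kolmogorov sense \eqref{eq:conditional_weak_convergence_notation}, whereas the nuisance rate is stated unconditionally.

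If $\widehat\delta_N$ comes from pre-assignment or external information, it is $\mathcal F_N$-measurable. The remainder $R_N=c_N^\top(\widehat\delta_N-\delta_N)/\sqrt{V_N}$ is then fixed under $\mathbb P_N$ and tends to zero in outer probability. In that case I would bound
\[
\sup_t\bigl|\mathbb P_N(T_N/\sigma_N+R_N\le t)-\Phi(t)\bigr|
\le \sup_t\bigl|\mathbb P_N(T_N/\sigma_N\le t)-\Phi(t)\bigr|+\sup_t|\Phi(t-R_N)-\Phi(t)|,
\]
and the second term is at most $|R_N|/\sqrt{2\pi}\xrightarrow{p}0$. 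This gives the conditional statement. Taking expectations of the bounded Kolmogorov error, as in the transfer remark of Appendix~\ref{app:conditional_and_primitive_extensions}, gives the unconditional weak limit.

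If $\widehat\delta_N$ instead depends on the current randomization, as in model-assisted fits, I would state the conclusion only as unconditional convergence $\Rightarrow$. The unconditional CLT follows from the conditional one by dominated convergence, and Slutsky applies directly with an $o_p(1)$ remainder, which is what the proposition claims.
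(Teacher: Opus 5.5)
Your proposal is correct and follows the paper's proof. The paper obtains the exact identity $\mathbb E_N\widehat\tau^{HT}_{w,\tilde\alpha}-\tau_{w,N}=\delta_N^\top c_N$ from Proposition~\ref{prop:misspecification_bias} with zero remainder, which is equivalent to your channelwise use of Corollary~\ref{cor:weighted_ht_identity} plus linearity; it then writes the same decomposition and applies Slutsky. Your Kolmogorov-shift bound for an $\mathcal F_N$-measurable $\widehat\delta_N$, and the transfer to the unconditional limit, is a correct refinement of the conditional-versus-unconditional bookkeeping that the paper's proof leaves implicit.
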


The proof subtracts the exact design bias and applies Slutsky. Nonnegligible pilot error must instead be propagated under the joint pilot--experiment law in Proposition~\ref{prop:external_pilot_uncertainty}; other remedies include changing the estimand, shrinking exposure effects, or using a valid bias envelope. The preceding incompatibility concerns IC-1, not variance estimation. Once causal centering has been restored---or when design-centered inference is itself the target---the remaining and logically separate IC-2 task is to estimate $V_N$.

\subsection{Feasible studentization with heterogeneous contribution means}
\label{subsec:network_hac}

Let $K:[0,\infty)\to\mathbb R$ be bounded, compactly supported on $[0,1]$, and normalized by $K(0)=1$; set $K(\infty)=0$. For a bandwidth $b_N\to\infty$, let $\mathsf K_N(a,c)=K\{d_N(a,c)/b_N\}$ on the connected dependence metric, and define the maximum bandwidth-ball size
\begin{equation}
D_N(b_N):=\max_{a\in\Ocal_N}|\mathcal B_N(a;b_N)|.
\label{eq:hac_bandwidth_ball_size}
\end{equation}
Put
\begin{equation}
\Sigma_N:=\sigma_N^2/M_N=M_NV_N,
\qquad
\widehat\Sigma_N^{\rm or}:=M_N^{-1}\sum_{a,c}\mathsf K_N(a,c)\xi_{Na}\xi_{Nc}.
\label{eq:oracle_network_hac}
\end{equation}

\begin{proposition}[Connected network-HAC specialization]
\label{prop:oracle_network_hac_connected}
\label{prop:oracle_network_hac}
Under Assumption~\ref{ass:contribution_metric_compatibility}, suppose graph-$\psi$ dependence and the moment condition hold, $\inf_N\Sigma_N>0$, and
\begin{align}
\sum_{s\ge1}|K(s/b_N)-1|\delta_N^{\partial}(s;1)\theta_N(s)^{1-2/\nu}&\xrightarrow{p}0,
\label{eq:hac_kernel_bias_rate}\\
M_N^{-1}\sum_{s\ge0}c_N(s,b_N;2)\theta_N(s)^{1-4/\nu}&\xrightarrow{p}0.
\label{eq:hac_stochastic_rate}
\end{align}
Then
\begin{equation}
\widehat\Sigma_N^{\rm or}-\Sigma_N\xrightarrow{p\mid\mathcal F}0,
\qquad
(\widehat\Sigma_N^{\rm or}/M_N)/V_N\xrightarrow{p\mid\mathcal F}1.
\label{eq:oracle_hac_consistency}
\end{equation}
\end{proposition}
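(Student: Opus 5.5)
The plan is to obtain this as a direct specialization of the network-HAC consistency theorem of \citet{kojevnikov2021network} (their Proposition 4.1), in the same way Theorem~\ref{thm:cell_clt} specializes their CLT. Assumption~\ref{ass:contribution_metric_compatibility} makes the array literally indexed by the connected graph $\mathcal G_N^{\Ocal}$ with metric $d_N$. Graph-$\psi$ dependence (Definition~\ref{def:graph_psi_dependence}) and the uniform $\nu$-moment bound of Assumption~\ref{ass:weights_moments} supply their dependence and moment hypotheses. All statements are conditional on $\mathcal F_N$: I would first argue on $E_N^{\rm good}$ with $\mathbb P_N$ in place of the probability measure, and then pass to the outer probability. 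Because the conditional probabilities are bounded by one, convergence in outer probability of $\mathbb P_N(|\widehat\Sigma_N^{\rm or}-\Sigma_N|>\epsilon)$ follows by dominated convergence once the conditional bounds are shown to be $o_p(1)$.

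The core step is a bias--variance split. I would write
\[
\widehat\Sigma_N^{\rm or}-\Sigma_N=\{\mathbb E_N\widehat\Sigma_N^{\rm or}-\Sigma_N\}+\{\widehat\Sigma_N^{\rm or}-\mathbb E_N\widehat\Sigma_N^{\rm or}\}.
\]
Since $\xi_{Na}$ is centered, $\Sigma_N=M_N^{-1}\sum_{a,c}\Cov_N(\xi_{Na},\xi_{Nc})$. The first term is therefore
\[
M_N^{-1}\sum_{a,c}\{K(d_N(a,c)/b_N)-1\}\Cov_N(\xi_{Na},\xi_{Nc}).
\]
I would group pairs by shell distance $s=d_N(a,c)$; the $s=0$ term vanishes because $K(0)=1$. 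I would then bound each covariance by the standard graph-$\psi$ covariance inequality for unbounded variables (KMS Corollary A.2), which gives $|\Cov_N(\xi_{Na},\xi_{Nc})|\lesssim C_\xi^2\theta_N(s)^{1-2/\nu}$ via truncation at a level balancing Lipschitz bounds against $\nu$-th moment tails. Summing over $c\in\partial\mathcal B_N(a;s)$ and averaging over $a$ yields exactly $\sum_{s\ge1}|K(s/b_N)-1|\delta_N^\partial(s;1)\theta_N(s)^{1-2/\nu}$, which tends to zero by \eqref{eq:hac_kernel_bias_rate}.

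For the stochastic term, I would bound $\mathbb E_N\{\widehat\Sigma_N^{\rm or}-\mathbb E_N\widehat\Sigma_N^{\rm or}\}^2$. This requires controlling the fourth-order cumulant-type sums $\sum\mathsf K\mathsf K\Cov_N(\xi_a\xi_c,\xi_{a'}\xi_{c'})$ with kernel support restricting $d_N(a,c),d_N(a',c')\le b_N$. This is the main obstacle, and I would not redo it. I would take the KMS argument verbatim: index pairs by the separation $s$ between $\{a,c\}$ and $\{a',c'\}$, apply the covariance inequality to products with exponent $1-4/\nu$ (needing $\nu>4$), and count configurations through the $\Delta_N$ and $\delta_N^\partial$ quantities combined by Hölder with conjugate exponents $\alpha,\alpha/(\alpha-1)$. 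This is precisely the definition of $c_N(s,b_N;2)$ in \eqref{eq:c_topology_definition}. The result is a bound of order $M_N^{-1}\sum_{s}c_N(s,b_N;2)\theta_N(s)^{1-4/\nu}$, which vanishes by \eqref{eq:hac_stochastic_rate}. Conditional Chebyshev then gives $\widehat\Sigma_N^{\rm or}-\Sigma_N\xrightarrow{p\mid\mathcal F}0$. The point to check carefully is that boundedness and compact support of $K$, with $K(\infty)=0$, match KMS's kernel conditions, and that their constants depend only on $C_\psi,C_\xi,\nu,\|K\|_\infty$.

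Finally, the ratio statement follows from the definition. We have $(\widehat\Sigma_N^{\rm or}/M_N)/V_N=\widehat\Sigma_N^{\rm or}/\Sigma_N$, because $\Sigma_N=M_NV_N$. Since $\inf_N\Sigma_N>0$, this gives $|\widehat\Sigma_N^{\rm or}/\Sigma_N-1|\le|\widehat\Sigma_N^{\rm or}-\Sigma_N|/\inf_N\Sigma_N\xrightarrow{p\mid\mathcal F}0$.
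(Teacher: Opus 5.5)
Your proposal is correct and follows essentially the paper's route. Like you, the paper treats the result as a direct specialization of the KMS network-HAC theorem on the connected contribution graph; where it applies that theorem pointwise on the good-environment events with the subsequence argument of Theorem~\ref{thm:cell_clt}, you write out the kernel-bias/stochastic split (which the paper spells out only for the disconnected case in Proposition~\ref{prop:oracle_network_hac_disconnected}) and transfer through explicit conditional Chebyshev bounds, which is an equally valid finish. One small correction: no dominated convergence is needed for the $\xrightarrow{p\mid\mathcal F}$ conclusion. On $E_N^{\rm good}$ the conditional Chebyshev bound is itself an $\mathcal F_N$-measurable $o_p(1)$ quantity, and off that event the probability is at most one, which is already the required statement.
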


\paragraph{Meaning.}
Given the true heterogeneous contribution means and separately verified kernel approximation and stochastic rates, the oracle quadratic form estimates the design-centered long-run variance. It does not justify replacing those means by a global sample mean or repair a failure of IC-1. \emph{Proof.} See Appendix~\ref{subsec:hac_proof_appendix}.

This is again an existing network-HAC specialization. Appendix~\ref{app:disconnected_hac_extensions} gives the disconnected-metric extension, which separately controls omitted cross-component covariance and cross-component covariance of included pair-products; empty finite shells are not sufficient. A generic radial graph-distance kernel need not generate a positive-semidefinite matrix, so oracle-HAC validity and positive semidefiniteness are separate requirements. The latter is invoked only for the primitive feasible-centering route below.

Under the exact two-channel model,
\begin{equation}
\bar q_{Na}=\omega_{Na}\left[\beta_{Na}
+\gamma_N\sum_cH^{\comp}_{N,ac}\Rnorm_{N,ac}
+\eta_N\sum_cH^{\lag}_{N,ac}\Rnorm_{N,ac}\right].
\label{eq:contribution_mean_linear_model}
\end{equation}
Thus $\bar q_{Na}$ is design-specific and generally heterogeneous; it is not an untreated baseline regression. For a structured estimate $\widehat q_N^0$, define
\begin{equation}
\widehat\xi_{Na}^{\rm MA}=Q_{Na}-\widehat q_{Na}^0,\qquad
\widehat\Sigma_N^{\rm MA}=M_N^{-1}(\widehat\xi_N^{\rm MA})^\top\mathsf K_N\widehat\xi_N^{\rm MA},\qquad
\widehat V_N^{\rm MA,+}=[\widehat\Sigma_N^{\rm MA}]_+/M_N.
\label{eq:model_assisted_hac}
\end{equation}

\medskip
\noindent\textbf{Oracle-HAC consistency does not by itself justify global demeaning.} The oracle form uses the true cell-specific means $\bar q_{Na}$. Estimating these heterogeneous, design-specific means is a separate first-stage problem, addressed by the next result.

\begin{proposition}[Feasible heterogeneous-mean centering]
\label{prop:model_assisted_hac}
Let $r_N=\widehat q_N^0-\bar q_N$. Under Proposition~\ref{prop:oracle_network_hac_connected}, if
\begin{equation}
r_N^\top\mathsf K_Nr_N=o_p(\sigma_N^2),\qquad
r_N^\top\mathsf K_N\xi_N=o_{p\mid\mathcal F}(\sigma_N^2),
\label{eq:centering_error_conditions}
\end{equation}
then $\widehat V_N^{\rm MA,+}/V_N\xrightarrow{p\mid\mathcal F}1$.
\end{proposition}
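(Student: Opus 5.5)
The plan is to expand the feasible quadratic form around the oracle form of Proposition~\ref{prop:oracle_network_hac_connected} and show that the extra terms are negligible on the $\sigma_N^2$ scale. Since $\widehat\xi_N^{\rm MA}=Q_N-\widehat q_N^0=\xi_N-r_N$, and $\mathsf K_N$ is symmetric because $d_N$ is symmetric, the exact algebraic decomposition is
\[
M_N\widehat\Sigma_N^{\rm MA}
=\xi_N^\top\mathsf K_N\xi_N-2\,r_N^\top\mathsf K_N\xi_N+r_N^\top\mathsf K_Nr_N
=M_N\widehat\Sigma_N^{\rm or}-2\,r_N^\top\mathsf K_N\xi_N+r_N^\top\mathsf K_Nr_N .
\]
Dividing by $\sigma_N^2=M_N\Sigma_N$ gives
\[
\frac{\widehat\Sigma_N^{\rm MA}}{\Sigma_N}
=\frac{\widehat\Sigma_N^{\rm or}}{\Sigma_N}
-\frac{2\,r_N^\top\mathsf K_N\xi_N}{\sigma_N^2}
+\frac{r_N^\top\mathsf K_Nr_N}{\sigma_N^2}.
\]

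\textbf{The three terms.}
\begin{enumerate}[label=(\roman*)]
\item The first ratio converges to one in conditional probability. Proposition~\ref{prop:oracle_network_hac_connected} gives $\widehat\Sigma_N^{\rm or}-\Sigma_N\xrightarrow{p\mid\mathcal F}0$, and $\inf_N\Sigma_N>0$ turns this additive statement into the ratio statement.
\item The cross term is $o_{p\mid\mathcal F}(1)$ directly from the second condition in \eqref{eq:centering_error_conditions}.
\item The quadratic term is $o_p(1)$ by the first condition in \eqref{eq:centering_error_conditions}.
\end{enumerate}

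\textbf{Main obstacle: mixing the two probability modes.} The only delicate point is that the quadratic condition is stated as an outer $o_p$, while the others are conditional. If $r_N$ is $\mathcal F_N$-measurable, as with an external pilot, then $r_N^\top\mathsf K_Nr_N/\sigma_N^2$ is fixed under $\mathbb P_N$. For such a fixed quantity $R_N$, $\mathbb P_N(|R_N|>\epsilon)=\ind\{|R_N|>\epsilon\}$, which tends to zero in outer probability exactly when $R_N=o_p(1)$. If $\widehat q_N^0$ uses the current assignment, I would instead read the condition as a conditional statement. In either case I would invoke the conditional–unconditional transfer recorded in Appendix~\ref{app:conditional_and_primitive_extensions}: since $\mathbb P_N(\cdot)\le1$, an outer $o_p$ of a bounded-probability event implies $\mathbb P_N(|R_N|>\epsilon)\xrightarrow{p}0$. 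I would verify these conversions explicitly.

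\textbf{Sum and positive part.} The three terms combine because conditional convergence in probability is closed under finite sums: for each $\epsilon$, bound $\mathbb P_N$ of the sum's deviation by the sum of the three conditional probabilities. This yields $\widehat\Sigma_N^{\rm MA}/\Sigma_N\xrightarrow{p\mid\mathcal F}1$. Because the limit one is strictly positive, $\mathbb P_N(\widehat\Sigma_N^{\rm MA}\le0)\xrightarrow{p}0$, so taking the positive part changes the statistic only on an asymptotically negligible event and $[\widehat\Sigma_N^{\rm MA}]_+/\Sigma_N\xrightarrow{p\mid\mathcal F}1$. Finally, $\widehat V_N^{\rm MA,+}/V_N=[\widehat\Sigma_N^{\rm MA}]_+/(M_NV_N)=[\widehat\Sigma_N^{\rm MA}]_+/\Sigma_N$, which gives the claim.
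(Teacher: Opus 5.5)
Your proposal is correct and follows the paper's proof essentially step for step. It uses the same exact expansion $\widehat\Sigma_N^{\rm MA}-\widehat\Sigma_N^{\rm or}=M_N^{-1}\{r_N^\top\mathsf K_Nr_N-2r_N^\top\mathsf K_N\xi_N\}$, the same $\mathcal F_N$-measurability argument to turn the outer rate on the quadratic form into a conditional one, oracle consistency with $\inf_N\Sigma_N>0$, and the same positive-part step.

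One small correction concerns your fallback case. Passing from an unconditional $o_p$ to $o_{p\mid\mathcal F}$ follows from $\mathbb E\{\mathbb P_N(|R_N|>\epsilon)\}=\Pr(|R_N|>\epsilon)$ and Markov's inequality. It does not follow from Lemma~\ref{lem:conditional_unconditional_transfer}, which runs in the opposite (conditional-to-unconditional) direction; that is where boundedness by one matters.
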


\paragraph{Meaning.}
This proposition isolates the additional first-stage problem required for feasible HAC: estimated heterogeneous design means must be small in both the kernel quadratic norm and the associated cross term. PSD kernels provide a useful seminorm inequality, but the result remains a sufficient condition rather than a default recipe. \emph{Proof.} See Appendix~\ref{subsec:hac_proof_appendix}.

The next corollary is an existence-style sufficient route. It shows how a pre-randomization, design-specific mean model can make an already valid oracle HAC feasible under a PSD kernel; it is not the default empirical procedure and does not establish oracle-HAC validity by itself.

\begin{corollary}[Abstract PSD-kernel external-pilot sufficient route to IC-2]
\label{cor:external_pilot_centering}
Under Proposition~\ref{prop:oracle_network_hac_connected}, suppose an external pilot is completed before the current design randomization and its fitted quantities are included in $\mathcal F_N$. Assume
\begin{equation}
\bar q_N=\mathsf P_N\vartheta_{0,N},\qquad
\lambda_{\max}(M_N^{-1}\mathsf P_N^\top\mathsf P_N)\le C_P,
\label{eq:external_pilot_mean_model}
\end{equation}
and an external pilot of effective size $n_{0,N}$ gives, under the outer pilot/environment law,
\begin{equation}
\|\widehat\vartheta_N-\vartheta_{0,N}\|_2^2=O_p(d_{q,N}/n_{0,N}),
\qquad \widehat q_N^0=\mathsf P_N\widehat\vartheta_N.
\label{eq:external_pilot_parameter_rate}
\end{equation}
If $\mathsf K_N\succeq0$, $\|\mathsf K_N\|_{\rm op}\le\Lambda_N$, and
\begin{equation}
\frac{\Lambda_NM_Nd_{q,N}}{n_{0,N}\sigma_N^2}\xrightarrow{p}0,
\label{eq:external_pilot_hac_rate}
\end{equation}
then $\widehat V_N^{\rm MA,+}/V_N\xrightarrow{p\mid\mathcal F}1$. In the root-$M_N$ regime with $\Lambda_N\lesssim D_N(b_N)$, it is enough that
\begin{equation}
\frac{D_N(b_N)d_{q,N}}{n_{0,N}}\xrightarrow{p}0.
\label{eq:external_pilot_rootM_rate}
\end{equation}
Together with IC-1 this gives IC-2. The pilot must estimate the contribution mean under the actual assignment law. Positive semidefiniteness is essential for this primitive seminorm argument; Appendix~\ref{app:disconnected_hac_extensions} gives a stronger operator-norm route for a possibly indefinite distance kernel. The identical PSD argument applies to Proposition~\ref{prop:oracle_network_hac_disconnected} after replacing $\mathsf K_N$ by $\widetilde{\mathsf K}_N$ and verifying its two infinite-shell remainders. This corollary is an abstract existence result, not a default HAC method.
\end{corollary}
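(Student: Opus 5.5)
The plan is to reduce to Proposition~\ref{prop:model_assisted_hac}: it suffices to verify the two centering-error conditions \eqref{eq:centering_error_conditions} for $r_N=\widehat q_N^0-\bar q_N=\mathsf P_N(\widehat\vartheta_N-\vartheta_{0,N})$. The ratio conclusion $\widehat V_N^{\rm MA,+}/V_N\xrightarrow{p\mid\mathcal F}1$ then follows from that proposition. Adding IC-1 gives IC-2 by Definition~\ref{def:ic_hierarchy}. Because the pilot is completed before randomization and its fit is in $\mathcal F_N$, $r_N$ is $\mathcal F_N$-measurable. It is therefore fixed under $\mathbb P_N$, and its size is controlled only by the outer pilot/environment law via \eqref{eq:external_pilot_parameter_rate}.

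\emph{Quadratic term.} Since $\mathsf K_N\succeq0$ with $\|\mathsf K_N\|_{\rm op}\le\Lambda_N$, we have
\[
r_N^\top\mathsf K_Nr_N\le\Lambda_N\|r_N\|_2^2\le\Lambda_N\lambda_{\max}(\mathsf P_N^\top\mathsf P_N)\|\widehat\vartheta_N-\vartheta_{0,N}\|_2^2\le \Lambda_N C_PM_N\,O_p(d_{q,N}/n_{0,N}).
\]
Dividing by $\sigma_N^2$ and invoking \eqref{eq:external_pilot_hac_rate} gives $r_N^\top\mathsf K_Nr_N=o_p(\sigma_N^2)$.

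\emph{Cross term.} This is the step needing care. I would use the Cauchy--Schwarz inequality for the PSD seminorm induced by $\mathsf K_N$:
\[
|r_N^\top\mathsf K_N\xi_N|\le (r_N^\top\mathsf K_Nr_N)^{1/2}(\xi_N^\top\mathsf K_N\xi_N)^{1/2}.
\]
The second factor equals $(M_N\widehat\Sigma_N^{\rm or})^{1/2}$. By Proposition~\ref{prop:oracle_network_hac_connected}, $\widehat\Sigma_N^{\rm or}=\Sigma_N+o_{p\mid\mathcal F}(1)$, so $\xi_N^\top\mathsf K_N\xi_N=\sigma_N^2\{1+o_{p\mid\mathcal F}(1)\}$; here $\inf\Sigma_N>0$ is used. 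The first factor is $\mathcal F_N$-measurable and $o_p(\sigma_N)$. The product is therefore $o_{p\mid\mathcal F}(\sigma_N^2)$. The subtle point is mixing conditional and outer $o_p$ statements. Since the deterministic-under-$\mathbb P_N$ factor converges in outer probability, $\mathbb P_N(|\text{product}|>\epsilon\sigma_N^2)$ is bounded by an indicator on an outer event of vanishing probability plus the conditional tail of the oracle factor, both of which tend to zero in probability.

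\emph{Root-$M_N$ specialization.} When $\sigma_N^2\asymp M_N$ and $\Lambda_N\lesssim D_N(b_N)$, the quantity in \eqref{eq:external_pilot_hac_rate} is of order $D_N(b_N)d_{q,N}/n_{0,N}$, which gives \eqref{eq:external_pilot_rootM_rate}. A natural way to obtain $\Lambda_N\lesssim D_N(b_N)$ is the bound of the operator norm by the maximal absolute row sum for a bounded kernel supported on $b_N$-balls. The disconnected remark follows verbatim with $\widetilde{\mathsf K}_N$, once its oracle consistency is supplied.
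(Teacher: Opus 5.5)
Your proposal is correct and follows essentially the same route as the paper. You bound $\|r_N\|_2^2\le\lambda_{\max}(\mathsf P_N^\top\mathsf P_N)\|\widehat\vartheta_N-\vartheta_{0,N}\|_2^2=O_p(M_Nd_{q,N}/n_{0,N})$, then use the PSD seminorm (operator-norm bound for the quadratic term, Cauchy--Schwarz against the oracle quadratic form for the cross term) to verify the centering conditions of Proposition~\ref{prop:model_assisted_hac}; the only difference is that the paper cites this seminorm step as Corollary~\ref{cor:primitive_centering_rate}, whereas you inline it and make the conditional-versus-outer $o_p$ bookkeeping explicit.
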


\paragraph{Meaning.}
An external, pre-randomization, design-specific mean model can make an already valid oracle HAC feasible when the stated PSD-kernel rates hold. This is an existence-style IC-2 construction; it neither establishes oracle-HAC validity nor makes a pilot trained under a different assignment law automatically transportable. \emph{Proof.} See Appendix~\ref{subsec:hac_proof_appendix}.

\paragraph{An explicit PSD matrix requiring separate oracle-HAC verification.}
Let $\{\mathcal C_{Ng}\}_{g\in\mathcal I_N}$ be overlapping graph blocks and let $\omega_{Ng}\ge0$. The Gram kernel
\begin{equation}
\mathsf K_N^{\rm blk}(a,c)
=\sum_{g\in\mathcal I_N}\omega_{Ng}\ind\{a,c\in\mathcal C_{Ng}\}
=(\mathsf S_N\mathsf S_N^\top)(a,c),
\qquad
\mathsf S_N(a,g)=\sqrt{\omega_{Ng}}\ind\{a\in\mathcal C_{Ng}\},
\label{eq:overlapping_block_psd_kernel}
\end{equation}
is positive semidefinite by construction. This example shows that the PSD premise is nonempty. It does not, by itself, establish oracle-HAC consistency: the block sequence and weights must still satisfy an appropriate kernel-approximation condition and the stochastic rate in Proposition~\ref{prop:oracle_network_hac_connected} (or their separately proved analogues).

\paragraph{Exact linear-model pilot map.}
Under \eqref{eq:contribution_mean_linear_model}, define the design-specific exposure summaries
\begin{equation}
s^{\comp}_{Na}=\sum_cH^{\comp}_{N,ac}\Rnorm_{N,ac},
\qquad
s^{\lag}_{Na}=\sum_cH^{\lag}_{N,ac}\Rnorm_{N,ac}.
\label{eq:pilot_design_specific_exposure_summaries}
\end{equation}
If $\beta_{Na}=x_{Na}^\top\delta_{0,N}$ for pre-treatment regressors $x_{Na}$, a concrete mean model has row
\begin{equation}
\mathsf P_N(a,\cdot)=\omega_{Na}\bigl(x_{Na}^\top,s^{\comp}_{Na},s^{\lag}_{Na}\bigr),
\qquad
\vartheta_{0,N}=\bigl(\delta_{0,N}^\top,\gamma_N,\eta_N\bigr)^\top.
\label{eq:exact_linear_pilot_map}
\end{equation}
A pilot may come from the same assignment law, transportable historical experiments, or a validated external structural model. Because the summaries in \eqref{eq:pilot_design_specific_exposure_summaries} depend on the current assignment law, retuning the covariance design changes $\mathsf P_N$ and generally requires recomputing the summaries and re-estimating, or separately justifying transport of, the pilot coefficients.

\begin{corollary}[Studentized inference]
\label{cor:studentized_model_assisted}
Under Theorem~\ref{thm:cell_clt}, a certified oracle-HAC route, and Proposition~\ref{prop:model_assisted_hac}, define the displayed statistic as zero on $\{\widehat V_N^{\rm MA,+}=0\}$. Then
\begin{equation}
\frac{\widehat\tau^{HT}_{w,\tilde\alpha}-\mathbb E_N\widehat\tau^{HT}_{w,\tilde\alpha}}
{\sqrt{\widehat V_N^{\rm MA,+}}}\rightsquigarrow_{\mathcal F} N(0,1).
\label{eq:studentized_centered_clt}
\end{equation}
If \eqref{eq:bias_negligible} also holds, the expectation may be replaced by $\tau_{w,N}$.
\end{corollary}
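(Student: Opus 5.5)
The plan is to reduce the corollary to Theorem~\ref{thm:cell_clt} together with Proposition~\ref{prop:model_assisted_hac} through a conditional Slutsky argument. Write the displayed statistic as
\[
\frac{T_N}{\sigma_N}\cdot\left(\frac{V_N}{\widehat V_N^{\rm MA,+}}\right)^{1/2},
\]
using $M_N(\widehat\tau-\mathbb E_N\widehat\tau)=T_N$ and $V_N=\sigma_N^2/M_N^2$.

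First, Theorem~\ref{thm:cell_clt} gives $T_N/\sigma_N\rightsquigarrow_{\mathcal F}N(0,1)$. Second, the certified oracle-HAC route (Proposition~\ref{prop:oracle_network_hac_connected}) combined with Proposition~\ref{prop:model_assisted_hac} gives the ratio consistency $\widehat V_N^{\rm MA,+}/V_N\xrightarrow{p\mid\mathcal F}1$. Since $\sigma_N^2>0$ on $E_N^{\rm good}$, the event $\{\widehat V_N^{\rm MA,+}=0\}$ has conditional probability tending to zero in outer probability. The zero convention therefore changes the statistic only on a negligible event. By the continuous mapping theorem applied to $x\mapsto x^{-1/2}$ near $1$, the correction factor $(V_N/\widehat V_N^{\rm MA,+})^{1/2}$ equals $1+o_{p\mid\mathcal F}(1)$.

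The main step is a conditional Slutsky lemma in the Kolmogorov-distance sense of \eqref{eq:conditional_weak_convergence_notation}. For any $\epsilon>0$ and any $t$, sandwich the event $\{U_NR_N\le t\}$, where $R_N$ is the correction factor, between $\{U_N\le t/(1\pm\epsilon)\}$ and the complement of $\{|R_N-1|\le\epsilon\}$, treating $t\ge0$ and $t<0$ separately. This bounds $\sup_t|\mathbb P_N(U_NR_N\le t)-\Phi(t)|$ by
\[
\sup_t|\mathbb P_N(U_N\le t)-\Phi(t)|+\sup_t|\Phi(t/(1\pm\epsilon))-\Phi(t)|+\mathbb P_N(|R_N-1|>\epsilon).
\]
The middle term is $O(\epsilon)$ uniformly in $t$. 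The other two terms tend to zero in outer probability. Letting $\epsilon\downarrow0$ gives the claim. The bookkeeping here is to make sure that all convergences are in outer probability of conditional quantities and that the good-environment events are intersected properly. This is the step that needs care.

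For the final sentence, decompose
\[
\frac{\widehat\tau-\tau_{w,N}}{\sqrt{\widehat V_N^{\rm MA,+}}}=\frac{\widehat\tau-\mathbb E_N\widehat\tau}{\sqrt{\widehat V_N^{\rm MA,+}}}+\frac{\mathbb E_N\widehat\tau-\tau_{w,N}}{\sqrt{V_N}}\left(\frac{V_N}{\widehat V_N^{\rm MA,+}}\right)^{1/2}.
\]
The second term is $o_p(1)\cdot(1+o_{p\mid\mathcal F}(1))$ by \eqref{eq:bias_negligible}, and its first factor is $\mathcal F_N$-measurable. An additive version of the same sandwich argument, shifting $t$ by $\pm\epsilon$, then gives the GATE-centered limit.
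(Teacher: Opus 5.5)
Your proposal is correct and follows the paper's proof. Both arguments combine the centered limit of Theorem~\ref{thm:cell_clt} with the ratio consistency $\widehat V_N^{\rm MA,+}/V_N\xrightarrow{p\mid\mathcal F}1$ from Proposition~\ref{prop:model_assisted_hac} by conditional Slutsky, and then absorb the $o(\sqrt{V_N})$ centering shift for the GATE statement. The paper leaves three steps implicit that you write out: the Kolmogorov-distance sandwich, the negligibility of the zero-convention event, and the product of an $\mathcal F_N$-measurable $o_p(1)$ bias term with a $1+o_{p\mid\mathcal F}(1)$ factor.
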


\paragraph{Meaning.}
Once IC-0, IC-1, oracle-HAC validity, and feasible heterogeneous-mean centering all hold, Slutsky yields the usual studentized statistic. The corollary packages the three layers; none of them is implied merely by optimizing the COSTA point-estimation objective. \emph{Proof.} See Appendix~\ref{subsec:hac_proof_appendix}.

\paragraph{Design-aware covariance-tail correction.}
A local HAC can omit a material nonlocal covariance tail. Let $\Sigma_{K,N}$ be its population target and $\mathcal T_N=\Sigma_N-\Sigma_{K,N}$. A plug-in potential-outcome surface and fresh draws from the known design estimate full and local design variances, yielding $\widehat{\mathcal T}_{\rm pl}$.

\begin{proposition}[Design-aware covariance-tail correction]
\label{prop:design_aware_hac}
If $\widehat\Sigma_{\rm loc}-\Sigma_{K,N}=o_p(\Sigma_N)$ and $\widehat{\mathcal T}_{\rm pl}-\mathcal T_N=o_p(\Sigma_N)$, then
\[
\widehat\Sigma_{DA}^{\rm sgn}=\widehat\Sigma_{\rm loc}+\widehat{\mathcal T}_{\rm pl}
\]
is ratio-consistent. Define
\begin{equation}
\widehat\Sigma_{DA,\mathrm{raw}}^{+}=\widehat\Sigma_{\rm loc}+[\widehat{\mathcal T}_{\rm pl}]_+,
\qquad
\widehat V_{DA}^{+}=\frac{[\widehat\Sigma_{DA,\mathrm{raw}}^{+}]_+}{M_N}.
\label{eq:design_aware_positive_tail}
\end{equation}
Then $\widehat V_{DA}^{+}$ is nonnegative and one-sided conservative. It is ratio-consistent if and only if the negative part $[-\mathcal T_N]_+$ is $o_p(\Sigma_N)$.
\end{proposition}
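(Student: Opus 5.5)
The plan is to treat the signed correction and the positive-part correction separately, working throughout on the scale of $\Sigma_N$. By definition $\Sigma_N=\Sigma_{K,N}+\mathcal T_N$. Write
\[
\widehat\Sigma_{DA}^{\rm sgn}-\Sigma_N=(\widehat\Sigma_{\rm loc}-\Sigma_{K,N})+(\widehat{\mathcal T}_{\rm pl}-\mathcal T_N).
\]
Both terms are $o_p(\Sigma_N)$ by hypothesis, so dividing by $\Sigma_N$ gives $\widehat\Sigma_{DA}^{\rm sgn}/\Sigma_N\to1$ in probability. Ratio consistency for the variance then follows because $V_N=\Sigma_N/M_N$. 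The only care needed is that $\Sigma_N>0$ (Assumption~\ref{ass:weights_moments} gives $\sigma_N^2>0$), so that the ratio is well defined.

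For the positive-part estimator, nonnegativity is immediate from the outer $[\cdot]_+$. For the comparison with the signed estimator I will use $[x]_+\ge x$ twice:
\[
\widehat\Sigma_{DA,\mathrm{raw}}^{+}\ge\widehat\Sigma_{DA}^{\rm sgn},
\qquad
[\widehat\Sigma_{DA,\mathrm{raw}}^{+}]_+\ge\widehat\Sigma_{DA,\mathrm{raw}}^{+}.
\]
Combined with the first part, this gives $\widehat V_{DA}^+/V_N\ge1-o_p(1)$, which is the one-sided conservativeness claim.

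For the equivalence I decompose exactly:
\[
\widehat\Sigma_{DA,\mathrm{raw}}^{+}-\widehat\Sigma_{DA}^{\rm sgn}=[\widehat{\mathcal T}_{\rm pl}]_+-\widehat{\mathcal T}_{\rm pl}=[-\widehat{\mathcal T}_{\rm pl}]_+ .
\]
Since $x\mapsto[-x]_+$ is $1$-Lipschitz,
\[
\bigl|[-\widehat{\mathcal T}_{\rm pl}]_+-[-\mathcal T_N]_+\bigr|\le|\widehat{\mathcal T}_{\rm pl}-\mathcal T_N|=o_p(\Sigma_N).
\]
Therefore $\widehat\Sigma_{DA,\mathrm{raw}}^{+}/\Sigma_N=1+[-\mathcal T_N]_+/\Sigma_N+o_p(1)$.

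Sufficiency: if $[-\mathcal T_N]_+=o_p(\Sigma_N)$, then $\widehat\Sigma_{DA,\mathrm{raw}}^{+}/\Sigma_N\to1$, so the raw quantity is positive with probability tending to one. Because $x\mapsto[x]_+$ is continuous, the outer positive part changes nothing in the limit, and ratio consistency holds.

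Necessity: if $[-\mathcal T_N]_+/\Sigma_N$ does not tend to zero in probability, there are $\epsilon>0$ and a subsequence on which it exceeds $\epsilon$ with probability bounded away from zero. On that event the ratio $\widehat V_{DA}^+/V_N\ge1+\epsilon-o_p(1)$, which contradicts ratio consistency.

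The main obstacle is keeping the probability modes straight in the necessity direction. The hypotheses are stated with outer $o_p$, while $\mathcal T_N$ and $\Sigma_N$ are $\mathcal F_N$-measurable. I would first run the subsequence argument under the outer probability and then note that the same computation transfers to the conditional mode $\xrightarrow{p\mid\mathcal F}$, because the conditional error probability is bounded by one.
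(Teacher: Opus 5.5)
Your proposal is correct and follows essentially the same route as the paper. Both arguments reduce the positive-part estimator to $\widehat\Sigma^{+}_{DA,\mathrm{raw}}-\Sigma_N=[-\mathcal T_N]_++o_p(\Sigma_N)$ using the $1$-Lipschitz property of the positive part and the identity $[x]_+-x=[-x]_+$; you compare against the signed estimator, while the paper compares against $[\mathcal T_N]_+$, and your explicit subsequence argument for necessity and your note on moving between outer and conditional convergence spell out steps the paper leaves implicit.
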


Sufficient plug-in conditions control the randomization variance of the aggregate contribution error and, for a positive-semidefinite local kernel, its quadratic norm. Monte Carlo estimates from independent design draws are asymptotically negligible under bounded standardized fourth moments. Fixed-outcome plug-ins are only diagnostic because interference changes outcomes under counterfactual assignments. The correction is therefore model-assisted and cannot substitute for IC-0. Appendix~\ref{subsec:design_aware_variance} gives the full conditions and proof. The experiments evaluate signed and outer-truncated positive-tail corrections together with heterogeneous-mean local HAC, reporting coverage, interval width, standard-error calibration, and studentized quantiles.

\paragraph{What Section~\ref{sec:inference} establishes.}
Assumption~\ref{ass:local_outcomes} identifies the assignments generating each contribution. Propositions~\ref{prop:gaussian_canonical_certificate} and~\ref{prop:spectral_rowsum_certificate} control dependence between separated generating blocks, while Corollary~\ref{cor:kronecker_rowsum} reduces the primitive check to spatial and temporal factors. Theorem~\ref{thm:cell_clt} gives IC-0. The exact expected-cut bias determines whether IC-1 holds; Theorem~\ref{thm:path_trilemma} exhibits a canonical failure, and Proposition~\ref{prop:external_debiasing} gives a debiasing route. Propositions~\ref{prop:oracle_network_hac_connected} and~\ref{prop:model_assisted_hac} yield model-assisted IC-2 through Corollary~\ref{cor:studentized_model_assisted}, while Proposition~\ref{prop:design_aware_hac} supplies a separate covariance-tail correction.

The ordering prevents three conflations: pairwise covariance optimization is not setwise weak dependence; a centered Gaussian limit need not be centered at the GATE; and a local covariance quadratic need not estimate variance when contribution means are heterogeneous or nonlocal covariance remains. The same modular framework applies to sparse, block, Kronecker, locally factored, and other covariance sequences after a design-specific dependence bound is verified. A failed sufficient certificate proves neither nonnormality nor impossibility; it only shows that the proposed route has not established the next IC layer.

\section{Experiments}
\label{sec:experiments}

\subsection{Validation setting: item-side multiple-unit switchbacks}
\label{subsec:validation_setting}

The empirical study instantiates the general network--temporal framework in multiple-item switchback experiments for recommendation systems. Here the spatial units $i$ are items connected through a competition graph, and the temporal indices $b$ are intervention blocks. This setting is a useful test bed because item traffic is highly unequal, treatments can shift demand across substitutes, and carryover can persist across blocks. Multiple-unit switchbacks and regular balanced switchback designs were developed for related e-commerce settings \citep{missault2025rbsd}; our experiments use this application to evaluate joint covariance optimization rather than to restrict the theory to switchbacks. For reproduction, we provide the code at \url{https://github.com/Cqyiiii/COSTA-Covariance-Optimized-Spatiotemporal-Treatment-Allocation}.

We use RetailRocket and MovieLens as two semi-synthetic benchmarks. Real timestamped interactions determine traffic weights, item embeddings, and competition graphs, while potential outcomes are generated from known linear, nonlinear, and demand-substitution models. The global treatment effect is therefore exactly computable, allowing direct evaluation of bias, standard deviation, and RMSE. RetailRocket contains visitor--item events in an e-commerce environment \citep{retailrocket}; MovieLens provides timestamped user--movie ratings and is documented by \citet{harper2015movielens}. The main application results evaluate point estimation. \IfFileExists{tables/tab_inference_validation.tex}{An inference analysis, reported in Subsection~\ref{subsec:inference_experiments}, evaluates coverage and interval width for the heterogeneous-mean and design-aware variance estimators without rerunning the point-estimation experiments.}{}

\subsection{Competition graph and potential outcome models}
\label{subsec:experiment_pom}

For each dataset, we sort interactions by time and use a pre-treatment prefix for graph construction and baseline estimation. Event weights define historical interaction intensity. In RetailRocket, the default weights are view $=1$, add-to-cart $=3$, and transaction $=8$; in MovieLens, ratings are converted to interaction weights. We retain the top $N$ items by weighted interaction volume and split the simulation window into $B$ consecutive blocks.

The default competition graph is built from collaborative-filtering item embeddings. Let $X_{ui}$ be the weighted user--item matrix from the graph-construction period. We compute truncated-SVD item embeddings $g_i\in\mathbb R^{64}$ and cosine similarities $s_{ij}$. For each item $i$, we keep its top-$K$ most similar neighbors, truncate negative similarities, and row-normalize:
\[
    A_{ij}=\frac{\max(s_{ij},0)\ind\{j\in\mathrm{NN}_K(i)\}}
    {\sum_{\ell\in\mathrm{NN}_K(i)}\max(s_{i\ell},0)+10^{-12}}.
\]
The default is $K=10$. The graph is a pre-treatment approximation to substitution or demand overlap; it is not assumed to be a perfectly specified structural response model. The top-$K$ ablation below studies the sensitivity of COSTA to graph sparsity.

We evaluate three potential outcome models, and all main tables report them separately. Let $m_{ib}(Z)$ denote the conditional mean outcome under a simulated data-generating process. For any such mean function, the finite-population target used in the experiments is the traffic-weighted global contrast
\begin{equation}
    \tau_{w,m}=\frac{1}{W_N}
    \sum_{(i,b)\in\Ocal}w_{ib}\{m_{ib}(\one)-m_{ib}(\bm 0)\},
    \qquad
    W_N=\sum_{(i,b)\in\Ocal}w_{ib},
    \label{eq:experiment_true_gate}
\end{equation}
where $w_{ib}$ is the pre-treatment item--block traffic weight. Setting $w_{ib}\equiv1$ gives the unweighted GATE. This definition is used to compute the oracle target in all three models, so the Monte Carlo bias is measured against the same global treatment regime that motivates the covariance design.

\paragraph{Linear model.}
The first model matches the covariance-design working model. For $b=2,\ldots,B$,
\begin{equation}
    m^{\mathrm{lin}}_{ib}(Z)=\mu_{ib}+\beta_iZ_{ib}
    +\gamma\sum_jA_{ij}Z_{jb}
    +\eta Z_{i,b-1}.
    \label{eq:experiment_linear_model}
\end{equation}
The graph in \eqref{eq:experiment_linear_model} is the same pre-treatment competition graph used by the design; $\gamma$ is contemporaneous network spillover and $\eta$ is one-lag own-item temporal carryover. We allow heterogeneous direct effects, with default form $\beta_i=\beta_0(1+0.1\xi_i)$ and $\xi_i\sim N(0,1)$, and add mean-zero simulation noise to observed outcomes. The true linear target is
\begin{equation}
    \tau_{w,\mathrm{lin}}=\frac{1}{W_N}
    \sum_{(i,b)\in\Ocal}w_{ib}
    \left(\beta_i+\gamma\sum_jA_{ij}+\eta\right).
    \label{eq:experiment_tau_linear}
\end{equation}
If $A_I$ is row-normalized and the weights do not concentrate on unusual rows, this target is close to the weighted average direct effect plus $\gamma+\eta$.

\paragraph{Nonlinear saturated competition model.}
For compactness, write $C_{ib}(Z)=\sum_jA_{ij}Z_{jb}$ and $C^\star_{ib}(Z)=\sum_jA^\star_{ij}Z_{jb}$. The second model deliberately violates the additive linear exposure assumption while preserving the same observed competition graph:
\begin{equation}
    m^{\mathrm{nl}}_{ib}(Z)=\mu_{ib}+\beta_iZ_{ib}+\gamma_i\tanh\{\kappa C_{ib}(Z)\}+\lambda_{\mathrm{int}}Z_{ib}C_{ib}(Z)+\eta_iZ_{i,b-1}.
    \label{eq:experiment_nonlinear_model}
\end{equation}
The hyperbolic-tangent term creates saturation in network exposure, $\gamma_i$ and $\eta_i$ introduce heterogeneous spillover and carryover strengths, and $\lambda_{\mathrm{int}}Z_{ib}C_{ib}(Z)$ creates interaction between own treatment and neighbors' treatments. The true target $\tau_{w,\mathrm{nl}}$ is computed from \eqref{eq:experiment_true_gate} by evaluating \eqref{eq:experiment_nonlinear_model} at $Z=\one$ and $Z=\bm 0$.

\paragraph{Demand-substitution model.}
The third model is closer to a demand-substitution process and introduces stronger misspecification. Let $L_{ib}(Z)=\eta_1Z_{i,b-1}+\eta_2\ind\{b\ge3\}Z_{i,b-2}-\chi_TC^\star_{i,b-1}(Z)$. The latent demand mean is
\begin{equation}
    \mu^{\mathrm{dem}}_{ib}(Z)=\exp\{\mu^\ell_{ib}+\beta_iZ_{ib}-\chi C^\star_{ib}(Z)+L_{ib}(Z)\}.
    \label{eq:experiment_demand_mean}
\end{equation}
We then generate observed outcomes as
\begin{equation}
    Y_{ib}(Z)\sim \Poisson\{c_{ib}\mu^{\mathrm{dem}}_{ib}(Z)\},
    \label{eq:experiment_demand_poisson}
\end{equation}
where $c_{ib}$ is item--block traffic exposure. The graph $A^\star$ may differ from the design graph $A_I$, for example by using a denser top-$K$ graph or a demand-overlap graph. The specification is motivated by standard count-demand models and by the economics/operations literature on substitutable products: Poisson-type outcome models are common for count responses, and competitor substitution is a central mechanism in inventory and demand models \citep{camerontrivedi1986count,netessinerudi2003substitution}. The negative coefficients $-\chi$ and $-\chi_T$ encode substitution: treated neighboring items can draw demand away from item $i$ contemporaneously or through lagged competitor pressure. This model is misspecified relative to COSTA because it has an exponential link, graph mismatch, lag-2 own-item carryover, and lag-1 competitor carryover. The target $\tau_{w,\mathrm{dem}}$ is computed from the Poisson means $m^{\mathrm{dem}}_{ib}(Z)=c_{ib}\mu^{\mathrm{dem}}_{ib}(Z)$ under global treatment and global control.

The linear model tests whether the covariance formula behaves as intended when the working approximation is accurate. The nonlinear and demand-substitution models test robustness to saturation, interaction effects, heterogeneous spillovers, graph mismatch, non-Gaussian outcomes, and additional temporal spillovers beyond the one-lag design model.

\subsection{Designs, estimators, and metrics}
\label{subsec:experiment_designs_metrics}

We compare six main designs. \textsc{Independent} assigns each item--block cell independently with probability $p$. \textsc{RBSD} is a regular balanced switchback design that balances assignments across items and blocks but does not use the competition graph; implementation details for general $p$ are given in Appendix~\ref{app:additional_experiments}. \textsc{Cluster} follows graph cluster randomization \citep{ugander2013graph}: it uses Leiden communities \citep{traag2019leiden} of the pre-treatment competition graph and randomizes the clusters independently by block. \textsc{Static-OCD} optimizes graph-aware covariance within each block following optimized covariance design \citep{chen2023ocd}, but samples blocks independently, isolating the value of temporal adaptation. \textsc{COSTA} is the default covariance-optimized network-time design. \textsc{COSTA-LocalPenalty} denotes the finite-sample soft-locality-penalized variant and is abbreviated \textsc{COSTA-LP} in compact figures and tables; in the default RMSE summaries we use the fixed, untuned $\lambda=1$ configuration and suppress the tuning-value suffix. This empirical label should not be read as evidence that the fixed-rank sampler satisfies the formal canonical-correlation certificate. The separate inference-diagnostic grid below is a sensitivity analysis over the simulated locality-penalty values available in the $B=8$ outputs, not a tuning step.

The covariance objective and the exact bias identity are derived for the HT score. The default-setting summaries use $B=8$ and report weighted HT so that the headline RMSE decomposition is aligned with the theory and makes the variance effect of COSTA-LocalPenalty visible. Weighted H{\'a}jek and unweighted DIM are reported side-by-side in Section~\ref{subsec:estimator_comparison}. The optimizer in the reported simulations uses the unweighted design approximation described in Section~\ref{subsec:design_objective}, while the estimand and the weighted estimators are traffic-weighted; this intentionally evaluates robustness of the covariance design under traffic heterogeneity. A fully target-weighted design objective is the coherent extension when launch-time design weights should prioritize high-traffic cells. For each dataset, design, and potential outcome model, we report bias, standard deviation, and RMSE over Monte Carlo assignments. We do not average across potential outcome models: the three models are separate cases throughout the main results and appendix tables.

\subsection{Default-setting RMSE}
\label{subsec:default_results}

Figure~\ref{fig:default_rmse} reports the default setting with $N=10{,}000$, $B=8$, $K=10$, $p=0.5$, $d_I=16$, and $d_T=B-1$. The legend is placed above the panels because RBSD has relatively large RMSE in the nonlinear model. In this longer finite-block setting, COSTA-LocalPenalty often improves RMSE relative to default COSTA under weighted HT. The reason is not that long-range covariance is irrelevant: penalizing it can increase residual design bias. Rather, COSTA already controls the most important exposure-edge terms, while COSTA-LocalPenalty reduces the sampled pairwise locality surrogates in \eqref{eq:local_penalty_exact}; in these simulations that reduction is associated with lower Monte Carlo variance. The resulting empirical association yields a better finite-sample bias--variance balance in the default weighted-HT summaries, without establishing a general variance mechanism.

\begin{figure}[!htbp]
    \centering
    \includegraphics[width=0.98\linewidth]{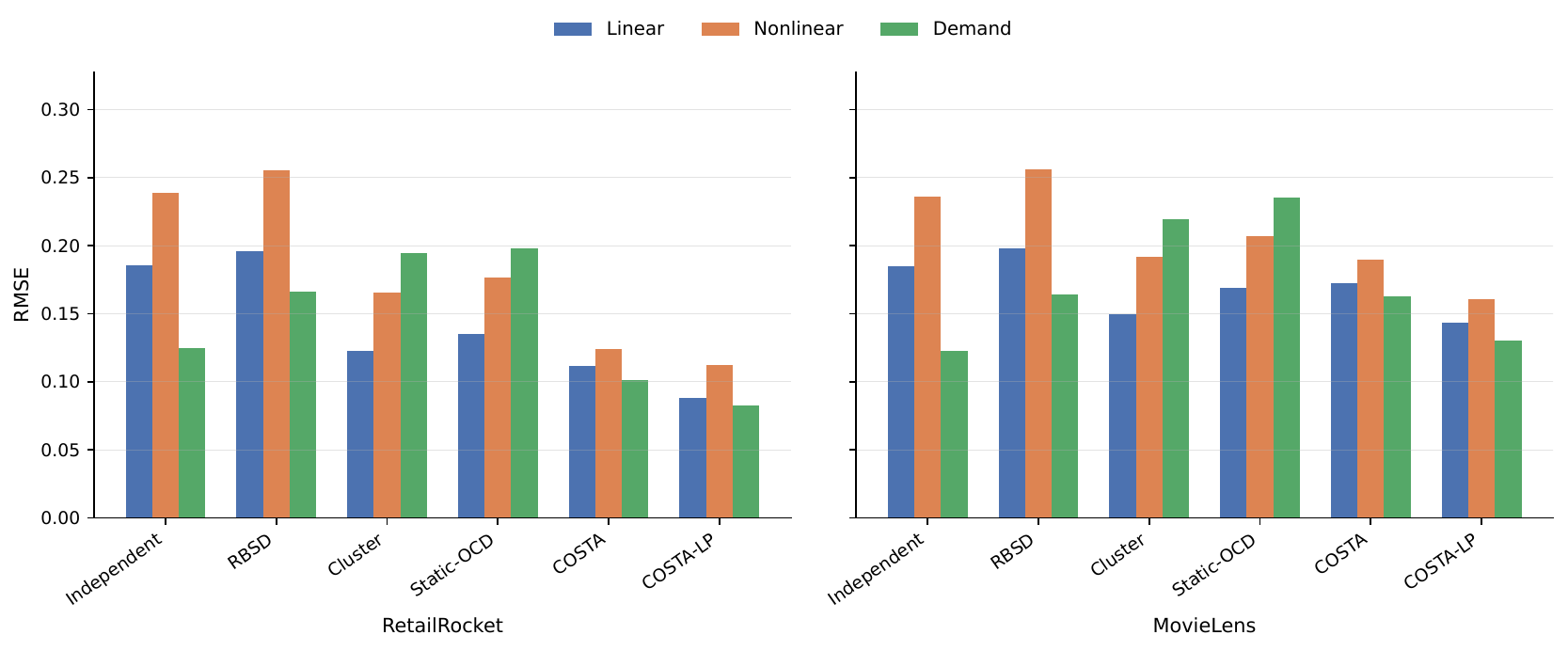}
    \caption{Default-setting RMSE by dataset, design, and potential outcome model. The estimator is weighted HT; the setting is $N=10{,}000$, $B=8$, top-$K=10$, $p=0.5$, Kronecker network dimension $d_I=16$, and an unweighted design objective with traffic-weighted estimands. Bars summarize Monte Carlo assignment draws from the processed RetailRocket and MovieLens simulation outputs.}
    \label{fig:default_rmse}
\end{figure}

\begin{table}[!htbp]
    \centering
    \caption{RetailRocket default-setting statistical performance. Linear, nonlinear, and demand are three separate potential outcome models; each model reports Bias, Std., and RMSE. The estimator is weighted HT.}
    \label{tab:default_retail}
    \small
    \setlength{\tabcolsep}{3.6pt}
    \input{tables/tab_default_retail.tex}
\end{table}

\begin{table}[!htbp]
    \centering
    \caption{MovieLens default-setting statistical performance. Linear, nonlinear, and demand are three separate potential outcome models; each model reports Bias, Std., and RMSE. The estimator is weighted HT.}
    \label{tab:default_movielens_main}
    \small
    \setlength{\tabcolsep}{3.6pt}
    \input{tables/tab_default_movielens.tex}
\end{table}

Tables~\ref{tab:default_retail}--\ref{tab:default_movielens_main} give the corresponding numerical decompositions. On RetailRocket, COSTA-LocalPenalty has RMSE $0.088$ in the linear model, $0.112$ in the nonlinear model, and $0.082$ in the demand model, improving over COSTA primarily through lower standard deviation. On MovieLens, the same pattern appears: COSTA-LocalPenalty reduces RMSE to $0.143$, $0.161$, and $0.130$ in the three models. Independent and RBSD have small standard deviations but large negative bias; Cluster and Static-OCD reduce some network mismatch but do not jointly adapt to carryover; COSTA moves the estimates closer to the global estimand by aligning covariance with both network and temporal exposure edges, while COSTA-LocalPenalty reduces the sampled pairwise locality surrogates and, in these experiments, is associated with lower Monte Carlo variance.

\subsection{Estimator comparison: HT, H{\'a}jek, and DIM}
\label{subsec:estimator_comparison}

The three estimators used in the simulation code differ only in weighting and normalization. Let $X_a^{\mathrm{obs}}=Y_a^{\mathrm{obs}}-\tilde\alpha_a$. The weighted HT estimator is
\begin{equation}
    \widehat\tau^{HT}_{w,\tilde\alpha}
    =\frac{1}{W_N}\sum_{a\in\Ocal}w_a\frac{Z_a-p}{p(1-p)}X_a^{\mathrm{obs}}.
    \label{eq:exp_ht_def}
\end{equation}
The weighted H{\'a}jek estimator replaces inverse-probability normalization by realized treatment mass,
\begin{equation}
    \widehat\tau^{H}_{w,\tilde\alpha}
    =\frac{\sum_{a\in\Ocal}w_aZ_aX_a^{\mathrm{obs}}}{\sum_{a\in\Ocal}w_aZ_a}
    -\frac{\sum_{a\in\Ocal}w_a(1-Z_a)X_a^{\mathrm{obs}}}{\sum_{a\in\Ocal}w_a(1-Z_a)}.
    \label{eq:exp_hajek_def}
\end{equation}
If either realized weighted denominator is zero, the simulation code assigns a fixed prespecified value and records the event; under the overlap and denominator-convergence conditions of Proposition~\ref{prop:hajek_delta}, its probability vanishes asymptotically. The DIM diagnostic is the same ratio estimator with $w_a\equiv1$. Let
$\bar X^{\mathrm{obs}}_{1,0}=\sum_{a\in\Ocal}Z_aX_a^{\mathrm{obs}}/\sum_{a\in\Ocal}Z_a$ and
$\bar X^{\mathrm{obs}}_{0,0}=\sum_{a\in\Ocal}(1-Z_a)X_a^{\mathrm{obs}}/\sum_{a\in\Ocal}(1-Z_a)$. Then
\begin{equation}
    \widehat\tau^{DIM}_{\tilde\alpha}=\bar X^{\mathrm{obs}}_{1,0}-\bar X^{\mathrm{obs}}_{0,0}.
    \label{eq:exp_dim_def}
\end{equation}
Weighted HT and weighted H{\'a}jek target the traffic-weighted GATE; DIM is included as a graph-agnostic unweighted diagnostic. Figure~\ref{fig:estimator_retail} and Table~\ref{tab:estimator_retail} show how the point-estimation comparison changes with normalization: the weighted-HT default highlights the variance benefit of COSTA-LocalPenalty, whereas H{\'a}jek and DIM expose the effect of ratio normalization and target weighting.

\begin{figure}[htbp]
    \centering
    \includegraphics[width=0.95\textwidth]{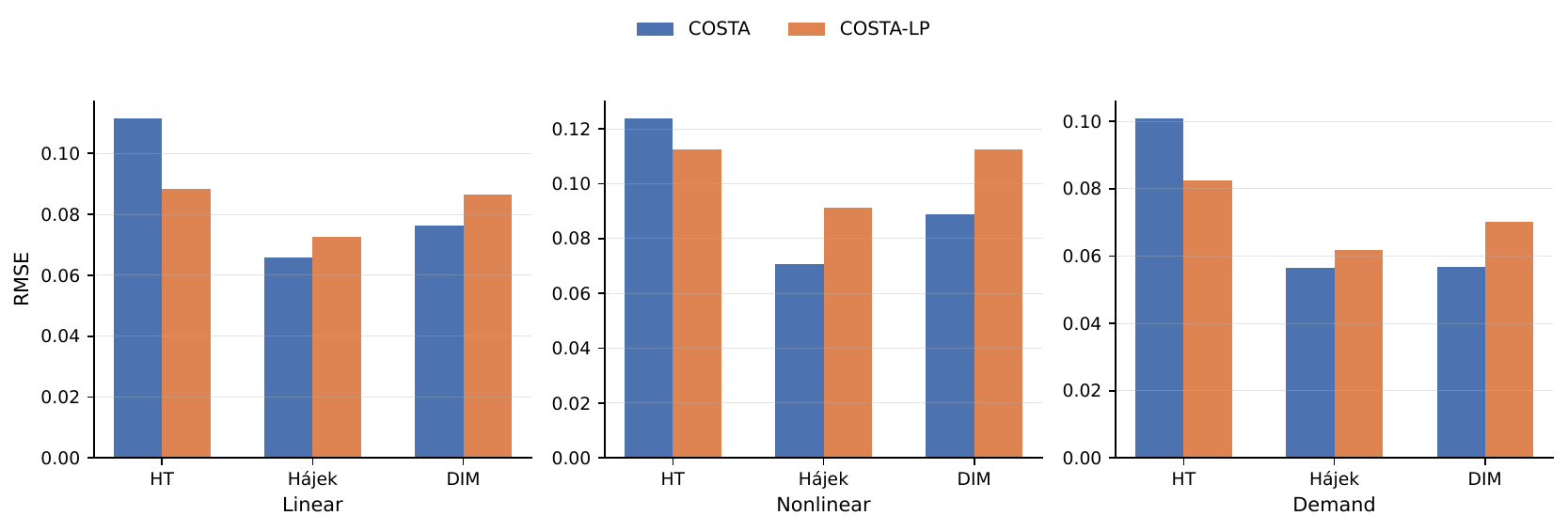}
    \caption{Estimator comparison on RetailRocket in the default $B=8$ setting. Bars report RMSE for HT, H{\'a}jek, and DIM under COSTA and COSTA-LocalPenalty across the three potential outcome models. Weighted HT and weighted H{\'a}jek use traffic weights; DIM is unweighted.}
    \label{fig:estimator_retail}
\end{figure}
\FloatBarrier

\begin{table}[htbp]
    \centering
    \caption{RetailRocket estimator comparison in the default $B=8$ setting. HT and H{\'a}jek are traffic-weighted; DIM is unweighted.}
    \label{tab:estimator_retail}
    \small
    \input{tables/tab_estimator_retail.tex}
\end{table}

\subsection{Optimized covariance diagnostic}
\label{subsec:covariance_diagnostic}

Figure~\ref{fig:covariance_diagnostic} is a stylized five-item, four-block mechanism illustration of selective positive covariance alignment. It shows the network and adjacent-lag directions targeted by the design without claiming to be a subgraph extracted from either empirical dataset. Balance, contrast, and locality terms act on additional pair directions that are not displayed.

\begin{figure}[!htbp]
    \centering
    \includegraphics[width=0.99\linewidth]{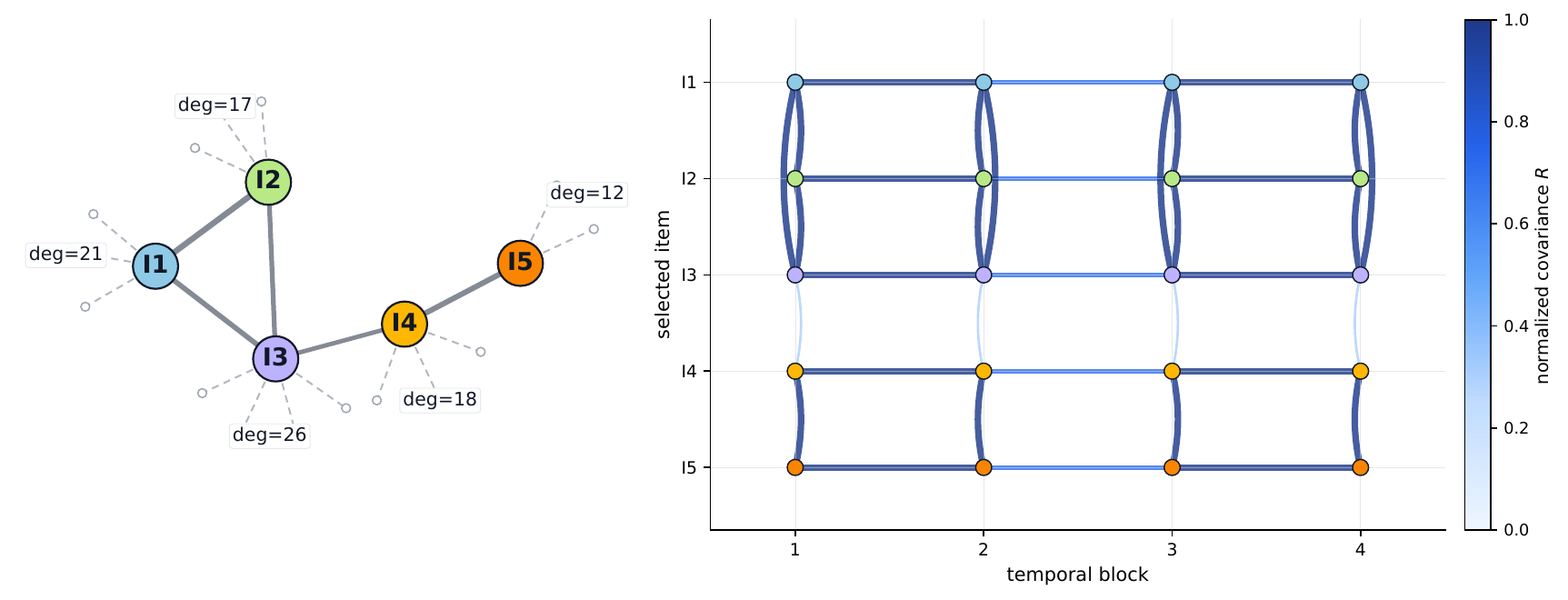}
    \caption{Stylized five-item, four-block illustration of selective positive covariance alignment on the displayed network and lag edges. It is a mechanism illustration rather than a subgraph extracted from RetailRocket or MovieLens; unshown and negative pair directions are not encoded by the $0$--$1$ color scale.}
    \label{fig:covariance_diagnostic}
\end{figure}

\FloatBarrier

\subsection{Setting ablations on RetailRocket}
\label{subsec:setting_ablations}

Figure~\ref{fig:setting_ablation} summarizes the RetailRocket setting ablations for all three potential outcome models.

\begin{figure}[!htbp]
    \centering
    \includegraphics[width=0.88\linewidth]{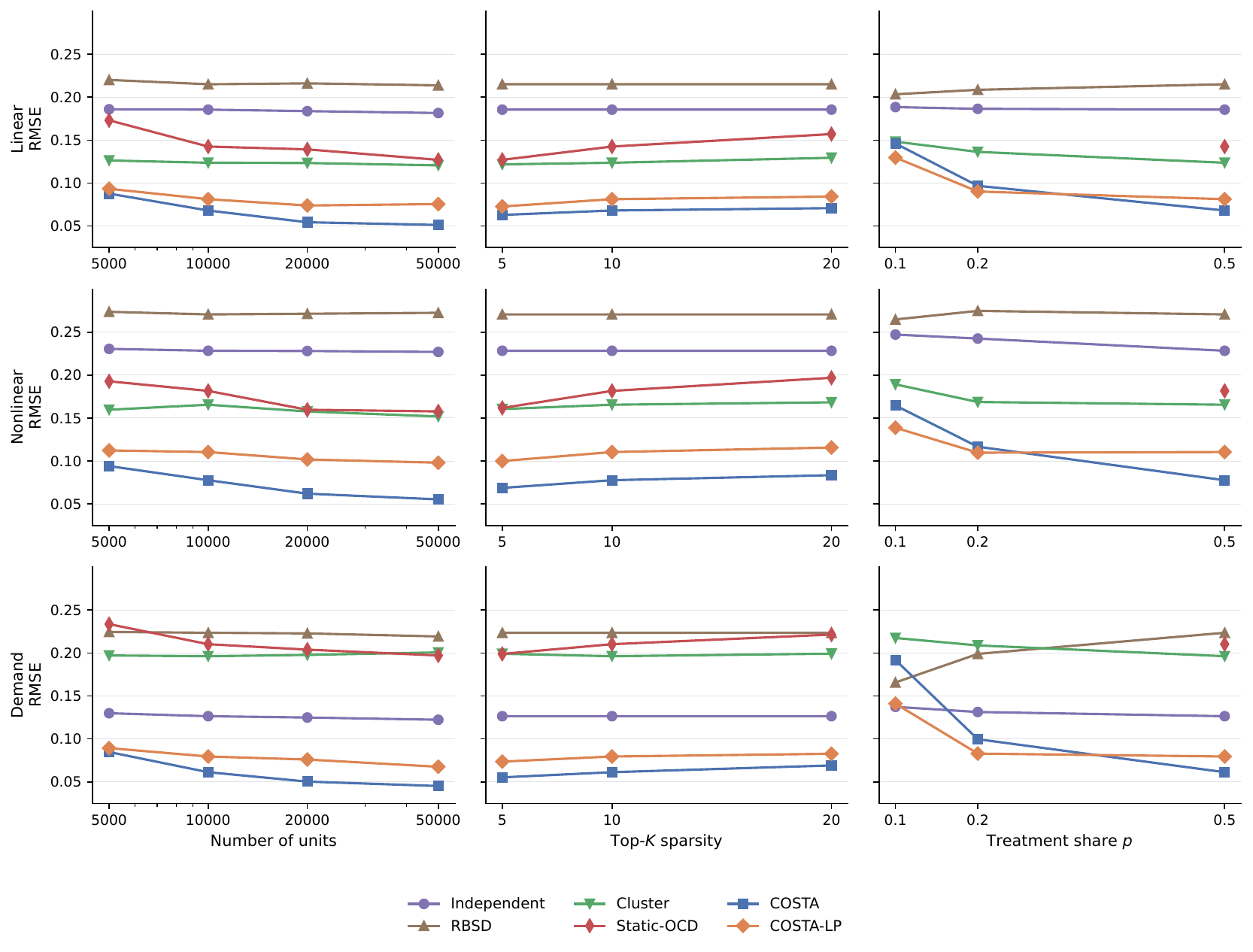}
    \caption{RetailRocket setting ablations by potential outcome model. Rows correspond to the linear, nonlinear, and demand models. Columns vary item count $N\in\{5{,}000,10{,}000,20{,}000,50{,}000\}$, top-$K\in\{5,10,20\}$ graph sparsity, and treatment probability $p\in\{0.1,0.2,0.5\}$. The estimator is weighted H{\'a}jek; the design objective is unweighted and the estimand is traffic-weighted.}
    \label{fig:setting_ablation}
\end{figure}

The next group of experiments varies design inputs that are relevant to all methods. The item-count grid is $N\in\{5{,}000,10{,}000,20{,}000,50{,}000\}$. This checks whether the covariance-design advantage persists as the number of units grows. The top-$K$ grid is $K\in\{5,10,20\}$ and changes the sparsity of the competition graph. The treatment-share grid is $p\in\{0.1,0.2,0.5\}$ and tests the general-$p$ Gaussian-copula construction. Static-OCD is only shown where the implemented sign-Gaussian benchmark is available.

COSTA is the strongest design across the RetailRocket item-scale and graph-sparsity grids shown here. The top-$K$ ablation is especially informative: increasing $K$ changes both the number of exposure edges and the difficulty of matching local environments to global treatment. COSTA and COSTA-LocalPenalty respond by reallocating covariance, whereas Independent and RBSD are insensitive to graph structure. The treatment-share ablation should be read more carefully. At $p=0.1$ and $p=0.2$, the Bernoulli covariance region is asymmetric: strong negative covariance is infeasible, exact short-horizon balance is harder, and the advantage of covariance optimization can attenuate. The main value of the general-$p$ Gaussian copula is therefore not a claim of uniform dominance at low treatment shares, but a valid way to search the feasible covariance region and diagnose when the margin constraint is binding.

\FloatBarrier

\subsection{Bias--standard deviation tradeoff}
\label{subsec:bias_std_tradeoff}

Figure~\ref{fig:bias_std} plots absolute bias against standard deviation in the RetailRocket default setting. Each panel corresponds to one potential outcome model. The visual message is consistent across the three panels. Independent and RBSD sit at the low-variance/high-bias end: they provide stable assignments but leave the exposure environment far from the global treatment and control regimes. Cluster and Static-OCD reduce some bias but often pay a variance cost. COSTA is closest to the lower-left region because it directly targets the covariance directions that determine network-spillover and carryover bias. The implemented COSTA-LocalPenalty variant is observed to move leftward in standard deviation relative to COSTA while penalizing the sampled pairwise quantities in \eqref{eq:local_penalty_exact}; the accompanying movement in bias is an empirical finite-sample tradeoff, not by itself a formal weak-dependence guarantee.

\begin{figure}[!htbp]
    \centering
    \includegraphics[width=0.98\linewidth]{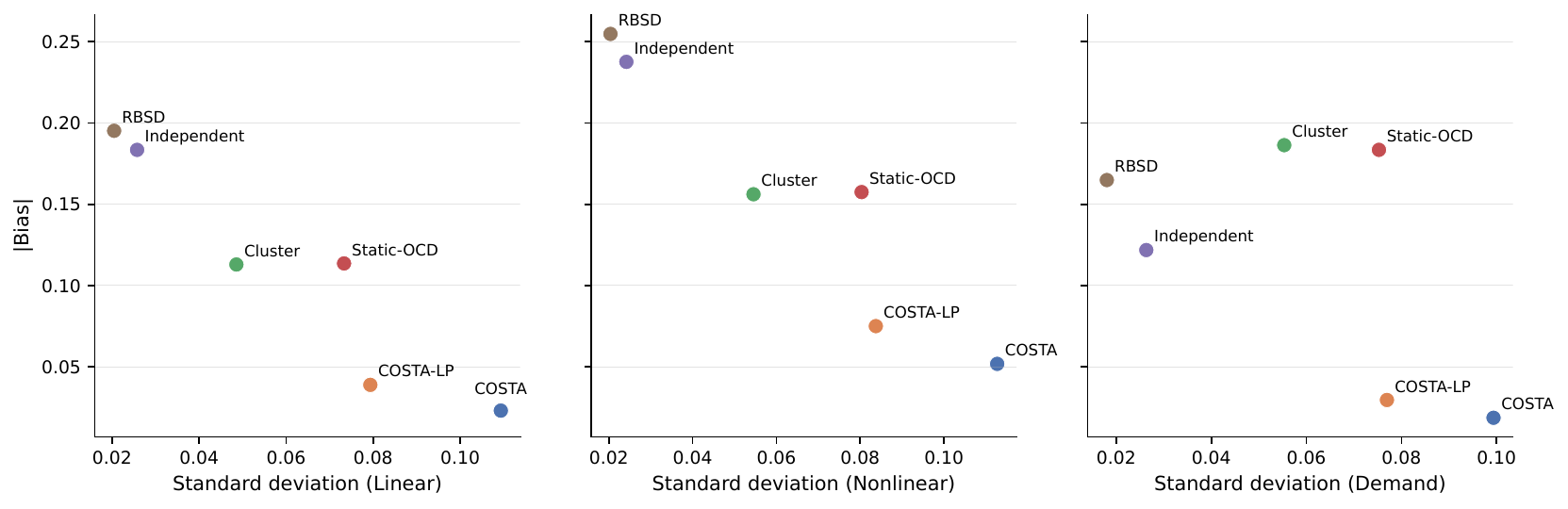}
    \caption{Bias--standard deviation tradeoff in the RetailRocket default setting. Each panel is a separate potential outcome model; the estimator is weighted HT. The setting is $N=10{,}000$, $B=8$, top-$K=10$, $p=0.5$, and an unweighted covariance-design objective with traffic-weighted estimands.}
    \label{fig:bias_std}
\end{figure}

\FloatBarrier

\subsection{COSTA-specific ablations}
\label{subsec:costa_ablations}

We next isolate design choices that are specific to COSTA. Default results use $B=8$; the larger $N$/top-$K$/$p$ setting grids above use $B=4$ for computational comparability. Figure~\ref{fig:kro_dim} varies the Kronecker network dimension $d_I\in\{16,32,64\}$ with $d_T=B-1$. The grid is plotted on logarithmic ticks because the dimensions double. The main finite-sample conclusion is that $d_I$ has limited impact on RMSE in the default RetailRocket setting: dimension $16$ already captures the dominant covariance directions used by the point-estimation objective. This supports low rank as an MSE regularizer, not as an inferential certificate. Proposition~\ref{prop:global_factor_certifiability}(i) shows that a fixed $d_I$ cannot justify uniform weak-dependence asymptotics on a sparse growing graph merely because its finite-sample RMSE has stabilized.

\begin{figure}[!htbp]
    \centering
    \includegraphics[width=0.98\linewidth]{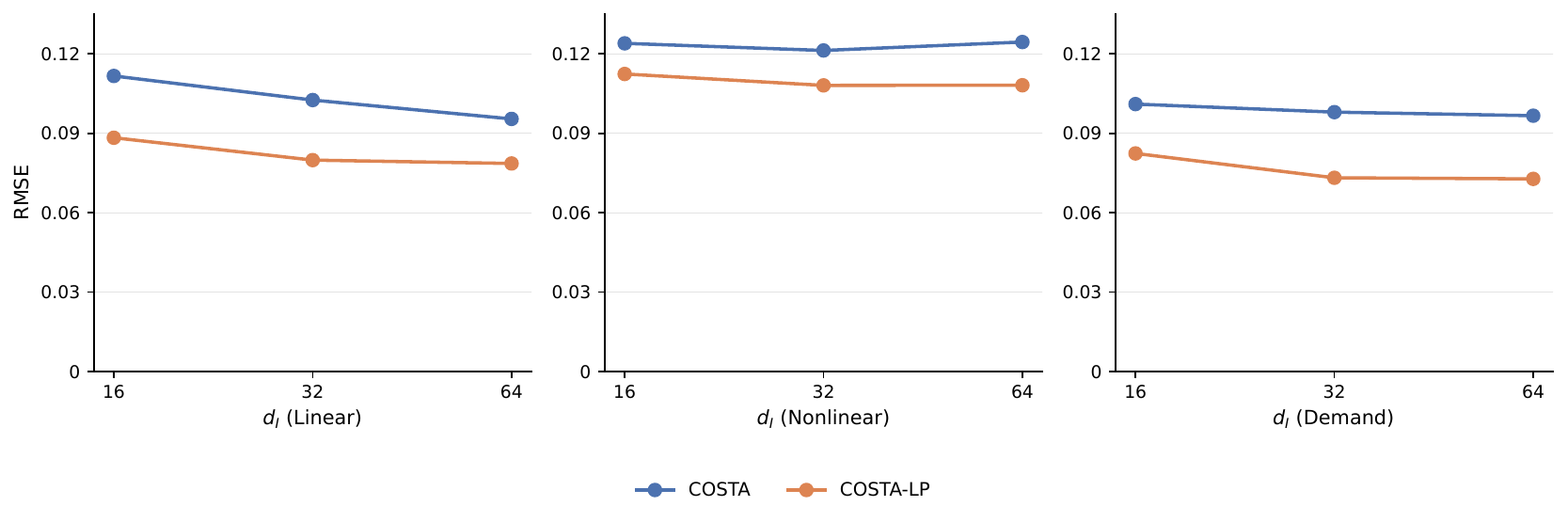}
    \caption{COSTA Kronecker network-dimension ablation on RetailRocket. Each panel is a separate potential outcome model; $d_I\in\{16,32,64\}$, $d_T=B-1$, $N=10{,}000$, $B=8$, top-$K=10$, and $p=0.5$. The estimator is weighted HT.}
    \label{fig:kro_dim}
\end{figure}

Figure~\ref{fig:lambda_ablation} varies the COSTA-LocalPenalty strength over the prespecified RetailRocket grid. The left axis reports weighted-HT RMSE and the right axis reports the in-sample fixed-batch diagnostic $\texttt{far\_product\_mean\_abs\_R}=m_P^{-1}\sum_\ell|R^P_\ell|$. The diagnostic is neither the full three-component penalty $J_{\rm loc}$ nor the setwise coefficient $\theta_N^{\rm GC}(s)$. Stronger shrinkage lowers diffuse product-cell covariance in all three outcome models, whereas RMSE is nonmonotone because the penalty can also restrict useful exposure-edge covariance. The default $\lambda_{\rm loc}=1$ was prespecified rather than selected from this sweep.

\begin{figure}[!htbp]
    \centering
    \includegraphics[width=0.98\linewidth]{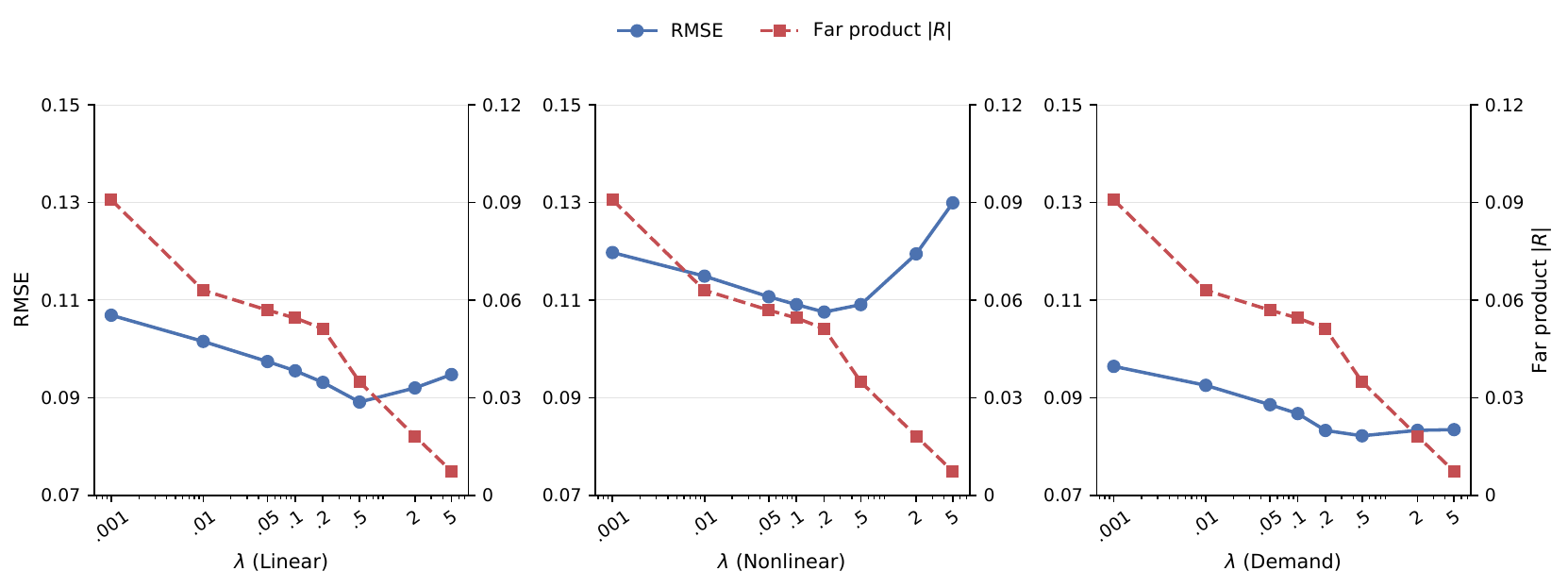}
    \caption{RetailRocket COSTA-LocalPenalty sweep using weighted HT. Circles report RMSE and dashed squares report the in-sample fixed-batch mean absolute product-cell covariance. Moderate shrinkage substantially reduces diffuse pairwise covariance with limited RMSE movement; the diagnostic is not the setwise coefficient $\theta_N^{\rm GC}$.}
    \label{fig:lambda_ablation}
\end{figure}

In the six default H{\'a}jek comparisons, five COSTA-LocalPenalty RMSE values differ from COSTA by at most $0.010$, with a larger $0.020$ cost in the RetailRocket nonlinear case. In the inference grid, COSTA-LocalPenalty reduces average $90\%$ interval width by approximately $25$--$38\%$ across all six dataset--outcome combinations. The narrower intervals reflect lower Monte Carlo variance after design-aware calibration, not uncertainty underestimation.

Table~\ref{tab:cluster_kro} compares Cluster, COSTA, and two cluster-level COSTA variants. The ablation separates two ways of reducing the design space: clustering units before randomization and using the Kronecker parameterization. Clustering alone can reduce bias but can substantially increase variance when cluster-level assignments are too coarse. The cluster-level COSTA variants show that preserving item-level covariance degrees of freedom matters; Kronecker structure is a more targeted reduction than collapsing units into graph clusters.

\begin{table}[!htbp]
    \centering
    \caption{Cluster and Kronecker design-space ablation on RetailRocket. Rows report separate potential outcome models in the default setting with weighted H{\'a}jek.}
    \label{tab:cluster_kro}
    \footnotesize
    \input{tables/tab_cluster_kro_retail.tex}
\end{table}

\subsection{Balance and switching penalty ablation}
\label{subsec:component_ablation}

The soft-balance and switching terms serve different purposes. Balance penalties discourage large fluctuations in treatment load by block and exposure by item; switching penalties prevent the optimizer from solving carryover bias by making adjacent blocks almost identical. Figure~\ref{fig:component_ablation} removes these terms from the soft-locality COSTA-LocalPenalty variant in the $B=8$, $K=10$, $p=0.5$ RetailRocket design grid. Removing balance raises RMSE in all three potential-outcome models shown, while removing the switching guardrail has a smaller but still adverse RMSE effect in this grid. These are descriptive point-estimation comparisons; inference is evaluated separately with the heterogeneous-mean and design-aware procedures below.

\begin{figure}[!htbp]
    \centering
    \includegraphics[width=0.98\linewidth]{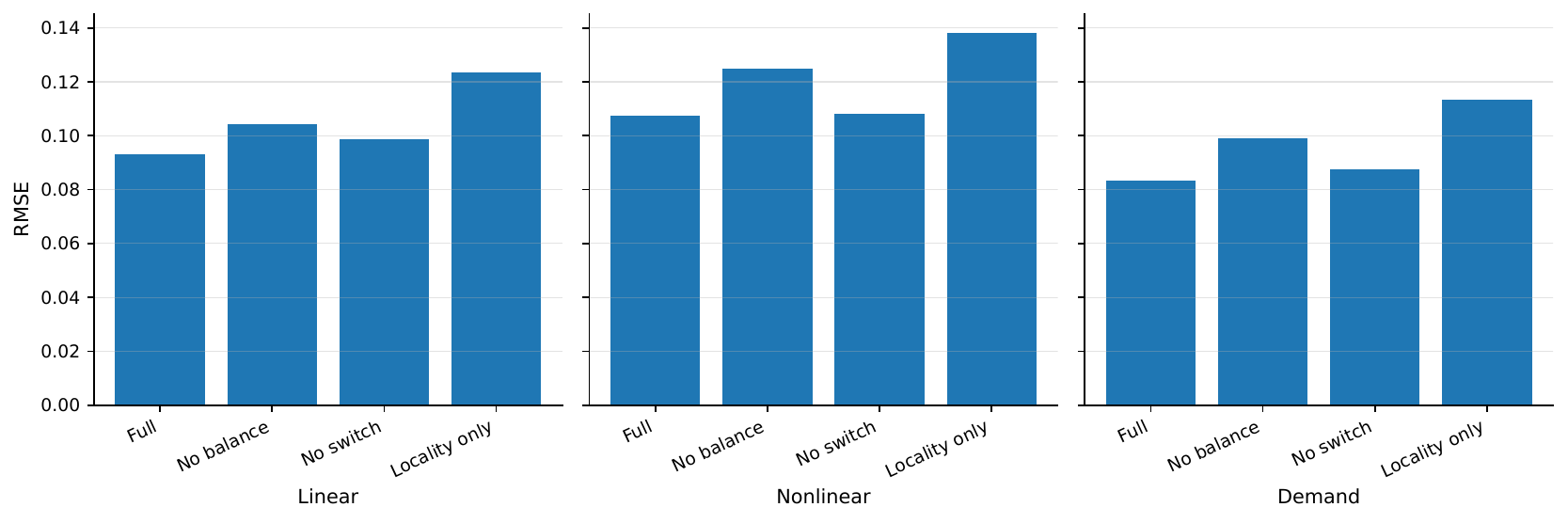}
    \caption{COSTA-LocalPenalty balance and switching penalty ablation on RetailRocket. Each panel is a separate potential outcome model; the setting is $N=10{,}000$, $B=8$, top-$K=10$, $p=0.5$, weighted HT, and the design grid previously used for inference diagnostics.}
    \label{fig:component_ablation}
\end{figure}

\IfFileExists{tables/tab_inference_validation.tex}{
\subsection{Inference validation}
\label{subsec:inference_experiments}
\label{subsec:inference_results}

The inference experiment evaluates the linear, nonlinear, and demand-substitution potential-outcome models. It compares local HAC centered by independently simulated cell-specific contribution means with signed and positive-part design-aware covariance-tail corrections. The mean-calibration, tail-calibration, and main Monte Carlo streams are independent. Both datasets use $N=10{,}000$, $B=8$, top-$K=10$, $p=0.5$, $d_I=16$, and $d_T=7$; the main experiment uses $1{,}000$ assignments and each calibration stage uses $500$ independent assignments. For the nonlinear and demand models, the known semi-synthetic conditional mean is used as the model-assisted calibration target. Those two cases assess finite-sample robustness of the variance workflow; they do not extend the exact linear expected-cut identity to nonlinear outcomes.

Table~\ref{tab:inference_validation} reports the positive-tail design-aware procedure. In addition to design-centered and GATE coverage, it reports the average interval width and its ratio to the empirical central $90\%$ Monte Carlo width, defined by the fifth and ninety-fifth percentiles of the estimator draws. With $1{,}000$ main draws, the Monte Carlo standard error of a $90\%$ coverage estimate is about $\{0.9(0.1)/1000\}^{1/2}=0.0095$. Figures~\ref{fig:inference_design_coverage} and~\ref{fig:inference_width} compare the three variance procedures across outcome models.

\begin{table}[!htbp]
\centering
\caption{Inference validation under linear, nonlinear, and demand-substitution outcomes. Coverage is evaluated at the nominal $90\%$ level. The table reports the positive-tail design-aware procedure; widths and standard-error ratios are Monte Carlo averages. DC denotes coverage at the independently calibrated design expectation, GATE denotes uncorrected raw-GATE coverage, RC denotes recentered GATE coverage, and Width/MC divides average interval width by the empirical fifth-to-ninety-fifth-percentile width. The same independent mean-calibration stream defines the design-center reference and recentering adjustment, so RC and DC are algebraically identical conditional on that stream.}
\label{tab:inference_validation}
\footnotesize
\input{tables/tab_inference_validation.tex}
\end{table}

\FloatBarrier

\IfFileExists{figures/fig_inference_design_coverage.pdf}{
\begin{figure}[!htbp]
\centering
\includegraphics[width=0.98\linewidth]{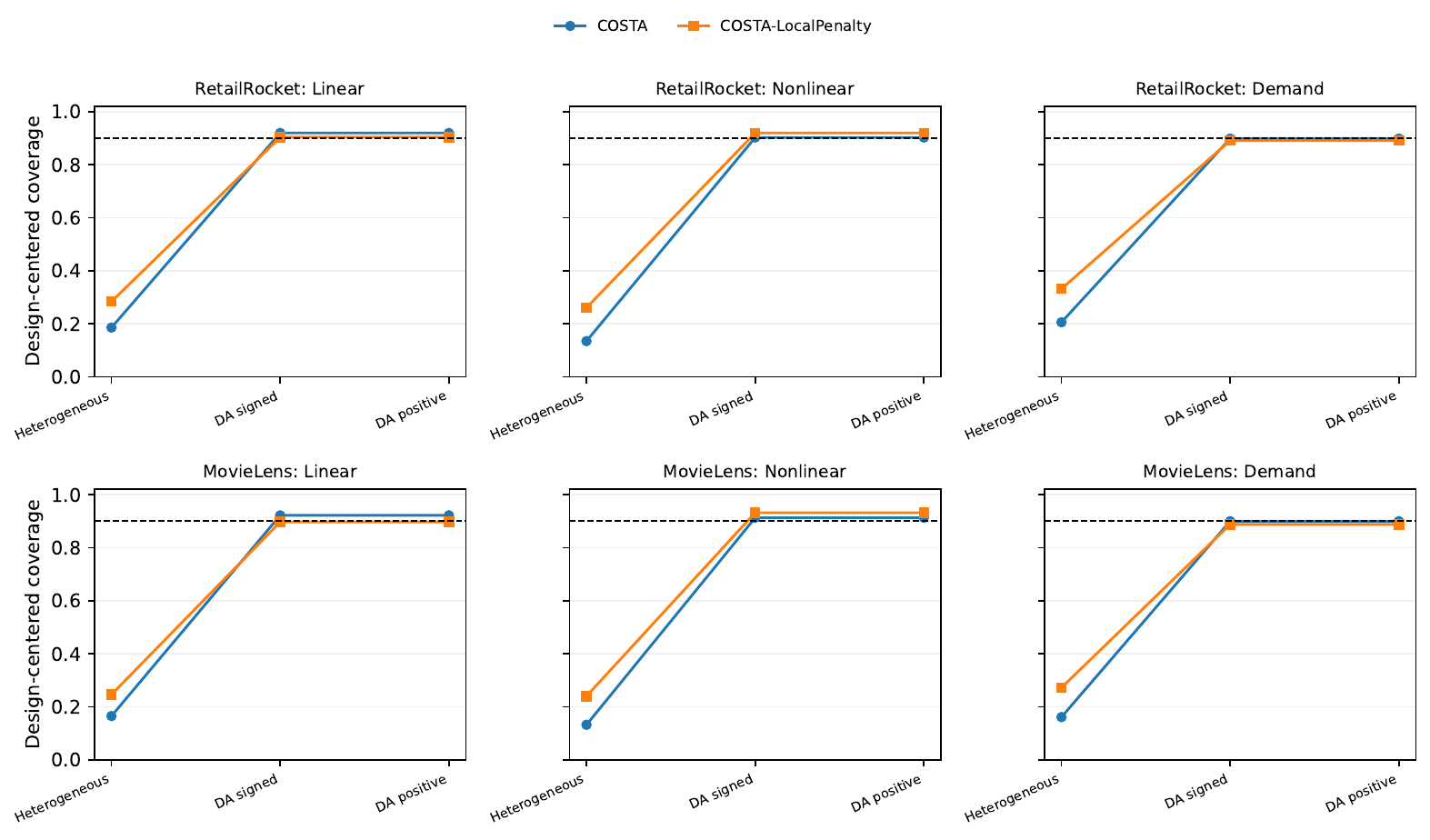}
\caption{Design-centered coverage at the nominal $90\%$ level. Rows are datasets and columns are potential-outcome models. Heterogeneous uses independently calibrated cell-specific means; DA signed and DA positive add the signed and positive-part nonlocal covariance tails. The dashed line marks the nominal level.}
\label{fig:inference_design_coverage}
\end{figure}

\begin{figure}[!htbp]
\centering
\includegraphics[width=0.98\linewidth]{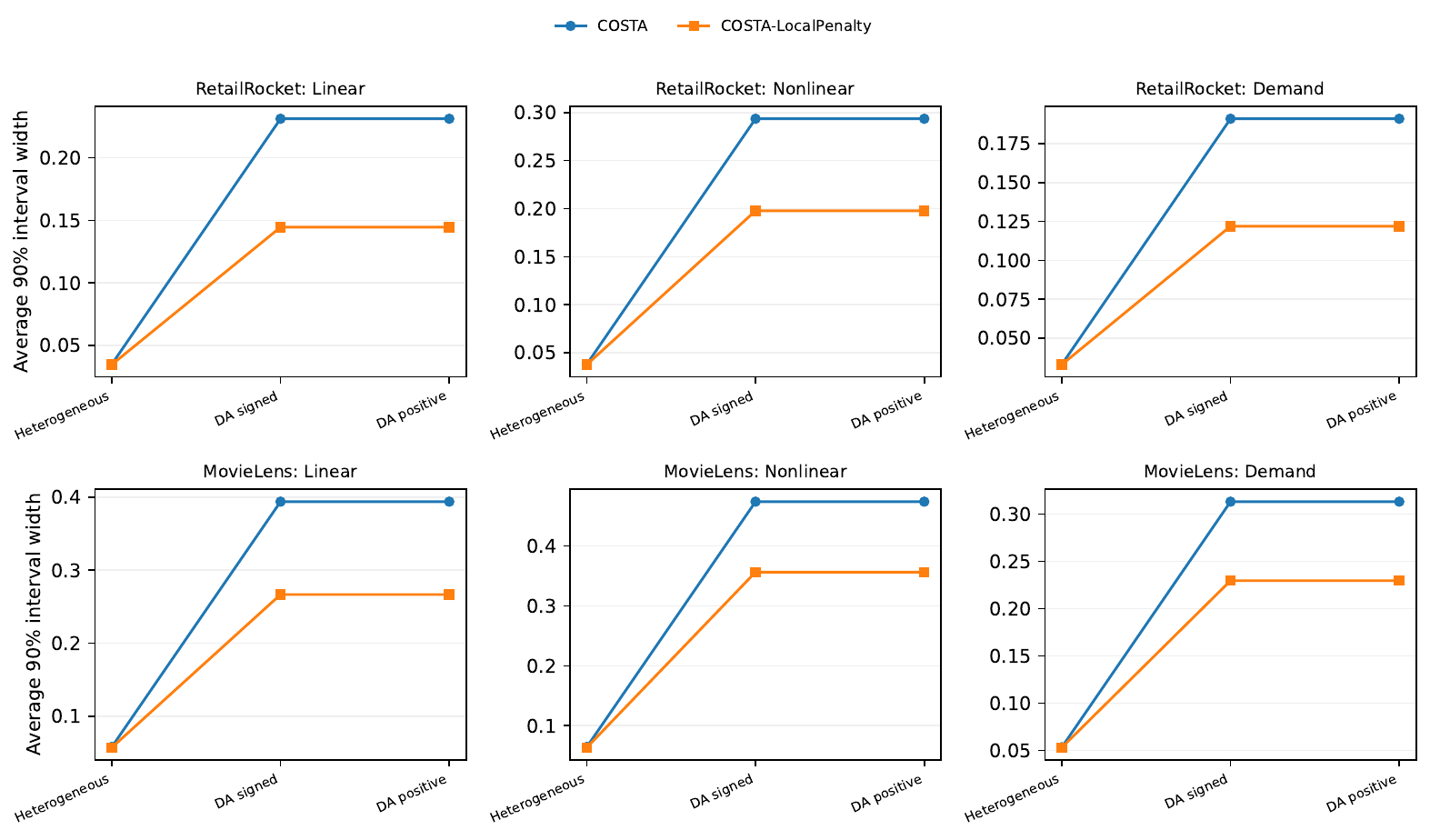}
\caption{Average $90\%$ confidence-interval width for the same variance procedures and outcome models as Figure~\ref{fig:inference_design_coverage}. Width must be read together with coverage: the narrow local intervals substantially understate uncertainty, whereas the design-aware widths closely track the empirical central Monte Carlo ranges reported in Table~\ref{tab:inference_validation}.}
\label{fig:inference_width}
\end{figure}
}{}

The design-aware corrections deliver stable finite-sample calibration across all three outcome models. Across COSTA and COSTA-LocalPenalty, design-centered coverage ranges from $0.888$ to $0.932$, the estimated-standard-error to Monte Carlo standard-deviation ratio ranges from $0.968$ to $1.078$, and the reported-width to empirical-width ratio ranges from $0.967$ to $1.100$. The corresponding studentized standard deviations in Table~\ref{tab:inference_studentized} range from $0.927$ to $1.033$. By contrast, the local heterogeneous-mean estimator substantially understates uncertainty: correcting contribution means alone does not recover covariance outside the local product-graph window. Signed and positive-part design-aware corrections coincide for the two reported designs because their calibrated covariance tails are positive, and no variance estimate is truncated at zero.

Interval width also distinguishes the two optimized designs. COSTA-LocalPenalty intervals are $25\%$--$38\%$ narrower than COSTA intervals across the six dataset--model comparisons, reflecting the smaller Monte Carlo variance of its estimator in this grid rather than an uncalibrated local-HAC shortcut. Raw-GATE coverage remains a separate question and ranges from $0.696$ to $0.909$. The same independent mean-calibration stream defines both the design-center reference and recentering adjustment, so recentered GATE coverage is algebraically identical to design-centered coverage conditional on that stream. This illustrates the IC-1 recentering step; same-order calibration uncertainty is treated theoretically under a joint pilot--experiment law rather than evaluated here. Thus the experiment illustrates rather than collapses the IC hierarchy: the design-aware variance correction addresses the omitted covariance tail, while causal centering still depends on design bias. The nonlinear and demand results show that the model-assisted workflow can be calibrated under richer semi-synthetic outcome surfaces; they are not, by themselves, asymptotic IC-0 certificates for every fixed-rank sampler.

The studentized diagnostics in Table~\ref{tab:inference_studentized} provide a distributional check beyond coverage: across the twelve dataset--model--design cells, centered means range from $-0.066$ to $0.083$, standard deviations from $0.927$ to $1.033$, and the fifth and ninety-fifth percentiles remain close to their Gaussian benchmarks. These diagnostics examine the centered distribution rather than coverage alone.

\FloatBarrier
}{}

\section{Conclusion}
\label{sec:conclusion}

COSTA makes the joint assignment law a first-class design object for experiments with network and temporal interference. Its finite-sample foundation is exact: under common marginals and positive linear exposure, HT bias for the sustained global contrast is the negative expected weight of the treatment boundary. Together with a covariance-level variance envelope, this representation gives a covariance-optimizable MSE bound and identifies the network and lag directions that matter for causal alignment. The Kronecker Gaussian-copula construction then turns that design principle into a scalable assignment law.

The inference theory is reusable beyond the proposed parameterization. Existing graph-$\psi$ CLT and network-HAC results provide the probabilistic engine; our contribution is to translate structured assignment covariance and local interference into dependence bounds for separated sets of HT contributions, and then to separate design-centered normality, causal centering, and feasible studentization. Sparse, block, Kronecker, locally factored, and other covariance structures can use the same framework when their separation, topology, moment, variance, and kernel conditions are verified.

The path result clarifies rather than weakens covariance optimization: a design may deliver large finite-sample MSE gains while raw sustained-treatment GATE inference still needs an additional centering argument. Model-assisted debiasing, alternative targets, bias-aware intervals, and nonlocal-factor limit theory address different versions of that problem. The experiments show substantial and robust point-estimation gains, while the heterogeneous-mean and design-aware variance analysis demonstrates how the known randomization law can improve interval calibration.

\clearpage
\bibliographystyle{plainnat}
\bibliography{COSTA_references_arxiv}
\clearpage
\appendix

\section{Supplementary Theory Statements}
\label{app:supplementary_theory}

The random-environment convention of Section~\ref{sec:inference} applies throughout this supplement, except for the explicit joint-law carve-out in Proposition~\ref{prop:external_pilot_uncertainty}. For compactness, estimator conclusions written with $\to_p$ are conditional-in-randomization statements in the sense of \eqref{eq:conditional_probability_convergence_notation}, whereas rates involving only $\mathcal F_N$-measurable graph, covariance, or pilot quantities are under the outer environment law. Within ordinary conditional inference statements and their proofs, unsubscripted $\E$, $\Cov$, and $\Var$ are shorthand for $\mathbb E_N$, $\Cov_N$, and $\Var_N$. Proposition~\ref{prop:external_pilot_uncertainty} instead conditions only on the pre-pilot field $\mathcal E_N$ and marks all joint operators by the superscript $J$.

This appendix records supporting extensions whose proofs appear in Appendix~\ref{app:inference_proofs} and Appendix~\ref{app:technical_derivations}. They are separated from the main theoretical narrative because they close technical interfaces or provide optional inferential routes rather than defining the paper's principal design-verification contribution.

\begin{table}[H]
\centering
\scriptsize
\caption{Interpretive map for supplementary results. The table states the role and boundary of every supplementary theorem-level statement without promoting secondary extensions to headline contributions.}
\label{tab:supplement_result_map}
\renewcommand{\arraystretch}{1.15}
\setlength{\tabcolsep}{3.5pt}
\begin{tabular}{@{}>{\raggedright\arraybackslash}p{0.25\linewidth}>{\raggedright\arraybackslash}p{0.31\linewidth}>{\raggedright\arraybackslash}p{0.36\linewidth}@{}}
\toprule
Result & Main meaning & Boundary / what it does not establish \\
\midrule
Cor.~\ref{cor:weighted_variance_envelope} & Weighted formal precision envelope & Requires a dominating scaled direction; not the code edge-load term \\
Cor.~\ref{cor:signed_exposure_operator} and Prop.~\ref{prop:robust_bias_objective} & Signed/heterogeneous bias algebra and robust calibration & One-sided cut geometry requires nonnegative exposure \\
Props.~\ref{prop:fractional_multicut_bias}--\ref{prop:fractional_multicut_dual}, Cors.~\ref{cor:grid_bias_floor}--\ref{cor:expander_bias_floor} & General-graph disagreement lower bounds & Geometry-specific; no universal path-rate claim \\
Lemma~\ref{lem:conditional_unconditional_transfer}, Cors.~\ref{cor:poly_geometric_clt}--\ref{cor:local_factor_ic0}, Prop.~\ref{prop:local_approximation_transfer} & Conditional transfer and primitive IC-0 routes & Centered normality only; no automatic GATE centering \\
Cor.~\ref{cor:gaussian_independent_noise} & Independent own-cell noise preserves the assignment certificate & Shared noise requires an enlarged certificate \\
Prop.~\ref{prop:nugget_bias_floor}, Cors.~\ref{cor:nugget_supported_certificate_frontier}--\ref{cor:fixed_range_bias_floor} & Spectral-locality and bias frontiers & Route-specific certificate boundaries, not universal impossibility \\
Prop.~\ref{prop:external_debiasing}, Cor.~\ref{cor:external_debiasing_pilot_size}, Prop.~\ref{prop:external_pilot_uncertainty} & Negligible- and same-order-pilot debiasing routes & Design-specific pilot; joint law required for same-order uncertainty \\
Prop.~\ref{prop:misspecification_bias} & Exact working-model plus remainder decomposition & Remainder envelope may be conservative \\
Prop.~\ref{prop:oracle_network_hac_disconnected}, Cor.~\ref{cor:component_independent_hac} & Disconnected-graph oracle HAC routes & Requires explicit cross-component control or independence \\
Cors.~\ref{cor:primitive_centering_rate}--\ref{cor:indefinite_kernel_centering}, Prop.~\ref{prop:uncentered_conservative} & Feasible-centering and conservative-variance routes & Do not establish oracle-HAC validity \\
Props.~\ref{prop:bias_aware_coverage}--\ref{prop:design_aware_hac} & Bias-aware and tail-corrected intervals & Coverage depends on valid envelopes/tail estimates \\
Prop.~\ref{prop:hajek_delta} & High-level H{\'a}jek delta-method route & Assumes projection-wise weak-dependence conditions \\
\bottomrule
\end{tabular}
\end{table}

\subsection{Heterogeneous exposures and robust bias calibration}
\label{app:heterogeneous_robust_design_extensions}

\begin{corollary}[Weighted covariance-level variance envelope]
\label{cor:weighted_variance_envelope}
Under the weighted identity in Corollary~\ref{cor:weighted_ht_identity}, let $s_w\in\mathbb R_+^L$ satisfy $s_w\ge H^\top w$ componentwise and
\begin{equation}
 |g_{p,w,c}^{\tilde\alpha}|\le \omega_{g,w}s_{w,c},
 \qquad c=1,\ldots,L.
 \label{eq:weighted_envelope_comparability}
\end{equation}
Then the assignment-measurable component satisfies
\begin{equation}
 \Var\{(\widehat\tau^{HT}_{w,\tilde\alpha})^{(0)}\}
 \le \frac{2}{W^2}
 \left(\omega_{g,w}^2+\frac{\theta^2}{p^2q^2}\right)
 s_w^\top(v\Rnorm+p^2\one\one^\top)s_w.
 \label{eq:weighted_variance_envelope}
\end{equation}
Under the independent random-measurement-error regime of Corollary~\ref{cor:weighted_ht_identity}, add $W^{-2}v^{-1}\sum_aw_a^2\sigma_{\varepsilon,a}^2$ to the right-hand side.
\end{corollary}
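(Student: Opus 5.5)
The plan is to mirror the proof of Theorem~\ref{thm:variance_envelope} with weights carried through. First I would write down the weighted assignment-polynomial representation. Expanding $W\widehat\tau^{HT}_{w,\tilde\alpha}=\sum_a w_a\psi_a(Z_a)\{r_a^{\tilde\alpha}+\beta_aZ_a+\theta\sum_cH_{ac}Z_c\}$ uses $\psi_a(Z_a)=(Z_a-p)/v$, $Z_a^2=Z_a$, $(Z_a-p)Z_a=qZ_a$, and $(Z_a-p)Z_c=Z_aZ_c-pZ_c$. This gives
\[
W(\widehat\tau^{HT}_{w,\tilde\alpha})^{(0)}=(g_{p,w}^{\tilde\alpha})^\top Z+\frac{\theta}{v}Z^\top D_wHZ+C_w ,
\]
with $g_{p,w}^{\tilde\alpha}$ as in \eqref{eq:weighted_g_vector}. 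The $-p\theta Z^\top H^\top w/v=-\theta H^\top w\cdot Z/q$ term is what produces the $-\theta H^\top w/q$ entry. Hence the variance of the assignment component equals $W^{-2}\Var(\ell+Q)$, where $\ell=(g_{p,w}^{\tilde\alpha})^\top Z$ and $Q=\theta v^{-1}Z^\top D_wHZ$. By $(x+y)^2\le 2x^2+2y^2$ applied to centered versions, $\Var(\ell+Q)\le 2\Var(\ell)+2\Var(Q)$, which accounts for the factor $2$.

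Second, I would bound each piece by a second moment of the nonnegative linear form $s_w^\top Z$. Since $Z\ge0$ and $|g_{p,w,c}^{\tilde\alpha}|\le\omega_{g,w}s_{w,c}$, we have $|\ell|\le \omega_{g,w}s_w^\top Z$ pointwise. Then $\Var(\ell)\le\E\ell^2\le\omega_{g,w}^2\E(s_w^\top Z)^2$. For the quadratic term, $0\le Z^\top D_wHZ=\sum_c (H^\top D_wZ)_cZ_c\le\sum_c(H^\top w)_cZ_c\le s_w^\top Z$, using $Z_a\le1$, $w,H\ge0$ and $s_w\ge H^\top w$. So $\Var(Q)\le\E Q^2\le \theta^2v^{-2}\E(s_w^\top Z)^2=\theta^2p^{-2}q^{-2}\E(s_w^\top Z)^2$. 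Finally, $\E(s_w^\top Z)^2=s_w^\top\{\Sigma+p^2\one\one^\top\}s_w=s_w^\top(v\Rnorm+p^2\one\one^\top)s_w$, because every marginal mean equals $p$. Combining these gives \eqref{eq:weighted_variance_envelope}.

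Third, for the random-noise regime, the noise contribution is $W^{-1}\sum_aw_a\psi_a(Z_a)\varepsilon_a$. Under conditional mean zero, independence from $Z$, and mutual independence, it is uncorrelated with the assignment component. Its variance is $W^{-2}\sum_aw_a^2\E\psi_a^2\sigma_a^2=W^{-2}v^{-1}\sum_aw_a^2\sigma_{\varepsilon,a}^2$, since $\E\psi_a^2=1/v$. Adding this term gives the stated addendum.

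The main obstacle is the polynomial bookkeeping in the first step: placing $D_w$ so that the linear coefficient becomes exactly $H^\top w$ rather than $H^\top\one$, and confirming that the resulting domination $Z^\top D_wHZ\le (H^\top w)^\top Z$ holds. This is the one point where nonnegativity of $H$ and $w$ is essential. If a slack appears there, the fallback is to bound the row sums $\sum_a w_aH_{ac}Z_a\le (H^\top w)_c$ directly, which avoids any reordering.
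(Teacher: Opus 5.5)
Your proposal is correct and follows essentially the paper's route: the paper likewise reruns the proof of Theorem~\ref{thm:variance_envelope} on the weighted polynomial with $D_wH$ in place of $H$, using $(D_wH)^\top\one=H^\top w$ for the coordinatewise domination and the independent-noise variance formula for the addendum. The only cosmetic difference is that you split $\Var(\ell+Q)\le2\Var(\ell)+2\Var(Q)$ before passing to second moments, whereas the paper bounds $(\ell+Q)^2$ pointwise and then applies $\Var\le\E[\cdot^2]$; both give the same constant.
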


\paragraph{Meaning.}
The formal target-weighted objective has the same envelope structure as the unweighted theorem, but the dominating degree vector is $H^\top w$. A normalized direction such as $e_w$ is not itself the envelope unless a positive rescaling satisfies the dominance and comparability conditions.

\begin{proof}
Apply the proof of Theorem~\ref{thm:variance_envelope} to the weighted polynomial in \eqref{eq:weighted_exact_algebra}, replacing $M$, $H$, $g_p^{\tilde\alpha}$, and $s$ by $W$, $D_wH$, $g_{p,w}^{\tilde\alpha}$, and $s_w$. Since $(D_wH)^\top\one=H^\top w$, the same coordinatewise domination argument gives \eqref{eq:weighted_variance_envelope}; \eqref{eq:weighted_noise_variance} supplies the independent-noise term.
\end{proof}

\begin{corollary}[Heterogeneous signed exposure operator]
\label{cor:signed_exposure_operator}
Replace the scalar exposure component in Theorem~\ref{thm:main} by
\begin{equation}
    Y_a^{\star}(z)=\mu_a+\beta_az_a+\sum_c\Gamma_{ac}z_c,
    \label{eq:signed_exposure_model}
\end{equation}
where $\Gamma\in\mathbb R^{L\times L}$ may be heterogeneous and signed and has zero rows outside $\Ocal$. Under the same noise condition,
\begin{equation}
    \E\widehat\tau_{w,\tilde\alpha}^{HT}-\tau_w
    =\frac1W\langle D_w\Gamma,
      \Rnorm-\one\one^\top\rangle.
    \label{eq:signed_exposure_bias}
\end{equation}
For the assignment-measurable component, define
\begin{equation}
    g_{p,w,\Gamma}^{\tilde\alpha}
    =\frac{w\odot r^{\tilde\alpha}}v
      +\frac{w\odot\beta}p
      -\frac{\Gamma^\top w}q.
    \label{eq:signed_exposure_g}
\end{equation}
Then, up to a nonrandom constant,
\begin{equation}
    W(\widehat\tau_{w,\tilde\alpha}^{HT})^{(0)}
    =(g_{p,w,\Gamma}^{\tilde\alpha})^\top Z
      +\frac1vZ^\top D_w\Gamma Z.
    \label{eq:signed_exposure_algebra}
\end{equation}
The positive-exposure cut representation is recovered when
$\Gamma=\theta H$ with $\theta>0$ and $H\ge0$; for signed $\Gamma$, the exact identity remains valid but cancellation and sign information must be handled through an explicit uncertainty set rather than a nonnegative cut cost.
\end{corollary}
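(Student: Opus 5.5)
The plan is to repeat the argument behind Theorem~\ref{thm:main} and Corollary~\ref{cor:weighted_ht_identity}, replacing the rank-one scaled exposure $\theta H$ by the general matrix $\Gamma$. Nothing in the earlier argument uses the sign of $H$ or the scalar structure of $\theta$ except in the final cut interpretation. The algebra is linear in the exposure operator, so both displays follow by bilinearity.

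\emph{Bias identity.} Write $Y^{\rm obs}_a-\tilde\alpha_a=r^{\tilde\alpha}_a+\beta_aZ_a+\sum_c\Gamma_{ac}Z_c$, with the noise either absorbed into $r^{\tilde\alpha}$ (fixed regime) or handled by \eqref{eq:noise_ht_orthogonality}. Then use three facts:
\begin{itemize}
\item $\E\psi_a(Z_a)=0$ removes $r^{\tilde\alpha}_a$.
\item $\E\{\psi_a(Z_a)Z_a\}=1$ returns $\beta_a$.
\item For the exposure term, $\E\{\psi_a(Z_a)Z_c\}=\Cov(Z_a,Z_c)/v=\Rnorm_{ac}$, which holds for every common-marginal law and every $c$, including $c=a$.
\end{itemize}
Hence $\E\widehat\tau^{HT}_{w,\tilde\alpha}=W^{-1}\sum_a w_a\{\beta_a+\sum_c\Gamma_{ac}\Rnorm_{ac}\}$. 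Meanwhile $\tau_w=W^{-1}\sum_aw_a\{\beta_a+\sum_c\Gamma_{ac}\}$, since $Y_a^\star(\one)-Y_a^\star(\bm 0)=\beta_a+\sum_c\Gamma_{ac}$. Subtracting gives $W^{-1}\sum_{a,c}w_a\Gamma_{ac}(\Rnorm_{ac}-1)=W^{-1}\langle D_w\Gamma,\Rnorm-\one\one^\top\rangle$. Zero rows of $\Gamma$ outside $\Ocal$ ensure that sums over $a\in\Ocal$ may be taken over all of $\Zcal$.

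\emph{Polynomial representation.} Substitute $\psi_a(Z_a)=(Z_a-p)/v$ and expand $W(\widehat\tau)^{(0)}=v^{-1}\sum_aw_a(Z_a-p)\{r_a+\beta_aZ_a+\sum_c\Gamma_{ac}Z_c\}$. The pieces are as follows:
\begin{itemize}
\item The $r$-part gives $v^{-1}(w\odot r)^\top Z$ plus a constant.
\item The $\beta$-part uses $Z_a^2=Z_a$, so $(Z_a-p)Z_a=qZ_a$, giving $(w\odot\beta)^\top Z/p$.
\item The exposure part gives $v^{-1}Z^\top D_w\Gamma Z-(p/v)w^\top\Gamma Z$. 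Since $p/v=1/q$, the linear piece is $-(\Gamma^\top w)^\top Z/q$.
\end{itemize}
Collecting terms yields \eqref{eq:signed_exposure_g} and \eqref{eq:signed_exposure_algebra}.

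The only step requiring care is bookkeeping of the diagonal $c=a$ in the exposure term. There $Z_a^2=Z_a$ could shift mass between the quadratic and linear parts. I would keep the diagonal inside $Z^\top D_w\Gamma Z$ verbatim, since the claimed identity is an exact equality with that convention, and verify that no extra linear term arises. The final remark follows by setting $\Gamma=\theta H$ and invoking Proposition~\ref{prop:expected_cut_geometry}. For signed $\Gamma$, the terms $\Gamma_{ac}(\Rnorm_{ac}-1)$ have mixed signs, so no one-sided cut bound holds.
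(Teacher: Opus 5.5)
Your proposal is correct and follows essentially the same route as the paper: apply the common-marginal identity $\E\{\psi_a(Z_a)Z_c\}=\Rnorm_{ac}$ row by row and subtract the all-treated versus all-control contrast for the bias, then expand $v^{-1}\sum_{a,c}w_a\Gamma_{ac}(Z_a-p)Z_c$ and combine it with the weighted baseline and direct-effect terms to get the polynomial representation. Your concern about the diagonal $c=a$ is unnecessary: the exposure expansion is a purely distributive identity that never uses $Z_a^2=Z_a$ (that reduction enters only in the $\beta$ term), so no mass moves between the quadratic and linear parts.
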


For the two-channel model, write
\[
    \bar b(\Rnorm)
    =\bigl(\bar b_{\comp}^{(w)}(\Rnorm),
            \bar b_{\lag}^{(w)}(\Rnorm)\bigr)^\top,
    \qquad
    \delta=
    \left(\frac{\gamma E_{\comp}^{(w)}}W,
          \frac{\eta E_{\lag}^{(w)}}W\right)^\top,
\]
so that the working-model bias is $\delta^\top\bar b(\Rnorm)$.

\begin{proposition}[Minimax interpretation of separated bias penalties]
\label{prop:robust_bias_objective}
For a positive-definite matrix $\Lambda$, let
\[
    \mathcal U_\Lambda
    =\{\delta:\delta^\top\Lambda^{-1}\delta\le1\}.
\]
Then
\begin{equation}
    \sup_{\delta\in\mathcal U_\Lambda}
    \{\delta^\top\bar b(\Rnorm)\}^2
    =\bar b(\Rnorm)^\top\Lambda\bar b(\Rnorm).
    \label{eq:ellipsoidal_robust_bias}
\end{equation}
In particular, diagonal $\Lambda$ yields exactly a weighted sum of the two squared bias channels, with no cancellation. If instead
$|\delta_1|\le\bar\delta_1$ and $|\delta_2|\le\bar\delta_2$, then
\begin{equation}
    \sup_{\delta}
    \{\delta^\top\bar b(\Rnorm)\}^2
    =\left\{\bar\delta_1|\bar b_{\comp}^{(w)}(\Rnorm)|
      +\bar\delta_2|\bar b_{\lag}^{(w)}(\Rnorm)|\right\}^2.
    \label{eq:box_robust_bias}
\end{equation}
\end{proposition}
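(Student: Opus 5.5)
The argument is a direct application of Cauchy--Schwarz in the geometry defined by $\Lambda$. Then a separate coordinatewise argument handles the box set. First, however, I would confirm the representation stated just before the proposition: the working-model bias equals $\delta^\top\bar b(\Rnorm)$. By Corollary~\ref{cor:weighted_ht_identity} applied to $\theta H=\gamma H^{\comp}+\eta H^{\lag}$, the bias is
\[
W^{-1}\bigl\{\gamma b^{(w)}_{\comp}(\Rnorm)+\eta b^{(w)}_{\lag}(\Rnorm)\bigr\}.
\]
Writing $b^{(w)}_k=E^{(w)}_k\bar b^{(w)}_k$ from \eqref{eq:weighted_bias_terms_design} turns this into exactly $\delta^\top\bar b$. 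If a channel has zero exposure mass, I drop that coordinate.

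For \eqref{eq:ellipsoidal_robust_bias}, I reparametrize by $\delta=\Lambda^{1/2}u$. This is legitimate because $\Lambda\succ0$ has a symmetric positive-definite square root. Under this change of variables, $\mathcal U_\Lambda$ maps onto the unit Euclidean ball $\{\|u\|_2\le1\}$, and $\delta^\top\bar b=u^\top\Lambda^{1/2}\bar b$. Cauchy--Schwarz then gives
\[
(u^\top\Lambda^{1/2}\bar b)^2\le\|\Lambda^{1/2}\bar b\|_2^2=\bar b^\top\Lambda\bar b.
\]
The bound is attained at $u=\Lambda^{1/2}\bar b/\|\Lambda^{1/2}\bar b\|_2$, which corresponds to $\delta^\ast=\Lambda\bar b/(\bar b^\top\Lambda\bar b)^{1/2}$. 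In the case $\bar b=0$, both sides are zero. Because the bound is attained, the supremum is a maximum. For diagonal $\Lambda=\operatorname{diag}(\lambda_1,\lambda_2)$, the value becomes $\lambda_1(\bar b^{(w)}_{\comp})^2+\lambda_2(\bar b^{(w)}_{\lag})^2$. This contains no cross term, which is the "no cancellation" claim.

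For the box set in \eqref{eq:box_robust_bias}, I would first bound the inner product by the triangle inequality:
\[
|\delta^\top\bar b|\le|\delta_1||\bar b_1|+|\delta_2||\bar b_2|\le\bar\delta_1|\bar b_1|+\bar\delta_2|\bar b_2|.
\]
The bound is attained at $\delta_k=\bar\delta_k\operatorname{sign}(\bar b_k)$, taking the sign to be $+1$ when $\bar b_k=0$. Since $x\mapsto x^2$ is increasing on $[0,\infty)$, squaring the maximized absolute value gives the supremum of the square.

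None of these steps is a real obstacle. The only points needing care are that the ellipsoid sup is a genuine maximum, that the degenerate case $\bar b=0$ is handled, and that the definition of $\delta$ is consistent with the rescaling of the bias channels.
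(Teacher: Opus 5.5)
Your proposal is correct and follows essentially the same route as the paper: the substitution $\delta=\Lambda^{1/2}u$ with Cauchy--Schwarz for the ellipsoid, then the triangle inequality with sign-matched extreme points for the box. Your explicit maximizer $\delta^\ast=\Lambda\bar b/(\bar b^\top\Lambda\bar b)^{1/2}$, the $\bar b=0$ case, and the check that $\delta^\top\bar b(\Rnorm)$ reproduces the two-channel bias are harmless additions to the paper's argument.
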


\subsection{General-graph expected-cut extensions}
\label{app:general_graph_cut_extensions}

Proposition~\ref{prop:fractional_multicut_bias} in the main text gives the general-graph lower bound. This appendix records its exact path-packing dual and two useful graph-specific consequences.

The lower bound has an exact path-packing dual. For each terminal pair $(u,v)\in\mathcal P$, let $\mathcal Q_{uv}$ denote the finite collection of simple $u$--$v$ paths in $\mathcal G_H$.

\begin{proposition}[Fractional path-packing dual]
\label{prop:fractional_multicut_dual}
The multicut value in \eqref{eq:fractional_multicut_program} equals
\begin{equation}
\begin{split}
\operatorname{MC}(\mathcal G_H,h,\mathcal P,\delta)
=\max_{f_{uv,P}\ge0}\quad
&\sum_{(u,v)\in\mathcal P}\sum_{P\in\mathcal Q_{uv}}
\delta_{uv} f_{uv,P}\\
\text{subject to}\quad
&\sum_{(u,v)\in\mathcal P}
  \sum_{P\in\mathcal Q_{uv}:e\in P}f_{uv,P}\le h_e,
  \qquad e\in\mathcal E_H.
\end{split}
\label{eq:fractional_multicut_dual}
\end{equation}
In particular, if one can select an edge-disjoint path $P_{uv}$ for every terminal pair and $h_e\ge h_0$ on their union, then
\begin{equation}
\operatorname{MC}(\mathcal G_H,h,\mathcal P,\delta)
\ge h_0\sum_{(u,v)\in\mathcal P}\delta_{uv}.
\label{eq:edge_disjoint_path_certificate}
\end{equation}
Thus path packing is a directly verifiable certificate for a causal-bias floor.
\end{proposition}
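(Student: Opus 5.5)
This is an LP duality argument, followed by a feasibility check for the packing certificate.

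\textbf{Step 1: make the primal finite.} In \eqref{eq:fractional_multicut_program} the constraint ranges over every path $P:u\leadsto v$. Because $x_e\ge0$, any walk from $u$ to $v$ contains the edge set of some simple $u$--$v$ path, and the walk's constraint sum is at least that path's sum. So only the constraints indexed by $P\in\mathcal Q_{uv}$ need to be kept. This leaves a finite LP:
\[
\min\; h^\top x \quad\text{subject to}\quad x\ge0,\qquad \textstyle\sum_{e\in P}x_e\ge\delta_{uv}\quad\text{for } P\in\mathcal Q_{uv},\ (u,v)\in\mathcal P .
\]
If some terminal pair has no connecting path, the primal is infeasible. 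The convention is that the value is $+\infty$, and the dual is then unbounded, since flows on the empty path set are vacuous; I would either note this case or assume terminals are connected.

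\textbf{Step 2: write the LP dual.} Attach a multiplier $f_{uv,P}\ge0$ to each path constraint. The dual maximizes $\sum\delta_{uv}f_{uv,P}$ subject to one constraint per edge: the total flow through $e$ is at most $h_e$. This is exactly \eqref{eq:fractional_multicut_dual}.

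\textbf{Step 3: strong duality.} When every pair is connected, $x_e\equiv\max_{uv}\delta_{uv}$ is primal feasible and $f\equiv0$ is dual feasible. Both LPs are therefore feasible and bounded, and LP strong duality gives equality of their optimal values. Weak duality alone already gives $\operatorname{MC}\ge$ any feasible dual value, and that is all the certificate part needs.

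\textbf{Step 4: path-packing certificate.} Given edge-disjoint paths $P_{uv}$, set $f_{uv,P_{uv}}=h_0$ and all other flows to zero. Each edge on the union carries exactly $h_0\le h_e$ units of flow, and every other edge carries zero, so this flow is dual feasible. Its objective is $h_0\sum\delta_{uv}$, which yields \eqref{eq:edge_disjoint_path_certificate}.

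\textbf{Step 5: causal-bias floor.} Combining this bound with Proposition~\ref{prop:fractional_multicut_bias} gives the stated floor on the bias.

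The only delicate point is the reduction in Step 1 from arbitrary paths or walks to simple paths, together with the infeasible and disconnected edge case. Everything else is standard LP duality.
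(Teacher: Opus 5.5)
Your proof follows the paper's route. You keep the finitely many simple-path constraints and take the LP dual, with one multiplier per (pair, path) and one capacity constraint per edge. You then invoke finite-dimensional strong duality and check that the flow $f_{uv,P_{uv}}=h_0$ on the edge-disjoint paths is dual feasible with value $h_0\sum_{(u,v)\in\mathcal P}\delta_{uv}$. Your Step~1 reduction from walks to simple paths is a harmless tightening of something the paper takes for granted.

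One correction concerns your edge-case analysis. A terminal pair with no connecting path does not make the primal infeasible.
\begin{itemize}
\item The constraint in \eqref{eq:fractional_multicut_program} is quantified over all $u$--$v$ paths, so for a disconnected pair it is vacuous.
\item Correspondingly, $\mathcal Q_{uv}=\varnothing$ contributes no dual variables, so both programs simply drop that pair.
\item Moreover, $x_e\equiv\max\{0,\max_{(u,v)\in\mathcal P}\delta_{uv}\}$ is primal feasible regardless of connectivity.
\end{itemize}
You are conflating an empty collection of paths with a zero-edge path. A zero-edge path would give the infeasible constraint $0\ge\delta_{uu}$ and an uncapacitated dual variable. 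That only happens for a degenerate pair $u=v$ with $\delta_{uu}>0$, which a valid separation requirement rules out because $d_Z(u,u)=0$.

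So the restriction ``when every pair is connected'' in Step~3 is unnecessary, and the $+\infty$ convention should be dropped. The two optimal values are finite and equal in every nondegenerate case, as the paper's assertion that the primal is feasible and bounded below indicates.
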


\begin{corollary}[Opposite-face lower bound on a grid]
\label{cor:grid_bias_floor}
Let $\mathcal G_H$ be the nearest-neighbor graph on $[n]^d$, with $h_e\ge h_0>0$. For each $j=(j_2,\ldots,j_d)\in[n]^{d-1}$, put $u_j=(1,j)$ and $v_j=(n,j)$ and suppose $d_Z(u_j,v_j)\ge\delta$. Then
\begin{equation}
\left|\E\widehat\tau^{HT}_{w,\tilde\alpha}-\tau_w\right|
\ge \frac{\theta h_0\delta}{W}n^{d-1}.
\label{eq:grid_bias_floor}
\end{equation}
If $\Ocal=\Zcal=[n]^d$, so $L=M=n^d$, and $W\asymp L$, the lower bound is of order $n^{-1}=L^{-1/d}$.
\end{corollary}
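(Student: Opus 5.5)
The plan is to apply the general-graph multicut bound, Proposition~\ref{prop:fractional_multicut_bias}, and lower-bound the multicut value through its path-packing certificate in Proposition~\ref{prop:fractional_multicut_dual}. Take the terminal set $\mathcal P=\{(u_j,v_j):j\in[n]^{d-1}\}$ with required separations $\delta_{u_jv_j}=\delta$. The hypothesis $d_Z(u_j,v_j)\ge\delta$ puts us in the setting of that proposition. Proposition~\ref{prop:expected_cut_geometry} shows that $x_e=d_Z(e)$ satisfies every path constraint: by the triangle inequality, $\sum_{e\in P}d_Z(e)\ge d_Z(u_j,v_j)\ge\delta$ for every path $P$. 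Hence
\[
\left|\E\widehat\tau^{HT}_{w,\tilde\alpha}-\tau_w\right|
\ge\frac{\theta}{W}\operatorname{MC}(\mathcal G_H,h,\mathcal P,\delta).
\]

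For each $j$, the next step is to exhibit an edge-disjoint family of terminal paths. The natural choice is the straight axis-parallel line
\[
P_j=\{(1,j),(2,j),\ldots,(n,j)\},
\]
which runs along the first coordinate with the remaining coordinates fixed at $j$. Lines with distinct $j$ are vertex-disjoint, and therefore edge-disjoint. Every edge on them is a nearest-neighbor edge of the grid, so $h_e\ge h_0$ on their union. Inequality \eqref{eq:edge_disjoint_path_certificate} then gives
\[
\operatorname{MC}\ge h_0\sum_{j}\delta=h_0\delta n^{d-1}.
\]
Combining this with the previous display yields \eqref{eq:grid_bias_floor}.

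For the rate statement, substitute $W\asymp L=n^d$ to get a bound of order $\theta h_0\delta\,n^{d-1}/n^d=n^{-1}=L^{-1/d}$.

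The only point needing care is the assumption $h_e\ge h_0$ on nearest-neighbor edges. It must be read in terms of the symmetrized weight $h_{\{a,c\}}$ of \eqref{eq:symmetrized_exposure_weight}, which is what enters $\operatorname{MC}$ through \eqref{eq:undirected_cut_bias}. If instead one invoked only the weaker path bound of Proposition~\ref{prop:path_bias_lower_bound} along each line with $m=n-1$, the result would be the same order with worse constants. The direct dual certificate avoids this and gives the stated constant exactly. I expect no substantive obstacle beyond verifying edge-disjointness, which is immediate.
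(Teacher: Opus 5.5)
Your proposal is correct and is essentially the paper's proof. It uses the same steps: the $n^{d-1}$ edge-disjoint coordinate lines, the certificate \eqref{eq:edge_disjoint_path_certificate} giving $\operatorname{MC}\ge h_0\delta n^{d-1}$, Proposition~\ref{prop:fractional_multicut_bias}, and then division by $W\asymp n^d$.

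One side remark is wrong, though nothing depends on it. Applying \eqref{eq:path_disagreement_inequality} with $m=n-1$ to a single line gives only $\sum_k d_Z(k,k+1)\ge d_Z(u_j,v_j)/(n-1)$. That loses a full factor $n-1$, so the resulting bound is of order $n^{-2}$, not the same order with worse constants. By contrast, the plain triangle inequality along the line already gives the sharp per-line bound $\delta$.
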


\begin{corollary}[Expansion bias floor]
\label{cor:expander_bias_floor}
Assume the common-marginal Bernoulli assignment law and exact treatment balance $\sum_{a\in\Zcal}Z_a=pL$ almost surely, with $pL$ an integer, and define the weighted edge expansion
\begin{equation}
\phi_h:=\min_{\varnothing\ne S\subsetneq\Zcal}
\frac{\sum_{e\in\partial_H S}h_e}{\min\{|S|,L-|S|\}}.
\label{eq:weighted_edge_expansion}
\end{equation}
Under positive exposure,
\begin{equation}
\left|\E\widehat\tau^{HT}_{w,\tilde\alpha}-\tau_w\right|
\ge
\frac{\theta\phi_hL\min\{p,1-p\}}{2p(1-p)W}.
\label{eq:expander_bias_floor}
\end{equation}
Consequently, under the stated common-marginal law, if $W\asymp L$ and $\liminf_N\phi_{h,N}>0$, exact balance entails a nonvanishing raw-GATE bias floor.
\end{corollary}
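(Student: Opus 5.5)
The plan is to reduce the bias to the expected-cut representation of Proposition~\ref{prop:expected_cut_geometry} and then bound the expected weighted boundary of every realized assignment from below by expansion. By \eqref{eq:undirected_cut_bias}, the bias magnitude equals
\[
\frac{\theta}{W}\sum_{e\in\mathcal E_H}h_e d_Z(e)
=\frac{\theta}{2vW}\,\E\Bigl[\sum_{e\in\mathcal E_H}h_e\,\ind\{Z_a\ne Z_c\}\Bigr],
\]
where $e=\{a,c\}$ and we use $d_Z=D_Z/(2v)$ together with linearity of expectation. For a realization $Z$, let $S=\{a:Z_a=1\}$. The inner sum is then exactly the weighted boundary $\sum_{e\in\partial_HS}h_e$, with $\partial_H S$ taken in $\mathcal G_H$ on vertex set $\Zcal$.

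Exact balance gives $|S|=pL$ almost surely. For $0<p<1$ with $pL$ an integer in $\{1,\dots,L-1\}$, the set $S$ is a nonempty proper subset, so the definition \eqref{eq:weighted_edge_expansion} applies. It yields
\[
\sum_{e\in\partial_HS}h_e\ge \phi_h\min\{pL,(1-p)L\}=\phi_h L\min\{p,1-p\}
\]
pointwise, almost surely. Taking expectations and substituting $v=p(1-p)$ gives exactly \eqref{eq:expander_bias_floor}.

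For the consequence, if $W\asymp L$ and $\liminf\phi_{h,N}>0$, the right-hand side is bounded below by a positive constant times $\theta\min\{p,1-p\}/\{p(1-p)\}$. This gives a nonvanishing floor provided $\theta$ and $p$ stay bounded away from degeneracy.

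The only care points are these. First, the edge weights $h_e$ must be the symmetrized ones from \eqref{eq:symmetrized_exposure_weight}, so that the directed sum $\sum_{a,c}w_aH_{ac}d_Z(a,c)$ collapses to the undirected boundary weight; diagonal terms vanish. Second, the degenerate case $pL\in\{0,L\}$ is excluded by $0<p<1$ and integrality. No step is a genuine obstacle; the main point is recognising that balance fixes $|S|$ so the expansion minimum bites realization by realization.
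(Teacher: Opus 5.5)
Your proposal is correct and follows the paper's proof essentially verbatim. You pass to the expected-cut form of the bias and note that exact balance fixes $|S|=pL$ for the realized treated set $S$, so the expansion inequality bites realization by realization; then you take expectations. Your explicit checks that $S$ is a nonempty proper subset and that the symmetrized weights $h_e$ collapse the directed sum into the undirected boundary weight fill in details the paper leaves implicit. Also, since $\min\{p,1-p\}/\{p(1-p)\}=1/\max\{p,1-p\}\ge1$, the consequence needs only $\theta$ bounded away from zero, with no extra condition on $p$.
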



\subsection{Conditional transfer and primitive CLT extensions}
\label{app:conditional_and_primitive_extensions}

\begin{lemma}[Conditional-to-unconditional Gaussian transfer]
\label{lem:conditional_unconditional_transfer}
Let $S_N$ be real-valued and define the conditional Kolmogorov error
\begin{equation}
\Delta_N(\mathcal F_N)
:=\sup_{t\in\mathbb R}
\left|\Pr(S_N\le t\mid\mathcal F_N)-\Phi(t)\right|.
\label{eq:conditional_kolmogorov_error}
\end{equation}
If $\Delta_N(\mathcal F_N)\to_p0$, then
\begin{equation}
\sup_{t\in\mathbb R}|\Pr(S_N\le t)-\Phi(t)|\longrightarrow0.
\label{eq:conditional_unconditional_transfer}
\end{equation}
\end{lemma}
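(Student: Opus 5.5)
The plan is to pass from the conditional Kolmogorov error to the unconditional one through the tower property and bounded convergence. For each fixed $t$, the unconditional distribution function is the outer expectation of the conditional one:
\[
\Pr(S_N\le t)=\E\{\Pr(S_N\le t\mid\mathcal F_N)\}.
\]
Subtracting $\Phi(t)$ and taking absolute values therefore gives
\[
|\Pr(S_N\le t)-\Phi(t)|\le \E\left|\Pr(S_N\le t\mid\mathcal F_N)-\Phi(t)\right|\le \E\,\Delta_N(\mathcal F_N).
\]
This bound is uniform in $t$. Taking the supremum over $t$ on the left therefore reduces \eqref{eq:conditional_unconditional_transfer} to showing $\E\,\Delta_N(\mathcal F_N)\to0$.

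To prove that, I will use the fact that $\Delta_N(\mathcal F_N)\in[0,1]$, since both the conditional distribution function and $\Phi$ take values in $[0,1]$. Fix $\epsilon>0$ and split the expectation on the event $\{\Delta_N\le\epsilon\}$ and its complement:
\[
\E\,\Delta_N\le \epsilon+\Pr(\Delta_N>\epsilon).
\]
The second term vanishes by the hypothesis $\Delta_N(\mathcal F_N)\to_p0$. So $\limsup_N\E\,\Delta_N\le\epsilon$, and letting $\epsilon\downarrow0$ finishes the argument. This is simply convergence in probability plus uniform boundedness implying $L^1$ convergence.

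The only step needing care is measurability: $\Delta_N(\mathcal F_N)$ must be a random variable so that its expectation and the event $\{\Delta_N>\epsilon\}$ make sense. I will fix a version of the regular conditional distribution of $S_N$ given $\mathcal F_N$, which exists because $S_N$ is real-valued. Its distribution function is right-continuous in $t$, and $\Phi$ is continuous, so the supremum over $t\in\mathbb R$ equals the supremum over $t\in\mathbb Q$. A countable supremum of measurable functions is measurable. The first display is then the usual defining property of conditional probability applied to the indicator $\ind\{S_N\le t\}$. I expect this measurability point to be the main technical obstacle; everything else is routine.
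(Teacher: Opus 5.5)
Your proposal is correct and follows the paper's own proof: the tower property bounds $|\Pr(S_N\le t)-\Phi(t)|$ by $\E\,\Delta_N(\mathcal F_N)$ uniformly in $t$, and boundedness of $\Delta_N$ in $[0,1]$ together with convergence in probability gives $\E\,\Delta_N(\mathcal F_N)\to0$. Your explicit $\epsilon$-split and your measurability remark (a regular conditional distribution, with the supremum reduced to rational $t$ by right-continuity) simply spell out steps the paper leaves implicit.
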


Thus conditional limit theory transfers unconditionally without an additional uniform-integrability assumption on the Kolmogorov error: it is automatically bounded by one. Primitive moment and covariance conditions may still need to hold on events whose probability tends to one in order to establish the conditional premise.

\begin{corollary}[Polynomial graph growth and geometric dependence]
\label{cor:poly_geometric_clt}
Suppose $\sigma_N^2\asymp M_N$, Assumption~\ref{ass:weights_moments} holds, and for constants $C,d<\infty$ and $\rho\in(0,1)$,
\begin{equation}
    \sup_{N,a}|\mathcal B_N(a;s)|\le C(1+s)^d,
    \qquad
    \theta_N(s)\le C\rho^s
    \quad(s\ge0).
    \label{eq:poly_growth_geometric_decay}
\end{equation}
Then Assumption~\ref{ass:weak_dependence_rates} holds with $m_N=\lceil C_m\log M_N\rceil$ for all sufficiently large constants $C_m$.
\end{corollary}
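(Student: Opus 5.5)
The plan is to verify the two KMS conditions \eqref{eq:weak_dep_clt_rate_1} and \eqref{eq:weak_dep_clt_rate_2} directly. We substitute $\sigma_N^2\asymp M_N$, the polynomial ball bound, and the geometric coefficient bound into the topology quantities $\delta_N^{\partial}$, $\Delta_N$, and $c_N$. All inputs are deterministic bounds holding on $E_N^{\rm good}$, so convergence in probability follows once the displayed sums are shown to vanish deterministically on that event.

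\emph{Step 1: topology bounds.} For every $a$ we have $|\partial\mathcal B_N(a;s)|\le|\mathcal B_N(a;s)|\le C(1+s)^d$. Hence $\delta_N^{\partial}(s;k)\le C^k(1+s)^{dk}$ for any real $k\ge1$. Similarly, $|\mathcal B_N(a;m)\setminus\mathcal B_N(c;s-1)|\le C(1+m)^d$, so $\Delta_N(s,m;k)\le C^k(1+m)^{dk}$. We fix any $\alpha>1$, say $\alpha=2$, in the infimum defining $c_N$. This gives
\[
c_N(s,m;k)\le C^{k}(1+m)^{dk}(1+s)^{d}\cdot C',
\]
a bound polynomial in $m$ and $s$ with exponents depending only on $(d,k)$.

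\emph{Step 2: rate \eqref{eq:weak_dep_clt_rate_1}.} With $\theta_N(s)\le C\rho^s$ and the exponent $\epsilon_k:=1-(2+k)/\nu>0$ (since $\nu>4$ and $k\le2$), the sum $\sum_{s\ge0}(1+s)^{d}\rho^{\epsilon_k s}$ is a finite constant. The left side of \eqref{eq:weak_dep_clt_rate_1} is therefore bounded by a constant times
\[
M_N\sigma_N^{-(2+k)}(1+m_N)^{dk}\asymp M_N^{-k/2}(\log M_N)^{dk},
\]
which tends to zero for $k=1,2$ when $m_N=\lceil C_m\log M_N\rceil$.

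\emph{Step 3: rate \eqref{eq:weak_dep_clt_rate_2}.} Here
\[
M_N^2\sigma_N^{-1}\theta_N(m_N)^{1-1/\nu}\lesssim M_N^{3/2}\rho^{(1-1/\nu)C_m\log M_N}=M_N^{3/2-C_m(1-1/\nu)\log(1/\rho)}.
\]
This tends to zero once $C_m>\tfrac{3}{2}\{(1-1/\nu)\log(1/\rho)\}^{-1}$, which gives the phrase ``all sufficiently large constants $C_m$''. We also need $m_N\to\infty$ and $m_N$ to be a positive integer, and both hold trivially.

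The main obstacle is the bookkeeping in Step 1. We must make sure the $\Delta_N$ bound does not pick up any $s$-dependence that breaks summability; it does not, since we bound it crudely by the $m$-ball. We also need $\theta_N(0)=1$ to be consistent with the bound at $s=0$, which holds by taking $C\ge1$. Everything else is elementary. The assumed $\sigma_N^2\asymp M_N$ can be in probability, which is handled by restricting to the good event.
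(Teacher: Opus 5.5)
Your proposal is correct and follows essentially the same route as the paper. Both arguments bound $c_N(s,m;k)\lesssim(1+m)^{dk}(1+s)^d$ from the polynomial ball bound and sum the geometric tail, which gives $O\{M_N^{-k/2}(\log M_N)^{dk}\}$ for the first rate. Both then take $C_m>3/[2(1-1/\nu)\log(1/\rho)]$ to make $M_N^{3/2}\rho^{(1-1/\nu)m_N}\to0$ in the second rate.
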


\begin{corollary}[Finite-range dependency-graph specialization]
\label{cor:finite_range_clt}
Suppose Assumption~\ref{ass:local_outcomes} holds and, conditional on $\mathcal F_N$, assignment vectors indexed by two sets are independent whenever their graph distance exceeds a finite $r_Z$. Then $\{\xi_{Na}\}$ has a conditional dependency graph connecting outcome cells within
$h_*=2r_Y+r_Z$. Let
\begin{equation}
    \Delta_N(h_*):=\max_{a\in\Ocal_N}|\mathcal B_N(a;h_*)|.
    \label{eq:Delta_h}
\end{equation}
If $\sigma_N^2\asymp M_N$, the fourth moments are uniformly bounded, and
$\Delta_N(h_*)/M_N^{1/4}\xrightarrow{p}0$, then \eqref{eq:clt_centered} holds. The same conclusion holds when each contribution additionally depends on a cell-specific measurement error $\varepsilon_{Na}$, provided these errors are conditionally independent across cells and independent of the assignment array; one then includes $\varepsilon_{Na}$ in the local sigma-field attached to vertex $a$.
\end{corollary}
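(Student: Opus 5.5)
The plan is a dependency-graph normal approximation applied to the centered contributions, with the dependency radius supplied by Assumption~\ref{ass:local_outcomes} and the finite assignment range. I work conditionally on $\mathcal F_N$, on the good-environment event $E_N^{\rm good}$, and transfer to the stated conditional-weak-convergence notion at the end via Lemma~\ref{lem:conditional_unconditional_transfer} conventions.

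\textbf{Step 1: dependency graph.} By Assumption~\ref{ass:local_outcomes}, each $\xi_{Na}$ is a measurable function of $Z_{\mathcal B_N^{\Zcal}(a;r_Y)}$, up to the $\mathcal F_N$-measurable centering $\bar q_{Na}$. Take two collections $A,B\subseteq\Ocal_N$ with $d_N(A,B)>2r_Y+r_Z$. The generating assignment sets $A^{+r_Y}$ and $B^{+r_Y}$ are then at full-graph distance greater than $r_Z$, by the triangle inequality in $d_N^{\Zcal}$. Here Assumption~\ref{ass:contribution_metric_compatibility} is what lets me identify $d_N$ on $\Ocal_N$ with $d_N^{\Zcal}$. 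The range-$r_Z$ independence hypothesis then makes $\{\xi_{Na}\}_{a\in A}$ and $\{\xi_{Nc}\}_{c\in B}$ conditionally independent. Hence the graph joining outcome cells within distance $h_*$ is a dependency graph. Its maximal degree is at most $\Delta_N(h_*)$, the largest $h_*$-ball.

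\textbf{Step 2: Stein/Baldi--Rinott bound.} Apply the dependency-graph Kolmogorov (or Wasserstein-then-smoothing) bound of \citet{baldirinott1989}, or the Chen--Shao refinement, to $T_N/\sigma_N$. With uniformly bounded fourth moments (from Assumption~\ref{ass:weights_moments}, $\nu>4$) the error is of order
\[
\frac{\Delta_N(h_*)^{2}M_N^{1/2}}{\sigma_N^2}\Big(\max_a\|\xi_{Na}\|_{4,N}^4\Big)^{1/2}
+\frac{\Delta_N(h_*)^{3/2}M_N^{1/2}}{\sigma_N^{2}}\max_a\|\xi_{Na}\|_{3,N}^3 \cdot \frac{M_N^{1/2}}{\sigma_N}.
\]
Under $\sigma_N^2\asymp M_N$ both terms are $O\{\Delta_N(h_*)^2/M_N^{1/2}\}$. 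This is $o_p(1)$ exactly when $\Delta_N(h_*)=o_p(M_N^{1/4})$. Because the bound is a deterministic function of $\mathcal F_N$-measurable quantities, $\sup_t|\mathbb P_N(T_N/\sigma_N\le t)-\Phi(t)|\xrightarrow{p}0$, which is \eqref{eq:clt_centered}. The Wasserstein-to-Kolmogorov passage costs a square root. I expect this bookkeeping to be the main obstacle: confirming that the chosen bound version delivers the $M_N^{1/4}$ threshold, rather than a weaker exponent, with only fourth moments. If the plain Wasserstein route loses too much, I would use Rinott's Kolmogorov bound, which needs bounded summands. Truncating at level $M_N^{1/4}$ with moment control would then handle the tails.

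\textbf{Step 3: measurement noise.} If each $\varepsilon_{Na}$ is conditionally independent across cells and of $Z$, I enlarge the generator of vertex $a$ to $(Z_{\mathcal B^{\Zcal}_N(a;r_Y)},\varepsilon_{Na})$. Independence of separated generating blocks persists, because the noise vectors attached to disjoint index sets are independent of each other and of the assignments. The same dependency graph and degree bound therefore apply, and Step 2 goes through unchanged.
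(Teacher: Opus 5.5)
Your proposal is correct and matches the paper's proof. Both build the same radius-$h_*$ dependency graph from the triangle inequality, with each $\varepsilon_{Na}$ absorbed into vertex $a$'s sigma-field. Both then apply a Baldi--Rinott-type Stein bound in third and fourth moments, which the paper writes as order $A_{3,N}^{1/2}+A_{4,N}^{1/2}$ with $A_{3,N}=O\{(D_N+1)^2M_N^{-1/2}\}$ and $A_{4,N}=O\{(D_N+1)^3M_N^{-1}\}$.

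The square-root loss you flag as the main obstacle is harmless. A Wasserstein error $O\{\Delta_N(h_*)^2M_N^{-1/2}\}$ gives a Kolmogorov error $O\{\Delta_N(h_*)M_N^{-1/4}\}$, which vanishes under exactly the same condition $\Delta_N(h_*)=o_p(M_N^{1/4})$, so no truncation is needed. One small slip: in the standard Wasserstein bound the factor $\Delta^2$ goes with the third-moment term and $\Delta^{3/2}$ with the fourth-moment term, the reverse of your display, but the overall order you state is right.
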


The main-text Corollary~\ref{cor:local_factor_ic0} applies the preceding finite-range result to Gaussian local-factor loadings.

Exact neighborhood interference is sufficient but not necessary. The following transfer result permits approximate neighborhood interference in the spirit of approximate-neighborhood asymptotics \citep{leung2022ani} and near-epoch dependence \citep{jenishprucha2012}.

\begin{proposition}[Locally approximable contributions]
\label{prop:local_approximation_transfer}
For each $N$, let $\ell_N$ be a positive integer and let $\{\xi_{Na}^{(\ell_N)}:a\in\Ocal_N\}$ be conditionally centered local approximants, $\mathbb E_N\xi_{Na}^{(\ell_N)}=0$, whose sum $T_N^{(\ell_N)}$ satisfies the centered conclusion of Theorem~\ref{thm:cell_clt}, with conditional variance $\sigma_{N,\ell}^2:=\Var_N\{T_N^{(\ell_N)}\}$. If
\begin{equation}
    \frac{\left\|\sum_{a\in\Ocal_N}\{\xi_{Na}-\xi_{Na}^{(\ell_N)}\}\right\|_{2,N}}
    {\sigma_N}\xrightarrow{p}0,
    \label{eq:aggregate_local_approximation}
\end{equation}
then
\[
    \frac{\sigma_{N,\ell}}{\sigma_N}\xrightarrow{p}1,
    \qquad
    \frac{T_N-T_N^{(\ell_N)}}{\sigma_N}
    \xrightarrow{p\mid\mathcal F}0,
\]
and \eqref{eq:clt_centered} also holds for the original estimator. One possible construction is to replace $Y_{Na}^{\star}(Z_N)$ by a version measurable with respect to $\mathcal F_N$ and assignments in $\mathcal B_N^{\Zcal}(a;\ell_N)$; condition \eqref{eq:aggregate_local_approximation}, rather than a pointwise locality assertion alone, is the scale-relevant requirement.
\end{proposition}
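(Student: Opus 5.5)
The argument is a standard $L^2$-approximation followed by Slutsky, and I will do it conditionally on $\mathcal F_N$. Write $R_N:=T_N-T_N^{(\ell_N)}=\sum_a\{\xi_{Na}-\xi_{Na}^{(\ell_N)}\}$. Both $\xi_{Na}$ and $\xi_{Na}^{(\ell_N)}$ are conditionally centered, so $\mathbb E_NR_N=0$. Hence $\|R_N\|_{2,N}^2=\Var_N(R_N)$, and \eqref{eq:aggregate_local_approximation} says exactly that $\|R_N\|_{2,N}/\sigma_N\xrightarrow{p}0$.

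\emph{Variance ratio.} I will apply the triangle inequality in $L^2(\mathbb P_N)$ to $T_N=T_N^{(\ell_N)}+R_N$. Since $\sigma_N=\|T_N\|_{2,N}$ and $\sigma_{N,\ell}=\|T_N^{(\ell_N)}\|_{2,N}$, this gives
\[
|\sigma_N-\sigma_{N,\ell}|\le\|R_N\|_{2,N}.
\]
Dividing by $\sigma_N$, which is positive on $E_N^{\rm good}$, yields $|\sigma_{N,\ell}/\sigma_N-1|\le\|R_N\|_{2,N}/\sigma_N\xrightarrow{p}0$.

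\emph{Negligibility of the remainder.} I will use the conditional Chebyshev inequality: for every $\epsilon>0$,
\[
\mathbb P_N(|R_N|/\sigma_N>\epsilon)\le\epsilon^{-2}\|R_N\|_{2,N}^2/\sigma_N^2 .
\]
The right side is $\mathcal F_N$-measurable and tends to zero in outer probability. This is precisely $R_N/\sigma_N\xrightarrow{p\mid\mathcal F}0$ in the sense of \eqref{eq:conditional_probability_convergence_notation}.

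\emph{Transfer of the CLT.} By hypothesis, $T_N^{(\ell_N)}/\sigma_{N,\ell}\rightsquigarrow_{\mathcal F}N(0,1)$. I decompose
\[
\frac{T_N}{\sigma_N}=\frac{\sigma_{N,\ell}}{\sigma_N}\cdot\frac{T_N^{(\ell_N)}}{\sigma_{N,\ell}}+\frac{R_N}{\sigma_N}.
\]
The main obstacle is that Slutsky must be applied in the conditional Kolmogorov-distance sense rather than in the usual unconditional form. I will prove it directly.

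Fix $\epsilon\in(0,1/2)$ and consider the $\mathcal F_N$-measurable event $\{|\sigma_{N,\ell}/\sigma_N-1|\le\epsilon\}$, whose outer probability tends to one. On this event, for every $t$,
\[
\mathbb P_N(T_N/\sigma_N\le t)\le\mathbb P_N\Bigl(T_N^{(\ell_N)}/\sigma_{N,\ell}\le\max\{(t+\epsilon)/(1-\epsilon),(t+\epsilon)/(1+\epsilon)\}\Bigr)+\mathbb P_N(|R_N|/\sigma_N>\epsilon).
\]
A matching lower bound holds with $t-\epsilon$. I then replace each probability involving $T_N^{(\ell_N)}/\sigma_{N,\ell}$ by $\Phi$ at the displaced point, at a cost of the conditional Kolmogorov error of the approximant. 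The uniform continuity of $\Phi$ gives $\sup_t|\Phi(\text{displaced }t)-\Phi(t)|\le\omega(\epsilon)$, where $\omega(\epsilon)\to0$ as $\epsilon\to0$. For the scaling displacement this uses the fact that $\sup_t|\Phi(ct)-\Phi(t)|\to0$ as $c\to1$, which follows from the Gaussian tail bound. Combining these, the conditional Kolmogorov error of $T_N/\sigma_N$ is bounded by
\[
\Delta_N^{(\ell)}+\mathbb P_N(|R_N|/\sigma_N>\epsilon)+\omega(\epsilon)
\]
on an event of outer probability tending to one. The first two terms tend to zero in probability by the previous steps. Since $\epsilon$ is arbitrary, the error tends to zero in probability, which gives \eqref{eq:clt_centered}.

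For the closing remark about the construction, no new argument is needed. A version of the contribution measurable with respect to $\mathcal F_N$ and the assignments in $\mathcal B_N^{\Zcal}(a;\ell_N)$ gives a candidate approximant, and the proposition applies once that approximant satisfies Theorem~\ref{thm:cell_clt} (for example via Corollary~\ref{cor:finite_range_clt}) and condition \eqref{eq:aggregate_local_approximation}.
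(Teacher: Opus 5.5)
Your proposal is correct and follows essentially the same route as the paper: the reverse triangle inequality in conditional $L^2$ gives the variance ratio, conditional Chebyshev gives negligibility of $R_N/\sigma_N$, and the decomposition $T_N/\sigma_N=(\sigma_{N,\ell}/\sigma_N)\,T_N^{(\ell_N)}/\sigma_{N,\ell}+R_N/\sigma_N$ is followed by conditional Slutsky. The one difference is that you carry out the conditional Slutsky step explicitly in Kolmogorov distance, using displaced thresholds and the uniform continuity of $\Phi$ under shifts and rescalings, whereas the paper invokes that step without proof.
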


\begin{corollary}[Independent cellwise measurement noise]
\label{cor:gaussian_independent_noise}
Suppose the assignment-measurable structural part of each contribution satisfies Assumption~\ref{ass:local_outcomes}, but $Q_{Na}$ may additionally depend measurably on its own cell-specific error $\varepsilon_{Na}$ and on no other outcome-error coordinate. Conditional on $\mathcal F_N$, assume that $\{\varepsilon_{Na}:a\in\Ocal_N\}$ are independent across cells and independent of $G_N$. Then Proposition~\ref{prop:gaussian_canonical_certificate} remains valid with the same coefficient $\theta_N^{\rm GC}$. Thus independent outcome noise can be integrated out without weakening the assignment certificate, subject to the moment condition in Assumption~\ref{ass:weights_moments}.
\end{corollary}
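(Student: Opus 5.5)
The plan is to reduce the noisy statement to the noiseless one by conditioning. We work on the enlarged product probability space carrying $(G_N,\varepsilon_N)$, with $\varepsilon_N=\{\varepsilon_{Na}\}_{a\in\Ocal_N}$. We must show that for disjoint $A,B\subseteq\Ocal_N$ with $d_N(A,B)\ge s$, and bounded Lipschitz $f,g$, the covariance $\Cov_N\{f(\xi_{N,A}),g(\xi_{N,B})\}$ obeys \eqref{eq:graph_psi_definition} with $\theta_N^{\rm GC}(s)$.

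By Assumption~\ref{ass:local_outcomes} and measurability, each $Q_{Na}$ is a measurable function of $G_{N,\{a\}^{+r_Y}}$ and $\varepsilon_{Na}$. The centering constants $\bar q_{Na}$ are $\mathcal F_N$-measurable. Hence $F:=f(\xi_{N,A})=\Phi_A(G_{N,A^{+r_Y}},\varepsilon_{N,A})$ and $G':=g(\xi_{N,B})=\Phi_B(G_{N,B^{+r_Y}},\varepsilon_{N,B})$ for bounded measurable $\Phi_A,\Phi_B$.

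Since $A\cap B=\varnothing$, the noise blocks $\varepsilon_{N,A}$ and $\varepsilon_{N,B}$ are independent of each other and jointly independent of $G_N$. Define $\tilde F(x)=\E_N[\Phi_A(x,\varepsilon_{N,A})]$ and $\tilde G(y)=\E_N[\Phi_B(y,\varepsilon_{N,B})]$. Fubini, applied to the product law, gives
\[
\E_N[FG']=\E_N[\tilde F(G_{N,A^{+r_Y}})\tilde G(G_{N,B^{+r_Y}})],
\]
and likewise for the marginal means. Therefore
\[
\Cov_N(F,G')=\Cov_N\{\tilde F(G_{N,A^{+r_Y}}),\tilde G(G_{N,B^{+r_Y}})\}.
\]
This is a covariance between two square-integrable (indeed bounded) measurable functions of the two Gaussian blocks, with no noise left.

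Next I apply the Gebelein maximal-correlation inequality for Gaussian blocks, exactly as in Proposition~\ref{prop:gaussian_canonical_certificate}. It yields
\[
|\Cov_N(F,G')|\le \rho_{\Omega_N}(A^{+r_Y},B^{+r_Y})\,\Var_N(\tilde F)^{1/2}\Var_N(\tilde G)^{1/2}.
\]
Jensen gives $\Var_N(\tilde F)^{1/2}\le\|f\|_\infty\le\|f\|_\infty+\operatorname{Lip}(f)$, and similarly for $\tilde G$. Taking the supremum over separated pairs gives the bound with coefficient $\theta_N^{\rm GC}(s)$ and $C_\psi=1$, so the cardinality factor $|A||B|\ge1$ is slack.

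The case $s=0$ is trivial, since $\theta_N^{\rm GC}(0)=1$. Geodesic closure (Assumption~\ref{ass:contribution_metric_compatibility}) enters only through the original proposition's identification of separation on $\Ocal_N$ with separation of the enlarged sets. The moment condition Assumption~\ref{ass:weights_moments} is only needed downstream for the KMS rates, not here.

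The main obstacle is the Fubini step. It requires three things at once: the noise coordinates of $A$ and $B$ must be disjoint, they must be mutually independent, and they must be jointly independent of $G_N$. If any of these fails, the averaged functions $\tilde F,\tilde G$ would not decouple. This is exactly why shared noise would force an enlarged joint Gaussian–noise certificate.

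A secondary technical point is the measurability of $\Phi_A$ jointly in $(x,\varepsilon)$. This is needed so that $\tilde F$ is Borel, and it follows from the measurability of $Q_{Na}$ in its arguments.
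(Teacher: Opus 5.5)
Your proposal is correct and follows essentially the same route as the paper: you integrate out the independent cellwise noise (the paper via conditional expectations $\E(F\mid G_{N,A^{+r_Y}})$, you via Fubini on the product law), reduce the covariance to one between two bounded functions of the Gaussian blocks $G_{N,A^{+r_Y}}$ and $G_{N,B^{+r_Y}}$, and apply the Gebelein inequality with the $\|f\|_\infty\|g\|_\infty$ variance bound. The only difference is cosmetic: you bound $\Var_N(\tilde F)$ directly by boundedness, while the paper uses the $L^2$-contraction $\Var(\bar F)\le\Var(F)$.
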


\begin{remark}[Why assignment certificates do not absorb arbitrary outcome noise]
\label{rem:noise_certificate_gap}
The proposition is conditional on $\mathcal F_N$ and uses that each contribution is a measurable transformation of the relevant Gaussian subvector. Without that convention, assignment locality alone is insufficient. For example, take $\Omega_N=I$ and let every observed outcome contain the same exogenous mean-zero shock $U$. Then $Q_{Na}=\omega_{Na}\psi_{Na}(Z_{Na})U$ may have zero pairwise linear covariance at distant cells while $Q_{Na}^2$ and $Q_{Nc}^2$ share the same $U^2$ and remain nonlinearly dependent. The assignment canonical correlation is zero, but the contribution coefficient is not. The unchanged Gaussian coefficient is justified for independent own-cell noise by Corollary~\ref{cor:gaussian_independent_noise}. Shared or locally shared random noise must instead enter an enlarged joint assignment--noise coefficient; assignment locality alone is not enough.
\end{remark}

\subsection{Additional design frontiers and causal corrections}
\label{app:causal_corrections_extensions}

A spectral nugget can stabilize the latent covariance but is not a free inferential regularizer. Let
\begin{equation}
\Omega_N=(1-\lambda_N)\Omega_N^{\rm str}+\lambda_N I_{L_N},
\qquad \lambda_N\in(0,1).
\label{eq:gaussian_nugget_regularization}
\end{equation}
The following proposition quantifies its raw-GATE bias cost.

\begin{proposition}[Spectral nugget--causal bias frontier]
\label{prop:nugget_bias_floor}
Suppose the structural exposure coefficient satisfies $\vartheta_N>0$, $H_N\ge0$, and the Gaussian copula uses \eqref{eq:gaussian_nugget_regularization}. For every off-diagonal exposure pair,
\begin{equation}
    \Rnorm_{N,ac}\le
    \mathcal G_{p_N}(1-\lambda_N),
    \qquad a\ne c.
    \label{eq:nugget_edge_correlation_cap}
\end{equation}
Therefore, for fixed target weights with $W_N>0$,
\begin{equation}
    |\mathbb E_N\widehat\tau_{w,\tilde\alpha}^{HT}-\tau_{w,N}|
    \ge
    \frac{\vartheta_N}{W_N}
    \sum_{a\ne c}w_{Na}H_{N,ac}
    \{1-\mathcal G_{p_N}(1-\lambda_N)\}.
    \label{eq:nugget_bias_floor}
\end{equation}
If $p_N$ remains in an overlap set and $\lambda_N\downarrow0$, then
\begin{equation}
    1-\mathcal G_{p_N}(1-\lambda_N)
    =\frac{\varphi\{\Phi^{-1}(p_N)\}}
      {p_N(1-p_N)\sqrt\pi}\sqrt{\lambda_N}
      \{1+o(1)\},
    \label{eq:nugget_sqrt_expansion}
\end{equation}
uniformly over compact overlap sets. Hence, when weighted off-diagonal exposure mass is of order $W_N$, $\vartheta_N$ is bounded away from zero, and $\sqrt{V_N}\asymp M_N^{-1/2}$ with $W_N\asymp M_N$, raw-GATE causal centering requires $\lambda_N=o_p(M_N^{-1})$ under the outer environment law.
\end{proposition}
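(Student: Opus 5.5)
The plan has four steps: cap each exposure-edge correlation through monotonicity of $\mathcal G_p$, feed that cap into the expected-cut identity, expand $1-\mathcal G_p(1-\lambda)$ near one, and compare the resulting floor with the standard-error scale.

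\emph{Step 1: the correlation cap.} Under \eqref{eq:gaussian_nugget_regularization}, every off-diagonal latent correlation is
\[
\Omega_{N,ac}=(1-\lambda_N)\Omega^{\rm str}_{N,ac}.
\]
Since $\Omega^{\rm str}_N$ is a correlation matrix, $|\Omega^{\rm str}_{N,ac}|\le1$, and so $\Omega_{N,ac}\le1-\lambda_N$. The map $\mathcal G_p(\rho)$ in \eqref{eq:gaussian_copula_map} is nondecreasing in $\rho$. This holds because $\partial_\rho\Phi_2(x,x;\rho)$ equals the bivariate normal density at $(x,x)$, which is positive (Plackett's identity). Monotonicity gives \eqref{eq:nugget_edge_correlation_cap}.

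\emph{Step 2: the bias floor.} Substitute the cap into the expected-cut form of the bias, Proposition~\ref{prop:expected_cut_geometry} together with Corollary~\ref{cor:weighted_ht_identity}. Diagonal exposure terms drop out because $\Rnorm_{aa}=1$. Every term in $\sum_{a\ne c}w_{Na}H_{N,ac}(1-\Rnorm_{N,ac})$ is nonnegative, so each factor $1-\Rnorm_{N,ac}$ can be lowered to $1-\mathcal G_{p_N}(1-\lambda_N)$. This yields \eqref{eq:nugget_bias_floor}.

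\emph{Step 3: the square-root expansion (main obstacle).} Write $x=\Phi^{-1}(p)$. Since $\Phi_2(x,x;1)=p$, we have
\[
\mathcal G_p(1)=1,\qquad 1-\mathcal G_p(1-\lambda)=\frac{\Phi_2(x,x;1)-\Phi_2(x,x;1-\lambda)}{pq}.
\]
I will use the disagreement representation
\[
\Phi_2(x,x;1)-\Phi_2(x,x;\rho)=\tfrac12\Pr(\text{exactly one of }G_1,G_2\le x).
\]
Set $G_2=\rho G_1+\sqrt{1-\rho^2}\,E$ with $E$ standard normal. Near $\rho=1$ we have $\sqrt{1-\rho^2}\approx\sqrt{2\lambda}$, so the event requires $G_1$ to lie within $O(\sqrt\lambda)$ of $x$ with $E$ on the appropriate side. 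Conditioning on $G_1$ gives
\[
\Pr(\text{disagree})\approx 2\varphi(x)\sqrt{2\lambda}\,\E[E_+]=2\varphi(x)\sqrt{2\lambda}/\sqrt{2\pi}=2\varphi(x)\sqrt{\lambda/\pi}.
\]
Halving this gives $\varphi(x)\sqrt{\lambda/\pi}$. Dividing by $pq$ yields \eqref{eq:nugget_sqrt_expansion}. A cleaner alternative integrates Plackett's derivative exactly:
\[
\int_{1-\lambda}^1\frac{e^{-x^2/(1+r)}}{2\pi\sqrt{1-r^2}}\,dr.
\]
Substituting $u=1-r$ turns the integrand into $\frac{e^{-x^2/2}}{2\pi\sqrt{2u}}(1+O(u))$, with $O(u)$ uniform for $x$ in a compact set. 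Integrating gives $\frac{e^{-x^2/2}\,2\sqrt\lambda}{2\pi\sqrt2}=\varphi(x)\sqrt{\lambda/\pi}$. Uniformity over compact overlap sets follows because $x$ ranges over a compact interval, so the Taylor remainders are uniformly bounded. I expect this step to need the most care.

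\emph{Step 4: the rate consequence.} Suppose the weighted off-diagonal exposure mass is of order $W_N$ and $\vartheta_N$ is bounded below. Then the bias is at least of order $\sqrt{\lambda_N}$. IC-1 requires $\text{bias}/\sqrt{V_N}\to0$ with $\sqrt{V_N}\asymp M_N^{-1/2}$. That forces $\sqrt{\lambda_N}=o_p(M_N^{-1/2})$, i.e.\ $\lambda_N=o_p(M_N^{-1})$. If $\lambda_N$ does not tend to zero, the floor is bounded away from zero and IC-1 fails outright. In the regime $\lambda_N\to0$, the expansion of Step 3 applies.
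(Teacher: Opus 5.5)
Your proposal is correct and follows the paper's proof essentially step for step. The steps are the same: the cap from $\Omega_{N,ac}=(1-\lambda_N)\Omega^{\rm str}_{N,ac}\le1-\lambda_N$ plus monotonicity of $\mathcal G_p$; the off-diagonal expected-cut floor; the exact Plackett integral with the substitution $u=1-\rho$, where your ``cleaner alternative'' is precisely the paper's argument and your disagreement heuristic is a consistent check of the constant; and the comparison of the $\sqrt{\lambda_N}$ floor with the $M_N^{-1/2}$ standard-error scale. Your remark that a non-vanishing $\lambda_N$ makes IC-1 fail outright is a small point the paper leaves implicit.
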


\begin{corollary}[Nugget-supported certification frontier]
\label{cor:nugget_supported_certificate_frontier}
In the root-$M_N$ setting of Proposition~\ref{prop:nugget_bias_floor}, suppose the spectral floor used in Proposition~\ref{prop:spectral_rowsum_certificate} is supplied asymptotically by the nugget,
$\kappa_N\asymp\lambda_N$, and for a relevant separation sequence $s_N>2r_Y$ the spectral route requires
\begin{equation}
\frac{\alpha_N(s_N-2r_Y)}{\kappa_N}\xrightarrow{p}0.
\label{eq:nugget_supported_rowsum_rate}
\end{equation}
If raw-GATE IC-1 also holds, then necessarily
\begin{equation}
\alpha_N(s_N-2r_Y)=o_p(\lambda_N)=o_p(M_N^{-1}).
\label{eq:nugget_supported_joint_frontier}
\end{equation}
This implication is specific to nugget-supported invertibility. It does not apply when the structured covariance itself has a nonvanishing spectral floor.
\end{corollary}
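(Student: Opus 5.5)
The argument chains the nugget bias floor of Proposition~\ref{prop:nugget_bias_floor} with the spectral route of Proposition~\ref{prop:spectral_rowsum_certificate}. No new probabilistic input is needed; the content is a comparison of rates.

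\textbf{Step 1: the nugget must be small.} Raw-GATE IC-1 means $|\mathbb E_N\widehat\tau^{HT}_{w,\tilde\alpha}-\tau_{w,N}|/\sqrt{V_N}\xrightarrow{p}0$. In the root-$M_N$ setting, $\sqrt{V_N}\asymp M_N^{-1/2}$, $W_N\asymp M_N$, the weighted off-diagonal exposure mass is of order $W_N$, and $\vartheta_N$ is bounded away from zero. Under these conditions, the floor \eqref{eq:nugget_bias_floor} together with the expansion \eqref{eq:nugget_sqrt_expansion} gives
\[
\frac{|\mathbb E_N\widehat\tau^{HT}_{w,\tilde\alpha}-\tau_{w,N}|}{\sqrt{V_N}}\gtrsim M_N^{1/2}\sqrt{\lambda_N}.
\]
The constant $\varphi\{\Phi^{-1}(p_N)\}/\{p_N(1-p_N)\sqrt\pi\}$ is bounded below uniformly on the compact overlap set. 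IC-1 therefore forces $M_N\lambda_N\xrightarrow{p}0$, that is, $\lambda_N=o_p(M_N^{-1})$.

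One subtlety: the expansion \eqref{eq:nugget_sqrt_expansion} assumes $\lambda_N\downarrow0$. I would handle this by splitting on whether $\lambda_N$ is small. The map $x\mapsto 1-\mathcal G_p(1-x)$ is increasing and positive for $x>0$. So on the event $\{\lambda_N\ge\epsilon\}$ the bias is bounded below by a positive constant, and the standardized bias diverges at rate $M_N^{1/2}$. That event must therefore have vanishing probability, and the square-root expansion then applies on its complement.

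\textbf{Step 2: transfer the rate to the far-row mass.} The hypothesis \eqref{eq:nugget_supported_rowsum_rate} gives $\alpha_N(s_N-2r_Y)=o_p(\kappa_N)$. Since $\kappa_N\asymp\lambda_N$, there is a constant $C$ with $\kappa_N\le C\lambda_N$ eventually, or with probability tending to one under the outer law. Hence
\[
\alpha_N(s_N-2r_Y)/\lambda_N=\{\alpha_N(s_N-2r_Y)/\kappa_N\}(\kappa_N/\lambda_N)\xrightarrow{p}0.
\]
Combining this with Step 1 gives $\alpha_N(s_N-2r_Y)=o_p(\lambda_N)=o_p(M_N^{-1})$. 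The product of an $o_p(1)$ ratio and an $O_p(M_N^{-1})$ quantity is handled by the usual $o_p\cdot O_p$ calculus.

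\textbf{Scope remark.} The implication uses only $\kappa_N\lesssim\lambda_N$. When the structural covariance $\Omega_N^{\rm str}$ itself supplies a floor $\kappa_N\ge c>0$, Step 2 yields only $\alpha_N=o_p(1)$, and no link to $\lambda_N$ arises. This is the stated boundary of the corollary.

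The main obstacle is Step 1's passage from the asymptotic expansion to an $o_p$ statement under the outer environment law. The expansion must hold uniformly over the random $p_N$ in the overlap set, and this uniformity should be invoked explicitly, together with the monotonicity argument above that excludes non-vanishing $\lambda_N$.
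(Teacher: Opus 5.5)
Your proposal is correct and matches the paper's proof. Like the paper, you invoke Proposition~\ref{prop:nugget_bias_floor} to get $\lambda_N=o_p(M_N^{-1})$ under raw-GATE IC-1, then use $\kappa_N\asymp\lambda_N$ to turn $\alpha_N(s_N-2r_Y)/\kappa_N\xrightarrow{p}0$ into $\alpha_N(s_N-2r_Y)=o_p(\lambda_N)$, and combine the two; your split on $\{\lambda_N\ge\epsilon\}$ only makes explicit a step the paper leaves implicit, since its square-root expansion presumes $\lambda_N\downarrow0$, and your observation that only $\kappa_N\lesssim\lambda_N$ is used agrees with the paper's closing caveat.
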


A nugget sequence can therefore deliver either a fixed spectral floor or negligible raw-HT GATE bias, but generally not both under nonvanishing positive exposure mass. This is a design frontier, not a numerical pathology.

\begin{corollary}[Variance-scale window]
\label{cor:variance_scale_window}
In the path setting of Theorem~\ref{thm:path_trilemma}, replace $\sigma_N^2\asymp M_N$ by an arbitrary positive variance scale and suppose $m_N=o(M_N)$, the graph-$\psi$ CLT rates hold, and \eqref{eq:far_assignment_decorrelation} holds. Then the $k=1$, $s=0$ term in \eqref{eq:weak_dep_clt_rate_1} requires
\begin{equation}
    m_N=o\!\left(\frac{\sigma_N^3}{M_N}\right).
    \label{eq:clt_radius_upper_window}
\end{equation}
If the positive-exposure raw bias is negligible relative to $\sqrt{V_N}=\sigma_N/M_N$, then necessarily
\begin{equation}
    m_N\gg\frac{M_N}{\sigma_N}.
    \label{eq:bias_radius_lower_window}
\end{equation}
A nonempty radius window therefore requires $\sigma_N^2/M_N\to\infty$. Conventional root-$M_N$ inference, for which $\sigma_N^2\asymp M_N$, has no such window.
\end{corollary}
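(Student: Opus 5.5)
The plan is to bound both ends of the radius window from quantities already in the paper. The upper end comes from the $s=0$ term of the KMS rate \eqref{eq:weak_dep_clt_rate_1} with $k=1$. The lower end comes from the path disagreement bound of Proposition~\ref{prop:path_bias_lower_bound}, evaluated at $m=m_N$ and compared with $\sqrt{V_N}=\sigma_N/M_N$. Throughout I take $n=M_N$, $\Ocal_N=\Zcal_N$ the path, and $\theta_N(0)=1$ as in Definition~\ref{def:graph_psi_dependence}.

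\emph{Upper end.} Every summand in \eqref{eq:weak_dep_clt_rate_1} is nonnegative, so the $s=0$ term must vanish on its own:
\[
\frac{M_N}{\sigma_N^{3}}\,c_N(0,m_N;1)\xrightarrow{p}0 .
\]
To evaluate $c_N(0,m;1)$:
\begin{enumerate}
\item At $s=0$ the shell $\partial\mathcal B_N(a;0)$ is $\{a\}$, so $\delta_N^{\partial}(0;\cdot)=1$.
\item The maximum in \eqref{eq:Delta_shell_definition} is over $c=a$, and $\mathcal B_N(a;-1)=\varnothing$. Hence $\Delta_N(0,m;\alpha)=M_N^{-1}\sum_a|\mathcal B_N(a;m)|^{\alpha}$.
\item By Lyapunov/Jensen, for every $\alpha>1$,
\[
\bigl\{M_N^{-1}\textstyle\sum_a|\mathcal B_N(a;m)|^{\alpha}\bigr\}^{1/\alpha}\ge M_N^{-1}\textstyle\sum_a|\mathcal B_N(a;m)|,
\]
so the infimum over $\alpha$ is at least the average ball size.
\item On a path with $m<M_N$, every ball $\mathcal B_N(a;m)$ contains at least $m+1$ vertices, so $c_N(0,m_N;1)\ge m_N+1$.
\end{enumerate}
Therefore $M_N m_N/\sigma_N^3\to_p0$, which is \eqref{eq:clt_radius_upper_window}.

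\emph{Lower end.} I apply \eqref{eq:path_bias_rate_bound} with $m=m_N$ and $\epsilon_m=\varepsilon_N^Z$, which is admissible since $\Rnorm_{N,i,i+m_N}\le|\Rnorm_{N,i,i+m_N}|\le\varepsilon_N^Z$. This gives
\[
|\mathbb E_N\widehat\tau^{HT}_N-\tau_N|\ \ge\ \frac{\vartheta_0h_0}{M_N}\,\frac{M_N-m_N}{m_N}\,(1-\varepsilon_N^Z).
\]
Because $m_N=o(M_N)$ and $\varepsilon_N^Z\to_p0$, the right side is at least $(\vartheta_0h_0/2)/m_N$ with probability tending to one. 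Dividing by $\sqrt{V_N}=\sigma_N/M_N$ shows the standardized bias is bounded below by a constant times $M_N/(m_N\sigma_N)$. Negligibility of that ratio forces $M_N/(m_N\sigma_N)\to_p0$, which is \eqref{eq:bias_radius_lower_window}.

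\emph{Combining.} If both ends hold, then $M_N/\sigma_N\ll m_N\ll\sigma_N^3/M_N$, and multiplying the two rates gives $M_N^2/\sigma_N^4\to0$, i.e.\ $\sigma_N^2/M_N\to\infty$. Under $\sigma_N^2\asymp M_N$ both bounds are of order $M_N^{1/2}$: the CLT rate requires $m_N=o(M_N^{1/2})$, while unbiasedness requires $m_N\gg M_N^{1/2}$. These are incompatible, so root-$M_N$ inference has no window. This recovers the mechanism of Theorem~\ref{thm:path_trilemma}.

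\emph{Main obstacle.} The delicate step is bookkeeping the modes of convergence. $m_N$ is deterministic, while $\sigma_N$ is $\mathcal F_N$-measurable, so the rates hold in outer probability. I would work on the good events $E_N^{\rm good}$ and along subsequences where the $o_p$ statements become almost sure; there, ratios of the two window bounds can be multiplied pathwise. A second point needing care is verifying the ball-size floor $m+1$ near the path endpoints, where each ball is one-sided but still contains $m+1$ vertices since $m<M_N$.
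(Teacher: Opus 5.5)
Your proposal is correct and follows essentially the same argument as the paper. The upper end comes from the power-mean bound $c_N(0,m_N;1)\ge m_N+1$ on the $s=0$, $k=1$ KMS summand, and the lower end comes from applying Proposition~\ref{prop:path_bias_lower_bound} at $m=m_N$ to get an $m_N^{-1}$ bias floor. Your step of multiplying the two vanishing ratios to obtain $M_N^2/\sigma_N^4\to0$ is just a slightly cleaner phrasing of the paper's ``shared order window'' comparison.
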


\begin{corollary}[Fixed-range local factors can have a constant GATE bias]
\label{cor:fixed_range_bias_floor}
Under the positive path exposure model of Theorem~\ref{thm:path_trilemma}, suppose assignments are independent whenever path distance exceeds a fixed $r_F$. Then, with $m=r_F+1$,
\begin{equation}
    \liminf_{N\to\infty}
    |\mathbb E_N\widehat\tau_N^{HT}-\tau_N|
    \ge \frac{\vartheta_0h_0}{r_F+1}>0.
    \label{eq:fixed_range_path_bias_floor}
\end{equation}
Thus a fixed-range local-factor sampler may give a clean dependency-graph CLT around the design expectation while remaining separated from the nonvanishing-interference GATE by a constant.
\end{corollary}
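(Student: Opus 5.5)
The plan is to reduce Corollary~\ref{cor:fixed_range_bias_floor} to the path disagreement bound of Proposition~\ref{prop:path_bias_lower_bound}, taking $m=r_F+1$.

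First we identify the far-assignment correlation. Cells $i$ and $i+m$ on the path are at distance $m=r_F+1>r_F$. By hypothesis their assignments are independent, so $\Rnorm_{N,i,i+m}=0$ for every $1\le i\le M_N-m$. We may therefore take $\epsilon_m=0$ in \eqref{eq:path_average_disagreement_bound}. Equivalently, $d_Z(i,i+m)=1$ by the identity $d_Z=1-\Rnorm$ of Proposition~\ref{prop:expected_cut_geometry}.

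Next we apply \eqref{eq:path_bias_rate_bound} with $n=M_N$, $W=M_N$ (unit weights), $\theta\ge\vartheta_0$ and $H_{k,k+1}\ge h_0$. This gives
\[
|\mathbb E_N\widehat\tau_N^{HT}-\tau_N|\ge \frac{\vartheta_0h_0}{M_N}\,\frac{M_N-r_F-1}{r_F+1}.
\]
Its validity rests on the expected-cut representation \eqref{eq:bias_expected_cut}. That representation writes the bias magnitude as $(\vartheta_N/M_N)\sum_{a,c}H_{ac}d_Z(a,c)$, a sum of nonnegative terms because $H\ge0$. We may therefore discard every term except the nearest-neighbour edges and lower-bound those using the telescoping inequality \eqref{eq:path_disagreement_inequality}. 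One check is needed: $\tau_N$ here is the unweighted GATE, so the weighted formula with $w\equiv1$ is exactly the quantity in the statement.

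Finally, let $N\to\infty$. Since $M_N\to\infty$ while $r_F$ is fixed, $(M_N-r_F-1)/M_N\to1$, and taking the $\liminf$ yields \eqref{eq:fixed_range_path_bias_floor}. This conclusion holds on the good-environment event, and holds surely if the environment is deterministic.

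The only potential obstacle is bookkeeping: confirming that the hypotheses of Theorem~\ref{thm:path_trilemma} actually used are (i), which fixes the common marginals, positive exposure and noise orthogonality, together with the path structure. The CLT rates and the variance scale play no role in the bias floor. So the proof is essentially a one-line specialization, plus the remark that the dependency-graph CLT of Corollary~\ref{cor:finite_range_clt} still holds around the design expectation.
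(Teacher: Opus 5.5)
Your proposal is correct and follows essentially the same route as the paper. The paper sets $m=r_F+1$, notes that independence at that distance gives $\Rnorm_{N,i,i+m}=0$, applies Proposition~\ref{prop:path_bias_lower_bound} to obtain $\frac{\vartheta_0h_0}{M_N}\frac{M_N-m}{m}$, and lets $N\to\infty$ with $m$ fixed; your extra remarks on which hypotheses are used and on the environment are harmless bookkeeping.
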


Proposition~\ref{prop:external_debiasing} gives the negligible-pilot debiasing result in the main text. The next corollary translates its projected-error condition into a primitive pilot-size requirement.

\begin{corollary}[Pilot-size requirement for external debiasing]
\label{cor:external_debiasing_pilot_size}
Suppose an external pilot of effective size $n_{\delta,N}$ estimates a $d_{\delta,N}$-dimensional nuisance vector and
\begin{equation}
    \|\widehat\delta_N-\delta_N\|_2
    =O_p\!\left(\sqrt{\frac{d_{\delta,N}}{n_{\delta,N}}}\right).
    \label{eq:external_debiasing_parameter_rate}
\end{equation}
A sufficient condition for \eqref{eq:debiased_nuisance_rate} is
\begin{equation}
    \frac{\|c_N\|_2}{\sqrt{V_N}}
    \sqrt{\frac{d_{\delta,N}}{n_{\delta,N}}}\longrightarrow0,
    \quad\text{equivalently}\quad
    \frac{n_{\delta,N}V_N}
         {d_{\delta,N}\|c_N\|_2^2}\longrightarrow\infty
\label{eq:external_debiasing_pilot_size}
\end{equation}
whenever $\|c_N\|_2>0$. If $V_N\asymp M_N^{-1}$, $d_{\delta,N}=O(1)$, and $\|c_N\|_2\asymp1$, this requires $n_{\delta,N}/M_N\to\infty$. A pilot of the same order as the main experiment is therefore generally insufficient for treating plug-in uncertainty as negligible; the requirement relaxes if the design-bias loading shrinks.
\end{corollary}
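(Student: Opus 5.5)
The plan is to reduce the pilot-size statement to the projected-error condition \eqref{eq:debiased_nuisance_rate} of Proposition~\ref{prop:external_debiasing} by Cauchy--Schwarz. The first step bounds the debiasing error as
\[
|c_N^\top(\widehat\delta_N-\delta_N)|\le\|c_N\|_2\,\|\widehat\delta_N-\delta_N\|_2 .
\]
The loading $c_N$ is built from $\Rnorm_N$, the exposure operators, and the target weights, so it is $\mathcal F_N$-measurable and nonrandom under the assignment law. The pilot is external and completed before randomization, so it is also included in $\mathcal F_N$. Dividing by $\sqrt{V_N}$ and invoking the rate \eqref{eq:external_debiasing_parameter_rate} gives
\[
\frac{|c_N^\top(\widehat\delta_N-\delta_N)|}{\sqrt{V_N}}
\le\frac{\|c_N\|_2}{\sqrt{V_N}}\sqrt{\frac{d_{\delta,N}}{n_{\delta,N}}}\cdot O_p(1).
\]
Under \eqref{eq:external_debiasing_pilot_size} the deterministic (or $\mathcal F_N$-measurable) prefactor tends to zero. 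A product of a quantity tending to zero with an $O_p(1)$ term is $o_p(1)$, which is exactly \eqref{eq:debiased_nuisance_rate}. If the prefactor is only $\mathcal F_N$-measurable, I would read its convergence in outer probability and apply the same product rule under the outer pilot/environment law, consistent with the paper's convention that pilot rates refer to outer randomness.

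The equivalence of the two displayed forms in \eqref{eq:external_debiasing_pilot_size} is immediate when $\|c_N\|_2>0$. Squaring the first form gives $\|c_N\|_2^2d_{\delta,N}/(n_{\delta,N}V_N)\to0$, and since all quantities are positive this holds if and only if the reciprocal diverges.

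For the specialization I would substitute $V_N\asymp M_N^{-1}$, $d_{\delta,N}=O(1)$ and $\|c_N\|_2\asymp1$. The reciprocal condition then becomes $n_{\delta,N}/M_N\to\infty$, up to constants bounded away from zero and infinity.

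Two small points need care. The first is the claim that the condition is ``required''. Strictly, it is only required for this sufficient Cauchy--Schwarz route. It is also necessary for \eqref{eq:debiased_nuisance_rate} whenever the pilot error is genuinely of exact order $\sqrt{d/n}$ in the direction of $c_N$, for example when $d=2$ with a nondegenerate asymptotic pilot covariance. I would phrase this as ``generally'', matching the statement's wording.

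The second, and the main technical subtlety, is the mixed conditioning. The pilot rate holds under the outer law, while \eqref{eq:debiased_nuisance_rate} is used inside a Slutsky argument conditional on $\mathcal F_N$. Because the pilot error is $\mathcal F_N$-measurable, convergence in outer probability of an $\mathcal F_N$-measurable quantity is exactly what is needed to shift the conditional CLT center. The Slutsky step in Proposition~\ref{prop:external_debiasing} then goes through conditionally.
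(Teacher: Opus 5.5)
Your proposal is correct and follows the paper's own proof: Cauchy--Schwarz, the parameter rate \eqref{eq:external_debiasing_parameter_rate}, the fact that a vanishing prefactor times an $O_p(1)$ factor is $o_p(1)$, and direct substitution for the root-$M_N$ case. Your added remarks are consistent with the paper's conventions but are not needed for the sufficiency claim. These are the remarks on when the condition is also necessary, and on reading the $\mathcal F_N$-measurable pilot rate under the outer law.
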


\paragraph{Joint-law carve-out for pilot uncertainty.}
Unlike the preceding conditional-randomization statements, Proposition~\ref{prop:external_pilot_uncertainty} is formulated under a joint law that leaves both the external pilot and the current randomization random. Let $\mathcal E_N$ be the pre-pilot environment sigma-field, let $\mathcal F_N=\sigma(\mathcal E_N,\text{pilot data},\text{pilot-derived design})$, write $\mathbb P_N^D(\cdot)=\Pr(\cdot\mid\mathcal F_N)$ for the current-randomization law, and write $\mathbb P_N^J(\cdot)=\Pr(\cdot\mid\mathcal E_N)$ for the joint pilot--experiment law. All expectations, variances, probability limits, and weak limits in the proposition carry the superscript $J$ unless explicitly marked $D$.

\begin{proposition}[External-pilot uncertainty propagation under the joint law]
\label{prop:external_pilot_uncertainty}
Under the exact two-channel model, put
\begin{equation}
    U_N:=\widehat\tau^{HT}_{w,\tilde\alpha}
          -\mathbb E_N^D(\widehat\tau^{HT}_{w,\tilde\alpha}\mid\mathcal F_N),
    \qquad
    D_N:=c_N^\top(\widehat\delta_N-\delta_N),
    \label{eq:external_pilot_error_components}
\end{equation}
\begin{equation}
    V_N^D:=\Var_N^D(U_N\mid\mathcal F_N),
    \qquad
    V_{\delta,N}^J:=\Var_N^J(D_N\mid\mathcal E_N).
    \label{eq:external_pilot_joint_variances}
\end{equation}
Assume $V_N^D,V_{\delta,N}^J>0$ eventually and
\begin{equation}
    \begin{pmatrix}U_N/\sqrt{V_N^D}\\D_N/\sqrt{V_{\delta,N}^J}\end{pmatrix}
    \Rightarrow_J N_2(0,I_2),
    \qquad
    \lambda_N^J:=\frac{V_N^D}{V_N^D+V_{\delta,N}^J}
    \xrightarrow{p,J}\lambda\in[0,1].
    \label{eq:external_pilot_joint_limits}
\end{equation}
Then
\begin{equation}
    \frac{\widehat\tau_N^{\rm BC}-\tau_{w,N}}
         {\sqrt{V_N^D+V_{\delta,N}^J}}
    \Rightarrow_J N(0,1).
    \label{eq:external_pilot_variance_propagation}
\end{equation}
If $\widehat V_N^D/V_N^D\xrightarrow{p,J}1$ and
$\widehat V_{\delta,N}^J/V_{\delta,N}^J\xrightarrow{p,J}1$, replacing the denominator by
$\sqrt{\widehat V_N^D+\widehat V_{\delta,N}^J}$ is valid. One primitive route is a stable conditional CLT for $U_N/\sqrt{V_N^D}$ relative to the pilot sigma-field, together with a Gaussian limit for $D_N/\sqrt{V_{\delta,N}^J}$. The stable limit makes the current-randomization component asymptotically independent of pilot error even when the design itself is pilot-adaptive. Exact conditional centering makes $U_N$ uncorrelated with every pilot-measurable $D_N$ under the joint law, but separate marginal Gaussian limits alone do not imply the required joint Gaussian limit.
\end{proposition}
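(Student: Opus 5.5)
The plan is to decompose the error exactly and then apply a joint CLT to a fixed linear combination. From the exact two-channel bias identity (Theorem~\ref{thm:main} with the channel split of Proposition~\ref{prop:external_debiasing}), the design expectation satisfies $\mathbb E_N^D(\widehat\tau^{HT}_{w,\tilde\alpha}\mid\mathcal F_N)=\tau_{w,N}+\delta_N^\top c_N$. Because $\mathcal F_N$ contains the pilot-derived design, $c_N$ is $\mathcal F_N$-measurable. Hence
\[
\widehat\tau_N^{\rm BC}-\tau_{w,N}=U_N+\delta_N^\top c_N-\widehat\delta_N^\top c_N=U_N-D_N .
\]
Write $S_N:=V_N^D+V_{\delta,N}^J$. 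Then
\[
\frac{\widehat\tau_N^{\rm BC}-\tau_{w,N}}{\sqrt{S_N}}
=\sqrt{\lambda_N^J}\,\frac{U_N}{\sqrt{V_N^D}}-\sqrt{1-\lambda_N^J}\,\frac{D_N}{\sqrt{V_{\delta,N}^J}} .
\]

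Next I would apply the continuous mapping theorem together with Slutsky. The joint convergence in \eqref{eq:external_pilot_joint_limits} can be combined with $\lambda_N^J\to\lambda$ in probability. The reason is that a vector converging in distribution, paired with a component converging in probability to a constant, converges jointly. This gives the limit of $(U_N/\sqrt{V_N^D},\,D_N/\sqrt{V_{\delta,N}^J},\,\lambda_N^J)$ as $(X,Y,\lambda)$ with $(X,Y)\sim N_2(0,I_2)$. The map $(x,y,l)\mapsto\sqrt{l}\,x-\sqrt{1-l}\,y$ is continuous on $\mathbb R^2\times[0,1]$. Therefore the ratio converges to $\sqrt\lambda X-\sqrt{1-\lambda}Y\sim N(0,\lambda+1-\lambda)=N(0,1)$. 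This works uniformly in $\lambda$, including the endpoints $0$ and $1$, which is where independence of $X$ and $Y$ is essential. This proves \eqref{eq:external_pilot_variance_propagation}.

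For studentization, I would write $\widehat S_N=\widehat V_N^D+\widehat V_{\delta,N}^J$. Then
\[
\frac{\widehat S_N}{S_N}=\lambda_N^J\frac{\widehat V_N^D}{V_N^D}+(1-\lambda_N^J)\frac{\widehat V_{\delta,N}^J}{V_{\delta,N}^J} .
\]
This is a convex combination of two ratios each tending to one in $J$-probability, so $\widehat S_N/S_N\to1$ without any condition on $\lambda$. Slutsky then transfers the limit to the studentized statistic. On the event $\widehat S_N=0$ I would adopt the convention used in Corollary~\ref{cor:studentized_model_assisted}; that event has vanishing probability.

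Finally, the primitive-route remarks need justification. Suppose $U_N/\sqrt{V_N^D}$ converges $\mathcal F_N$-stably to $N(0,1)$, and $D_N/\sqrt{V_{\delta,N}^J}$, being $\mathcal F_N$-measurable, converges to $N(0,1)$. Then stable convergence gives joint convergence to the product law. I would verify this via conditional characteristic functions: for fixed $(s,t)$, compute $\mathbb E^J[e^{itD_N/\sqrt{V_\delta}}\,\mathbb E^D(e^{isU_N/\sqrt{V^D}}\mid\mathcal F_N)]$, where the inner factor tends to $e^{-s^2/2}$ in probability, and then apply bounded convergence. The uncorrelatedness claim follows from $\mathbb E^D(U_N\mid\mathcal F_N)=0$ and the tower property.

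The main obstacle is the last remark, that marginal limits do not suffice. I would exhibit a counterexample in which $D_N$ is a deterministic sign flip of a function of $U_N$ at the limit. For instance, take $Y=\operatorname{sign}(X)\,|X'|$-type constructions, or $Y=X\ind\{|X|\le c\}-X\ind\{|X|>c\}$, which is uncorrelated with $X$ for a suitable $c$ yet has $X-Y$ non-Gaussian. This shows that the joint Gaussian assumption cannot be dropped.
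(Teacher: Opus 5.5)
Your treatment of the main display and its studentized version is correct and is essentially the paper's proof. This covers the exact decomposition $\widehat\tau_N^{\rm BC}-\tau_{w,N}=U_N-D_N$, the rewrite with weights $\sqrt{\lambda_N^J}$ and $\sqrt{1-\lambda_N^J}$, Slutsky plus continuous mapping, and the convex-combination bound on $\widehat S_N/S_N$. The tower-property argument for uncorrelatedness also matches the paper. Your conditional characteristic-function check of the primitive route is a sharper version of the paper's one-line appeal to stable convergence, and it is the right formulation when $\mathcal F_N$ changes with $N$. One small slip: independence of $X$ and $Y$ matters only for $\lambda\in(0,1)$. At the endpoints the limit is $X$ or $-Y$, so only one marginal is used.

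The step that does not work is the counterexample you identify as the main obstacle. The remark claims that marginal limits are insufficient \emph{even given} exact conditional centering, so a counterexample must respect what centering forces. Put $X_N=U_N/\sqrt{V_N^D}$ and $Y_N=D_N/\sqrt{V_{\delta,N}^J}$. Because $D_N$ and $V_N^D$ are $\mathcal F_N$-measurable, $\mathbb E_N^J\{X_Ng(Y_N)\}=0$ and $\mathbb E_N^J\{X_N^2g(Y_N)\}=\mathbb E_N^J g(Y_N)$ for every bounded $g$. Since $\mathbb E_N^JX_N^2=1$, the products $X_Ng(Y_N)$ are uniformly integrable. Combining this with Fatou's lemma and $\mathbb E X^2=1$, every subsequential joint limit satisfies $\mathbb E(X\mid Y)=0$ and $\mathbb E(X^2\mid Y)=1$.

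Your first candidate, $\operatorname{sign}(X)|X'|$, is not even uncorrelated with $X$: the covariance is $2/\pi$. In your second, $X=Y\ind\{|Y|\le c\}-Y\ind\{|Y|>c\}$ is a function of $Y$, so $\mathbb E(X\mid Y)=X\neq0$. Both only show the generic fact that uncorrelated Gaussian marginals need not be jointly Gaussian, and neither can arise as a limit here.

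A compatible example is the following. Let $Y\sim N(0,1)$ and $X=\operatorname{sign}(Y)W$, where $W$ is independent of $Y$ with density $\varphi(x)\{1+\epsilon h(x)\}$. Here $h\neq0$ is bounded and odd, $\int xh(x)\varphi(x)\,dx=0$, and $0<\epsilon\sup_x|h(x)|\le1$. Then $X\sim N(0,1)$, $\mathbb E(X\mid Y)=0$ and $\mathbb E(X^2\mid Y)=1$. Yet the characteristic function of $\sqrt\lambda X-\sqrt{1-\lambda}Y$ is $e^{-s^2/2}+\epsilon\,\widehat{h\varphi}(s\sqrt\lambda)\,\mathbb E\{\operatorname{sign}(Y)e^{-is\sqrt{1-\lambda}Y}\}$, with $\widehat{h\varphi}(u)=\int e^{iux}h(x)\varphi(x)\,dx$. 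For $\lambda\in(0,1)$ both factors of the correction are nonzero at suitable $s\neq0$, so the limit is not $N(0,1)$. The paper's proof states this remark without any example, so supplying one of this form would complete that part.
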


\paragraph{Meaning.}
When pilot error is of the same order as current-randomization uncertainty, it must be propagated under a joint pilot--experiment law rather than treated as fixed under the conditional randomization law. Stable conditional normality supplies a primitive route even when the design adapts to the pilot.

The incompatibility result therefore does not imply that local randomization is unusable. It identifies which additional ingredient is required. Coherent routes include: estimate and remove the design-bias loading, either under the negligible-pilot condition in Corollary~\ref{cor:external_debiasing_pilot_size} or with uncertainty propagation as in Proposition~\ref{prop:external_pilot_uncertainty}; change the causal target from the sustained all-treated/all-control GATE to an exposure regime that the design can reproduce; retain the GATE but use the finite-sample or sensitivity-based bias envelopes below; or develop a nonlocal common-factor limit theory for a global sampler rather than applying local HAC outside its certified regime. These routes answer different scientific questions and should not be presented as interchangeable variants of a single asymptotic theorem.

\subsection{Misspecification-aware design bias}

The exact covariance bias identity is most informative under the linear exposure working model, but a confidence statement should distinguish working-model bias from misspecification. Write
\begin{equation}
\begin{split}
    Y_{Na}^{\star}(z)=\mu_{Na}+\beta_{Na}z_a
    +\gamma_N\sum_cH^{\comp}_{N,ac}z_c
    +\eta_N\sum_cH^{\lag}_{N,ac}z_c+r_{Na}(z),
\end{split}
\label{eq:working_model_with_remainder}
\end{equation}
where $r_{Na}$ is unrestricted except through a sensitivity envelope below. Define
\begin{align}
    b^{(w)}_{\comp,N}(\Rnorm_N)
    &:=\sum_{a\in\Ocal_N}\sum_c
    w_{Na}H^{\comp}_{N,ac}(\Rnorm_{N,ac}-1),\\
    b^{(w)}_{\lag,N}(\Rnorm_N)
    &:=\sum_{a\in\Ocal_N}\sum_c
    w_{Na}H^{\lag}_{N,ac}(\Rnorm_{N,ac}-1).
\end{align}

\begin{proposition}[Exact bias decomposition under misspecification]
\label{prop:misspecification_bias}
Under \eqref{eq:working_model_with_remainder}, and assuming that measurement noise is fixed under randomization or satisfies the HT orthogonality condition \eqref{eq:noise_ht_orthogonality},
\begin{equation}
\begin{split}
    \E\widehat\tau^{HT}_{w,\tilde\alpha}-\tau_{w,N}
    ={}&\frac{\gamma_N}{W_N}b^{(w)}_{\comp,N}(\Rnorm_N)
       +\frac{\eta_N}{W_N}b^{(w)}_{\lag,N}(\Rnorm_N)
       +B_{{\rm rem},N},
\end{split}
\label{eq:weighted_bias_inference}
\end{equation}
where
\begin{equation}
    B_{{\rm rem},N}
    :=\frac{1}{W_N}\sum_{a\in\Ocal_N}w_{Na}
    \left[\E\{\psi_{Na}(Z_{Na})r_{Na}(Z_N)\}
    -\{r_{Na}(\one)-r_{Na}(\mathbf 0)\}\right].
    \label{eq:remainder_bias_definition}
\end{equation}
If $|\gamma_N|\le\bar\gamma_N$, $|\eta_N|\le\bar\eta_N$, and
$|B_{{\rm rem},N}|\le\bar B_{{\rm rem},N}$, then
\begin{equation}
    |\E\widehat\tau^{HT}_{w,\tilde\alpha}-\tau_{w,N}|
    \le \mathcal B_{\mathrm{bias},N}^\star(\Rnorm_N)
    :=\frac{\bar\gamma_N|b^{(w)}_{\comp,N}(\Rnorm_N)|
       +\bar\eta_N|b^{(w)}_{\lag,N}(\Rnorm_N)|}{W_N}
       +\bar B_{{\rm rem},N}.
    \label{eq:bias_bound_inference}
\end{equation}
A crude outcome-scale envelope is available when
$\sup_z|r_{Na}(z)|\le\bar r_{Na}$:
\begin{equation}
    |B_{{\rm rem},N}|
    \le \frac{4}{W_N}\sum_{a\in\Ocal_N}w_{Na}\bar r_{Na}.
    \label{eq:crude_remainder_bias_bound}
\end{equation}
\end{proposition}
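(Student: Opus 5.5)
The plan is to linearize expectation and split the structural outcome into its working-model part and its remainder. Write $Y^\star_{Na}(z)=Y^{\rm lin}_{Na}(z)+r_{Na}(z)$, where $Y^{\rm lin}_{Na}$ is the two-channel linear schedule. Then
\[
\widehat\tau^{HT}_{w,\tilde\alpha}=W_N^{-1}\sum_a w_{Na}\psi_{Na}(Z_{Na})\{Y^{\rm lin}_{Na}(Z_N)+\varepsilon_{Na}-\tilde\alpha_{Na}\}+W_N^{-1}\sum_a w_{Na}\psi_{Na}(Z_{Na})r_{Na}(Z_N).
\]
The target splits in the same way: $\tau_{w,N}=\tau^{\rm lin}_{w,N}+W_N^{-1}\sum_a w_{Na}\{r_{Na}(\one)-r_{Na}(\bm 0)\}$.

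For the first piece I will apply Corollary~\ref{cor:weighted_ht_identity} in its two-channel form, since it is linear in the exposure operator. Concretely, I either invoke the signed-operator Corollary~\ref{cor:signed_exposure_operator} with $\Gamma=\gamma_N H^{\comp}+\eta_N H^{\lag}$, or apply the one-component identity to each channel separately. This gives the bias $\gamma_N b^{(w)}_{\comp,N}/W_N+\eta_N b^{(w)}_{\lag,N}/W_N$. The noise condition and the fixed adjustment contribute nothing, exactly as in that corollary. Using the signed corollary avoids any sign restriction on $\gamma_N,\eta_N$. What remains is, by definition, exactly $B_{{\rm rem},N}$, and this proves \eqref{eq:weighted_bias_inference}.

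The bound \eqref{eq:bias_bound_inference} then follows from the triangle inequality together with $|\gamma_N b|\le\bar\gamma_N|b|$, and similarly for the lag channel.

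For the crude envelope I would bound $E|\psi_{Na}(Z_{Na})|$. Since $|\psi|$ equals $1/p$ with probability $p$ and $1/q$ with probability $q$, we get $E|\psi_{Na}|=2$. Hence $|E\{\psi_{Na} r_{Na}\}|\le 2\bar r_{Na}$. Also $|r_{Na}(\one)-r_{Na}(\bm 0)|\le 2\bar r_{Na}$. Summing with weights gives the factor $4$.

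The only delicate point is bookkeeping: $\tilde\alpha$ and the baseline must stay inside the linear piece, and the extension of $r$ by zero outside $\Ocal_N$ has to match the sums. I don't expect a real obstacle.
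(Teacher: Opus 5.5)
Your proposal is correct and follows essentially the same route as the paper. Both arguments split off the remainder and eliminate the adjustment and noise terms via $\E\psi_{Na}=0$ and the orthogonality condition. Both evaluate the linear part with the HT identities $\E\{\psi_{Na}Z_{Na}\}=1$ and $\E\{\psi_{Na}Z_{Nc}\}=\Rnorm_{N,ac}$; you reach these through Corollary~\ref{cor:signed_exposure_operator} rather than applying them directly, which makes explicit that no sign restriction on $\gamma_N,\eta_N$ is needed. Both then finish with the triangle inequality and $\E|\psi_{Na}|=2$.
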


The exact linear model sets $r_{Na}\equiv0$. More generally, $\bar B_{{\rm rem},N}$ is a sensitivity parameter or a bound justified by external validation. The causal CLT follows from Theorem~\ref{thm:cell_clt} when
$\mathcal B_{\mathrm{bias},N}^\star=o(\sqrt{V_N})$; otherwise the same quantity can be added to the confidence-interval half-width.

\subsection{Disconnected metrics and additional HAC results}
\label{app:disconnected_hac_extensions}

Connectedness ensures that every covariance in $\Sigma_N$ belongs to a finite shell. If one instead retains a disconnected natural graph, setting $K(\infty)=0$ deletes every cross-component pair. Empty finite shells cannot justify that deletion. The next result gives the exact additional conditions.

Let $\widetilde{\mathcal G}_N=(\Ocal_N,\widetilde{\mathcal E}_N)$ be an undirected graph with extended shortest-path distance $\widetilde d_N\in\mathbb N_0\cup\{\infty\}$. For finite $s$, define
\begin{align}
\widetilde{\mathcal B}_N(a;s)&:=\{c:\widetilde d_N(a,c)\le s\},
&\partial\widetilde{\mathcal B}_N(a;s)&:=\{c:\widetilde d_N(a,c)=s\},\\
\widetilde\delta_N^\partial(s;k)&:=\frac1{M_N}\sum_a
|\partial\widetilde{\mathcal B}_N(a;s)|^k,\\
\widetilde\Delta_N(s,m;k)&:=\frac1{M_N}\sum_a
\max_{c\in\partial\widetilde{\mathcal B}_N(a;s)}
|\widetilde{\mathcal B}_N(a;m)\setminus
 \widetilde{\mathcal B}_N(c;s-1)|^k,\\
\widetilde c_N(s,m;k)&:=\inf_{\alpha>1}
\{\widetilde\Delta_N(s,m;k\alpha)\}^{1/\alpha}
\left\{\widetilde\delta_N^\partial
\left(s;\frac{\alpha}{\alpha-1}\right)\right\}^{1-1/\alpha},
\label{eq:disconnected_finite_shell_complexity}
\end{align}
with the same empty-shell conventions as before and
$\widetilde{\mathcal B}_N(c;-1)=\varnothing$.

\begin{definition}[Finite-shell graph-$\psi$ dependence]
\label{def:finite_shell_graph_psi}
The centered array is finite-shell graph-$\psi$ dependent on
$\widetilde{\mathcal G}_N$ if there are coefficients
$\{\theta_N(s):s\ge0\}\subset[0,1]$ with $\theta_N(0)=1$ and
$\theta_N(s)$ nonincreasing, and a constant $C_\psi<\infty$, such that
\eqref{eq:graph_psi_definition} holds for every pair of disjoint nonempty sets
$A,B$ satisfying
\[
    s\le \widetilde d_N(A,B)<\infty.
\]
No covariance restriction is imposed solely from
$\widetilde d_N(A,B)=\infty$; cross-component dependence is instead measured by
\eqref{eq:infinite_shell_target_remainder} and
\eqref{eq:infinite_shell_stochastic_remainder}.
\end{definition}

Define
\begin{align}
\widetilde{\mathsf K}_N(a,c)&:=K\{\widetilde d_N(a,c)/b_N\},
&\mathcal I_N(b_N)&:=\{(a,c):\widetilde d_N(a,c)\le b_N\},
\label{eq:disconnected_kernel_pairs}\\
\mathfrak B_{\infty,N}^{\rm abs}
&:=\frac1{M_N}\sum_{a,c:\,\widetilde d_N(a,c)=\infty}
|\Cov_N(\xi_{Na},\xi_{Nc})|,
\label{eq:infinite_shell_target_remainder}
\end{align}
and, writing
$\widetilde d_N(\{a,c\},\{k,l\})=
\min_{u\in\{a,c\},v\in\{k,l\}}\widetilde d_N(u,v)$,
\begin{equation}
\mathfrak S_{\infty,N}(b_N)
:=\frac1{M_N^2}
\sum_{\substack{(a,c),(k,l)\in\mathcal I_N(b_N)\\
\widetilde d_N(\{a,c\},\{k,l\})=\infty}}
\left|\Cov_N(\xi_{Na}\xi_{Nc},\xi_{Nk}\xi_{Nl})\right|.
\label{eq:infinite_shell_stochastic_remainder}
\end{equation}

\begin{proposition}[Oracle network HAC on a disconnected graph]
\label{prop:oracle_network_hac_disconnected}
Suppose Assumption~\ref{ass:weights_moments} holds, Definition~\ref{def:finite_shell_graph_psi} holds, $\inf_N\Sigma_N>0$, and
\begin{align}
&\sum_{s\ge1}|K(s/b_N)-1|\,
\widetilde\delta_N^\partial(s;1)\theta_N(s)^{1-2/\nu}\xrightarrow{p}0,
\label{eq:hac_disconnected_finite_bias_rate}\\
&\frac1{M_N}\sum_{s\ge0}\widetilde c_N(s,b_N;2)
\theta_N(s)^{1-4/\nu}\xrightarrow{p}0,
\label{eq:hac_disconnected_finite_stochastic_rate}\\
&\mathfrak B_{\infty,N}^{\rm abs}\xrightarrow{p}0,
\qquad
\mathfrak S_{\infty,N}(b_N)\xrightarrow{p}0.
\label{eq:hac_infinite_shell_rates}
\end{align}
Then the oracle estimator
\begin{equation}
\widehat\Sigma_N^{\rm or,disc}
:=\frac1{M_N}\sum_{a,c}
\widetilde{\mathsf K}_N(a,c)\xi_{Na}\xi_{Nc}
\label{eq:oracle_hac_disconnected}
\end{equation}
satisfies
\begin{equation}
\widehat\Sigma_N^{\rm or,disc}-\Sigma_N\xrightarrow{p\mid\mathcal F}0,
\qquad
\frac{\widehat\Sigma_N^{\rm or,disc}/M_N}{V_N}\xrightarrow{p\mid\mathcal F}1.
\label{eq:oracle_hac_disconnected_consistency}
\end{equation}
\end{proposition}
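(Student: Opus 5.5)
We will prove the statement by splitting the oracle estimator into a finite-distance part and a part over pairs at infinite distance, and then running the connected argument of Proposition~\ref{prop:oracle_network_hac_connected} on the finite part. Because $K(\infty)=0$, the oracle form sums only over pairs with finite $\widetilde d_N(a,c)\le b_N$. Hence $\widehat\Sigma_N^{\rm or,disc}-\Sigma_N=(\widehat\Sigma_N^{\rm or,disc}-\E_N\widehat\Sigma_N^{\rm or,disc})+(\E_N\widehat\Sigma_N^{\rm or,disc}-\Sigma_N)$. The bias term further decomposes as
\[
\E_N\widehat\Sigma_N^{\rm or,disc}-\Sigma_N
=\frac1{M_N}\sum_{\widetilde d_N(a,c)<\infty}\{\widetilde{\mathsf K}_N(a,c)-1\}\Cov_N(\xi_{Na},\xi_{Nc})
-\frac1{M_N}\sum_{\widetilde d_N(a,c)=\infty}\Cov_N(\xi_{Na},\xi_{Nc}).
\]
The second sum is bounded in absolute value by $\mathfrak B^{\rm abs}_{\infty,N}$, so it vanishes by \eqref{eq:hac_infinite_shell_rates}.

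For the finite-shell part of the bias, we group pairs by the shell $s=\widetilde d_N(a,c)$. Diagonal terms cancel because $K(0)=1$. For singleton sets at finite distance $s$, Definition~\ref{def:finite_shell_graph_psi} together with the standard covariance inequality for graph-$\psi$ arrays gives $|\Cov_N(\xi_{Na},\xi_{Nc})|\le C\,C_\xi^2\,\theta_N(s)^{1-2/\nu}$; this is the usual truncation argument from the KMS lemma, which uses $\nu$ moments. Summing over pairs in each shell yields $M_N\widetilde\delta_N^\partial(s;1)$ pairs. The finite bias is therefore bounded by a constant times the left side of \eqref{eq:hac_disconnected_finite_bias_rate}, which tends to zero.

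For the stochastic term, we bound $\Var_N(\widehat\Sigma_N^{\rm or,disc})$ by $M_N^{-2}\sum|\Cov_N(\xi_{Na}\xi_{Nc},\xi_{Nk}\xi_{Nl})|$ over pairs $(a,c),(k,l)\in\mathcal I_N(b_N)$, using boundedness of $K$. We split this sum according to whether $\widetilde d_N(\{a,c\},\{k,l\})$ is finite or infinite. The infinite part is at most a constant times $\mathfrak S_{\infty,N}(b_N)$, which tends to zero. The finite part is exactly the quadruple sum controlled in the Kojevnikov--Marmer--Song HAC variance proof. That proof applies the covariance inequality to products at set-distance $s$, using $4$-th moments and exponent $1-4/\nu$, and counts configurations through $\widetilde\Delta_N$ and $\widetilde\delta_N^\partial$ via Hölder's inequality with parameter $\alpha$. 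Since this argument invokes the dependence bound only at finite distances, it transfers verbatim and gives the bound $M_N^{-1}\sum_s\widetilde c_N(s,b_N;2)\theta_N(s)^{1-4/\nu}\to0$. Chebyshev's inequality under $\mathbb P_N$ then gives conditional convergence in probability. The displayed rates hold only in outer probability, so we restrict to the events where they are small; these events have probability tending to one.

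The main obstacle is this finite part of the variance. We must check that in each KMS quadruple case the relevant distance is between sets that lie in a common component, so that Definition~\ref{def:finite_shell_graph_psi} applies. Once the two index pairs are at finite mutual distance, all indices involved lie in one or two components joined by a finite path, so this holds. We also need to confirm that the Hölder counting bound uses only finite balls. Ratio consistency then follows from $\inf_N\Sigma_N>0$ together with $\Sigma_N=M_NV_N$.
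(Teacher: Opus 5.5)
Your proposal is correct and follows essentially the same route as the paper's proof. It uses the same expectation decomposition into a finite-shell kernel-bias term and an infinite-distance covariance term bounded by $\mathfrak B_{\infty,N}^{\rm abs}$, the same finite/infinite split of the pair-product quadruple sum (controlled by the KMS shell count and by $\mathfrak S_{\infty,N}(b_N)$), and conditional Chebyshev on the good-environment events.

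One phrasing slip should be corrected. Each included pair $(a,c)\in\mathcal I_N(b_N)$ lies in a single component, so $\widetilde d_N(\{a,c\},\{k,l\})<\infty$ forces all four indices into one component, not ``two components joined by a finite path.'' That is precisely why Definition~\ref{def:finite_shell_graph_psi} applies to the finite part.
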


\begin{corollary}[Conditionally independent connected components]
\label{cor:component_independent_hac}
If, conditional on $\mathcal F_N$, the sigma-fields generated by contributions in distinct connected components of $\widetilde{\mathcal G}_N$ are mutually independent, then
$\mathfrak B_{\infty,N}^{\rm abs}=\mathfrak S_{\infty,N}(b_N)=0$ exactly. Hence the two finite-shell rates in Proposition~\ref{prop:oracle_network_hac_disconnected} suffice. This is the natural disconnected-graph corollary for genuinely local-factor or dependency-graph architectures whose primitive shocks are component-specific.
\end{corollary}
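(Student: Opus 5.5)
The plan is to read the two zero-remainder conditions directly off the exact covariance decomposition of the HAC target. Recall $\Sigma_N=M_N^{-1}\sum_{a,c}\Cov_N(\xi_{Na},\xi_{Nc})$. The pairs $(a,c)$ with $\widetilde d_N(a,c)=\infty$ are exactly the pairs lying in distinct connected components of $\widetilde{\mathcal G}_N$. By hypothesis, the sigma-fields generated by contributions in distinct components are conditionally independent given $\mathcal F_N$. So for every such pair, $\xi_{Na}$ and $\xi_{Nc}$ are conditionally independent, centered, and square integrable; the moment bound with $\nu>4$ in Assumption~\ref{ass:weights_moments} guarantees integrability. Hence $\Cov_N(\xi_{Na},\xi_{Nc})=0$, and every summand of \eqref{eq:infinite_shell_target_remainder} vanishes. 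This gives $\mathfrak B_{\infty,N}^{\rm abs}=0$ identically on the good environment.

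For $\mathfrak S_{\infty,N}(b_N)$, fix pairs $(a,c),(k,l)\in\mathcal I_N(b_N)$ with $\widetilde d_N(\{a,c\},\{k,l\})=\infty$. Membership in $\mathcal I_N(b_N)$ means $\widetilde d_N(a,c)\le b_N<\infty$, so $a$ and $c$ lie in a common component $\mathcal C$. Likewise $k$ and $l$ lie in a common component $\mathcal C'$. Infinite distance between the two sets forces $\mathcal C\ne\mathcal C'$. Then the product $\xi_{Na}\xi_{Nc}$ is measurable with respect to the sigma-field of component $\mathcal C$, and $\xi_{Nk}\xi_{Nl}$ with respect to that of $\mathcal C'$. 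Mutual conditional independence makes these two products conditionally independent. Their covariance is therefore zero, provided both are square integrable, which follows from Hölder and the uniform $L^\nu$ bound with $\nu>4$. Every term of \eqref{eq:infinite_shell_stochastic_remainder} thus vanishes, and $\mathfrak S_{\infty,N}(b_N)=0$.

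Since both remainders are exactly zero, the conditions in \eqref{eq:hac_infinite_shell_rates} hold trivially. Proposition~\ref{prop:oracle_network_hac_disconnected} then yields \eqref{eq:oracle_hac_disconnected_consistency} under the remaining finite-shell rates \eqref{eq:hac_disconnected_finite_bias_rate}--\eqref{eq:hac_disconnected_finite_stochastic_rate}, together with the finite-shell graph-$\psi$ condition and $\inf_N\Sigma_N>0$.

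The only delicate point is the second step. One must note that the kernel support $\mathcal I_N(b_N)$ excludes split pairs, so that each product is measurable within a single component. One must also justify the integrability of the products, which is where $\nu>4$ is used.
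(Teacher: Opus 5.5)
Your proposal is correct and matches the paper's proof: pairs at infinite distance lie in distinct components, so their covariances vanish. Included kernel pairs lie within single components, so pair-products at infinite set distance come from different components and are conditionally independent, which makes both remainders exactly zero. Your integrability remark using the uniform $L^\nu$ bound with $\nu>4$ is a harmless addition that the paper leaves implicit.
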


\begin{example}[Why both infinite-shell conditions are necessary]
\label{ex:disconnected_hac_counterexample}
Let the graph contain $M$ isolated vertices. First take
$\xi_i=\epsilon_i+U/\sqrt M$, where
$\epsilon_1,\ldots,\epsilon_M,U$ are independent $N(0,1)$. Then
$\Sigma_M=2$, while the oracle kernel with $K(\infty)=0$ is
$M^{-1}\sum_i\xi_i^2\to_p1$. Every displayed finite-shell rate holds, but
$\mathfrak B_{\infty,M}^{\rm abs}=(M-1)/M\to1$.

The target remainder alone is not enough. Take instead
$\xi_i=E_iU$, where $E_i$ are independent Rademacher variables and
$U\sim N(0,1)$ is independent. Now all cross-vertex covariances vanish, so
$\mathfrak B_{\infty,M}^{\rm abs}=0$ and $\Sigma_M=1$, but the same oracle estimator equals $U^2$ and does not converge to one. Here
$\mathfrak S_{\infty,M}(b_M)=2(M-1)/M\to2$. Thus cross-component covariance of included pair-products must also be controlled.
\end{example}

From this point onward, $\mathsf K_N$ and $\widehat\Sigma_N^{\rm or}$ denote either the connected objects certified by Proposition~\ref{prop:oracle_network_hac_connected} or the disconnected objects certified by Proposition~\ref{prop:oracle_network_hac_disconnected}. The first rate in either proposition controls finite-shell covariance truncation and kernel smoothing; the second controls finite-shell stochastic error. In the disconnected case, \eqref{eq:hac_infinite_shell_rates} is an additional substantive requirement, not a consequence of empty shells.

\begin{corollary}[Primitive PSD-kernel centering rate]
\label{cor:primitive_centering_rate}
Under either oracle-HAC proposition, suppose $\mathsf K_N\succeq0$ and $\|\mathsf K_N\|_{\rm op}\le\Lambda_N$. The single primitive condition
\begin{equation}
    \|\widehat q_N^0-\bar q_N\|_2^2
    =o_p\!\left(\frac{\sigma_N^2}{\Lambda_N}\right)
    \label{eq:primitive_centering_l2_rate}
\end{equation}
is sufficient for \eqref{eq:centering_error_conditions}. If $\sigma_N^2\asymp M_N$ and $\Lambda_N\lesssim D_N(b_N)$, where $D_N(b_N)$ is the maximum bandwidth-ball size for the selected kernel graph, then it is enough that
\begin{equation}
    M_N^{-1}\|\widehat q_N^0-\bar q_N\|_2^2
    =o_p\{D_N(b_N)^{-1}\}.
    \label{eq:average_centering_mse_rate}
\end{equation}
\end{corollary}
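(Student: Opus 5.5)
The plan is to verify the two displayed conditions in \eqref{eq:centering_error_conditions} directly from the single $\ell_2$ rate \eqref{eq:primitive_centering_l2_rate}, and then invoke Proposition~\ref{prop:model_assisted_hac}. That proposition already converts those two conditions, together with the oracle-HAC conclusion, into $\widehat V_N^{\rm MA,+}/V_N\to1$. Throughout we write $r_N=\widehat q_N^0-\bar q_N$.

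\emph{Quadratic term.} Since $\mathsf K_N\succeq0$ and $\|\mathsf K_N\|_{\rm op}\le\Lambda_N$, we have
\[
0\le r_N^\top\mathsf K_Nr_N\le\Lambda_N\|r_N\|_2^2 .
\]
By \eqref{eq:primitive_centering_l2_rate} the right-hand side is $\Lambda_N\cdot o_p(\sigma_N^2/\Lambda_N)=o_p(\sigma_N^2)$.

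\emph{Cross term.} Positive semidefiniteness makes $\langle x,y\rangle_{\mathsf K}=x^\top\mathsf K_Ny$ a semi-inner product, so Cauchy--Schwarz gives
\[
|r_N^\top\mathsf K_N\xi_N|\le(r_N^\top\mathsf K_Nr_N)^{1/2}(\xi_N^\top\mathsf K_N\xi_N)^{1/2}.
\]
The second factor equals $(M_N\widehat\Sigma_N^{\rm or})^{1/2}$. Under either oracle-HAC proposition, $\widehat\Sigma_N^{\rm or}-\Sigma_N\to0$ in conditional probability. Because $\inf_N\Sigma_N>0$, this yields $\widehat\Sigma_N^{\rm or}=O_{p\mid\mathcal F}(\Sigma_N)$ and hence $\xi_N^\top\mathsf K_N\xi_N=O_{p\mid\mathcal F}(\sigma_N^2)$. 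The first factor is $o_p(\sigma_N)$ by the quadratic-term step. Their product is therefore $o_{p\mid\mathcal F}(\sigma_N^2)$.

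The main obstacle is bookkeeping between the two probability modes. The centering error $r_N$ is $\mathcal F_N$-measurable (the external pilot is included in $\mathcal F_N$), so its rate holds under the outer law. The oracle quadratic form is random under $\mathbb P_N$. I would handle this by fixing $\epsilon,\eta>0$ and splitting on the $\mathcal F_N$-event $\{r_N^\top\mathsf K_Nr_N\le\eta^2\sigma_N^2\}$, whose outer probability tends to one. On that event, $\mathbb P_N(|r_N^\top\mathsf K_N\xi_N|>\epsilon\sigma_N^2)\le\mathbb P_N(\xi_N^\top\mathsf K_N\xi_N>\epsilon^2\sigma_N^2/\eta^2)$. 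Choosing $\eta$ small and using the conditional tightness above makes this small with outer probability tending to one, which is exactly the notation \eqref{eq:conditional_probability_convergence_notation}.

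\emph{Root-$M_N$ specialization.} If $\sigma_N^2\asymp M_N$ and $\Lambda_N\lesssim D_N(b_N)$, then $\sigma_N^2/\Lambda_N\gtrsim M_N/D_N(b_N)$. Dividing \eqref{eq:primitive_centering_l2_rate} by $M_N$ shows that \eqref{eq:average_centering_mse_rate} is sufficient. For the disconnected case, the identical argument applies with $\widetilde{\mathsf K}_N$ in place of $\mathsf K_N$.
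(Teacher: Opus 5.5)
Your proposal is correct and follows the paper's proof essentially step for step. Like the paper, you bound the quadratic term by PSD plus the operator norm, you control the cross term by Cauchy--Schwarz in the $\mathsf K_N$ seminorm together with $\xi_N^\top\mathsf K_N\xi_N=O_{p\mid\mathcal F}(\sigma_N^2)$ from oracle consistency, and you obtain the root-$M_N$ case by rearrangement. Your explicit split on the $\mathcal F_N$-event $\{r_N^\top\mathsf K_Nr_N\le\eta^2\sigma_N^2\}$, taking $\eta<\epsilon$ so that oracle ratio consistency applies, simply makes precise the outer-versus-conditional bookkeeping that the paper leaves implicit.
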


\begin{corollary}[Operator-norm route for a possibly indefinite kernel]
\label{cor:indefinite_kernel_centering}
Under either oracle-HAC proposition, let $\mathsf K_N$ be symmetric with
$\|\mathsf K_N\|_{\rm op}\le\Lambda_N$; positive semidefiniteness is not required. Suppose
$\|\xi_N\|_2=O_{p\mid\mathcal F}(\sqrt{M_N})$ and, with
$r_N=\widehat q_N^0-\bar q_N$,
\begin{equation}
\Lambda_N\|r_N\|_2^2=o_p(\sigma_N^2),
\qquad
\Lambda_N\sqrt{M_N}\|r_N\|_2=o_p(\sigma_N^2).
\label{eq:indefinite_kernel_centering_rates}
\end{equation}
Then $\widehat V_N^{\rm MA,+}/V_N\xrightarrow{p\mid\mathcal F}1$.
Under the external-pilot model \eqref{eq:external_pilot_mean_model}--\eqref{eq:external_pilot_parameter_rate}, if $\sigma_N^2\asymp M_N$ and $\mathsf K_N(a,a)=1$, the single sufficient rate
\begin{equation}
\frac{\Lambda_N^2d_{q,N}}{n_{0,N}}\xrightarrow{p}0
\label{eq:indefinite_external_pilot_rate}
\end{equation}
implies \eqref{eq:indefinite_kernel_centering_rates}. This route is more demanding than the PSD seminorm argument and is therefore kept in the supplement.
\end{corollary}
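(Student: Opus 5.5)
The plan is to reduce the corollary to Proposition~\ref{prop:model_assisted_hac}. Proposition~\ref{prop:model_assisted_hac} is stated under either oracle-HAC proposition and needs no positive semidefiniteness. It gives $\widehat V_N^{\rm MA,+}/V_N\xrightarrow{p\mid\mathcal F}1$ once the two conditions in \eqref{eq:centering_error_conditions} hold. The reason is the exact expansion
\[
\widehat\Sigma_N^{\rm MA}=\widehat\Sigma_N^{\rm or}-\frac{2}{M_N}r_N^\top\mathsf K_N\xi_N+\frac1{M_N}r_N^\top\mathsf K_Nr_N,
\]
which uses the symmetry of $\mathsf K_N$ and $\widehat\xi_N^{\rm MA}=\xi_N-r_N$. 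This is compared with $\Sigma_N=\sigma_N^2/M_N$, which is bounded away from zero, so the outer positive part $[\cdot]_+$ is asymptotically inactive. The task is therefore to verify $r_N^\top\mathsf K_Nr_N=o_p(\sigma_N^2)$ and $r_N^\top\mathsf K_N\xi_N=o_{p\mid\mathcal F}(\sigma_N^2)$ using only the operator norm.

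\textbf{General bounds.} Without positive semidefiniteness, the seminorm Cauchy--Schwarz step of Corollary~\ref{cor:primitive_centering_rate} is unavailable. I would instead use the elementary bounds
\[
|r_N^\top\mathsf K_Nr_N|\le\Lambda_N\|r_N\|_2^2,
\qquad
|r_N^\top\mathsf K_N\xi_N|\le\Lambda_N\|r_N\|_2\|\xi_N\|_2 .
\]
The first bound gives the quadratic condition directly from the first rate in \eqref{eq:indefinite_kernel_centering_rates}. For the cross term, I use that $\widehat q_N^0$ is a pilot or pre-randomization quantity included in $\mathcal F_N$, so $r_N$ is $\mathcal F_N$-measurable. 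Then $\|\xi_N\|_2=O_{p\mid\mathcal F}(\sqrt{M_N})$ gives
\[
|r_N^\top\mathsf K_N\xi_N|\le \Lambda_N\sqrt{M_N}\|r_N\|_2\cdot O_{p\mid\mathcal F}(1),
\]
and the second rate in \eqref{eq:indefinite_kernel_centering_rates} makes this $o_{p\mid\mathcal F}(\sigma_N^2)$. This step, combining an outer-law $o_p$ factor with a conditional $O_{p\mid\mathcal F}$ factor, is where care is needed. I would handle it as follows: for fixed $\epsilon$ and large $C$, split on the $\mathcal F_N$-event $\{\Lambda_N\sqrt{M_N}\|r_N\|_2/\sigma_N^2\le\epsilon/C\}$, whose outer probability tends to one. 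On that event, $\mathbb P_N(|r_N^\top\mathsf K_N\xi_N|>\epsilon\sigma_N^2)\le\mathbb P_N(\|\xi_N\|_2>C\sqrt{M_N})$, which is small by choice of $C$.

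\textbf{External-pilot specialization.} Under \eqref{eq:external_pilot_mean_model}--\eqref{eq:external_pilot_parameter_rate}, we have $r_N=\mathsf P_N(\widehat\vartheta_N-\vartheta_{0,N})$. Hence
\[
\|r_N\|_2^2\le M_NC_P\|\widehat\vartheta_N-\vartheta_{0,N}\|_2^2=O_p(M_Nd_{q,N}/n_{0,N}).
\]
With $\sigma_N^2\asymp M_N$, the first condition reduces to $\Lambda_Nd_{q,N}/n_{0,N}\to_p0$. The second condition reduces to $\Lambda_N\sqrt{d_{q,N}/n_{0,N}}=(\Lambda_N^2d_{q,N}/n_{0,N})^{1/2}\to_p0$. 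The normalization $\mathsf K_N(a,a)=1$ gives $\Lambda_N\ge e_a^\top\mathsf K_Ne_a=1$, so $\Lambda_Nd_{q,N}/n_{0,N}\le\Lambda_N^2d_{q,N}/n_{0,N}$, and the single rate \eqref{eq:indefinite_external_pilot_rate} implies both conditions.

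For completeness, I would also note that the hypothesis on $\|\xi_N\|_2$ is mild under the stated moment condition. Indeed, $\mathbb E_N\|\xi_N\|_2^2\le M_NC_\xi^2$ on $E_N^{\rm good}$, and conditional Markov then gives $\|\xi_N\|_2=O_{p\mid\mathcal F}(\sqrt{M_N})$. Beyond the conditional/outer bookkeeping in the cross term, the argument is routine.
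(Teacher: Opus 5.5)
Your proposal is correct and follows the paper's own proof. It uses the same operator-norm bounds on $r_N^\top\mathsf K_Nr_N$ and $r_N^\top\mathsf K_N\xi_N$, fed into the decomposition behind Proposition~\ref{prop:model_assisted_hac}, and the same external-pilot reduction via $\lambda_{\max}(\mathsf P_N^\top\mathsf P_N)\le C_PM_N$ and $\Lambda_N\ge1$ from the unit diagonal. Two of your additions fill in steps the paper only asserts: the explicit splitting argument that combines the outer $o_p$ rate with the conditional $O_{p\mid\mathcal F}$ factor, and the observation that the $\|\xi_N\|_2$ hypothesis already follows from the moment condition.
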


There is also a deliberately conservative alternative that avoids estimating heterogeneous means. Suppose a symmetric kernel matrix $\mathsf K_N\succeq0$ is constructed, for example as a Gram matrix of overlapping graph-block indicators. Define
\begin{equation}
    \widehat\Sigma_N^{\rm UC}:=\frac{1}{M_N}Q_N^\top\mathsf K_NQ_N,
    \qquad
    \widehat V_N^{\rm UC}:=\frac{\widehat\Sigma_N^{\rm UC}}{M_N}.
    \label{eq:uncentered_psd_hac}
\end{equation}
A generic distance kernel need not produce a positive-semidefinite matrix on an arbitrary graph.

\begin{proposition}[PSD uncentered conservative variance]
\label{prop:uncentered_conservative}
Let
\begin{equation}
    \Sigma_{K,N}:=\frac{1}{M_N}
    \sum_{a,c}\mathsf K_N(a,c)\Cov(\xi_{Na},\xi_{Nc}).
    \label{eq:smoothed_long_run_variance}
\end{equation}
If $\mathsf K_N\succeq0$,
$\Sigma_{K,N}\ge\Sigma_N-o(\Sigma_N)$, and
$\widehat\Sigma_N^{\rm UC}-\E\widehat\Sigma_N^{\rm UC}=o_p(\Sigma_N)$, then
\begin{equation}
    \widehat\Sigma_N^{\rm UC}\ge\Sigma_N-o_p(\Sigma_N).
    \label{eq:uncentered_conservative_result}
\end{equation}
Thus $\widehat V_N^{\rm UC}$ gives asymptotically conservative normal inference whenever the centered CLT holds. The price is potentially substantial overcoverage through the nonnegative term
$M_N^{-1}\bar q_N^\top\mathsf K_N\bar q_N$.
\end{proposition}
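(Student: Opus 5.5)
The plan is to decompose $Q_N$ into its mean and fluctuation and then compare expectations. With $Q_{Na}=\bar q_{Na}+\xi_{Na}$, expand
\[
M_N\widehat\Sigma_N^{\rm UC}=\xi_N^\top\mathsf K_N\xi_N+2\bar q_N^\top\mathsf K_N\xi_N+\bar q_N^\top\mathsf K_N\bar q_N .
\]
The assignment-law expectation of the cross term is zero because $\E\xi_N=0$ and $\bar q_N$ and $\mathsf K_N$ are $\mathcal F_N$-measurable. The expectation of the first term is $\sum_{a,c}\mathsf K_N(a,c)\Cov(\xi_{Na},\xi_{Nc})=M_N\Sigma_{K,N}$. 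This gives the exact identity
\[
\E\widehat\Sigma_N^{\rm UC}=\Sigma_{K,N}+M_N^{-1}\bar q_N^\top\mathsf K_N\bar q_N .
\]
Positive semidefiniteness of $\mathsf K_N$ makes the last term nonnegative. This is the term the statement identifies as the overcoverage price, and it is the only place the PSD hypothesis is used.

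Next, chain the assumed inequalities. From $\Sigma_{K,N}\ge\Sigma_N-o(\Sigma_N)$ we get
\[
\E\widehat\Sigma_N^{\rm UC}\ge\Sigma_{K,N}\ge\Sigma_N-o(\Sigma_N).
\]
The concentration hypothesis $\widehat\Sigma_N^{\rm UC}-\E\widehat\Sigma_N^{\rm UC}=o_p(\Sigma_N)$ then gives $\widehat\Sigma_N^{\rm UC}\ge\Sigma_N-o(\Sigma_N)-o_p(\Sigma_N)=\Sigma_N-o_p(\Sigma_N)$, which is \eqref{eq:uncentered_conservative_result}. The $o_p$ statements are read in the conditional sense \eqref{eq:conditional_probability_convergence_notation}. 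A deterministic $o(\Sigma_N)$ sequence, or one that is $\mathcal F_N$-measurable and $o_p$ under the outer law, is absorbed into it.

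For the inferential conclusion, write
\[
\frac{\widehat\tau-\E_N\widehat\tau}{\sqrt{\widehat V_N^{\rm UC}}}=\frac{T_N}{\sigma_N}\Big(\frac{\widehat\Sigma_N^{\rm UC}}{\Sigma_N}\Big)^{-1/2},
\]
using $\widehat V_N^{\rm UC}/V_N=\widehat\Sigma_N^{\rm UC}/\Sigma_N$. The preceding step shows that $\Pr_N(\widehat\Sigma_N^{\rm UC}/\Sigma_N<1-\epsilon)\to_p0$ for every $\epsilon>0$. On the complementary event, the absolute value of the studentized statistic is at most $|T_N/\sigma_N|(1-\epsilon)^{-1/2}$. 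The centered CLT \eqref{eq:clt_centered} therefore gives
\[
\limsup\Pr_N\{\text{noncoverage at level }\alpha\}\le\Pr\{|N(0,1)|>z_{1-\alpha/2}\sqrt{1-\epsilon}\}.
\]
Letting $\epsilon\downarrow0$ yields asymptotic conservativeness. If $\widehat V_N^{\rm UC}=0$ with vanishing probability, the displayed convention handles it.

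The main obstacle is careful bookkeeping rather than new mathematics. Two points need attention: the identity $\E\,\xi^\top\mathsf K\xi=M_N\Sigma_{K,N}$ requires the moment condition of Assumption~\ref{ass:weights_moments} for the expectations to exist, and the deterministic $o(\Sigma_N)$ slack must be combined with conditional $o_{p\mid\mathcal F}$ slack. Both are immediate from the stated hypotheses.
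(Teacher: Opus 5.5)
Your proposal is correct and follows essentially the same route as the paper. Both arguments write $Q_N=\bar q_N+\xi_N$, derive $\E\widehat\Sigma_N^{\rm UC}=\Sigma_{K,N}+M_N^{-1}\bar q_N^\top\mathsf K_N\bar q_N$, use $\mathsf K_N\succeq0$ for nonnegativity of the last term, and then chain the one-sided approximation with the concentration hypothesis. Your extra step, the event bound $\widehat\Sigma_N^{\rm UC}/\Sigma_N\ge1-\epsilon$ combined with the centered CLT, fills in the ``Thus \ldots conservative normal inference'' sentence that the paper's proof leaves implicit.
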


Let $\widehat V_N\ge0$ denote either the positive-part model-assisted estimator or a conservative PSD estimator. With a nonnegative valid sensitivity bound $\widehat B_N\ge0$, the bias-aware interval is
\begin{equation}
    \mathrm{CI}_{\rm BA}
    =\left[\widehat\tau^{HT}_{w,\tilde\alpha}
    \pm\left\{z_{1-\alpha_0/2}\sqrt{\widehat V_N}+\widehat B_N\right\}\right].
    \label{eq:bias_aware_ci}
\end{equation}
\begin{proposition}[Bias-aware coverage]
\label{prop:bias_aware_coverage}
Suppose the centered CLT in \eqref{eq:clt_centered} holds, $\widehat V_N\ge0$, and $\widehat B_N\ge0$. Assume there is a deterministic sequence $\varepsilon_N\downarrow0$ such that
\begin{equation}
    \Pr\{\widehat V_N\ge(1-\varepsilon_N)V_N\}\longrightarrow1.
    \label{eq:one_sided_variance_calibration}
\end{equation}
Let $s_N\ge0$ be $\mathcal F_N$-measurable with $s_N/\sqrt{V_N}\xrightarrow{p}0$. If, for events $\mathcal E_N$ with $\liminf_N\Pr(\mathcal E_N)\ge1-\delta$,
\begin{equation}
    |\E\widehat\tau^{HT}_{w,\tilde\alpha}-\tau_{w,N}|
    \le\widehat B_N+s_N
    \quad\text{on }\mathcal E_N,
    \label{eq:estimated_bias_envelope_event}
\end{equation}
then the interval in \eqref{eq:bias_aware_ci} has
\begin{equation}
    \liminf_{N\to\infty}
    \Pr\{\tau_{w,N}\in\mathrm{CI}_{\rm BA}\}
    \ge1-\alpha_0-\delta.
    \label{eq:bias_aware_coverage_bound}
\end{equation}
A fixed valid envelope has $\delta=0$ and $s_N=0$.
\end{proposition}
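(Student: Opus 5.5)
The plan is to work on the coverage event and split the error into a centered part and a bias part. Write $\mu_N:=\E\widehat\tau^{HT}_{w,\tilde\alpha}$ (meaning $\mathbb E_N$) and $\beta_N:=\mu_N-\tau_{w,N}$. Then
\[
\tau_{w,N}\notin\mathrm{CI}_{\rm BA}
\iff |\widehat\tau^{HT}_{w,\tilde\alpha}-\mu_N+\beta_N|>z\sqrt{\widehat V_N}+\widehat B_N ,
\]
with $z=z_{1-\alpha_0/2}$. By the triangle inequality this implies $|\widehat\tau-\mu_N|>z\sqrt{\widehat V_N}+\widehat B_N-|\beta_N|$. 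On $\mathcal E_N$ the bias envelope gives $\widehat B_N-|\beta_N|\ge -s_N$, so non-coverage on $\mathcal E_N$ forces
\[
|T_N/\sigma_N|>z\sqrt{\widehat V_N/V_N}-s_N/\sqrt{V_N}.
\]
This uses $T_N/\sigma_N=(\widehat\tau-\mu_N)/\sqrt{V_N}$ from \eqref{eq:clt_centered}.

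Next, introduce the event $\mathcal A_N=\{\widehat V_N\ge(1-\varepsilon_N)V_N\}$ from \eqref{eq:one_sided_variance_calibration}. Then
\[
\Pr(\text{non-coverage})\le \Pr\bigl(\mathcal E_N^c\bigr)+\Pr\bigl(\mathcal A_N^c\bigr)+\Pr\bigl(|T_N/\sigma_N|>z\sqrt{1-\varepsilon_N}-s_N/\sqrt{V_N}\bigr).
\]
The first term has $\limsup\le\delta$ and the second tends to zero. For the third, fix $\eta>0$ and further intersect with $\{s_N/\sqrt{V_N}\le\eta\}$, which has probability tending to one because $s_N/\sqrt{V_N}\to_p0$. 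For large $N$ we have $\varepsilon_N\le\eta$, so the probability is at most $\Pr(|T_N/\sigma_N|>z\sqrt{1-\eta}-\eta)+o(1)$.

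The centered CLT is conditional, $\rightsquigarrow_{\mathcal F}$. Lemma~\ref{lem:conditional_unconditional_transfer} transfers it to an unconditional Kolmogorov limit, so this probability converges to $2\{1-\Phi(z\sqrt{1-\eta}-\eta)\}$. Letting $\eta\downarrow0$ and using continuity of $\Phi$ yields $\limsup\Pr(\text{non-coverage})\le\alpha_0+\delta$, which is \eqref{eq:bias_aware_coverage_bound}. The fixed-envelope statement is the special case $\mathcal E_N$ equal to the whole space and $s_N=0$.

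The main obstacle is bookkeeping between conditional and outer probabilities. Three quantities are $\mathcal F_N$-measurable and random only through the environment: $\widehat B_N$ may in principle depend on data, while $s_N$ and $V_N$ depend on $\mathcal F_N$. I would therefore state all bounds unconditionally, using the Kolmogorov transfer and the uniformity in $t$ of that convergence. Uniformity is what lets the random threshold $z\sqrt{1-\eta}-\eta$ be handled after the deterministic reduction, and this is the step to verify carefully. If $\widehat B_N$ depends on the current data, the argument above goes through unchanged, since only the event inclusion was used.
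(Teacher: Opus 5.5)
Your proposal is correct and follows the paper's proof. You apply the triangle inequality on the non-coverage event, restrict to $\mathcal E_N$ intersected with the variance-calibration event so that non-coverage forces $|T_N/\sigma_N|>z\sqrt{1-\varepsilon_N}-s_N/\sqrt{V_N}$, and finish with a union bound and the centered CLT. Your $\eta$-sandwich and the appeal to Lemma~\ref{lem:conditional_unconditional_transfer} make explicit what the paper compresses into ``the right-hand threshold converges to $z$.'' One small correction to your closing remark: after the sandwich the threshold $z\sqrt{1-\eta}-\eta$ is deterministic, so pointwise rather than uniform Kolmogorov convergence already suffices at that step.
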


When the exposure coefficients and operator have known sign, the cut representation yields a shorter asymmetric interval. Suppose, for example, that the structural positive-exposure bias obeys
$-\widehat B_N-s_N\le\E\widehat\tau^{HT}_{w,\tilde\alpha}-\tau_{w,N}\le0$ on a high-probability event. Fix tail allocations $\alpha_L,\alpha_U>0$ with $\alpha_L+\alpha_U=\alpha_0$ and define
\begin{equation}
    \mathrm{CI}_{\rm dir}(\alpha_L,\alpha_U)
    =\left[
      \widehat\tau^{HT}_{w,\tilde\alpha}
      -z_{1-\alpha_L}\sqrt{\widehat V_N},\quad
      \widehat\tau^{HT}_{w,\tilde\alpha}
      +z_{1-\alpha_U}\sqrt{\widehat V_N}
      +\widehat B_N
      \right].
    \label{eq:direction_aware_ci}
\end{equation}

\begin{proposition}[Direction-aware bias coverage]
\label{prop:direction_aware_coverage}
Under the centered CLT and one-sided variance calibration \eqref{eq:one_sided_variance_calibration}, suppose there are events $\mathcal E_N$ with
$\liminf_N\Pr(\mathcal E_N)\ge1-\delta$ and an $\mathcal F_N$-measurable
$s_N\ge0$ with $s_N/\sqrt{V_N}\xrightarrow{p}0$ such that
\begin{equation}
    -\widehat B_N-s_N
    \le \E\widehat\tau^{HT}_{w,\tilde\alpha}-\tau_{w,N}
    \le s_N
    \quad\text{on }\mathcal E_N.
    \label{eq:directional_bias_event}
\end{equation}
Then, for every fixed $\alpha_L,\alpha_U>0$ with sum $\alpha_0$,
\begin{equation}
    \liminf_{N\to\infty}
    \Pr\{\tau_{w,N}\in\mathrm{CI}_{\rm dir}(\alpha_L,\alpha_U)\}
    \ge1-\alpha_0-\delta.
    \label{eq:direction_aware_coverage_bound}
\end{equation}
For an exact positive-exposure model, Proposition~\ref{prop:expected_cut_geometry} supplies the nonpositive bias direction. Equal tails are valid but not mandatory; a prespecified length or decision loss can determine $\alpha_L$ and $\alpha_U$. Relative to the symmetric interval \eqref{eq:bias_aware_ci}, \eqref{eq:direction_aware_ci} avoids adding the full bias envelope to the lower endpoint.
\end{proposition}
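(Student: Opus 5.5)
The plan is to mimic the proof of the symmetric bias-aware coverage result (Proposition~\ref{prop:bias_aware_coverage}), but to split the failure event by tail. Write $\mu_N:=\E\widehat\tau^{HT}_{w,\tilde\alpha}$ and $b_N:=\mu_N-\tau_{w,N}$. Non-coverage of $\mathrm{CI}_{\rm dir}(\alpha_L,\alpha_U)$ is the union of two events. The first is $\tau_{w,N}<\widehat\tau-z_{1-\alpha_L}\sqrt{\widehat V_N}$. The second is $\tau_{w,N}>\widehat\tau+z_{1-\alpha_U}\sqrt{\widehat V_N}+\widehat B_N$. I will bound the limsup probability of each event intersected with $\mathcal E_N$. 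I will also intersect with the variance-calibration event $\mathcal V_N:=\{\widehat V_N\ge(1-\varepsilon_N)V_N\}$, whose probability tends to one by \eqref{eq:one_sided_variance_calibration}. The complement of $\mathcal E_N\cap\mathcal V_N$ then contributes at most $\delta+o(1)$.

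For the lower-tail event, substitute $\widehat\tau-\tau_{w,N}=(\widehat\tau-\mu_N)+b_N$ and use $b_N\le s_N$ on $\mathcal E_N$. The event then implies $(\widehat\tau-\mu_N)/\sqrt{V_N}>z_{1-\alpha_L}\sqrt{\widehat V_N/V_N}-s_N/\sqrt{V_N}$. On $\mathcal V_N$ the right-hand side is at least $z_{1-\alpha_L}\sqrt{1-\varepsilon_N}-s_N/\sqrt{V_N}$. This uses $z_{1-\alpha_L}>0$, which needs $\alpha_L<1/2$; for larger $\alpha_L$ I would handle the sign of the quantile separately, or simply note that the bound only gets easier once one restricts to $\alpha_L\le1/2$. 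I also need that step to be right for negative quantiles, which I expect is the only delicate point. Fix $\eta>0$. Because $s_N/\sqrt{V_N}\to_p0$, the bound is eventually at least $z_{1-\alpha_L}-\eta$ with probability tending to one. The centered CLT \eqref{eq:clt_centered}, transferred unconditionally by Lemma~\ref{lem:conditional_unconditional_transfer}, then bounds the probability by $1-\Phi(z_{1-\alpha_L}-\eta)+o(1)$. Letting $\eta\downarrow0$ gives at most $\alpha_L$.

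For the upper-tail event, use the lower half of \eqref{eq:directional_bias_event}, namely $b_N\ge-\widehat B_N-s_N$. The event $\tau_{w,N}-\widehat\tau>z_{1-\alpha_U}\sqrt{\widehat V_N}+\widehat B_N$ becomes $-(\widehat\tau-\mu_N)>z_{1-\alpha_U}\sqrt{\widehat V_N}+\widehat B_N+b_N\ge z_{1-\alpha_U}\sqrt{\widehat V_N}-s_N$. The estimated envelope $\widehat B_N$ cancels exactly, so it need not be consistent or deterministic; it only has to dominate on $\mathcal E_N$. The same argument as for the lower tail bounds the limsup probability by $\alpha_U$.

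Adding the two tails and the exceptional events gives $\limsup\Pr(\tau_{w,N}\notin\mathrm{CI}_{\rm dir})\le\alpha_L+\alpha_U+\delta=\alpha_0+\delta$. This is \eqref{eq:direction_aware_coverage_bound}. Two points need care. One is the random, possibly degenerate $\widehat V_N$. This is handled because only the lower bound on $\widehat V_N/V_N$ enters and the quantiles are nonnegative. The other is that $\mathcal E_N$ and $\mathcal V_N$ are intersected rather than conditioned on, so no independence from the estimator is required.

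The closing remark about the exact positive-exposure model follows from Proposition~\ref{prop:expected_cut_geometry}: it gives $b_N\le0$, so one may take $s_N=0$ in the upper constraint.
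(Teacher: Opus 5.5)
Your proof is correct and is essentially the paper's argument: on $\mathcal E_N$ intersected with the calibration event, each one-sided failure reduces to $U_N/\sqrt{V_N}$ crossing $\pm\{z_{1-\alpha_j}\sqrt{1-\varepsilon_N}-s_N/\sqrt{V_N}\}$, where $U_N=\widehat\tau^{HT}_{w,\tilde\alpha}-\E\widehat\tau^{HT}_{w,\tilde\alpha}$ (with $j=L$ for the lower tail and $j=U$ for the upper tail). The envelope $\widehat B_N$ cancels in the upper tail, and the centered CLT plus a union bound give $\alpha_L+\alpha_U+\delta$.

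Your sign worry is well founded, and the paper's proof silently makes the same assumption. The step $z_{1-\alpha_j}\sqrt{\widehat V_N/V_N}\ge z_{1-\alpha_j}\sqrt{1-\varepsilon_N}$ needs $\alpha_L,\alpha_U\le1/2$, which is automatic when $\alpha_0\le1/2$. For a larger tail allocation with $\alpha_0<1$, the case is not ``easier'' but can genuinely fail. An inflated $\widehat V_N$ that satisfies only the one-sided calibration then pushes the corresponding endpoint past $\tau_{w,N}$ with probability tending to one.
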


\subsection{Model-assisted design-aware tail correction}
\label{subsec:design_aware_variance}

A local HAC kernel estimates $\Sigma_{K,N}$ and omits the signed covariance tail
\begin{equation}
    \mathcal T_N:=\Sigma_N-\Sigma_{K,N}.
    \label{eq:signed_covariance_tail}
\end{equation}
For a connected certified sequence satisfying the kernel-bias condition in Proposition~\ref{prop:oracle_network_hac_connected}, this tail is asymptotically negligible at the chosen bandwidth. On a disconnected graph, the same statement additionally requires the infinite-shell target remainder in \eqref{eq:hac_infinite_shell_rates}; otherwise the omitted cross-component covariance is part of the tail. For a finite-sample optimized design the tail can be material, and the known randomization law can be used for a model-assisted correction.

Let $\widehat X_{Na}^{\rm pl}(z)$ be a plug-in potential-outcome surface constructed from pre-treatment information, an external sample, or a dependence-aware cross-fitted model that satisfies the conditions below conditionally on its training sigma-field. For a fresh draw $Z_N^*\sim\mathcal D_N$, define
\begin{equation}
    \widehat Q_{Na}^{\rm pl}(Z_N^*)
    =\omega_{Na}\psi_{Na}(Z_{Na}^*)\widehat X_{Na}^{\rm pl}(Z_N^*).
    \label{eq:plugin_contribution}
\end{equation}
The plug-in full and local long-run variances are
\begin{align}
    \widehat\Sigma_{\rm pl,full}
    &:=\frac{1}{M_N}\Var_{\mathcal D_N}
       \left\{\sum_a\widehat Q_{Na}^{\rm pl}(Z_N^*)\right\},
    \label{eq:plug_full_variance}\\
    \widehat\Sigma_{\rm pl,loc}
    &:=\frac{1}{M_N}\sum_{a,c}\mathsf K_N(a,c)
       \Cov_{\mathcal D_N}\{\widehat Q_{Na}^{\rm pl}(Z_N^*),
                              \widehat Q_{Nc}^{\rm pl}(Z_N^*)\},
    \label{eq:plug_local_variance}
\end{align}
and $\widehat{\mathcal T}_{\rm pl}=\widehat\Sigma_{\rm pl,full}-\widehat\Sigma_{\rm pl,loc}$. For any local estimator $\widehat\Sigma_{\rm loc}$, define the signed and positive-part corrections
\begin{align}
    \widehat\Sigma_{\rm DA}^{\rm sgn}
    &=\widehat\Sigma_{\rm loc}+\widehat{\mathcal T}_{\rm pl},
    \label{eq:design_aware_signed}\\
    \widehat\Sigma_{\rm DA,raw}^{+}
    &=\widehat\Sigma_{\rm loc}+[\widehat{\mathcal T}_{\rm pl}]_+,
    \qquad
    \widehat V_{\rm DA}^{+}=\frac{[\widehat\Sigma_{\rm DA,raw}^{+}]_+}{M_N}.
    \label{eq:design_aware_hac}
\end{align}

The consistency conditions used by Proposition~\ref{prop:design_aware_hac} are
\begin{equation}
    \widehat\Sigma_{\rm loc}-\Sigma_{K,N}=o_p(\Sigma_N),
    \qquad
    \widehat{\mathcal T}_{\rm pl}-\mathcal T_N=o_p(\Sigma_N).
    \label{eq:tail_consistency}
\end{equation}
Under these conditions, the signed correction is ratio-consistent and the outer-truncated positive-tail variance has the one-sided guarantee stated in the main text. The negative true tail determines whether the positive-tail version is also ratio-consistent.

A useful sufficient condition for the full plug-in approximation is
\begin{equation}
    \Var_{\mathcal D_N}\left\{\sum_a
    (\widehat Q_{Na}^{\rm pl}-Q_{Na})\right\}=o_p(\sigma_N^2).
    \label{eq:plugin_full_l2_condition}
\end{equation}
Here and in the next display, $Q_N=Q_N(Z_N^*)$ denotes the true contribution vector evaluated under the same fresh design draw as $\widehat Q_N^{\rm pl}$. For a PSD local kernel, a corresponding sufficient condition is
\begin{equation}
    \E_{\mathcal D_N}\left[(e_N-\E e_N)^\top
    \mathsf K_N(e_N-\E e_N)\right]=o_p(\sigma_N^2),
    \qquad e_N=\widehat Q_N^{\rm pl}-Q_N,
    \label{eq:plugin_local_l2_condition}
\end{equation}
provided the true local quadratic variance is $O(\sigma_N^2)$. Cauchy--Schwarz then controls the cross terms. If the design moments in \eqref{eq:plug_full_variance}--\eqref{eq:plug_local_variance} are estimated with $R_N$ independent randomization draws, Monte Carlo error is $o_p(\Sigma_N)$ whenever $R_N\to\infty$ and the corresponding standardized fourth moments of the aggregate and local quadratic forms are uniformly bounded.

The fixed-outcome substitution
$\widehat X_{Na}^{\rm pl}(z)=Y_{Na}^{\rm obs}-\tilde\alpha_{Na}$ is a useful randomization diagnostic, but under interference it does not by itself satisfy \eqref{eq:plugin_full_l2_condition}: counterfactual assignments generally change the potential outcome surface. Accordingly, the design-aware correction is formally model-assisted. It calibrates a covariance tail; it does not restore a CLT for a fixed-rank common-factor design that fails Assumption~\ref{ass:weak_dependence_rates}.

Combining a centered CLT, one-sided variance calibration, and the misspecification-aware bias envelope gives
\begin{equation}
    \mathrm{CI}_{\rm DA+BA}
    =\left[\widehat\tau^{HT}_{w,\tilde\alpha}
    \pm\left\{z_{1-\alpha_0/2}\sqrt{\widehat V_{\rm DA}^{+}}
    +\widehat B_N\right\}\right].
    \label{eq:design_aware_bias_ci}
\end{equation}

\subsection{Ratio estimators: H{\'a}jek and difference in means}
\label{subsec:ratio_estimators}

Denominator consistency alone does not make the weighted H{\'a}jek estimator first-order equivalent to HT. Define the four-dimensional contribution
\begin{equation}
    R_{Na}:=
    \begin{pmatrix}
    \omega_{Na}Z_{Na}X_{Na}^{\rm obs}\\
    \omega_{Na}(1-Z_{Na})X_{Na}^{\rm obs}\\
    \omega_{Na}Z_{Na}\\
    \omega_{Na}(1-Z_{Na})
    \end{pmatrix},
    \qquad
    \bar R_N=\frac{1}{M_N}\sum_aR_{Na},
    \qquad
    h(x)=\frac{x_1}{x_3}-\frac{x_2}{x_4}.
    \label{eq:hajek_vector}
\end{equation}
Define the nonzero-denominator event
\begin{equation}
    \mathcal D_N^H:=\{\bar R_{N,3}>0,\ \bar R_{N,4}>0\}.
    \label{eq:hajek_nonzero_denominator_event}
\end{equation}
On $\mathcal D_N^H$, set $\widehat\tau^H_{w,\tilde\alpha}=h(\bar R_N)$. On its complement, define the estimator to equal zero, or any other fixed prespecified value. Proposition~\ref{prop:hajek_delta} implies $\Pr\{(\mathcal D_N^H)^c\}\to0$, so this finite-sample convention does not affect the limit.

Define
\begin{equation}
    S_{R,N}:=\frac{1}{\sqrt{M_N}}\sum_{a\in\Ocal_N}
    \{R_{Na}-\E R_{Na}\},
    \qquad \Gamma_N:=\Var(S_{R,N}).
    \label{eq:hajek_score_covariance}
\end{equation}

\begin{proposition}[Multivariate weak-dependence delta method]
\label{prop:hajek_delta}
Suppose $\rho_N:=\E\bar R_N\to\rho$ with $\rho_3,\rho_4>0$, $\bar R_N-\rho_N=o_p(1)$, and $\Gamma_N\to\Gamma_R$ for a finite positive-semidefinite matrix $\Gamma_R$. For every fixed $\lambda\in\mathbb R^4$ with $\lambda^\top\Gamma_R\lambda>0$, suppose the scalar array
$\{\lambda^\top(R_{Na}-\E R_{Na}):a\in\Ocal_N\}$ satisfies the uniform $L^\nu$ requirement of Assumption~\ref{ass:weights_moments} and the graph-$\psi$/rate conditions of Assumption~\ref{ass:weak_dependence_rates}, with sum variance
\begin{equation}
    \sigma_{\lambda,N}^2
    =M_N\lambda^\top\Gamma_N\lambda.
    \label{eq:hajek_projection_variance}
\end{equation}
Then
\begin{equation}
    S_{R,N}\Rightarrow N_4(0,\Gamma_R),
    \label{eq:hajek_multivariate_clt}
\end{equation}
and
\begin{equation}
    \sqrt{M_N}\{\widehat\tau^H_{w,\tilde\alpha}-h(\rho_N)\}
    \Rightarrow N\{0,\nabla h(\rho)^\top\Gamma_R\nabla h(\rho)\}.
    \label{eq:hajek_delta_limit}
\end{equation}
If $h(\rho_N)-\tau_{w,N}=o(M_N^{-1/2})$, the same limit is centered at the causal target.
\end{proposition}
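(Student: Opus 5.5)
The argument has two layers: a multivariate CLT for $S_{R,N}$ via Cramér--Wold, followed by a first-order Taylor expansion of $h$ around $\rho_N$.

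\textbf{Step 1: reduce to scalar projections.} Fix $\lambda\in\mathbb R^4$.
\begin{itemize}
\item If $\lambda^\top\Gamma_R\lambda>0$, the scalar array $\{\lambda^\top(R_{Na}-\E R_{Na})\}$ satisfies Assumptions~\ref{ass:weights_moments} and~\ref{ass:weak_dependence_rates} by hypothesis, with sum variance $\sigma_{\lambda,N}^2$. The KMS specialization, Theorem~\ref{thm:cell_clt} (its proof uses only these properties of the centered array), then gives $\lambda^\top S_{R,N}\sqrt{M_N}/\sigma_{\lambda,N}\Rightarrow N(0,1)$. Since $\sigma_{\lambda,N}^2/M_N=\lambda^\top\Gamma_N\lambda\to\lambda^\top\Gamma_R\lambda$, Slutsky yields $\lambda^\top S_{R,N}\Rightarrow N(0,\lambda^\top\Gamma_R\lambda)$.
\item If $\lambda^\top\Gamma_R\lambda=0$, then $\E(\lambda^\top S_{R,N})^2=\lambda^\top\Gamma_N\lambda\to0$, so Chebyshev gives $\lambda^\top S_{R,N}\to_p0$. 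This equals $N(0,0)$.
\end{itemize}
Cramér--Wold then delivers \eqref{eq:hajek_multivariate_clt}. Conditional versus unconditional statements are handled by the conditional notation together with Lemma~\ref{lem:conditional_unconditional_transfer}, applied pointwise in $\lambda$.

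\textbf{Step 2: Taylor expansion of $h$.} The denominators satisfy $\bar R_{N,3}\to_p\rho_3>0$ and $\bar R_{N,4}\to_p\rho_4>0$, so $\Pr(\mathcal D_N^H)\to1$ and the fixed convention on the complement is asymptotically irrelevant. The map $h$ is $C^1$ on a neighbourhood of $\rho$, and $\rho_N\to\rho$. By the mean-value theorem, on $\mathcal D_N^H$,
\[
\sqrt{M_N}\{h(\bar R_N)-h(\rho_N)\}=\nabla h(\tilde\rho_N)^\top S_{R,N},
\]
with $\tilde\rho_N$ on the segment between $\bar R_N$ and $\rho_N$. Since $\bar R_N-\rho_N=o_p(1)$ and $\rho_N\to\rho$, we get $\tilde\rho_N\to_p\rho$, hence $\nabla h(\tilde\rho_N)\to_p\nabla h(\rho)$ by continuity. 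Step 1 gives $S_{R,N}=O_p(1)$, so
\[
\{\nabla h(\tilde\rho_N)-\nabla h(\rho)\}^\top S_{R,N}=o_p(1).
\]
Applying the continuous mapping theorem to $\nabla h(\rho)^\top S_{R,N}$ gives \eqref{eq:hajek_delta_limit}. The final claim follows by adding the deterministic $\sqrt{M_N}\{h(\rho_N)-\tau_{w,N}\}=o(1)$.

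\textbf{Main obstacle: invoking Theorem~\ref{thm:cell_clt} in Step 1.} That theorem is stated for the HT contribution array, not for arbitrary projections, so I must check that its proof uses only graph-$\psi$ dependence, moments, $\sigma^2>0$ and the KMS rates, and not the specific form of $\xi_{Na}$. Two related points need care:
\begin{itemize}
\item Positivity of $\sigma_{\lambda,N}^2$ is needed only eventually, and this follows from $\lambda^\top\Gamma_R\lambda>0$.
\item The KMS rates are stated with $\sigma_{\lambda,N}$, which is of order $\sqrt{M_N}$ here, so they are consistent with the scaling used in Step 1.
\end{itemize}
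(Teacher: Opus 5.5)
Your proposal is correct and follows the paper's proof: Cram\'er--Wold, with Theorem~\ref{thm:cell_clt} applied to the nondegenerate projections and Chebyshev used for the degenerate ones, then $\Pr(\mathcal D_N^H)\to1$ and a first-order expansion of $h$ controlled by $S_{R,N}=O_p(1)$. Your mean-value form with an intermediate point $\tilde\rho_N$ is interchangeable with the paper's Taylor expansion around $\rho_N$ with an $o_p(1)$ remainder, and the paper likewise invokes Theorem~\ref{thm:cell_clt} directly on the projected array.
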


A stronger design-balance condition,
\begin{equation}
    \sum_a\omega_{Na}(Z_{Na}-p_N)=o_p(\sqrt{M_N})
    \label{eq:strong_balance_ratio}
\end{equation}
removes the denominator corrections and yields first-order HT equivalence under the remaining moment conditions; the corresponding control-denominator condition follows identically because $M_N^{-1}\sum_a\omega_{Na}=1$. Under ordinary denominator consistency the corrections are generally of order $M_N^{-1/2}$ and must be retained through Proposition~\ref{prop:hajek_delta}. The unweighted difference-in-means estimator is the special case $\omega_{Na}\equiv1$.

\section{Additional Experimental Details and Results}
\label{app:additional_experiments}

\IfFileExists{tables/tab_inference_studentized.tex}{
\subsection{Studentized diagnostics for the supplementary inference experiment}
The generated table reports the fifth, median, and ninety-fifth percentiles and standard deviation of the statistic centered at the independently calibrated design expectation. These diagnostics complement the coverage and width comparisons in Subsection~\ref{subsec:inference_experiments}. For the positive-part design-aware estimator, the standard deviations are close to one and the central quantiles are close to the corresponding Gaussian benchmarks across the three outcome models.
\begin{table}[H]
\centering
\caption{Studentized diagnostics for the positive-tail design-aware procedure under linear, nonlinear, and demand-substitution outcomes.}
\label{tab:inference_studentized}
\footnotesize
\input{tables/tab_inference_studentized.tex}
\end{table}
}{}

\IfFileExists{figures/fig_inference_gate_coverage.pdf}{
\begin{figure}[!htbp]
\centering
\includegraphics[width=0.98\linewidth]{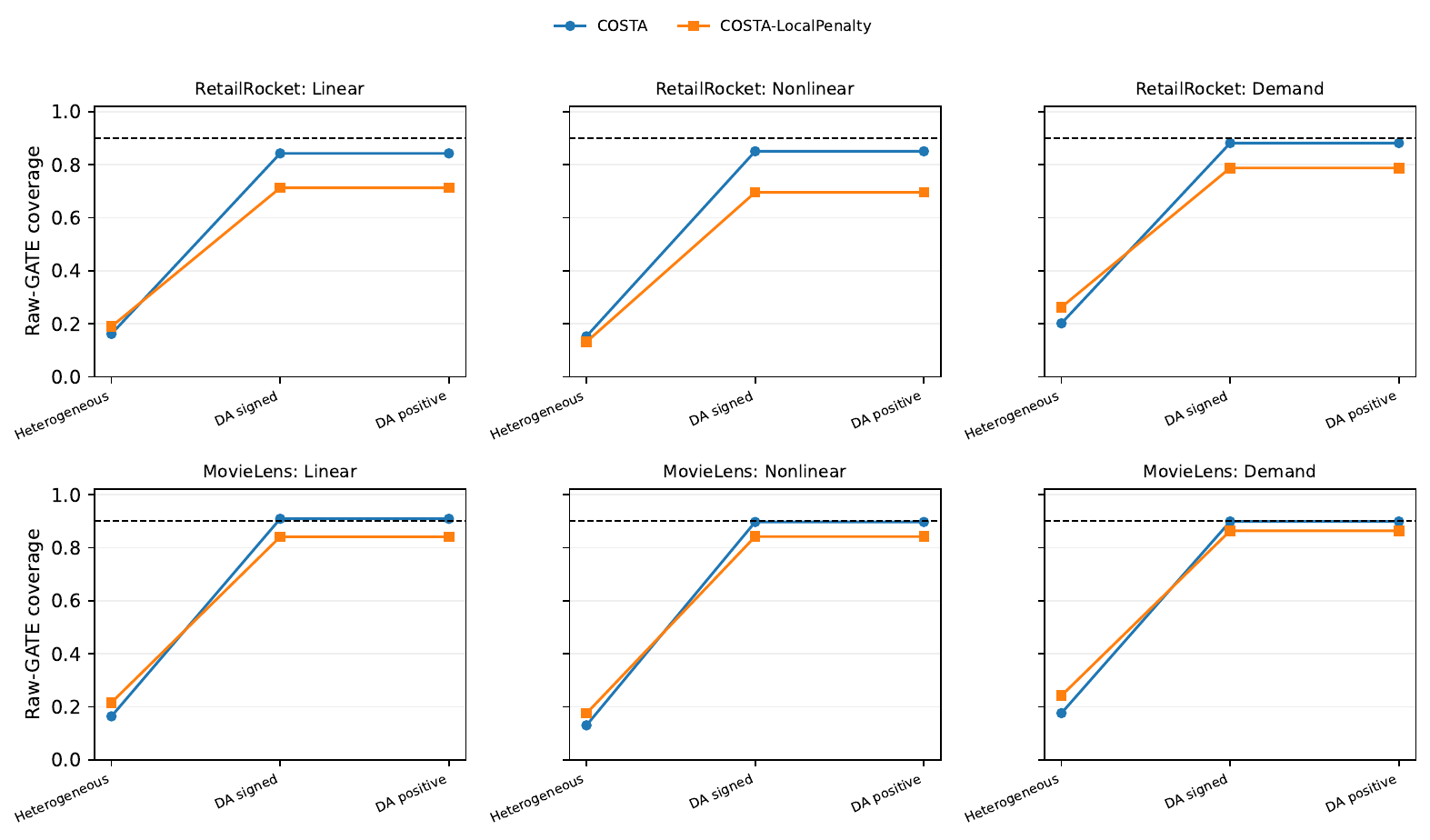}
\caption{Raw-GATE coverage at the nominal $90\%$ level. Differences from Figure~\ref{fig:inference_design_coverage} isolate the causal-centering requirement: a calibrated variance estimator need not remove design bias relative to the sustained-treatment GATE.}
\label{fig:inference_gate_coverage}
\end{figure}

\begin{figure}[!htbp]
\centering
\includegraphics[width=0.98\linewidth]{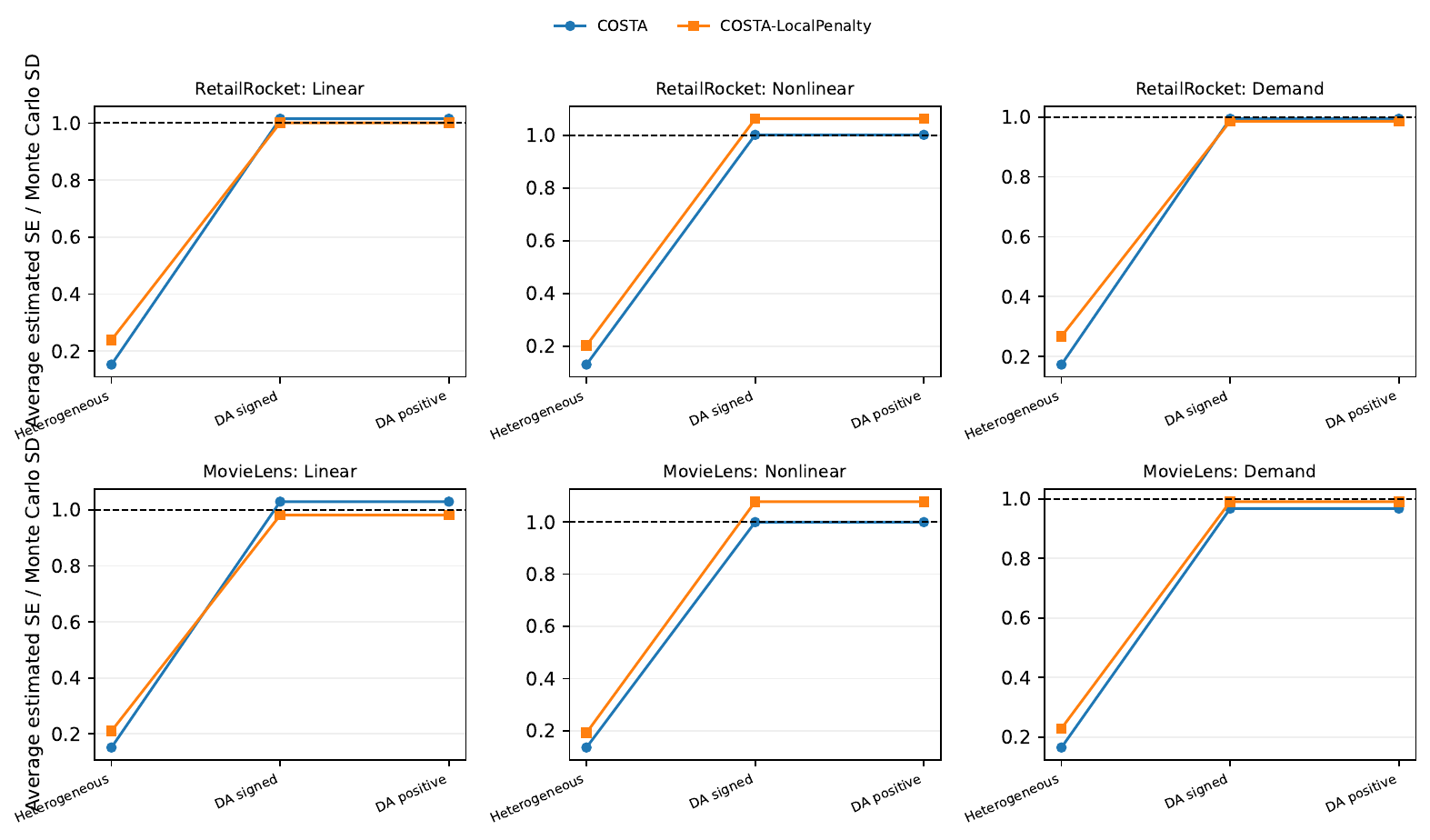}
\caption{Ratio of the average estimated standard error to the Monte Carlo standard deviation. The horizontal reference is one. The design-aware procedures recover the nonlocal covariance tail omitted by the local product-graph estimators.}
\label{fig:inference_se_ratio}
\end{figure}
}{}

\subsection{Why double balance reduces variance in multiple-unit switchbacks}
\label{app:double_balance}

The precision benefit of balancing both unit exposure and temporal-block load can be seen in a simple additive model. Suppose, only for this calculation, that outcomes obey
\begin{equation}
    Y_{ib}(z)=\alpha_i+\lambda_b+\tau z_{ib}+\varepsilon_{ib},
    \label{eq:double_balance_model}
\end{equation}
with no cross-unit interference. For the common-$p$ HT score, the baseline part of the estimator is
\begin{equation}
    \frac{1}{NBp(1-p)}\sum_{i=1}^N\sum_{b=1}^B (Z_{ib}-p)(\alpha_i+\lambda_b)
    =\frac{1}{NBp(1-p)}\left\{\sum_i\alpha_i(T_i-pB)+\sum_b\lambda_b(L_b-pN)\right\},
    \label{eq:double_balance_decomp}
\end{equation}
where $T_i=\sum_bZ_{ib}$ is the number of treated blocks for unit $i$ and $L_b=\sum_iZ_{ib}$ is the treated load in temporal block $b$. If the design is doubly balanced, $T_i=pB$ for all units and $L_b=pN$ for all temporal blocks whenever these quantities are integer, the deterministic unit and block heterogeneity components vanish exactly. Under independent assignment, the same two components contribute variance proportional to $B\sum_i\alpha_i^2$ and $N\sum_b\lambda_b^2$ up to covariance terms. This calculation explains why multiple-unit balanced switchbacks can be efficient even without cross-unit interference: they remove low-frequency unit and time heterogeneity from the randomization score. COSTA uses this idea as a soft regularizer, while the main covariance objective additionally targets network and temporal exposure bias.

\subsection{Implementation details for benchmark designs}

The RBSD baseline in the main text is reported as \textsc{RBSD}. In implementation, exact item balance is used when feasible; for general treatment share $p$, item row sums are constrained to $\lfloor pB\rfloor$ or $\lceil pB\rceil$ and block totals are kept close to $pN$. This is the implementation detail behind the internal label \texttt{rbsd\_approx}; the paper uses the cleaner method name \textsc{RBSD}. The Static-OCD benchmark follows the sign-Gaussian implementation of optimized covariance design \citep{chen2023ocd} and is therefore reported in the general-$p$ ablation only when $p=0.5$.

\subsection{MovieLens setting ablations}

\begin{figure}[H]
    \centering
    \includegraphics[width=0.98\linewidth]{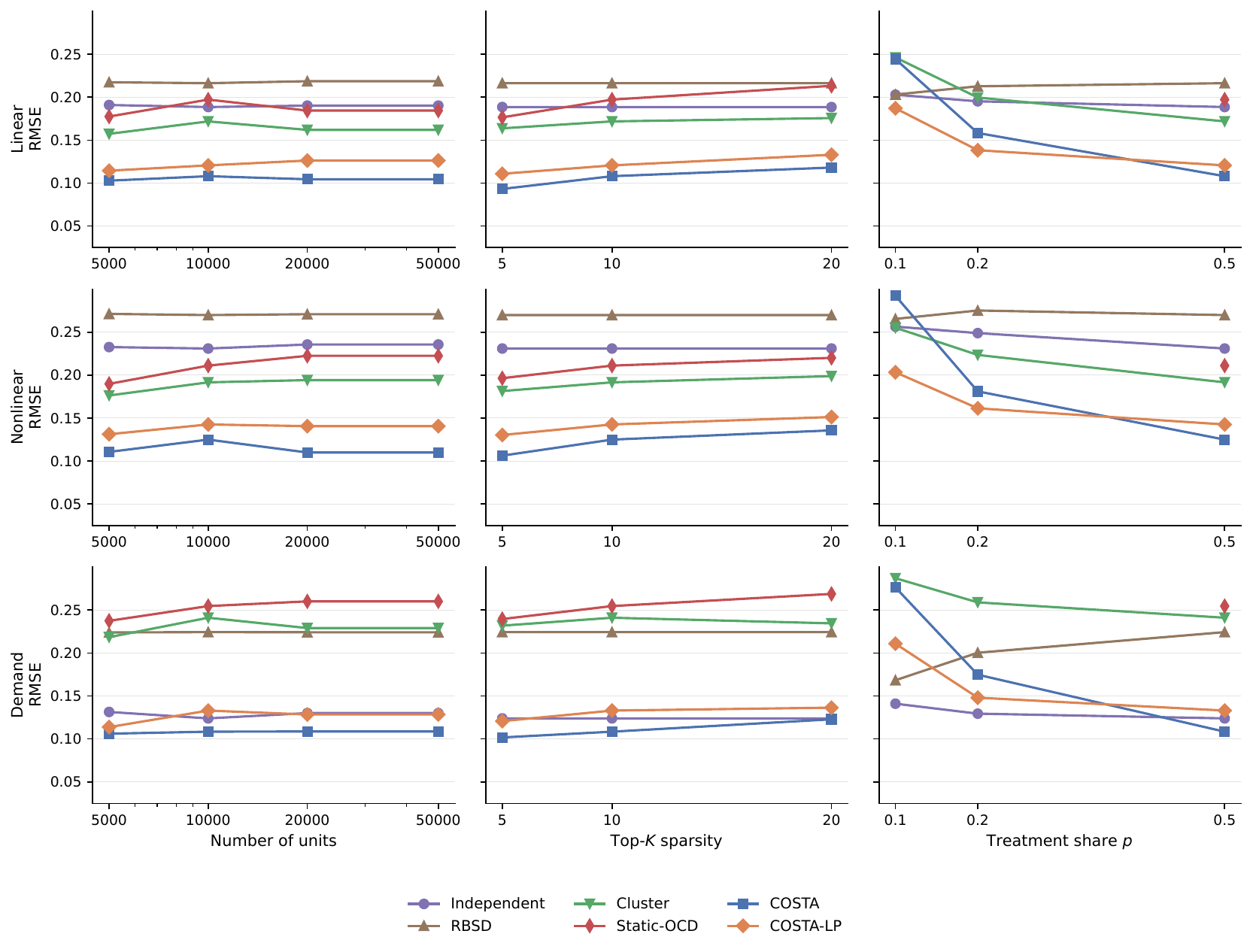}
    \caption{MovieLens setting ablations by potential outcome model. Rows correspond to the linear, nonlinear, and demand models. Columns vary item count $N\in\{5{,}000,10{,}000,20{,}000,50{,}000\}$, top-$K\in\{5,10,20\}$ graph sparsity, and treatment probability $p\in\{0.1,0.2,0.5\}$. The estimator is weighted H{\'a}jek; the design objective is unweighted and the estimand is traffic-weighted.}
    \label{fig:movielens_setting_ablation}
\end{figure}

Figure~\ref{fig:movielens_setting_ablation} is the MovieLens counterpart of Figure~\ref{fig:setting_ablation}. It uses the same item-count, top-$K$, and treatment-share grids and reports the same three potential outcome models. The broad pattern is consistent with RetailRocket in the item-count and graph-sparsity grids. In the low-treatment-share grid the advantage of COSTA can attenuate, especially under the more misspecified demand model, which is consistent with the asymmetric feasible covariance region for Bernoulli margins.

Tables~\ref{tab:retail_item_count_rmse}--\ref{tab:movielens_p_rmse} provide compact RMSE decompositions for the setting-ablation grids in both datasets. These tables are included to make the appendix symmetric across RetailRocket and MovieLens and to avoid averaging across potential outcome models.

\begin{table}[H]
    \centering
    \caption{RetailRocket item-count ablation RMSE. Columns are $N\in\{5{,}000,10{,}000,20{,}000,50{,}000\}$; the setting fixes $B=4$, $K=10$, $p=0.5$, and weighted H{\'a}jek estimation.}
    \label{tab:retail_item_count_rmse}
    \input{tables/tab_retail_item_count_rmse.tex}
\end{table}

\begin{table}[H]
    \centering
    \caption{MovieLens item-count ablation RMSE. Columns are $N\in\{5{,}000,10{,}000,20{,}000,50{,}000\}$; the setting fixes $B=4$, $K=10$, $p=0.5$, and weighted H{\'a}jek estimation.}
    \label{tab:movielens_item_count_rmse}
    \input{tables/tab_movielens_item_count_rmse.tex}
\end{table}

\begin{table}[H]
    \centering
    \caption{RetailRocket top-$K$ sparsity ablation RMSE. Columns are $K\in\{5,10,20\}$; the setting fixes $N=10{,}000$, $B=4$, $p=0.5$, and weighted H{\'a}jek estimation.}
    \label{tab:retail_topk_rmse}
    \input{tables/tab_retail_topk_rmse.tex}
\end{table}

\begin{table}[H]
    \centering
    \caption{MovieLens top-$K$ sparsity ablation RMSE. Columns are $K\in\{5,10,20\}$; the setting fixes $N=10{,}000$, $B=4$, $p=0.5$, and weighted H{\'a}jek estimation.}
    \label{tab:movielens_topk_rmse}
    \input{tables/tab_movielens_topk_rmse.tex}
\end{table}

\begin{table}[H]
    \centering
    \caption{RetailRocket treatment-share ablation RMSE. Columns are $p\in\{0.1,0.2,0.5\}$; the setting fixes $N=10{,}000$, $B=4$, $K=10$, and weighted H{\'a}jek estimation.}
    \label{tab:retail_p_rmse}
    \input{tables/tab_retail_p_rmse.tex}
\end{table}

\begin{table}[H]
    \centering
    \caption{MovieLens treatment-share ablation RMSE. Columns are $p\in\{0.1,0.2,0.5\}$; the setting fixes $N=10{,}000$, $B=4$, $K=10$, and weighted H{\'a}jek estimation.}
    \label{tab:movielens_p_rmse}
    \input{tables/tab_movielens_p_rmse.tex}
\end{table}

\subsection{Temporal-block horizon ablations}

Figure~\ref{fig:block_horizon} reports the RetailRocket temporal-block-horizon ablation for $B\in\{4,8,16\}$. Figure~\ref{fig:movielens_block_horizon} gives the same ablation on MovieLens. Longer horizons provide more within-unit temporal contrast and more opportunities for balancing, but the relative advantage of covariance-optimized designs remains visible in each potential outcome model.

\begin{figure}[H]
    \centering
    \includegraphics[width=0.98\linewidth]{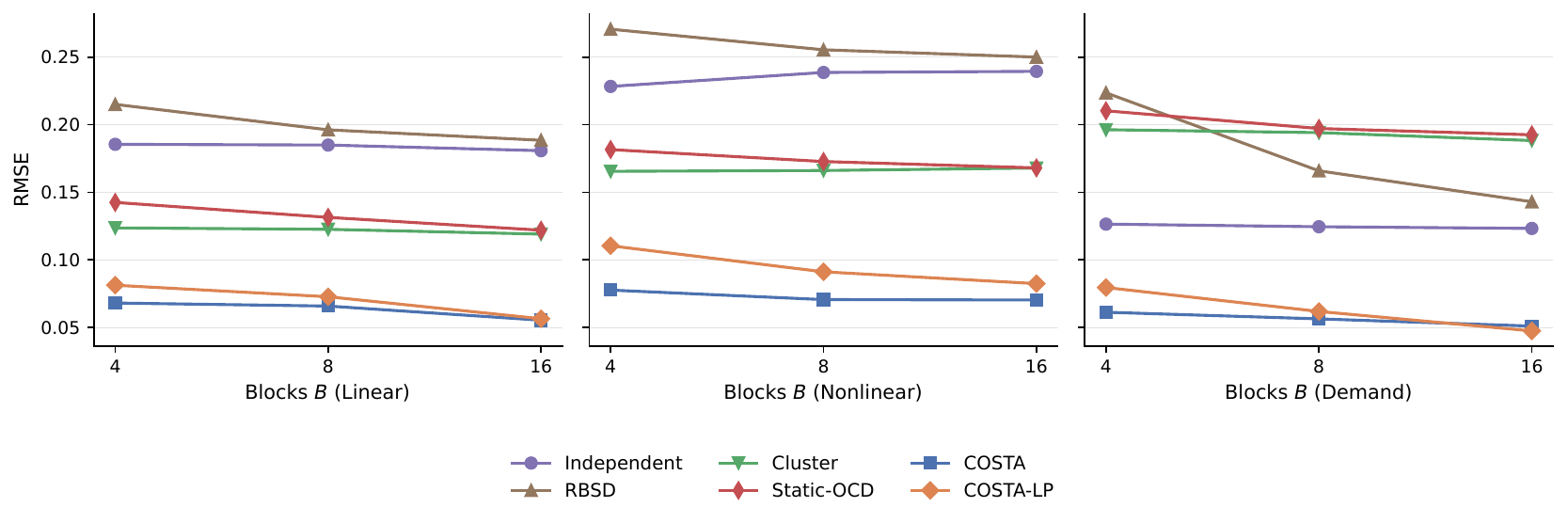}
    \caption{RetailRocket temporal-block-horizon ablation by potential outcome model. The grid is $B\in\{4,8,16\}$, with $N=10{,}000$, top-$K=10$, $p=0.5$, and weighted H{\'a}jek estimation.}
    \label{fig:block_horizon}
\end{figure}

\begin{figure}[H]
    \centering
    \includegraphics[width=0.98\linewidth]{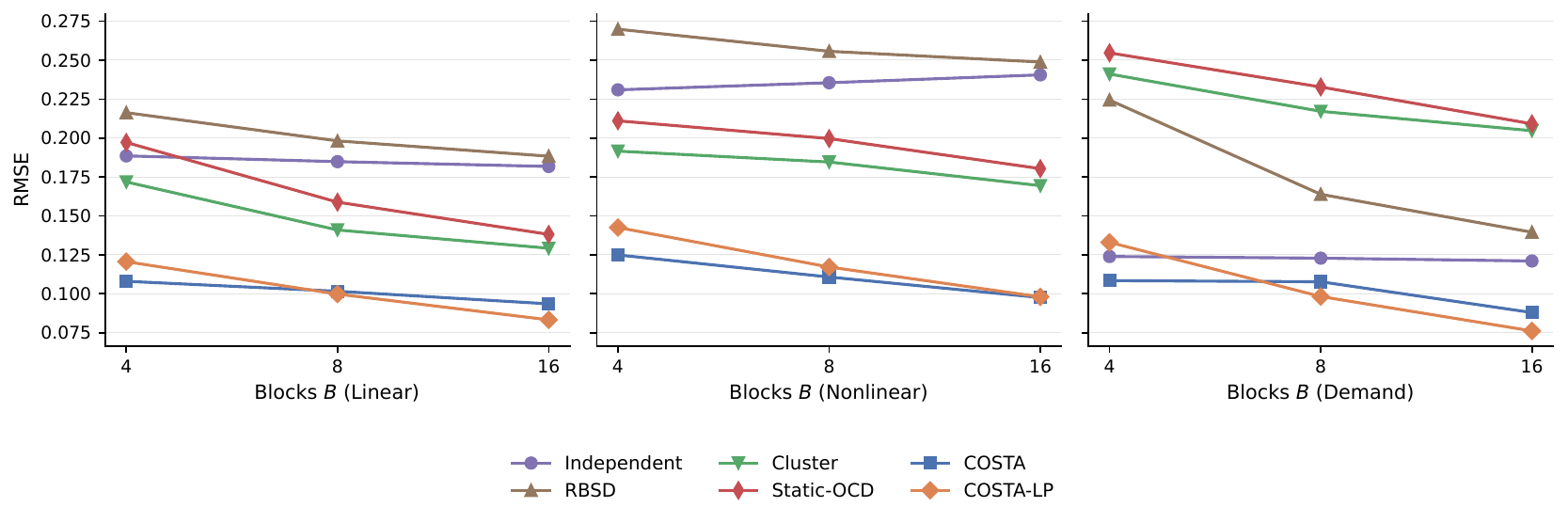}
    \caption{MovieLens temporal-block-horizon ablation by potential outcome model. The grid is $B\in\{4,8,16\}$, with $N=10{,}000$, top-$K=10$, $p=0.5$, and weighted H{\'a}jek estimation.}
    \label{fig:movielens_block_horizon}
\end{figure}

\begin{table}[H]
    \centering
    \caption{RetailRocket temporal-block-horizon ablation RMSE. Columns are $B\in\{4,8,16\}$; the setting fixes $N=10{,}000$, $K=10$, $p=0.5$, and weighted H{\'a}jek estimation.}
    \label{tab:retail_block_rmse}
    \input{tables/tab_retail_block_rmse.tex}
\end{table}

\begin{table}[H]
    \centering
    \caption{MovieLens temporal-block-horizon ablation RMSE. Columns are $B\in\{4,8,16\}$; the setting fixes $N=10{,}000$, $K=10$, $p=0.5$, and weighted H{\'a}jek estimation.}
    \label{tab:movielens_block_rmse}
    \input{tables/tab_movielens_block_rmse.tex}
\end{table}

\subsection{MovieLens bias--standard deviation and Kronecker-dimension checks}

Figure~\ref{fig:movielens_bias_std} is the MovieLens counterpart of Figure~\ref{fig:bias_std}. Figure~\ref{fig:movielens_kro_dim} is the MovieLens counterpart of Figure~\ref{fig:kro_dim}. The same qualitative conclusions hold: COSTA is close to the low-bias, low-variance region, and increasing the Kronecker network dimension from $16$ to $64$ has limited marginal value in this grid.

\begin{figure}[H]
    \centering
    \includegraphics[width=0.98\linewidth]{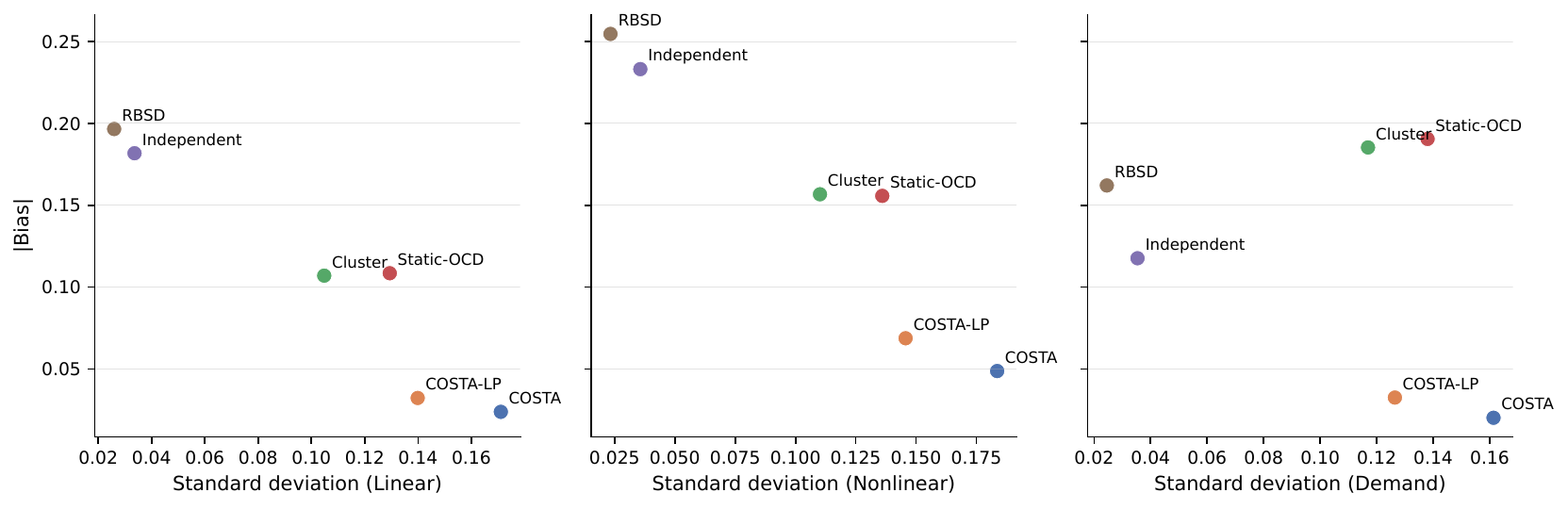}
    \caption{Bias--standard deviation tradeoff in the MovieLens default setting. Each panel is a separate potential outcome model; the estimator is weighted HT and the setting is $N=10{,}000$, $B=8$, top-$K=10$, and $p=0.5$.}
    \label{fig:movielens_bias_std}
\end{figure}

\begin{figure}[H]
    \centering
    \includegraphics[width=0.98\linewidth]{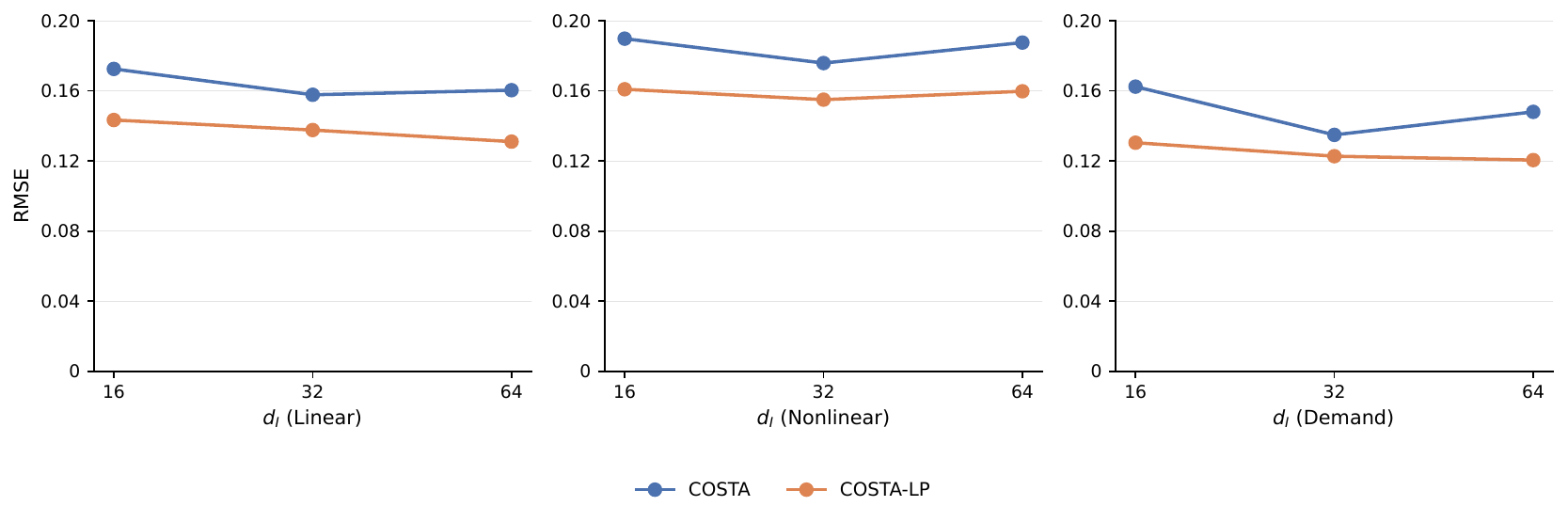}
    \caption{COSTA Kronecker network-dimension ablation on MovieLens. Each panel is a separate potential outcome model; $d_I\in\{16,32,64\}$, $d_T=B-1$, $N=10{,}000$, $B=8$, top-$K=10$, and $p=0.5$. The estimator is weighted HT.}
    \label{fig:movielens_kro_dim}
\end{figure}

\subsection{Cluster/Kronecker ablations on MovieLens}

Table~\ref{tab:cluster_kro_movielens} mirrors the RetailRocket cluster/Kronecker table in the main text. The table separates cluster-level coarsening from the Kronecker reduction of the covariance design space.

\begin{table}[H]
    \centering
    \caption{Cluster and Kronecker design-space ablation on MovieLens. Rows report separate potential outcome models in the default setting with weighted H{\'a}jek.}
    \label{tab:cluster_kro_movielens}
    \input{tables/tab_cluster_kro_movielens.tex}
\end{table}

\subsection{Soft locality-penalty strength}

Figure~\ref{fig:movielens_lambda} gives the MovieLens counterpart of Figure~\ref{fig:lambda_ablation}. Increasing $\lambda$ shrinks the logged \texttt{far\_product\_mean\_abs\_R} diagnostic while changing RMSE because the penalty also limits exposure-edge covariance. As defined after \eqref{eq:local_penalty_exact}, this is an absolute product-cell diagnostic rather than the full squared locality penalty or the canonical-correlation coefficient in \eqref{eq:gaussian_theta_certificate}; the sweep maps a finite-sample point-estimation trade-off and does not certify any formal IC level.

\begin{figure}[H]
    \centering
    \includegraphics[width=0.98\linewidth]{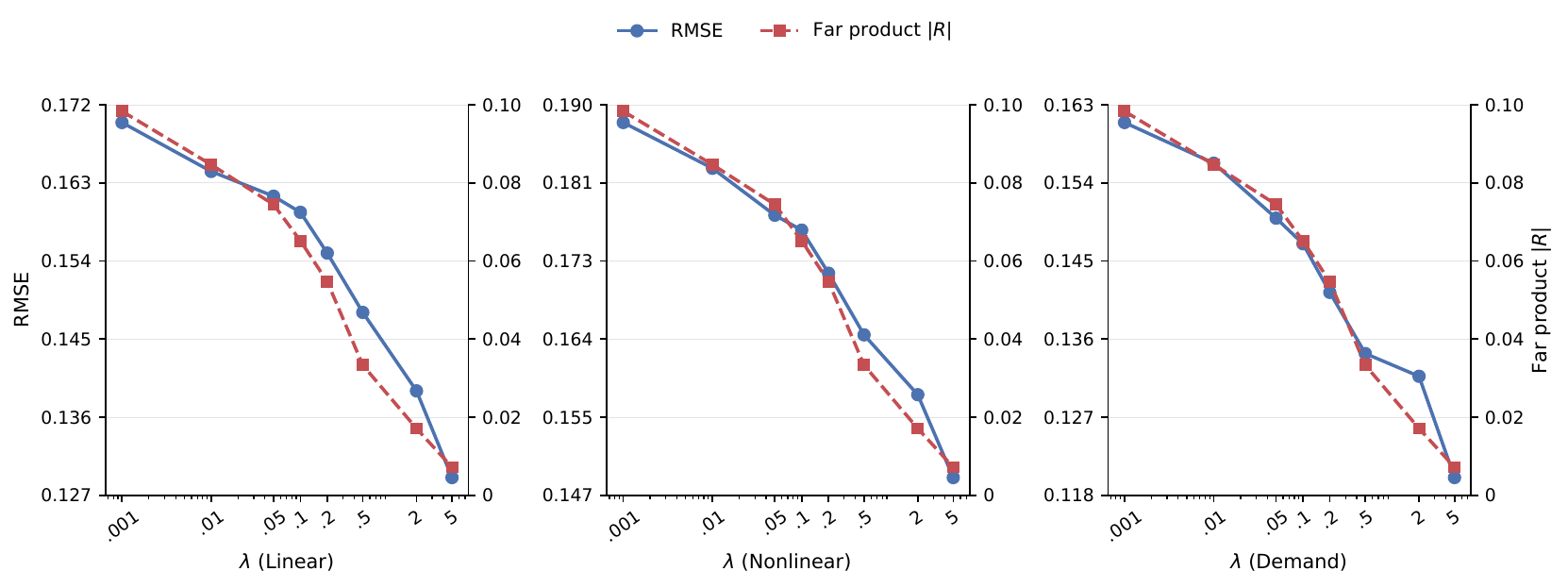}
    \caption{COSTA-LocalPenalty locality-penalty ablation on MovieLens. Each panel is a separate potential outcome model in the $B=8$, top-$K=10$, $p=0.5$ design grid. Circles report RMSE and dashed squares report \texttt{far\_product\_mean\_abs\_R}, the absolute product-cell diagnostic following \eqref{eq:local_penalty_exact}; it is neither the full locality penalty nor the formal block canonical-correlation coefficient. The $\lambda$ axis uses logarithmic spacing over the simulated values.}
    \label{fig:movielens_lambda}
\end{figure}

\subsection{Balance and switching ablation on MovieLens}

Figure~\ref{fig:movielens_component_ablation} is the MovieLens point-estimation counterpart of Figure~\ref{fig:component_ablation}. It isolates the finite-sample point-estimation contribution of the active balance and switching controls; inference is evaluated separately with the heterogeneous-mean and design-aware procedures in Subsection~\ref{subsec:inference_experiments}.

\begin{figure}[H]
    \centering
    \includegraphics[width=0.98\linewidth]{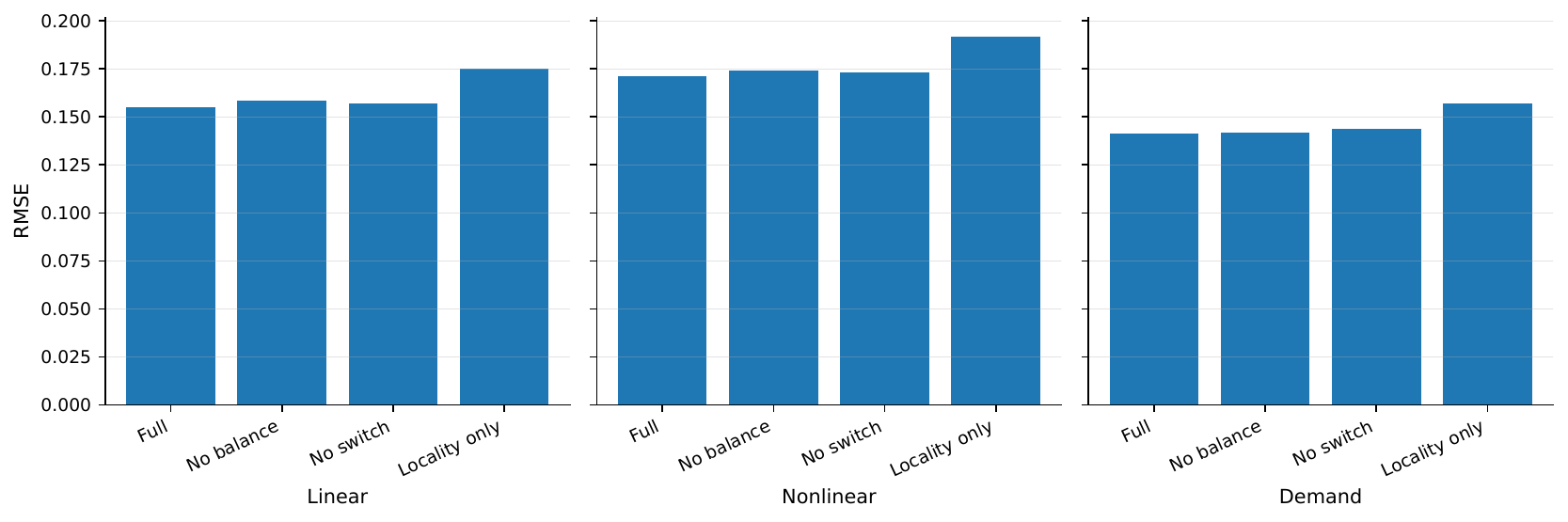}
    \caption{COSTA-LocalPenalty balance and switching penalty ablation on MovieLens. Each panel is a separate potential outcome model; the setting is $N=10{,}000$, $B=8$, top-$K=10$, $p=0.5$, weighted HT, and the design grid previously used for inference diagnostics. Only RMSE is reported here.}
    \label{fig:movielens_component_ablation}
\end{figure}

\subsection{Estimator comparison on MovieLens}

\begin{table}[htbp]
    \centering
    \caption{MovieLens estimator comparison in the default $B=8$ setting. HT and H{\'a}jek are traffic-weighted; DIM is unweighted.}
    \label{tab:estimator_movielens}
    \small
    \input{tables/tab_estimator_movielens.tex}
\end{table}

\subsection{Studentized inference diagnostics}

\IfFileExists{tables/tab_inference_studentized.tex}{Table~\ref{tab:inference_studentized} reports studentized statistics centered at the independently calibrated design expectation for all three potential-outcome models. These diagnostics assess the model-assisted variance workflow in the semi-synthetic setting; they do not turn the sufficient asymptotic conditions of Section~\ref{sec:inference} into an automatic certificate for the fixed-rank implementation.}{The empirical study reports point-estimation evidence only when the supplementary inference artifacts are absent.}

\section{Implementation and Dependence-Diagnostic Tables}
\label{app:implementation_tables}

The tables in this section collect the detailed theory--code map and the diagnostic summaries used by the main text. They are placed in the appendix because they support interpretation and implementation rather than the central argument.

\subsection{Exact implementation details for COSTA-LocalPenalty}
\label{subsec:implementation_details_appendix}

The implementation samples $\mathcal S_I=\{(I_\ell,J_\ell)\}_{\ell=1}^{m_I}$ uniformly with replacement from $\mathcal U_{\ne}$, $\mathcal S_T=\{(B_\ell,R_\ell)\}_{\ell=1}^{m_T}$ uniformly from $\mathcal T_g$, and $\mathcal S_P=\{(I'_\ell,J'_\ell,B'_\ell,R'_\ell)\}_{\ell=1}^{m_P}$ from $\operatorname{Unif}(\mathcal U_{\ne})\otimes\operatorname{Unif}(\mathcal T_g)$. The implementation uses $g=2$, $m_T=B^2$, and candidate count
\[
 m_0=\max\{1000,\min[50000,\max\{1000,10|\mathcal E_I|\}]\},
\]
with self-pairs rejected; $m_I$ and $m_P$ are the retained counts from independent batches of $m_0$ candidates. The unit pairs are uniformly sampled distinct ordered pairs and need not be graph nonedges.

The main-text six-weight display is a formal design objective. The implementation uses normalized network and lag bias weights equal to one, an edge-load/precision weight $0.05$, a sampled block-load balance weight $0.05$, no separate full unit-load quadratic, and switching weight $10$ with target switching rate $0.10$. These settings and $\lambda_{\rm loc}=1$ are fixed before simulated outcomes are generated and reused across datasets and outcome models; the component-removal and locality-sweep experiments are evaluation ablations rather than tuning procedures.

Let $\mathcal E_I$ be the directed network-edge list with nonnegative weights $a_{ij}$ normalized to sum to one. The implemented sparse components are
\begin{align}
 b_{\comp}^{\rm code}
 &=\sum_{(i,j)\in\mathcal E_I}a_{ij}\{\Gp(f_i^\top f_j)-1\},\nonumber\\
 b_{\lag}^{\rm code}
 &=\frac{1}{B-1}\sum_{b=2}^B\{\Gp(t_b^\top t_{b-1})-1\},\nonumber\\
 V_{\rm sur}^{\rm code}
 &=\sum_{(i,j)\in\mathcal E_I}a_{ij}\{\Gp(f_i^\top f_j)+1\}
 +\frac{1}{B-1}\sum_{b=2}^B\{\Gp(t_b^\top t_{b-1})+1\}.
 \label{eq:implemented_sparse_components}
\end{align}
Because both averages are normalized,
\begin{equation}
V_{\rm sur}^{\rm code}=4+b_{\comp}^{\rm code}+b_{\lag}^{\rm code}.
\label{eq:implemented_edge_load_identity}
\end{equation}
Thus this historically named quantity is an exact edge-load/contrast regularizer, not a plug-in approximation to the formal covariance envelope or to the outcome-estimator variance.

The balance proxy reuses the item batch:
\begin{equation}
 J_{\rm bal}=\left(\frac1{m_I}\sum_{\ell=1}^{m_I}R^I_\ell\right)^2
 +\frac1{m_I}\sum_{\ell=1}^{m_I}[R^I_\ell]_+.
 \label{eq:implemented_balance_proxy}
\end{equation}
For fixed factors, each average in \eqref{eq:local_penalty_exact} is unbiased for its corresponding uniform-pair expectation. The batches are nevertheless held fixed while the optimizer adapts to them, so the optimized value is a fixed-Monte-Carlo objective, not an unbiased post-selection estimate of a population dependence bound. The recorded diagnostic
\[
 \texttt{far\_product\_mean\_abs\_R}=m_P^{-1}\sum_{\ell=1}^{m_P}|R^P_\ell|
\]
uses absolute rather than squared covariance and omits the item-only and time-only terms. It therefore differs from $J_{\rm loc}$ and does not estimate or upper-bound $\theta_N^{\rm GC}$.

\begin{table}[H]
\centering
\footnotesize
\caption{Theory--implementation concordance. Formal and code-level precision terms are deliberately separated; neither pairwise regularization nor a finite code objective supplies a formal inference guarantee.}
\label{tab:theory_code_concordance}
\renewcommand{\arraystretch}{1.18}
\setlength{\tabcolsep}{3.2pt}
\begin{tabular}{@{}>{\raggedright\arraybackslash}p{0.15\linewidth}>{\raggedright\arraybackslash}p{0.22\linewidth}>{\raggedright\arraybackslash}p{0.25\linewidth}>{\raggedright\arraybackslash}p{0.29\linewidth}@{}}
\toprule
Object & Formal theory & Implemented approximation & Status and empirical role \\
\midrule
Estimand weights & Target-weighted $w_a$ & Design uses $w_a=1$; headline estimator is traffic-weighted & Coherent robustness comparison; a target-weighted design is available but not the default \\
Bias criterion & Full exposure-weighted cut terms & Normalized network-edge and lag averages in \eqref{eq:implemented_sparse_components} & Theorem-aligned under the linear working model; primary finite-sample MSE driver \\
Precision control & Formal covariance-envelope quadratic in Corollary~\ref{cor:weighted_variance_envelope} & Average normalized exposure-pair treatment-load variance $V_{\rm sur}^{\rm code}=4+b_{\comp}^{\rm code}+b_{\lag}^{\rm code}$ & Exact code-level edge-contrast regularizer; not a theorem-equivalent approximation to the formal envelope or estimator variance \\
Balance & Formal block- and unit-load quadratic forms & $J_{\rm bal}$ based on sampled same-block distinct-unit covariances, plus a separate adjacent-time switching guardrail & The balance term is a block-load proxy; no code term equals the full unit-load quadratic \\
Locality & Separated-set coefficient $\theta_N^{\rm GC}$ or a uniform analytic bound & Sampled pairwise $J_{\rm loc}$ in \eqref{eq:local_penalty_exact} & Pairwise regularizer only; not a certificate or calibrated estimator of $\theta_N^{\rm GC}$ \\
Locality controls & No theorem parameter corresponds to these tuning inputs & \texttt{lambda\_local}$=\lambda_{\rm loc}$; \texttt{n\_far\_pairs} determines $m_0$; \texttt{locality\_time\_gap}$=g$ & Optimization controls for the fixed Monte Carlo surrogate \\
Recorded locality diagnostic & No formal certificate & \texttt{far\_product\_mean\_abs\_R}$=m_P^{-1}\sum_\ell|R^P_\ell|$ & Descriptive product-cell diagnostic; not $J_{\rm loc}$ and not $\theta_N^{\rm GC}$ \\
Inference & IC-0, IC-1, and IC-2 conditions & The supplementary workflow evaluates heterogeneous-mean local HAC and design-aware tail correction & Calibration is tested in theory-aligned settings; asymptotic validity follows from the displayed structural and rate conditions \\
\bottomrule
\end{tabular}
\end{table}

\begin{table}[H]
\centering
\footnotesize
\caption{How outcome architecture maps to the fixed contribution radius. The radius statement concerns measurability for the dependence certificate, not exactness of the cut-bias identity.}
\label{tab:model_radius_mapping}
\renewcommand{\arraystretch}{1.18}
\setlength{\tabcolsep}{3.2pt}
\begin{tabular}{@{}>{\raggedright\arraybackslash}p{0.27\linewidth}>{\raggedright\arraybackslash}p{0.18\linewidth}>{\raggedright\arraybackslash}p{0.47\linewidth}@{}}
\toprule
Outcome architecture & Sufficient product-graph radius & Interpretation \\
\midrule
Exact one-hop network plus one-lag linear model & $r_Y=1$ & Local contribution and exact linear cut identity both apply \\
Nonlinear response of the same one-hop and one-lag exposures & $r_Y=1$ & Locality remains finite, but the exact linear bias identity is replaced by the misspecification decomposition \\
$k$-hop spillover and $h$-lag carryover & $r_Y\le k+h$ & A conservative path-length bound when the inferential graph contains the relevant unit and time edges \\
Global normalization, reallocation, or ranking & No fixed $r_Y$ in general & The local certificate may fail; aggregate conditional-$L^2$ approximation is a separate sufficient route \\
\bottomrule
\end{tabular}
\end{table}

\begin{table}[H]
\centering
\footnotesize
\caption{Proof objects and empirical diagnostics for the local graph-$\psi$ route. Verdicts refer only to this sufficient route.}
\label{tab:certificate_diagnostics}
\renewcommand{\arraystretch}{1.18}
\setlength{\tabcolsep}{3.2pt}
\begin{tabular}{>{\raggedright\arraybackslash}p{0.24\linewidth}>{\raggedright\arraybackslash}p{0.36\linewidth}>{\raggedright\arraybackslash}p{0.30\linewidth}}
\toprule
Object & Logical role & Permissible verdict \\
\midrule
Exact supremum $\theta_N^{\rm GC}(s)$ & Proof-level supremum over all separated nonempty set pairs & \texttt{CERTIFIED} only after its full decay/topology rates are verified \\
Far-row mass $\alpha_N/\kappa_N$ or exact finite range & Uniform analytic upper bound or structural primitive & \texttt{CERTIFIED} when the resulting KMS rates hold \\
Sampled separated-block canonical correlations & Lower-bound stress tests for the exact supremum & A large value can make the route \texttt{ROUTE REFUTED}; small sampled values imply only \texttt{NOT CERTIFIED} \\
Pairwise covariance, effective rank, eigenvalue share & Obstruction and architecture diagnostics & May support \texttt{ROUTE REFUTED} or \texttt{NOT CERTIFIED}, never \texttt{CERTIFIED} by themselves \\
\bottomrule
\end{tabular}
\end{table}

\section{Design Diagnostics and Implementation Checks}
\label{app:design_diagnostics}

Before using an optimized assignment distribution, we recommend reporting Monte Carlo diagnostics for the design target, operational guardrails, and inference conditions:
\begin{enumerate}[leftmargin=2em]
    \item treatment share $N^{-1}\sum_i Z_{ib}$ by block;
    \item treatment exposure $B^{-1}\sum_b Z_{ib}$ by unit;
    \item average network-spillover covariance $\sum_{i,j,b}A_{ij}\Rnorm_{(i,b),(j,b)}/\sum_{i,j,b}A_{ij}$;
    \item average lag covariance $\bar{\Rnorm}_{\lag}$ and realized switching rate;
    \item objective value under empirical covariance estimated from sampled assignments;
    \item inference-locality diagnostics: average shell growth $\delta_N^{\partial}(s;k)$, the mixed shell quantity $c_N(s,m;k)$, effective rank $r_{\rm eff}(\Omega_N)$, far-correlation quantiles, and empirical canonical correlations between separated graph blocks;
    \item Kronecker approximation diagnostics, such as the normalized residual
    \[
        \frac{\|R_{\mathrm{target}}-\Pi_{\mathrm{Kron}}(R_{\mathrm{target}})\|_F}
        {\|R_{\mathrm{target}}\|_F}
    \]
    for the exposure-weighted target covariance proxy;
    \item comparisons with independent unit--time Bernoulli, hard-balanced switchback, graph-cluster randomization, and static unit-level OCD with no temporal adaptation; complete randomization can be added as an optional operational diagnostic.
\end{enumerate}

Certificate verdicts must follow Table~\ref{tab:certificate_diagnostics}. An analytic uniform upper bound or an exact finite-range primitive can support \texttt{CERTIFIED} once the full KMS rates are checked. A sampled separated-block canonical correlation is only a lower bound on the exact supremum: a large sampled value can make the proposed route \texttt{ROUTE REFUTED}, whereas small sampled values leave the design \texttt{NOT CERTIFIED}. Pairwise far-covariance summaries and effective rank are architecture diagnostics and cannot establish \texttt{CERTIFIED}.

\section{Notation Reference}
\label{app:notation_reference}
\label{app:notation}

This appendix collects the notation used throughout the paper. The main text introduces each symbol at its first substantive use; the tables here are only a reference for readers who want to check notation later.

\begin{table}[H]
\centering
\small
\begin{tabular}{ll}
\toprule
Symbol & Meaning \\
\midrule
$N$ & number of eligible units \\
$B$ & number of temporal blocks \\
$\Zcal$ & treatment-assignment cell index set, $[N]\times[B]$ \\
$\Ocal$ & evaluated post-burn-in outcome cells, $[N]\times\{2,\ldots,B\}$ \\
$L$ & number of assignment cells, $NB$ \\
$M$ & number of evaluated outcome cells, $N(B-1)$ \\
$a=(i,b),\ c=(j,r)$ & generic unit--block cells \\
$Z_{ib}$ & treatment of unit $i$ in block $b$ \\
$\mathcal U_b$ & user/session/request opportunities observed in block $b$ \\
$S_{uib}$ & pre-treatment exposure/opportunity weight for user $u$ and unit $i$ in block $b$ \\
$m_{ib}$ & unit-block traffic weight, $\sum_{u\in\mathcal U_b}S_{uib}$ \\
$Q_{uib}$ & user-level response contribution before aggregation \\
$p,q,v$ & treatment probability, control probability $1-p$, and $pq$ \\
$\AI=(A_{ij})$ & network spillover graph or weighted adjacency matrix \\
$H^{\comp}$ & contemporaneous network-spillover exposure matrix \\
$H^{\lag}$ & one-lag own-unit temporal exposure matrix \\
$H$ & generic nonnegative interference matrix \\
$\mu_a$ & structural baseline potential-outcome component \\
$\tilde\alpha_a$ & fixed baseline adjustment used by an estimator \\
$w_a, W_N$ & target weight for cell $a$ and total weight $\sum_{a\in\Ocal}w_a$ \\
$\Sigma$ & raw treatment covariance, $\Cov(Z)$ \\
$\Rnorm$ & normalized treatment covariance, $\Sigma/(pq)$ \\
$\psi_a$ & common-$p$ HT score, $Z_a/p-(1-Z_a)/(1-p)$ \\
$\deg_H$ & exposure-degree vector, $H^\top\one$ \\
$\dlat$ & latent dimension in the full Gaussian-copula parameterization \\
$\dI,\dT$ & unit and temporal-block latent dimensions in the Kronecker parameterization \\
\bottomrule
\end{tabular}
\caption{Core design, outcome, and covariance notation. The control probability is $q=1-p$. Latent dimensions are denoted by $\dlat,\dI,\dT$ and are never denoted by $q$.}
\label{tab:notation}
\end{table}

\begin{table}[H]
\centering
\begingroup
\small
\renewcommand{\arraystretch}{1.32}
\setlength{\extrarowheight}{1.2pt}
\setlength{\tabcolsep}{5.5pt}
\begin{tabular}{@{}>{\raggedright\arraybackslash}p{0.29\linewidth}>{\raggedright\arraybackslash}p{0.63\linewidth}@{}}
\toprule
Symbol & Meaning in the inference theory \\
\midrule
$M_N,L_N$ & numbers of evaluated outcome cells and assignment cells in the triangular array \\
$\mathbb P_N,\mathbb E_N,\Cov_N,\Var_N$ & conditional randomization law and operators given $\mathcal F_N$ \\
$E_N^{\rm good}$ & high-probability pre-treatment environment event on which conditional assumptions hold \\
$\omega_{Na}$ & stabilized target weight $M_Nw_{Na}/W_N$, with $M_N^{-1}\sum_a\omega_{Na}=1$ \\
$Q_{Na},\bar q_{Na},\xi_{Na}$ & weighted HT contribution, its heterogeneous design mean, and centered contribution \\
$T_N,\sigma_N^2,V_N$ & centered sum, its variance, and estimator variance $\sigma_N^2/M_N^2$ \\
$\Sigma_N$ & long-run variance $\sigma_N^2/M_N=M_NV_N$ \\
$\theta_N(s)$ & graph-$\psi$ dependence coefficient at graph separation $s$ \\
$\delta_N^{\partial},\Delta_N,c_N$ & average shell-growth and mixed topology quantities entering the CLT/HAC rates \\
$\rho_{\Omega_N}(U,V)$ & largest Gaussian canonical correlation between two assignment subvectors \\
$\alpha_N(t),\kappa_N$ & absolute far-row correlation mass and a spectral lower bound for $\Omega_N$ \\
$r_{\rm eff}(\Omega_N)$ & effective rank $\{\operatorname{tr}(\Omega_N)\}^2/\|\Omega_N\|_F^2$ \\
$r_Y,r_Z,h_*$ & finite outcome radius, finite assignment-dependence radius, and $h_*=2r_Y+r_Z$ in the dependency-graph specialization only \\
$\mathcal B_{\mathrm{bias},N}^{\star}$ & effect-scale sensitivity envelope for design bias and model misspecification \\
$\widehat\Sigma_N^{\rm or},\widehat\Sigma_N^{\rm MA},\widehat\Sigma_N^{\rm UC}$ & oracle-centered, model-assisted-centered, and PSD uncentered long-run variance estimators \\
$\mathfrak B_{\infty,N}^{\rm abs},\mathfrak S_{\infty,N}(b_N)$ & disconnected-graph cross-component covariance-target and pair-product stochastic remainders \\
$\mathsf P_N$ & external-pilot feature/design matrix for the heterogeneous contribution mean \\
$\mathcal E_N,\mathcal F_N$ & pre-pilot environment field and pilot-augmented field containing the pilot-derived design \\
$\mathbb P_N^D,\mathbb P_N^J$ & current-randomization law given $\mathcal F_N$ and joint pilot--experiment law given $\mathcal E_N$ \\
$V_N^D,V_{\delta,N}^J$ & current-randomization variance and projected pilot variance under their explicitly marked laws \\
$\widehat V_N^{\rm MA,+},\widehat V_{\rm DA}^{+}$ & nonnegative estimator-scale variances obtained by outer positive-part truncation \\
$\widehat\Sigma_{\rm DA}^{\rm sgn},\widehat\Sigma_{\rm DA,raw}^{+}$ & signed and positive-part model-assisted covariance-tail corrections \\
\bottomrule
\end{tabular}
\endgroup
\caption{Additional notation used in the inference section. The bias envelope is on the same effect scale as $\tau_{w,N}$ and can therefore be added directly to a confidence-interval half-width.}
\label{tab:inference_notation}
\end{table}

\begin{table}[H]
\centering
\small
\begin{tabular}{@{}>{\raggedright\arraybackslash}p{0.25\linewidth}>{\raggedright\arraybackslash}p{0.23\linewidth}>{\raggedright\arraybackslash}p{0.44\linewidth}@{}}
\toprule
Code/config name & Mathematical object & Role \\
\midrule
\texttt{lambda\_bias\_comp} & $\lambda_{\mathrm{bias},\comp}$ & network-bias penalty \\
\texttt{lambda\_bias\_lag} & $\lambda_{\mathrm{bias},\lag}$ & lag-bias penalty \\
\texttt{lambda\_var} (historical key) & $\lambda_{\rm edge}$ & code-level edge-load/contrast regularizer weight; no one-to-one mapping to formal $\lambda_V$ \\
\texttt{lambda\_bal} & approximation to $\lambda_{\mathrm{blk}},\lambda_{\mathrm{item}}$ & sampled soft-balance proxy \\
\texttt{switch\_penalty} & $\lambda_{\mathrm{sw}}$ & switching guardrail \\
\texttt{lambda\_local} & $\lambda_{\mathrm{loc}}$ & COSTA-LocalPenalty locality penalty \\
\bottomrule
\end{tabular}
\caption{Mapping between implementation names and mathematical tuning parameters. The formal covariance-envelope weight $\lambda_V$ has no one-to-one key in the implemented sparse objective.}
\label{tab:code_math_mapping}
\end{table}

\section{Proofs for the Inference Theory}
\label{app:inference_proofs}

This appendix records the arguments for Section~\ref{sec:inference}. Except in the explicitly marked proof of Proposition~\ref{prop:external_pilot_uncertainty}, probability statements are understood under the regular conditional law given $\mathcal F_N$. Thus the graph, target weights, structural potential-outcome schedule, design parameters, and any disturbances treated as fixed under randomization are nonrandom inside each ordinary conditional proof. Proposition~\ref{prop:external_pilot_uncertainty} instead uses the joint law $\mathbb P_N^J(\cdot)=\Pr(\cdot\mid\mathcal E_N)$. When the conditioning fields are random, displayed bounds and limits are required to hold under their explicitly stated outer or joint probability law.

\begin{proof}[Proof of Lemma~\ref{lem:conditional_unconditional_transfer}]
For every $t\in\mathbb R$, the tower property gives
\[
\left|\Pr(S_N\le t)-\Phi(t)\right|
=\left|\E\{\Pr(S_N\le t\mid\mathcal F_N)-\Phi(t)\}\right|
\le\E\Delta_N(\mathcal F_N).
\]
Because $0\le\Delta_N(\mathcal F_N)\le1$ and it converges to zero in probability, bounded convergence in probability implies
$\E\Delta_N(\mathcal F_N)\to0$. Taking the supremum over $t$ proves \eqref{eq:conditional_unconditional_transfer}.
\end{proof}

\subsection{Weak-dependence CLT and transparent special cases}\label{subsec:clt_proofs}

\begin{proof}[Proof of Theorem~\ref{thm:cell_clt}]
By construction, $\mathbb E_N\xi_{Na}=0$ and $T_N=\sum_a\xi_{Na}$. On every good-environment realization, Assumption~\ref{ass:contribution_metric_compatibility} makes $\mathcal G_N^{\Ocal}$ a connected graph indexed exactly by the contribution cells and makes $d_N$ its shortest-path metric. Definition~\ref{def:graph_psi_dependence} is a scalar specialization of the graph-$\psi$ dependence condition in \citet{kojevnikov2021network}: its conditional covariance functional is bounded by a constant times
$|A||B|(\|f\|_\infty+\operatorname{Lip}f)(\|g\|_\infty+\operatorname{Lip}g)$. Assumption~\ref{ass:weights_moments} supplies a uniform conditional $L^\nu$ moment for some $\nu>4$, and \eqref{eq:weak_dep_clt_rate_1}--\eqref{eq:weak_dep_clt_rate_2} are the corresponding variance-normalized topology--dependence restrictions.

Let
\[
\Delta_N^{\rm CLT}:=
\sup_{t\in\mathbb R}
\left|\mathbb P_N\left(\frac{T_N}{\sigma_N}\le t\right)-\Phi(t)\right|.
\]
To show $\Delta_N^{\rm CLT}\to_p0$, take an arbitrary subsequence. The good-environment probability and all $\mathcal F_N$-measurable rate conditions converge in probability, so there is a further subsequence along which the good events occur eventually and the displayed rate expressions converge almost surely. For almost every such environment realization, the deterministic KMS theorem applies to the conditional triangular array and gives $\Delta_N^{\rm CLT}\to0$. The subsequence principle therefore yields $\Delta_N^{\rm CLT}\to_p0$ on the original sequence, which is \eqref{eq:clt_centered}. The identity follows from
$\widehat\tau^{HT}_{w,\tilde\alpha}-\mathbb E_N\widehat\tau^{HT}_{w,\tilde\alpha}=T_N/M_N$ and $V_N=\sigma_N^2/M_N^2$.

For causal centering, decompose
\[
    \frac{\widehat\tau^{HT}_{w,\tilde\alpha}-\tau_{w,N}}{\sqrt{V_N}}
    =\frac{T_N}{\sigma_N}
    +\frac{\mathbb E_N\widehat\tau^{HT}_{w,\tilde\alpha}-\tau_{w,N}}{\sqrt{V_N}}.
\]
The second term converges to zero under the outer environment law by \eqref{eq:bias_negligible}. Conditional Slutsky, followed by the same subsequence argument, proves \eqref{eq:clt_tau}.
\end{proof}

\begin{proof}[Proof of Corollary~\ref{cor:poly_geometric_clt}]
The polynomial ball bound implies, uniformly in $s,m$ and for any $\alpha>1$,
\[
    \Delta_N(s,m;k\alpha)^{1/\alpha}
    \le C(1+m)^{dk},
\]
and
\[
    \left\{\delta_N^\partial
    \left(s;\frac{\alpha}{\alpha-1}\right)\right\}^{1-1/\alpha}
    \le C(1+s)^d.
\]
Hence
\begin{equation}
    c_N(s,m;k)\le C(1+m)^{dk}(1+s)^d.
    \label{eq:proof_c_polynomial_bound}
\end{equation}
Because every positive power of $\theta_N(s)$ is bounded by a geometrically decreasing sequence, the sum over $s$ in \eqref{eq:weak_dep_clt_rate_1} is $O\{(1+m_N)^{dk}\}$. Under $\sigma_N^2\asymp M_N$, the prefactor is $O(M_N^{-k/2})$. With $m_N=\lceil C_m\log M_N\rceil$, the two expressions for $k=1,2$ are therefore
\[
    O\{M_N^{-1/2}(\log M_N)^d\}
    \quad\text{and}\quad
    O\{M_N^{-1}(\log M_N)^{2d}\},
\]
which vanish. The second weak-dependence rate is bounded by
\[
    C M_N^{3/2}\rho^{m_N(1-1/\nu)}.
\]
It converges to zero whenever
$C_m>3/[2(1-1/\nu)|\log\rho|]$. This proves the claim.
\end{proof}

\begin{proof}[Proof of Corollary~\ref{cor:finite_range_clt}]
Under Assumption~\ref{ass:local_outcomes}, conditional on $\mathcal F_N$, $\xi_{Na}$ is measurable with respect to assignments in $\mathcal B_N^{\Zcal}(a;r_Y)$. If $d_N(a,c)>2r_Y+r_Z$, the two assignment neighborhoods are more than $r_Z$ apart, so the corresponding contributions are conditionally independent. Connecting pairs within $h_*=2r_Y+r_Z$ therefore gives a dependency graph whose maximum degree $D_N$ is at most $\Delta_N(h_*)-1$. If each contribution also uses an independent cellwise error $\varepsilon_{Na}$, disjoint vertex sets carry disjoint independent error collections, so the same conditional dependency graph remains valid.

Apply Theorem~1 of \citet{baldirinott1989}. Let $L_{Na}^{(k)}$ be the number of connected vertex subsets of cardinality at most $k$ that contain $a$, and define
\[
    A_{k,N}:=\sigma_N^{-k}\sum_{a\in\Ocal_N}
    L_{Na}^{(k)}\E|\xi_{Na}|^k,\qquad k=3,4.
\]
For fixed $k$, a graph of maximum degree $D_N$ satisfies
$L_{Na}^{(k)}\le C_k(D_N+1)^{k-1}$. Uniform fourth moments and $\sigma_N^2\asymp M_N$ therefore give
\[
    A_{3,N}=O\{(D_N+1)^2M_N^{-1/2}\},
    \qquad
    A_{4,N}=O\{(D_N+1)^3M_N^{-1}\}.
\]
The assumed rate $D_N=o(M_N^{1/4})$ makes both quantities vanish. The Baldi--Rinott Kolmogorov bound, which is of order
$A_{3,N}^{1/2}+A_{4,N}^{1/2}$, then yields \eqref{eq:clt_centered}.
\end{proof}

\begin{proof}[Proof of Proposition~\ref{prop:local_approximation_transfer}]
Let $R_N^{\rm app}=T_N-T_N^{(\ell_N)}$. Conditional centering gives
$\sigma_N=\|T_N\|_{2,N}$ and $\sigma_{N,\ell}=\|T_N^{(\ell_N)}\|_{2,N}$. The conditional $L^2$ reverse triangle inequality and \eqref{eq:aggregate_local_approximation} imply
$|\sigma_{N,\ell}/\sigma_N-1|\le\|R_N^{\rm app}\|_{2,N}/\sigma_N\xrightarrow{p}0$.
For every $\epsilon>0$, conditional Markov's inequality gives
$\mathbb P_N(|R_N^{\rm app}|/\sigma_N>\epsilon)
\le\epsilon^{-2}\|R_N^{\rm app}\|_{2,N}^2/\sigma_N^2\xrightarrow{p}0$;
hence $R_N^{\rm app}/\sigma_N\xrightarrow{p\mid\mathcal F}0$. Finally,
\[
    \frac{T_N}{\sigma_N}
    =\frac{T_N^{(\ell_N)}}{\sigma_{N,\ell}}
      \frac{\sigma_{N,\ell}}{\sigma_N}
      +\frac{R_N^{\rm app}}{\sigma_N},
\]
and conditional Slutsky proves \eqref{eq:clt_centered}.
\end{proof}

\subsection{Gaussian certificate and rank barrier}\label{subsec:gaussian_certificate_proofs}

\begin{proof}[Proof of Proposition~\ref{prop:gaussian_canonical_certificate}]
Fix separated nonempty $A,B\subseteq\Ocal_N$ and work conditional on $\mathcal F_N$. Under Assumption~\ref{ass:local_outcomes}, the vector $\xi_{N,A}$ is a measurable function of $G_{N,A^{+r_Y}}$, because assignments are coordinatewise threshold functions of $G_N$ and every other argument of the contribution is $\mathcal F_N$-measurable. Similarly, $\xi_{N,B}$ is measurable with respect to $G_{N,B^{+r_Y}}$.

For bounded Lipschitz $f$ and $g$, set
$F=f(\xi_{N,A})-\E f(\xi_{N,A})$ and
$H=g(\xi_{N,B})-\E g(\xi_{N,B})$. The multidimensional Gebelein inequality gives
\[
    |\E(FH)|
    \le \rho_{\Omega_N}(A^{+r_Y},B^{+r_Y})
    \{\E F^2\}^{1/2}\{\E H^2\}^{1/2}.
\]
The statement remains valid for singular Gaussian subvectors by restricting to their covariance ranges, or equivalently by an arbitrarily small nonsingular Gaussian perturbation followed by an $L^2$ limit. Since
$\Var\{f(\xi_{N,A})\}\le\|f\|_\infty^2$ and likewise for $g$,
\[
    |\Cov\{f(\xi_{N,A}),g(\xi_{N,B})\}|
    \le \rho_{\Omega_N}(A^{+r_Y},B^{+r_Y})
       \|f\|_\infty\|g\|_\infty.
\]
This is stronger than \eqref{eq:graph_psi_definition}, because $|A||B|\ge1$ and
$\|f\|_\infty\le\|f\|_\infty+\operatorname{Lip}(f)$. Taking the supremum over separated sets proves \eqref{eq:gaussian_theta_certificate}. Theorem~\ref{thm:cell_clt} then applies whenever its rate conditions hold.
\end{proof}

\begin{proof}[Proof of Corollary~\ref{cor:gaussian_independent_noise}]
Fix separated sets $A,B$ and bounded Lipschitz functions $f,g$. Write
$F=f(Q_{N,A})-\E f(Q_{N,A})$ and
$H=g(Q_{N,B})-\E g(Q_{N,B})$. Conditional on the two relevant Gaussian subvectors, the cellwise noise blocks indexed by $A$ and $B$ are independent of one another and of $G_N$. Hence, with
$\bar F=\E(F\mid G_{N,A^{+r_Y}})$ and
$\bar H=\E(H\mid G_{N,B^{+r_Y}})$,
\[
    \E(FH\mid G_{N,A^{+r_Y}},G_{N,B^{+r_Y}})=\bar F\bar H,
    \qquad
    \Cov(F,H)=\Cov(\bar F,\bar H).
\]
Conditional expectation is an $L^2$ contraction, so
$\Var(\bar F)\le\Var(F)$ and
$\Var(\bar H)\le\Var(H)$. Applying the Gaussian Gebelein inequality to $\bar F$ and $\bar H$ gives the same canonical-correlation bound as in the proof of Proposition~\ref{prop:gaussian_canonical_certificate}. The remaining graph-$\psi$ comparison is unchanged.
\end{proof}

\begin{proof}[Proof of Proposition~\ref{prop:spectral_rowsum_certificate}]
Fix $s>2r_Y$ and separated nonempty sets $A,B\subseteq\Ocal_N$ with
$d_N(A,B)\ge s$. Put $U=A^{+r_Y}$, $V=B^{+r_Y}$, and
$t=s-2r_Y\ge1$. Geodesic closure and the full-graph triangle inequality give $d_N^{\Zcal}(U,V)\ge t$, so $U$ and $V$ are disjoint. The spectral-floor condition passes to principal submatrices:
$\lambda_{\min}(\Omega_{N,UU})\wedge
\lambda_{\min}(\Omega_{N,VV})\ge\kappa_N$. Hence
\[
\begin{split}
    \rho_{\Omega_N}(U,V)
    &=\|\Omega_{N,UU}^{-1/2}\Omega_{N,UV}
       \Omega_{N,VV}^{-1/2}\|_{\rm op}\\
    &\le \kappa_N^{-1}\|\Omega_{N,UV}\|_{\rm op}\\
    &\le \kappa_N^{-1}
       \{\|\Omega_{N,UV}\|_1
          \|\Omega_{N,UV}\|_\infty\}^{1/2}\\
    &\le \kappa_N^{-1}\alpha_N(t).
\end{split}
\]
The last inequality follows because every row and column of the cross-covariance block is a sub-sum of correlations at graph distance at least $t$. Canonical correlation is at most one, so taking the supremum over $A,B$ proves \eqref{eq:spectral_rowsum_bound}. The geometric-decay conclusion follows immediately from Proposition~\ref{prop:gaussian_canonical_certificate} and Corollary~\ref{cor:poly_geometric_clt}.
\end{proof}

\begin{proof}[Proof of Proposition~\ref{prop:nugget_bias_floor}]
For $a\ne c$, the identity component has zero off-diagonal entry, so
\[
    \Omega_{N,ac}=(1-\lambda_N)\Omega^{\rm str}_{N,ac}
    \le1-\lambda_N.
\]
The Gaussian-copula map $\mathcal G_p$ is increasing, which proves
\eqref{eq:nugget_edge_correlation_cap}. The weighted positive-exposure bias is nonpositive by Proposition~\ref{prop:expected_cut_geometry}; hence its absolute value is
\[
    \frac{\vartheta_N}{W_N}\sum_{a,c}w_{Na}H_{N,ac}
    (1-\Rnorm_{N,ac}).
\]
Keeping only off-diagonal terms and using the correlation cap proves
\eqref{eq:nugget_bias_floor}.

For the boundary expansion, put $t_p=\Phi^{-1}(p)$ and $u=1-\rho$. By \eqref{eq:Gp_derivative},
\[
\begin{split}
    1-\mathcal G_p(1-\lambda)
    &=\frac1{p(1-p)}\int_{1-\lambda}^1
      \varphi_2(t_p,t_p;\rho)\,d\rho\\
    &=\frac1{p(1-p)}\int_0^\lambda
      \frac{\exp\{-t_p^2/(2-u)\}}
      {2\pi\sqrt{2u-u^2}}\,du.
\end{split}
\]
Uniformly for $p$ in any compact overlap set,
\[
    \frac{\exp\{-t_p^2/(2-u)\}}
      {2\pi\sqrt{2u-u^2}}
    =\frac{\varphi(t_p)}{2\sqrt{\pi u}}\{1+o(1)\}
    \qquad (u\downarrow0).
\]
The ratio is uniformly bounded near zero on that set, so integration gives
\[
    1-\mathcal G_p(1-\lambda)
    =\frac{\varphi(t_p)}{p(1-p)\sqrt\pi}\sqrt\lambda
      \{1+o(1)\},
\]
uniformly over the overlap set. If normalized off-diagonal exposure mass and $\vartheta_N$ are bounded below, the bias floor is therefore of order at least $\sqrt{\lambda_N}$. Under a root-$M_N$ standard-error scale this can be $o_p(M_N^{-1/2})$ only if $\lambda_N=o_p(M_N^{-1})$ under the outer environment law.
\end{proof}

\begin{proof}[Proof of Corollary~\ref{cor:nugget_supported_certificate_frontier}]
Proposition~\ref{prop:nugget_bias_floor} gives $\lambda_N=o_p(M_N^{-1})$ under raw-GATE IC-1 and the stated root-$M_N$ exposure conditions. Because $\kappa_N\asymp\lambda_N$, condition \eqref{eq:nugget_supported_rowsum_rate} implies
$\alpha_N(s_N-2r_Y)=o_p(\lambda_N)$. Combining the two relations proves \eqref{eq:nugget_supported_joint_frontier}. If the structured covariance has its own nonvanishing spectral floor, $\kappa_N$ need not be of order $\lambda_N$, which is why the conclusion is explicitly limited to nugget-supported certification.
\end{proof}

\begin{proof}[Proof of Proposition~\ref{prop:global_factor_certifiability}(i)]
Because $\Omega_N$ is a correlation matrix, $\tr(\Omega_N)=L_N$ and
\[
    \|\Omega_N\|_F^2=\frac{L_N^2}{r_{\rm eff}(\Omega_N)}.
\]
There are at most $L_ND_N^{\Zcal}(h)$ ordered pairs $(a,c)$ with $d_N(a,c)\le h$, and every squared correlation is at most one. Therefore
\[
\begin{split}
    \sum_{d_N^{\Zcal}(a,c)>h}\Omega_{N,ac}^2
    &=\|\Omega_N\|_F^2-
      \sum_{d_N^{\Zcal}(a,c)\le h}\Omega_{N,ac}^2\\
    &\ge \frac{L_N^2}{r_{\rm eff}(\Omega_N)}-L_ND_N^{\Zcal}(h).
\end{split}
\]
Let $r=\operatorname{rank}(\Omega_N)$ and let
$\lambda_1,\ldots,\lambda_r$ be its nonzero eigenvalues. Cauchy--Schwarz gives
\[
    \left(\sum_{j=1}^r\lambda_j\right)^2
    \le r\sum_{j=1}^r\lambda_j^2,
\]
and hence $r_{\rm eff}(\Omega_N)\le r$. This proves both inequalities in \eqref{eq:rank_barrier_bound}.

If $\epsilon_N^{\rm far}=\max_{d_N^{\Zcal}(a,c)>h_N}|\Omega_{N,ac}|$, the left-hand side is at most $L_N^2(\epsilon_N^{\rm far})^2$. Consequently,
\[
    \frac{1}{r_{\rm eff}(\Omega_N)}
    \le \frac{D_N^{\Zcal}(h_N)}{L_N}+(\epsilon_N^{\rm far})^2.
\]
The right-hand side converges to zero under the stated locality conditions, proving that effective rank diverges. Finally,
$\operatorname{rank}(\Omega_T\otimes\Omega_I)
=\operatorname{rank}(\Omega_T)\operatorname{rank}(\Omega_I)
\le d_{T,N}d_{I,N}$.
\end{proof}

\begin{proof}[Proof of Proposition~\ref{prop:global_factor_certifiability}(ii)]
Write $G_N=U_N\zeta_N$ with $\zeta_N\sim N(0,I_{r_N})$. Full column rank of $U_{N,A}$ implies that
\[
    \zeta_N=(U_{N,A}^\top U_{N,A})^{-1}U_{N,A}^\top G_{N,A}
\]
is measurable with respect to $G_{N,A}$. The same is true using $B$. Hence any nondegenerate scalar $v^\top\zeta_N$ is simultaneously a square-integrable function of $G_{N,A}$ and of $G_{N,B}$. Taking these identical functions in the definition of Gaussian maximal correlation yields correlation one. Since canonical correlation is at most one, \eqref{eq:shared_factor_saturation} follows.
\end{proof}

\begin{proof}[Proof of Proposition~\ref{prop:global_factor_certifiability}(iii)]
Fix a unit vector $v\in\mathbb R^{r_N}$ and write
$G_A=U_{N,A}^\top U_{N,A}$ and $G_B=U_{N,B}^\top U_{N,B}$. Choose
\[
    x=U_{N,A}G_A^{-1}v,
    \qquad
    y=U_{N,B}G_B^{-1}v.
\]
Then $U_{N,A}^\top x=v$ and $U_{N,B}^\top y=v$. Under
\eqref{eq:weak_common_factor_model}, the cross covariance of
$x^\top G_{N,A}$ and $y^\top G_{N,B}$ is $a_N$, while
\[
\begin{split}
    x^\top\Omega_{N,AA}x
    &=a_N+(1-a_N)v^\top G_A^{-1}v
      \le a_N+\frac{1-a_N}{\lambda_{\min}(G_A)},\\
    y^\top\Omega_{N,BB}y
    &\le a_N+\frac{1-a_N}{\lambda_{\min}(G_B)}.
\end{split}
\]
Substituting this admissible pair into the supremum defining canonical correlation proves
\eqref{eq:weak_common_factor_lower_bound}. If both minimum Gram eigenvalues are proportional to their block sizes and $a_N\min(|A|,|B|)\to\infty$, each denominator is $a_N^{1/2}\{1+o(1)\}$, so the lower bound tends to one.
\end{proof}

\begin{proof}[Proof of Corollary~\ref{cor:tight_frame_block_saturation}]
Choose $L_N$, $s_N$, and two positive integers $m_{A,N},m_{B,N}$ divisible by $r$ such that
\[
L_N=m_{A,N}+m_{B,N}+s_N-1,\qquad
m_{A,N}\asymp m_{B,N}\asymp L_N,
\qquad s_N\to\infty,\quad s_N=o(L_N),
\]
and, after passing to a harmless subsequence, let $s_N-1$ also be divisible by $r$. Construct two connected communities on vertex sets $A_N$ and $B_N$ of sizes $m_{A,N}$ and $m_{B,N}$, choose vertices $a_N^\circ\in A_N$ and $b_N^\circ\in B_N$, and join them by a simple path with $s_N$ edges and $s_N-1$ internal vertices. There are no other cross-community edges. The resulting assignment graph is connected and
$d_N^{\Zcal}(A_N,B_N)=s_N$. Take $\Ocal_N=\Zcal_N$.

Assign the coordinate vectors $e_1,\ldots,e_r$ equally often within each of $A_N$, $B_N$, and the internal bridge. The resulting matrix $U_N$ has unit-norm rows and
\[
U_N^\top U_N=\frac{L_N}{r}I_r,\qquad
U_{N,A_N}^\top U_{N,A_N}=\frac{|A_N|}{r}I_r,
\qquad
U_{N,B_N}^\top U_{N,B_N}=\frac{|B_N|}{r}I_r.
\]
Set $a_N=L_N^{-1/4}$ and
$\Omega_N=a_NU_NU_N^\top+(1-a_N)I_{L_N}$. Unit row norms give
$\max_{a\ne c}|\Omega_{N,ac}|\le a_N\to0$. Since $U_NU_N^\top$ has $r$ nonzero eigenvalues equal to $L_N/r$,
\[
r_{\rm eff}(\Omega_N)
=\frac{L_N^2}{r\{1-a_N+a_NL_N/r\}^2+(L_N-r)(1-a_N)^2}
\sim rL_N^{1/2}.
\]
Proposition~\ref{prop:global_factor_certifiability}(iii) gives
\[
\rho_{\Omega_N}(A_N,B_N)
\ge
\frac{a_N}{a_N+(1-a_N)r/\min\{|A_N|,|B_N|\}}
\longrightarrow1.
\]
Canonical correlation is nondecreasing when coordinates are added. Because $s_N>2r_Y$ eventually,
\[
\theta_N^{\rm GC}(s_N)
\ge \rho_{\Omega_N}(A_N^{+r_Y},B_N^{+r_Y})
\ge \rho_{\Omega_N}(A_N,B_N)\longrightarrow1.
\]
This proves the latent-certificate obstruction. No reverse maximal-correlation inequality through the threshold map is used.
\end{proof}

\begin{proof}[Proof of Corollary~\ref{cor:local_factor_ic0}]
Conditional on $\mathcal F_N$, the local-factor construction in \eqref{eq:local_factor_sampler} has covariance $\Omega_N=\mathsf L_N\mathsf L_N^\top\succeq0$ and unit diagonal. If two assignment sets are more than $r_F$ apart, their shock-index unions are disjoint and the corresponding Gaussian subvectors are independent. Thresholding preserves independence. Expanding two $r_Y$-local outcome neighborhoods adds at most $2r_Y$ to the separation radius. Corollary~\ref{cor:finite_range_clt} with $h_*=2r_Y+r_F$ and \eqref{eq:local_factor_ic0_rate} then gives \eqref{eq:clt_centered}.
\end{proof}

\subsection{Disagreement lower bounds and causal-centering incompatibility}\label{subsec:causal_centering_proofs}

\begin{proof}[Proof of Theorem~\ref{thm:path_trilemma}]
The centered CLT follows from Theorem~\ref{thm:cell_clt}. For the path and $s=0$, the shell $\partial\mathcal B_N(a;0)$ contains only $a$, and
$\mathcal B_N(a;m_N)\setminus\mathcal B_N(a;-1)=\mathcal B_N(a;m_N)$. The power-mean inequality and the explicit path-ball cardinalities imply, for a universal $c>0$ and every $m_N\ge1$,
\begin{equation}
    c_N(0,m_N;1)
    \ge \frac1{M_N}\sum_{a=1}^{M_N}|\mathcal B_N(a;m_N)|
    \ge c\min\{m_N+1,M_N\}.
    \label{eq:path_c_lower_bound_proof}
\end{equation}
Because $\theta_N(0)=1$, the $s=0$, $k=1$ summand in
\eqref{eq:weak_dep_clt_rate_1} is bounded below by a constant multiple of
$M_N\min\{m_N+1,M_N\}/\sigma_N^3$. Under
$\sigma_N^2\asymp M_N$ on the good-environment events, convergence of the rate in outer probability forces
$\min\{m_N+1,M_N\}=o(M_N^{1/2})$, and hence
$m_N=o(M_N^{1/2})$. This proves \eqref{eq:path_radius_upper_rootM}.

By \eqref{eq:far_assignment_decorrelation}, Proposition~\ref{prop:path_bias_lower_bound} applies with
$\epsilon_{m_N}=\varepsilon_N^Z=o_p(1)$. Since $m_N=o(M_N)$, after reducing a constant $c_0>0$ if necessary,
\[
|\mathbb E_N\widehat\tau_N^{HT}-\tau_N|\ge \frac{c_0}{m_N}.
\]
Meanwhile $\sqrt{V_N}=\sigma_N/M_N\asymp M_N^{-1/2}$. Therefore
\[
\frac{|\mathbb E_N\widehat\tau_N^{HT}-\tau_N|}{\sqrt{V_N}}
\ge c_1\frac{M_N^{1/2}}{m_N}\xrightarrow{p}\infty,
\]
which proves \eqref{eq:path_trilemma_divergence}. No implication from contribution graph-$\psi$ dependence to assignment decorrelation is used: the latter enters only through the explicit condition \eqref{eq:far_assignment_decorrelation}.
\end{proof}

\begin{proof}[Proof of Corollary~\ref{cor:gaussian_path_trilemma}]
Since $\sigma_N^2\asymp M_N$,
\[
\frac{M_N^2}{\sigma_N}\asymp M_N^{3/2}\longrightarrow\infty.
\]
The second weak-dependence rate therefore implies
$\theta_N^{\rm GC}(m_N)\xrightarrow{p}0$. For every admissible $i$, the singleton sets
$A=\{i\}$ and $B=\{i+m_N\}$ are separated by $m_N$; eventually
$m_N>2r_Y$. Since their expanded Gaussian neighborhoods contain the original coordinates,
\[
|\Omega_{N,i,i+m_N}|
\le \rho_{\Omega_N}(A^{+r_Y},B^{+r_Y})
\le \theta_N^{\rm GC}(m_N).
\]
The normalized Bernoulli assignment correlation is
$\Rnorm_{N,i,i+m_N}=\mathcal G_{p_N}(\Omega_{N,i,i+m_N})$.
The map $(p,\rho)\mapsto\mathcal G_p(\rho)$ is uniformly continuous on the compact overlap set and satisfies $\mathcal G_p(0)=0$. Consequently
\[
\sup_{1\le i\le M_N-m_N}|\Rnorm_{N,i,i+m_N}|\xrightarrow{p}0,
\]
which verifies \eqref{eq:far_assignment_decorrelation}. Theorem~\ref{thm:path_trilemma} gives the result.
\end{proof}

\begin{proof}[Proof of Corollary~\ref{cor:variance_scale_window}]
Because $m_N=o(M_N)$, the path lower bound \eqref{eq:path_c_lower_bound_proof} is of order $m_N$. The $s=0$, $k=1$ term of \eqref{eq:weak_dep_clt_rate_1} therefore implies
$M_Nm_N/\sigma_N^3\to0$, which is \eqref{eq:clt_radius_upper_window}. Vanishing far correlation and Proposition~\ref{prop:path_bias_lower_bound} give a raw bias of order at least $m_N^{-1}$. Negligibility relative to
$\sqrt{V_N}=\sigma_N/M_N$ therefore requires
$m_N^{-1}=o(\sigma_N/M_N)$, which is
\eqref{eq:bias_radius_lower_window}. The two strict inequalities can share an order window only if
\[
    \frac{M_N}{\sigma_N}
    =o\!\left(\frac{\sigma_N^3}{M_N}\right),
\]
equivalently $\sigma_N^2/M_N\to\infty$.
\end{proof}

\begin{proof}[Proof of Corollary~\ref{cor:fixed_range_bias_floor}]
Take $m=r_F+1$. Independence at that distance implies
$\Rnorm_{N,i,i+m}=0$ for every admissible $i$. Proposition~\ref{prop:path_bias_lower_bound} gives
\[
    |\E\widehat\tau_N^{HT}-\tau_N|
    \ge \frac{\vartheta_0h_0}{M_N}
      \frac{M_N-m}{m}.
\]
Letting $N\to\infty$ with fixed $m$ proves
\eqref{eq:fixed_range_path_bias_floor}.
\end{proof}

\begin{proof}[Proof of Proposition~\ref{prop:external_debiasing}]
Under the exact two-channel model, Proposition~\ref{prop:misspecification_bias} with zero remainder gives
$\E\widehat\tau^{HT}_{w,\tilde\alpha}-\tau_{w,N}=\delta_N^\top c_N$. Hence
\[
\begin{split}
    \frac{\widehat\tau_N^{\rm BC}-\tau_{w,N}}{\sqrt{V_N}}
    ={}&
    \frac{\widehat\tau^{HT}_{w,\tilde\alpha}
      -\E\widehat\tau^{HT}_{w,\tilde\alpha}}{\sqrt{V_N}}
    -\frac{c_N^\top(\widehat\delta_N-\delta_N)}{\sqrt{V_N}}.
\end{split}
\]
The first term converges to $N(0,1)$ by IC-0 and the second is $o_p(1)$ by
\eqref{eq:debiased_nuisance_rate}. Slutsky's theorem proves
\eqref{eq:debiased_causal_clt}.
\end{proof}

\begin{proof}[Proof of Corollary~\ref{cor:external_debiasing_pilot_size}]
Cauchy--Schwarz and \eqref{eq:external_debiasing_parameter_rate} give
\[
|c_N^\top(\widehat\delta_N-\delta_N)|
\le\|c_N\|_2\|\widehat\delta_N-\delta_N\|_2
=O_p\!\left(\|c_N\|_2
\sqrt{\frac{d_{\delta,N}}{n_{\delta,N}}}\right).
\]
Condition \eqref{eq:external_debiasing_pilot_size} makes this
$o_p(\sqrt{V_N})$, which is \eqref{eq:debiased_nuisance_rate}. The root-$M_N$ statement follows by substituting
$V_N\asymp M_N^{-1}$, fixed $d_{\delta,N}$, and
$\|c_N\|_2\asymp1$.
\end{proof}

\begin{proof}[Proof of Proposition~\ref{prop:external_pilot_uncertainty}]
All operators in this proof are under $\mathbb P_N^J(\cdot)=\Pr(\cdot\mid\mathcal E_N)$. The exact bias identity, conditional on the realized pilot-derived design, gives
\[
\widehat\tau_N^{\rm BC}-\tau_{w,N}=U_N-D_N.
\]
Therefore
\[
\frac{\widehat\tau_N^{\rm BC}-\tau_{w,N}}
{\sqrt{V_N^D+V_{\delta,N}^J}}
=\sqrt{\lambda_N^J}\frac{U_N}{\sqrt{V_N^D}}
-\sqrt{1-\lambda_N^J}\frac{D_N}{\sqrt{V_{\delta,N}^J}}.
\]
By \eqref{eq:external_pilot_joint_limits}, the right-hand side converges under the joint law to
$\sqrt\lambda Z_1-\sqrt{1-\lambda}Z_2$, where
$(Z_1,Z_2)\sim N_2(0,I_2)$. This linear combination is standard normal for every
$\lambda\in[0,1]$, proving \eqref{eq:external_pilot_variance_propagation}. Joint-law ratio consistency and Slutsky's theorem give the feasible version.

For the primitive route stated after the proposition, stable conditional convergence of $U_N/\sqrt{V_N^D}$ relative to $\mathcal F_N$ implies joint convergence with every pilot-measurable variable having its own weak limit. Applying this to $D_N/\sqrt{V_{\delta,N}^J}$ yields the independent bivariate Gaussian limit even though the current design is pilot-adaptive. Also, $\mathbb E_N^D(U_N\mid\mathcal F_N)=0$, so $\mathbb E_N^J(U_ND_N\mid\mathcal E_N)=0$ exactly; this uncorrelatedness alone would not suffice without the stable joint-limit argument.
\end{proof}

\subsection{Misspecification-aware bias}

\begin{proof}[Proof of Proposition~\ref{prop:misspecification_bias}]
The fixed adjustment contributes no expectation because
\[
    \E\{\psi_{Na}(Z_{Na})\tilde\alpha_{Na}\}=0.
\]
A disturbance fixed under randomization drops out for the same reason, while a random measurement error drops out by \eqref{eq:noise_ht_orthogonality}. The common-$p_N$ HT identities give
\[
    \E\{\psi_{Na}(Z_{Na})Z_{Na}\}=1,
    \qquad
    \E\{\psi_{Na}(Z_{Na})Z_{Nc}\}=\Rnorm_{N,ac}.
\]
Applying these identities to \eqref{eq:working_model_with_remainder} yields
\[
\begin{split}
    \E\widehat\tau^{HT}_{w,\tilde\alpha}
    =\frac{1}{W_N}\sum_aw_{Na}\Bigg[
      \beta_{Na}
      +\gamma_N\sum_cH^{\comp}_{N,ac}\Rnorm_{N,ac}
      +\eta_N\sum_cH^{\lag}_{N,ac}\Rnorm_{N,ac}
      +\E\{\psi_{Na}r_{Na}(Z_N)\}\Bigg].
\end{split}
\]
The weighted global contrast is
\[
\begin{split}
    \tau_{w,N}=\frac{1}{W_N}\sum_aw_{Na}\Bigg[
      \beta_{Na}
      +\gamma_N\sum_cH^{\comp}_{N,ac}
      +\eta_N\sum_cH^{\lag}_{N,ac}
      +r_{Na}(\one)-r_{Na}(\mathbf 0)\Bigg].
\end{split}
\]
Subtracting proves \eqref{eq:weighted_bias_inference}--\eqref{eq:remainder_bias_definition}; the triangle inequality proves \eqref{eq:bias_bound_inference}.

If $\sup_z|r_{Na}(z)|\le\bar r_{Na}$, then
$\E|\psi_{Na}|=2$, so
$|\E(\psi_{Na}r_{Na})|\le2\bar r_{Na}$. Also
$|r_{Na}(\one)-r_{Na}(\mathbf 0)|\le2\bar r_{Na}$. Summing the resulting bound proves \eqref{eq:crude_remainder_bias_bound}.
\end{proof}

The incompatibility concerns causal centering of raw HT, not variance estimation. Once IC-1 is restored---through shrinking interference, debiasing, a different estimand, or a different asymptotic route---the remaining and logically separate IC-2 problem is to estimate $V_N$. We now turn to that step.

\subsection{Network HAC and feasible centering}\label{subsec:hac_proof_appendix}

\begin{proof}[Proof of Proposition~\ref{prop:oracle_network_hac_connected}]
The target $\Sigma_N=\Var_N(T_N/\sqrt{M_N})$ is the scalar conditional long-run variance in the network HAC theory of \citet{kojevnikov2021network}. Assumption~\ref{ass:contribution_metric_compatibility} makes the induced contribution graph $\mathcal G_N^{\Ocal}$ connected and makes $d_N$ its shortest-path metric. Their oracle quadratic form is exactly \eqref{eq:oracle_network_hac}. Assumption~\ref{ass:weights_moments} supplies the uniform $L^\nu$ bound, while \eqref{eq:hac_kernel_bias_rate} and \eqref{eq:hac_stochastic_rate} are a scalar sufficient specialization of the kernel-approximation and network-complexity restrictions. Applying their theorem pointwise on the good-environment events and using the same subsequence argument as in Theorem~\ref{thm:cell_clt} gives
$\widehat\Sigma_N^{\rm or}-\Sigma_N\xrightarrow{p\mid\mathcal F}0$. Since $\Sigma_N$ is bounded away from zero on those events, division gives conditional ratio consistency, and
$\widehat V_N^{\rm or}/V_N=\widehat\Sigma_N^{\rm or}/\Sigma_N$.
\end{proof}

\begin{proof}[Proof of Proposition~\ref{prop:oracle_network_hac_disconnected}]
Because $\mathbb E_N\xi_{Na}=0$ and $K(\infty)=0$,
\begin{equation}
\begin{split}
\mathbb E_N\widehat\Sigma_N^{\rm or,disc}-\Sigma_N
={}&\frac1{M_N}\sum_{a,c:\,\widetilde d_N(a,c)<\infty}
\{K(\widetilde d_N(a,c)/b_N)-1\}
\Cov_N(\xi_{Na},\xi_{Nc})\\
&-\frac1{M_N}\sum_{a,c:\,\widetilde d_N(a,c)=\infty}
\Cov_N(\xi_{Na},\xi_{Nc}).
\end{split}
\label{eq:disconnected_hac_expectation_decomposition}
\end{equation}
The usual truncation/interpolation argument for graph-$\psi$ arrays and the conditional uniform $L^\nu$ bound yield, for finite separation $s$,
\[
|\Cov_N(\xi_{Na},\xi_{Nc})|
\le C\theta_N(s)^{1-2/\nu}.
\]
Grouping the first sum by finite shells bounds it by a constant multiple of
\eqref{eq:hac_disconnected_finite_bias_rate}; the absolute value of the second sum is bounded by
$\mathfrak B_{\infty,N}^{\rm abs}$. Thus
$\mathbb E_N\widehat\Sigma_N^{\rm or,disc}-\Sigma_N\xrightarrow{p}0$ under the outer environment law.

For conditional stochastic error,
\begin{equation}
\Var_N(\widehat\Sigma_N^{\rm or,disc})
=\frac1{M_N^2}
\sum_{(a,c),(k,l)\in\mathcal I_N(b_N)}
\widetilde{\mathsf K}_N(a,c)\widetilde{\mathsf K}_N(k,l)
\Cov_N(\xi_{Na}\xi_{Nc},\xi_{Nk}\xi_{Nl}).
\label{eq:disconnected_hac_variance_decomposition}
\end{equation}
Split the sum according to whether the two included pairs have finite or infinite set distance. The same fourth-moment truncation and finite-shell counting argument as in connected network HAC bounds the finite-distance part by
\begin{equation}
\frac C{M_N}\sum_{s\ge0}
\widetilde c_N(s,b_N;2)\theta_N(s)^{1-4/\nu},
\label{eq:disconnected_hac_finite_variance_bound}
\end{equation}
which converges to zero in outer probability by \eqref{eq:hac_disconnected_finite_stochastic_rate}. The absolute value of the infinite-distance part is at most
$\mathfrak S_{\infty,N}(b_N)$ and therefore also vanishes in outer probability. Conditional Chebyshev's inequality, the conditional expectation result, and the good-environment lower bound for $\Sigma_N$ prove \eqref{eq:oracle_hac_disconnected_consistency}.
\end{proof}

\begin{proof}[Proof of Corollary~\ref{cor:component_independent_hac}]
If $\widetilde d_N(a,c)=\infty$, then $a$ and $c$ belong to distinct connected components, so conditional component independence gives
$\Cov_N(\xi_{Na},\xi_{Nc})=0$. Hence
$\mathfrak B_{\infty,N}^{\rm abs}=0$. Every included pair $(a,c)\in\mathcal I_N(b_N)$ lies within a single component. If two included pairs have infinite set distance, they lie in distinct components, and their products are independent; hence the corresponding covariance is zero and
$\mathfrak S_{\infty,N}(b_N)=0$.
\end{proof}

\begin{proof}[Proof of Proposition~\ref{prop:model_assisted_hac}]
Because $\widehat\xi_N^{\rm MA}=\xi_N-r_N$,
\begin{equation}
\begin{split}
    \widehat\Sigma_N^{\rm MA}-\widehat\Sigma_N^{\rm or}
    &=\frac{1}{M_N}
      \{(\xi_N-r_N)^\top\mathsf K_N(\xi_N-r_N)
      -\xi_N^\top\mathsf K_N\xi_N\}\\
    &=\frac{1}{M_N}
      \{r_N^\top\mathsf K_Nr_N-2r_N^\top\mathsf K_N\xi_N\}.
\end{split}
\label{eq:feasible_oracle_hac_difference}
\end{equation}
The first quadratic form in \eqref{eq:centering_error_conditions} is $\mathcal F_N$-measurable under the convention in \eqref{eq:conditional_probability_operators}; hence its outer $o_p$ rate is also conditional-in-probability convergence in the sense of \eqref{eq:conditional_probability_convergence_notation}. Together with the second condition, \eqref{eq:feasible_oracle_hac_difference} is
$o_{p\mid\mathcal F}(\sigma_N^2/M_N)=o_{p\mid\mathcal F}(\Sigma_N)$. Combining this with oracle consistency proves
$\widehat\Sigma_N^{\rm MA}/\Sigma_N\xrightarrow{p\mid\mathcal F}1$. Because $\Sigma_N$ is bounded away from zero on the good-environment events, this ratio convergence implies
$\mathbb P_N\{\widehat\Sigma_N^{\rm MA}>0\}\xrightarrow{p}1$; hence taking the outer positive part leaves the limit unchanged and gives
$\widehat V_N^{\rm MA,+}/V_N\xrightarrow{p\mid\mathcal F}1$.
\end{proof}

\begin{proof}[Proof of Corollary~\ref{cor:primitive_centering_rate}]
Let $r_N=\widehat q_N^0-\bar q_N$. Positive semidefiniteness and the operator-norm bound imply
\[
    0\le r_N^\top\mathsf K_Nr_N
    \le\Lambda_N\|r_N\|_2^2=o_p(\sigma_N^2).
\]
Oracle consistency gives
$\xi_N^\top\mathsf K_N\xi_N=O_{p\mid\mathcal F}(\sigma_N^2)$ because
$M_N^{-1}\xi_N^\top\mathsf K_N\xi_N
=\widehat\Sigma_N^{\rm or}=O_{p\mid\mathcal F}(\Sigma_N)$ and
$M_N\Sigma_N=\sigma_N^2$. Cauchy--Schwarz in the
$\mathsf K_N$ seminorm therefore yields
\[
    |r_N^\top\mathsf K_N\xi_N|
    \le(r_N^\top\mathsf K_Nr_N)^{1/2}
       (\xi_N^\top\mathsf K_N\xi_N)^{1/2}
    =o_{p\mid\mathcal F}(\sigma_N^2).
\]
Thus the outer quadratic-form rate and the conditional cross-term rate in \eqref{eq:centering_error_conditions} hold. If
$\sigma_N^2\asymp M_N$ and $\Lambda_N\lesssim D_N(b_N)$, rearranging
\eqref{eq:primitive_centering_l2_rate} gives
\eqref{eq:average_centering_mse_rate}.
\end{proof}

\begin{proof}[Proof of Corollary~\ref{cor:indefinite_kernel_centering}]
The operator-norm inequality gives
\[
|r_N^\top\mathsf K_Nr_N|\le\Lambda_N\|r_N\|_2^2,
\qquad
|r_N^\top\mathsf K_N\xi_N|
\le\Lambda_N\|r_N\|_2\|\xi_N\|_2.
\]
Together with $\|\xi_N\|_2=O_{p\mid\mathcal F}(\sqrt{M_N})$ and
\eqref{eq:indefinite_kernel_centering_rates}, these bounds verify
\eqref{eq:centering_error_conditions}. The algebraic decomposition \eqref{eq:feasible_oracle_hac_difference} and the selected certified oracle-HAC proposition then prove conditional ratio consistency.

Under the external-pilot model,
\[
\|r_N\|_2^2
\le \lambda_{\max}(\mathsf P_N^\top\mathsf P_N)
\|\widehat\vartheta_N-\vartheta_{0,N}\|_2^2
=O_p\!\left(\frac{M_Nd_{q,N}}{n_{0,N}}\right).
\]
When $\sigma_N^2\asymp M_N$, the two conditions in
\eqref{eq:indefinite_kernel_centering_rates} reduce to
$\Lambda_Nd_{q,N}/n_{0,N}\to_p0$ and
$\Lambda_N\sqrt{d_{q,N}/n_{0,N}}\to_p0$. Because a symmetric kernel with diagonal entries one has $\|\mathsf K_N\|_{\rm op}\ge1$, a valid bound satisfies $\Lambda_N\ge1$; hence
\eqref{eq:indefinite_external_pilot_rate} implies both requirements.
\end{proof}

\begin{proof}[Proof of Corollary~\ref{cor:external_pilot_centering}]
Proposition~\ref{prop:oracle_network_hac_connected} supplies conditional oracle ratio consistency. It therefore remains only to verify the primitive centering rate. Let $r_N=\mathsf P_N(\widehat\vartheta_N-\vartheta_{0,N})$. By
\eqref{eq:external_pilot_mean_model}--\eqref{eq:external_pilot_parameter_rate},
\[
\|r_N\|_2^2
\le\lambda_{\max}(\mathsf P_N^\top\mathsf P_N)
\|\widehat\vartheta_N-\vartheta_{0,N}\|_2^2
=O_p\!\left(\frac{M_Nd_{q,N}}{n_{0,N}}\right).
\]
Condition \eqref{eq:external_pilot_hac_rate} therefore implies
\eqref{eq:primitive_centering_l2_rate}, and Corollary~\ref{cor:primitive_centering_rate} proves conditional ratio consistency. Under
$\sigma_N^2\asymp M_N$ and $\Lambda_N\lesssim D_N(b_N)$,
\eqref{eq:external_pilot_rootM_rate} is exactly a sufficient rearrangement of
\eqref{eq:external_pilot_hac_rate}. Combining ratio consistency with IC-1 and Corollary~\ref{cor:studentized_model_assisted} yields IC-2.
\end{proof}

\begin{proof}[Proof of Corollary~\ref{cor:studentized_model_assisted}]
Proposition~\ref{prop:model_assisted_hac} gives
$\widehat V_N^{\rm MA,+}/V_N\xrightarrow{p\mid\mathcal F}1$. Combining this ratio convergence with the centered limit in Theorem~\ref{thm:cell_clt} and conditional Slutsky proves
\eqref{eq:studentized_centered_clt}. If
$|\mathbb E_N\widehat\tau^{HT}_{w,\tilde\alpha}-\tau_{w,N}|/\sqrt{V_N}\xrightarrow{p}0$, replacing the design expectation by $\tau_{w,N}$ changes the numerator by $o(\sqrt{V_N})$ and hence leaves the limit unchanged.
\end{proof}

\begin{proof}[Proof of Proposition~\ref{prop:uncentered_conservative}]
Write $Q_N=\bar q_N+\xi_N$. Since $\E\xi_N=0$,
\[
\begin{split}
    \E\widehat\Sigma_N^{\rm UC}
    &=\frac{1}{M_N}\E(Q_N^\top\mathsf K_NQ_N)\\
    &=\frac{1}{M_N}\tr\{\mathsf K_N\Cov(\xi_N)\}
      +\frac{1}{M_N}\bar q_N^\top\mathsf K_N\bar q_N\\
    &=\Sigma_{K,N}
      +\frac{1}{M_N}\bar q_N^\top\mathsf K_N\bar q_N.
\end{split}
\]
The second term is nonnegative because $\mathsf K_N\succeq0$. The one-sided approximation assumption gives
$\E\widehat\Sigma_N^{\rm UC}\ge\Sigma_N-o(\Sigma_N)$. Adding the assumed concentration error proves \eqref{eq:uncentered_conservative_result}.
\end{proof}

\begin{proof}[Proof of Proposition~\ref{prop:bias_aware_coverage}]
Let $b_N=\E\widehat\tau^{HT}_{w,\tilde\alpha}-\tau_{w,N}$ and
$z=z_{1-\alpha_0/2}$. On the intersection of $\mathcal E_N$ and the variance-calibration event, noncoverage implies
\[
    \frac{|\widehat\tau^{HT}_{w,\tilde\alpha}
    -\E\widehat\tau^{HT}_{w,\tilde\alpha}|}{\sqrt{V_N}}
    >z\sqrt{1-\varepsilon_N}-\frac{s_N}{\sqrt{V_N}}.
\]
The right-hand threshold converges to $z$. The centered CLT and continuity of the standard normal distribution therefore make the limiting probability of this event at most $\alpha_0$. The complement of the variance-calibration event has probability $o(1)$, while
$\limsup_N\Pr(\mathcal E_N^c)\le\delta$. A union bound proves \eqref{eq:bias_aware_coverage_bound}.
\end{proof}

\begin{proof}[Proof of Proposition~\ref{prop:direction_aware_coverage}]
Write
$b_N=\E\widehat\tau^{HT}_{w,\tilde\alpha}-\tau_{w,N}$ and
$U_N=\widehat\tau^{HT}_{w,\tilde\alpha}-
      \E\widehat\tau^{HT}_{w,\tilde\alpha}$.
On $\mathcal E_N$, condition \eqref{eq:directional_bias_event} implies
\[
\tau_{w,N}\in
[\E\widehat\tau^{HT}_{w,\tilde\alpha}-s_N,
 \E\widehat\tau^{HT}_{w,\tilde\alpha}+\widehat B_N+s_N].
\]
On the additional variance-calibration event, noncoverage by
\eqref{eq:direction_aware_ci} can occur only if
\[
\frac{U_N}{\sqrt{V_N}}
>z_{1-\alpha_L}\sqrt{1-\varepsilon_N}-\frac{s_N}{\sqrt{V_N}}
\]
or
\[
\frac{U_N}{\sqrt{V_N}}
<-z_{1-\alpha_U}\sqrt{1-\varepsilon_N}+\frac{s_N}{\sqrt{V_N}}.
\]
The centered CLT and continuity of the normal distribution make the limiting probabilities of the two events at most $\alpha_L$ and $\alpha_U$, respectively. Their sum is $\alpha_0$. The complement of the variance-calibration event is $o(1)$ and
$\limsup_N\Pr(\mathcal E_N^c)\le\delta$. A union bound proves
\eqref{eq:direction_aware_coverage_bound}.
\end{proof}

\subsection{Design-aware tail correction}

\begin{proof}[Proof of Proposition~\ref{prop:design_aware_hac}]
The signed estimator satisfies
\[
\begin{split}
    \widehat\Sigma_{\rm DA}^{\rm sgn}-\Sigma_N
    &=(\widehat\Sigma_{\rm loc}-\Sigma_{K,N})
      +(\widehat{\mathcal T}_{\rm pl}-\mathcal T_N)
      =o_p(\Sigma_N),
\end{split}
\]
proving ratio consistency.

The map $x\mapsto[x]_+$ is one-Lipschitz, so
$[\widehat{\mathcal T}_{\rm pl}]_+
=[\mathcal T_N]_++o_p(\Sigma_N)$. Therefore
\[
\begin{split}
    \widehat\Sigma_{\rm DA,raw}^{+}-\Sigma_N
    &=[\mathcal T_N]_+-\mathcal T_N+o_p(\Sigma_N)\\
    &=\mathcal T_N^-+o_p(\Sigma_N).
\end{split}
\]
This proves one-sided conservativeness and shows that ratio consistency is equivalent to
$\mathcal T_N^-=o(\Sigma_N)$. In addition,
$\widehat\Sigma_{\rm DA,raw}^{+}\ge\Sigma_N-o_p(\Sigma_N)$ and $\Sigma_N>0$ imply
$\Pr\{\widehat\Sigma_{\rm DA,raw}^{+}>0\}\to1$. The outer positive part in \eqref{eq:design_aware_hac} therefore guarantees finite-sample nonnegativity without changing either the one-sided conclusion or the ratio-consistency characterization.
\end{proof}

To justify the plug-in sufficient conditions, let
$S_N=\sum_aQ_{Na}$ and $E_N=\sum_ae_{Na}$. The elementary inequality
\[
    |\Var(S_N+E_N)-\Var(S_N)|
    \le \Var(E_N)+2\{\Var(S_N)\Var(E_N)\}^{1/2}
\]
shows that \eqref{eq:plugin_full_l2_condition} makes the full variance error
$o_p(\sigma_N^2)$. For a PSD $\mathsf K_N$, apply Cauchy--Schwarz in the seminorm
$\|x\|_{\mathsf K_N}=(x^\top\mathsf K_Nx)^{1/2}$ to the centered true and plug-in errors. Condition \eqref{eq:plugin_local_l2_condition}, together with an $O(\sigma_N^2)$ true local quadratic variance, makes both the pure error term and cross term $o_p(\sigma_N^2)$. Dividing by $M_N$ gives \eqref{eq:tail_consistency}. Standard sample-mean and sample-variance calculations give the stated Monte Carlo condition.

\subsection{H{\'a}jek and difference-in-means limits}

\begin{proof}[Proof of Proposition~\ref{prop:hajek_delta}]
Fix $\lambda\in\mathbb R^4$. If $\lambda^\top\Gamma_R\lambda>0$, Theorem~\ref{thm:cell_clt} applied to the projected scalar array gives
\[
    \frac{\lambda^\top S_{R,N}}
    {(\lambda^\top\Gamma_N\lambda)^{1/2}}
    \Rightarrow N(0,1),
\]
and hence $\lambda^\top S_{R,N}\Rightarrow
N(0,\lambda^\top\Gamma_R\lambda)$. If
$\lambda^\top\Gamma_R\lambda=0$, then
$\Var(\lambda^\top S_{R,N})=\lambda^\top\Gamma_N\lambda\to0$, so Chebyshev's inequality gives $\lambda^\top S_{R,N}\to_p0$. The Cram\'er--Wold device proves \eqref{eq:hajek_multivariate_clt} without assuming that conclusion in advance.

Because $\bar R_N-\rho_N=o_p(1)$ and $\rho_{N,3}\to\rho_3>0$, $\rho_{N,4}\to\rho_4>0$, we have
$\Pr(\mathcal D_N^H)\to1$. Hence the arbitrary fixed convention on
$(\mathcal D_N^H)^c$ is asymptotically irrelevant. The function
$h(x)=x_1/x_3-x_2/x_4$ is continuously differentiable on a neighborhood of $\rho$. The multivariate CLT also gives $S_{R,N}=O_p(1)$ and therefore
$\bar R_N-\rho_N=O_p(M_N^{-1/2})$ along the nondegenerate coordinates. A first-order Taylor expansion on $\mathcal D_N^H$ gives
\[
    \sqrt{M_N}\{h(\bar R_N)-h(\rho_N)\}
    =\nabla h(\rho_N)^\top S_{R,N}+o_p(1).
\]
Because $\nabla h(\rho_N)\to\nabla h(\rho)$, the multivariate delta method proves \eqref{eq:hajek_delta_limit}. The causal-centering statement follows by Slutsky's theorem.

For the strong-balance claim, write
\[
    A_N=M_N^{-1}\sum_a\omega_{Na}Z_{Na}X_{Na}^{\rm obs},\quad
    C_N=M_N^{-1}\sum_a\omega_{Na}(1-Z_{Na})X_{Na}^{\rm obs},
\]
and let $D_N,E_N$ be the corresponding normalized treatment and control denominators. Then
$\widehat\tau^H=A_N/D_N-C_N/E_N$, while
$\widehat\tau^{HT}=A_N/p_N-C_N/(1-p_N)$. Condition \eqref{eq:strong_balance_ratio} gives
$D_N-p_N=o_p(M_N^{-1/2})$; because $D_N+E_N=1$, it also gives
$E_N-(1-p_N)=o_p(M_N^{-1/2})$. A Taylor expansion of both ratios, together with overlap and $A_N,C_N=O_p(1)$, yields
$\widehat\tau^H-\widehat\tau^{HT}=o_p(M_N^{-1/2})$.
\end{proof}

\section{Technical Derivations}
\label{app:technical_derivations}
\label{app:derivations}

Throughout the appendix, $q=1-p$, $v=pq$, and $\Rnorm=\Cov(Z)/v$. Rows of $H$ outside $\Ocal$ are zero, and $\beta_c=0$ for $c\notin\Ocal$.

\subsection{Basic HT identities}

Recall
\[
    \psi_a=\frac{Z_a}{p}-\frac{1-Z_a}{1-p}=\frac{Z_a-p}{v}.
\]
Then
\[
    \E[\psi_a]=\frac{\E[Z_a]-p}{v}=0.
\]
Since $Z_a^2=Z_a$,
\[
    \E[\psi_aZ_a]
    =\frac{\E[(Z_a-p)Z_a]}{v}
    =\frac{\E[Z_a]-p\E[Z_a]}{v}
    =\frac{p(1-p)}{v}=1.
\]
For any cell $c$,
\[
    \E[\psi_aZ_c]
    =\frac{\E[(Z_a-p)Z_c]}{v}
    =\frac{\E[Z_aZ_c]-p\E[Z_c]}{v}
    =\frac{\Cov(Z_a,Z_c)}{v}
    =\Rnorm_{ac}.
\]

\subsection{Proof of Theorem~\ref{thm:main}: bias}\label{subsec:main_bias_proof}

Let the structural baseline be $\mu_a$ and let the estimator subtract a fixed adjustment $\tilde\alpha_a$. Under the generic baseline version of \eqref{eq:single_H_model},
\[
    Y_a^{\mathrm{obs}}-\tilde\alpha_a
    =(\mu_a-\tilde\alpha_a)+\beta_aZ_a+\theta\sum_cH_{ac}Z_c+\varepsilon_a.
\]
If $\varepsilon_a$ is fixed, then $\E(\psi_a\varepsilon_a)=\varepsilon_a\E\psi_a=0$; if it is random, condition \eqref{eq:noise_ht_orthogonality} gives the same conclusion. The fixed adjustment and structural baseline also drop out because $\E[\psi_a]=0$:
\begin{align*}
    \E[\widehat\tau_{\tilde\alpha}]
    &=\frac{1}{M}\sum_{a\in\Ocal}
    \E\left[\psi_a\left((\mu_a-\tilde\alpha_a)+\beta_aZ_a+\theta\sum_cH_{ac}Z_c\right)\right]\\
    &=\frac{1}{M}\sum_{a\in\Ocal}
    \left(\beta_a\E[\psi_aZ_a]+\theta\sum_cH_{ac}\E[\psi_aZ_c]\right)\\
    &=\frac{1}{M}\sum_{a\in\Ocal}
    \left(\beta_a+\theta\sum_cH_{ac}\Rnorm_{ac}\right).
\end{align*}
The global treatment-control estimand is
\[
    \tau=\frac{1}{M}\sum_{a\in\Ocal}
    \left(\beta_a+\theta\sum_cH_{ac}\right).
\]
Therefore
\[
    \E[\widehat\tau_{\tilde\alpha}]-\tau
    =\frac{\theta}{M}\sum_{a\in\Ocal}\sum_cH_{ac}(\Rnorm_{ac}-1)
    =\frac{\theta}{M}\langle H,\Rnorm-\one\one^\top\rangle.
\]
For the two-component model,
\[
    \E[\widehat\tau_{\tilde\alpha}]-\tau
    =\frac{\gamma}{M}\langle H^{\comp},\Rnorm-\one\one^\top\rangle
    +\frac{\eta}{M}\langle H^{\lag},\Rnorm-\one\one^\top\rangle.
\]

\subsection{Derivation of the exact assignment-polynomial representation}\label{subsec:main_polynomial_proof}

Let the structural baseline be $\mu_a$ and let the estimator subtract a fixed adjustment $\tilde\alpha_a$. If the disturbance is fixed under the design law, define $r_a^{\tilde\alpha}=\mu_a+\varepsilon_a-\tilde\alpha_a$ and absorb it into the assignment-generated component. If measurement noise is random, define $r_a^{\tilde\alpha}=\mu_a-\tilde\alpha_a$ and first set the random noise to zero. In either case,
\begin{align*}
    M(\widehat\tau_{\tilde\alpha})^{(0)}
    &=\sum_{a\in\Ocal}\frac{Z_a-p}{v}
    \left\{r_a^{\tilde\alpha}+\beta_aZ_a+\theta\sum_cH_{ac}Z_c\right\}\\
    &=\frac{1}{v}\sum_{a\in\Ocal}r_a^{\tilde\alpha}(Z_a-p)
    +\sum_{a\in\Ocal}\frac{\beta_a}{v}(Z_a-p)Z_a\\
    &\quad +\frac{\theta}{v}\sum_{a,c}H_{ac}Z_aZ_c
    -\frac{\theta p}{v}\sum_{a,c}H_{ac}Z_c.
\end{align*}
The term $-p\sum_a r_a^{\tilde\alpha}/v$ is constant and drops out of the variance. Because $(Z_a-p)Z_a=qZ_a$,
\[
    \sum_{a\in\Ocal}\frac{\beta_a}{v}(Z_a-p)Z_a
    =\sum_{a\in\Ocal}\frac{\beta_a}{p}Z_a.
\]
Let $\deg_H=H^\top\one$. Since rows of $H$ outside $\Ocal$ are zero,
\[
    \sum_{a,c}H_{ac}Z_c=\deg_H^\top Z.
\]
Hence, up to constants,
\[
    M\widehat\tau_{\tilde\alpha}^{(0)}
    =\left(\frac{r^{\tilde\alpha}}{v}+\frac{\beta}{p}-\frac{\theta \deg_H}{q}\right)^\top Z
    +\frac{\theta}{v}Z^\top HZ.
\]
Define
\[
    g_p^{\tilde\alpha}=\frac{r^{\tilde\alpha}}{v}+\frac{\beta}{p}-\frac{\theta \deg_H}{q}.
\]
Then
\[
    \Var\{\widehat\tau_{\tilde\alpha}^{(0)}\}
    =\frac{1}{M^2}\Var\left((g_p^{\tilde\alpha})^\top Z+\frac{\theta}{v}Z^\top HZ\right).
\]
Expanding gives
\begin{align*}
    \Var\{\widehat\tau_{\tilde\alpha}^{(0)}\}
    =\frac{1}{M^2}\Bigg[&(g_p^{\tilde\alpha})^\top\Sigma g_p^{\tilde\alpha}
    +\frac{2\theta}{v}\Cov\{(g_p^{\tilde\alpha})^\top Z,Z^\top HZ\}\\
    &+\frac{\theta^2}{v^2}\Var(Z^\top HZ)\Bigg].
\end{align*}
The second and third terms involve third- and fourth-order moments, motivating the covariance-level upper bound.

If random mean-zero measurement noise is independent of $Z$ and mutually independent across cells,
\[
    \widehat\tau_{\tilde\alpha}=\widehat\tau_{\tilde\alpha}^{(0)}+\frac{1}{M}\sum_{a\in\Ocal}\psi_a\varepsilon_a.
\]
Independence and the zero error means imply that the cross covariance is zero. Since $\E[\psi_a^2]=1/v$,
\[
    \Var\left(\frac{1}{M}\sum_{a\in\Ocal}\psi_a\varepsilon_a\right)
    =\frac{1}{M^2v}\sum_{a\in\Ocal}\sigma_a^2.
\]
If measurement noise is network-time local rather than independent, it can be included in the centered contribution $\xi_{Na}$ used for the CLT, with the graph-dependence coefficient enlarged to account for the noise dependence. In the finite-range specialization this amounts to enlarging the relevant locality radius. The simple additive noise term above is for independent noise only.

\subsection{Weighted identities, disagreement geometry, and robust bias}\label{subsec:weighted_derivations}

\begin{proof}[Proof of Corollary~\ref{cor:weighted_ht_identity}]
The bias calculation in Theorem~\ref{thm:main} can be applied after multiplying the $a$th contribution by the fixed weight $w_a$. The baseline adjustment and any fixed disturbance have zero design expectation after multiplication by $\psi_a$. For random measurement noise, condition \eqref{eq:noise_ht_orthogonality} gives the same conclusion. Therefore
\[
\begin{split}
    \E\widehat\tau_{w,\tilde\alpha}^{HT}
    =\frac1W\sum_aw_a\left\{
      \beta_a+\theta\sum_cH_{ac}\Rnorm_{ac}\right\},
\end{split}
\]
whereas
\[
    \tau_w=\frac1W\sum_aw_a\left\{
      \beta_a+\theta\sum_cH_{ac}\right\}.
\]
Subtracting proves \eqref{eq:weighted_exact_bias}.

For the algebraic representation, use the same convention for
$r^{\tilde\alpha}$ as in Theorem~\ref{thm:main}. Up to a nonrandom constant,
\begin{equation}
W(\widehat\tau_{w,\tilde\alpha}^{HT})^{(0)}
=\frac1v(w\odot r^{\tilde\alpha})^\top Z
+\frac1p(w\odot\beta)^\top Z
-\frac{\theta}{q}(H^\top w)^\top Z
+\frac{\theta}{v}Z^\top D_wHZ .
\label{eq:weighted_exact_algebra}
\end{equation}

Finally, suppose the random errors are independent of $Z$, have mean zero, are mutually uncorrelated, and satisfy
$\Var(\varepsilon_a)=\sigma_{\varepsilon,a}^2$. The noise component is
$W^{-1}\sum_aw_a\psi_a\varepsilon_a$. Its covariance with every square-integrable assignment-measurable component is zero. Cross-cell noise covariances are zero, and $\E\psi_a^2=1/v$, so
\begin{equation}
\Var\left(W^{-1}\sum_aw_a\psi_a\varepsilon_a\right)
=\frac{1}{W^2v}\sum_aw_a^2\sigma_{\varepsilon,a}^2.
\label{eq:weighted_noise_variance}
\end{equation}
\end{proof}

\begin{proof}[Proof of Corollary~\ref{cor:signed_exposure_operator}]
The common-marginal HT identities give
$\E\{\psi_a(Z_a)Z_c\}=\Rnorm_{ac}$ for every $a,c$. Applying them row by row to \eqref{eq:signed_exposure_model}, subtracting the all-treated versus all-control contrast, and weighting by $w_a/W$ proves
\eqref{eq:signed_exposure_bias}. For the algebraic identity, expand
\[
    \frac1v\sum_{a,c}w_a\Gamma_{ac}(Z_a-p)Z_c
    =\frac1vZ^\top D_w\Gamma Z
      -\frac1q(\Gamma^\top w)^\top Z.
\]
Combining this with the weighted baseline and direct-effect terms in the proof of Corollary~\ref{cor:weighted_ht_identity} yields
\eqref{eq:signed_exposure_algebra}.
\end{proof}

\begin{proof}[Proof of Proposition~\ref{prop:expected_cut_geometry}]
Because the two Bernoulli variables have the same marginal $p$,
\[
\begin{split}
    \Pr(Z_a\ne Z_c)
    &=\E(Z_a-Z_c)^2
      =2p-2\E(Z_aZ_c)\\
    &=2p-2\{p^2+v\Rnorm_{ac}\}
      =2v(1-\Rnorm_{ac}).
\end{split}
\]
This proves \eqref{eq:disagreement_covariance_identity}. For each realized assignment vector $z$, the map
$d_z(a,c)=\ind\{z_a\ne z_c\}$ is a cut semimetric: in particular,
$\ind\{z_a\ne z_c\}\le
 \ind\{z_a\ne z_b\}+\ind\{z_b\ne z_c\}$.
Expectation and multiplication by the positive constant $(2v)^{-1}$ preserve all semimetric inequalities. Moreover,
$\{D_Z(a,c):a<c\}=\E\{d_z(a,c):a<c\}$ is a probability-weighted convex combination of realized cut vectors, so it lies in the cut polytope. Substitution of
$\Rnorm_{ac}-1=-d_Z(a,c)$ into \eqref{eq:weighted_exact_bias} gives
\eqref{eq:bias_expected_cut}. Nonpositivity follows because all summands are nonnegative when $\theta>0$, $w_a\ge0$, and $H_{ac}\ge0$.
\end{proof}

\begin{proof}[Proof of Corollary~\ref{cor:zero_bias_collapse}]
By \eqref{eq:undirected_cut_bias}, the magnitude of the positive-exposure bias is a sum of nonnegative terms $\theta h_ed_Z(e)/W$. Because every $h_e$ is strictly positive, the sum vanishes if and only if $d_Z(e)=0$ on every exposure edge, which is equivalent to \eqref{eq:zero_bias_edge_agreement}. Along a finite connected path of exposure edges, almost-sure equality is transitive; hence the assignment is almost surely constant on each connected component. On a connected graph there is a Bernoulli random variable $S$ such that $Z=S\one$, and the common marginal condition gives $\Pr(S=1)=p$.

Conditional on the fixed schedule and disturbances, every statistic measurable with respect to this assignment has at most two support points. Under overlap, the two probabilities stay bounded away from zero. Any nondegenerate location--scale normalization therefore remains a two-point law with two nonvanishing atoms and cannot converge weakly to the continuous standard normal distribution.
\end{proof}

\begin{proof}[Proof of Proposition~\ref{prop:path_bias_lower_bound}]
Apply the semimetric triangle inequality successively along the path from $i$ to $i+m$:
\[
    d_Z(i,i+m)\le\sum_{k=i}^{i+m-1}d_Z(k,k+1).
\]
Summing over $i=1,\ldots,n-m$, each adjacent edge appears in at most $m$ of the resulting intervals. This proves
\eqref{eq:path_disagreement_inequality}. If the distance-$m$ correlations are at most $\epsilon_m$, then
$d_Z(i,i+m)=1-\Rnorm_{i,i+m}\ge1-\epsilon_m$; substitution proves
\eqref{eq:path_average_disagreement_bound}. Finally, the positive-exposure bias identity and $H_{k,k+1}\ge h_0$ give
\[
\begin{split}
    |\E\widehat\tau^{HT}-\tau|
    &\ge \frac{\theta_0h_0}{n}
      \sum_{k=1}^{n-1}d_Z(k,k+1)\\
    &\ge \frac{\theta_0h_0}{n}
      \frac{n-m}{m}(1-\epsilon_m),
\end{split}
\]
which is \eqref{eq:path_bias_rate_bound}.
\end{proof}

\begin{proof}[Proof of Proposition~\ref{prop:fractional_multicut_bias}]
Set $x_e=d_Z(e)$ on the exposure graph. For every connected terminal pair $(u,v)\in\mathcal P$ and every exposure-graph path $P:u\leadsto v$, repeated application of the triangle inequality gives
\[
    \sum_{e\in P}x_e\ge d_Z(u,v)\ge\delta_{uv}.
\]
Thus $x$ is feasible for \eqref{eq:fractional_multicut_program}. Its objective equals the exposure-boundary term in \eqref{eq:undirected_cut_bias}. Since the value at any feasible point is at least the minimum of the program, multiplying by $\theta/W$ proves \eqref{eq:fractional_multicut_bias_bound}.
\end{proof}

\begin{proof}[Proof of Proposition~\ref{prop:fractional_multicut_dual}]
The primal program \eqref{eq:fractional_multicut_program} has one nonnegative variable $x_e$ per exposure edge and one inequality for each terminal-pair/simple-path combination. The graph is finite, so the collection of simple paths is finite. Introducing a nonnegative dual variable $f_{uv,P}$ for each path constraint gives the Lagrangian
\[
\sum_{e}h_ex_e+
\sum_{(u,v),P}f_{uv,P}
\left(\delta_{uv}-\sum_{e\in P}x_e\right).
\]
The infimum over $x_e\ge0$ is finite exactly when the total dual load on edge $e$ is at most $h_e$, yielding \eqref{eq:fractional_multicut_dual}. The primal is feasible and bounded below, so finite-dimensional linear-programming strong duality proves equality.

For the edge-disjoint certificate, set $f_{uv,P_{uv}}=h_0$ on the selected path for each terminal pair and set all other dual variables to zero. Edge disjointness implies that every edge carries load at most $h_0\le h_e$. The resulting dual objective is $h_0\sum_{(u,v)}\delta_{uv}$, proving \eqref{eq:edge_disjoint_path_certificate}.
\end{proof}

\begin{proof}[Proof of Corollary~\ref{cor:grid_bias_floor}]
The $n^{d-1}$ coordinate lines joining $(1,j)$ to $(n,j)$ are pairwise edge-disjoint. Apply \eqref{eq:edge_disjoint_path_certificate} with one terminal pair per line, common demand $\delta$, and edge capacity lower bound $h_0$. Then
$\operatorname{MC}\ge h_0\delta n^{d-1}$. Proposition~\ref{prop:fractional_multicut_bias} gives \eqref{eq:grid_bias_floor}. If $\Ocal=\Zcal=[n]^d$ and $W\asymp L=n^d$, division by $W$ yields order $n^{-1}=L^{-1/d}$.
\end{proof}

\begin{proof}[Proof of Corollary~\ref{cor:expander_bias_floor}]
For a realized assignment, let $S=\{a:Z_a=1\}$. Exact balance gives $|S|=pL$, and the weighted expansion definition implies
\[
\sum_{e\in\partial_H S}h_e
\ge\phi_hL\min\{p,1-p\}.
\]
Taking expectation and using
$\E\sum_{e\in\partial_H S}h_e=
\sum_eh_e\Pr(Z\text{ disagrees on }e)$, the expected-cut identity gives
\[
|\E\widehat\tau^{HT}_{w,\tilde\alpha}-\tau_w|
=\frac{\theta}{2p(1-p)W}
\E\sum_{e\in\partial_H S}h_e,
\]
which proves \eqref{eq:expander_bias_floor}.
\end{proof}

\begin{proof}[Proof of Proposition~\ref{prop:switching_carryover_frontier}]
Equation~\eqref{eq:switch_rate} gives
$\Pr(Z_a\ne Z_c)=2v(1-\Rnorm_{ac})$ on every lag edge. Therefore
\[
\begin{split}
    \bar s_{\lag}^{(w)}
    &=\frac{2v}{E_{\lag}^{(w)}}
      \sum_{a,c}w_aH_{ac}^{\lag}(1-\Rnorm_{ac})
      =-\frac{2v}{E_{\lag}^{(w)}}b_{\lag}^{(w)}(\Rnorm).
\end{split}
\]
Rearranging proves both identities in \eqref{eq:switching_carryover_frontier}; the lower-bound statements follow immediately.
\end{proof}

\begin{proof}[Proof of Proposition~\ref{prop:robust_bias_objective}]
Write $\delta=\Lambda^{1/2}u$. The ellipsoidal constraint is $\|u\|_2\le1$, so Cauchy--Schwarz gives
\[
    \sup_{\delta\in\mathcal U_\Lambda}
    |\delta^\top\bar b|
    =\sup_{\|u\|_2\le1}|u^\top\Lambda^{1/2}\bar b|
    =\|\Lambda^{1/2}\bar b\|_2.
\]
Squaring proves \eqref{eq:ellipsoidal_robust_bias}. Under coordinatewise bounds, the triangle inequality gives
$|\delta^\top\bar b|\le
 \bar\delta_1|\bar b_1|+\bar\delta_2|\bar b_2|$.
Choose each $\delta_j$ at its bound with sign matching $\bar b_j$ to attain equality, which proves \eqref{eq:box_robust_bias}.
\end{proof}

\subsection{Proof of Theorem~\ref{thm:variance_envelope}}\label{subsec:variance_envelope_proof}

Let $s\ge \deg_H$ componentwise and suppose
\[
    |g_{p,c}^{\tilde\alpha}|\le\omega_gs_c.
\]
Then, for $Z\ge0$,
\[
    |(g_p^{\tilde\alpha})^\top Z|\le \omega_gs^\top Z.
\]
Moreover, because $H\ge0$ and $Z_a\in\{0,1\}$,
\[
    Z^\top HZ=\sum_{a,c}H_{ac}Z_aZ_c
    \le\sum_{a,c}H_{ac}Z_c=\deg_H^\top Z\le s^\top Z.
\]
Thus
\[
    \left|\frac{\theta}{v}Z^\top HZ\right|
    \le\frac{|\theta|}{v}s^\top Z.
\]
Using $(x+y)^2\le2x^2+2y^2$,
\[
    \left((g_p^{\tilde\alpha})^\top Z+\frac{\theta}{v}Z^\top HZ\right)^2
    \le
    2\left(\omega_g^2+\frac{\theta^2}{v^2}\right)(s^\top Z)^2.
\]
Since $v=pq$, the second term is $\theta^2/(p^2q^2)$. Also $\Var(W)\le\E[W^2]$ and
\[
    \E[ZZ^\top]=\Cov(Z)+\E[Z]\E[Z]^\top=v\Rnorm+p^2\one\one^\top.
\]
Therefore
\[
    \Var\left((g_p^{\tilde\alpha})^\top Z+\frac{\theta}{v}Z^\top HZ\right)
    \le
    2\left(\omega_g^2+\frac{\theta^2}{p^2q^2}\right)
    s^\top(v\Rnorm+p^2\one\one^\top)s.
\]
Combining with the exact variance display and the independent-noise term gives \eqref{eq:variance_bound_main}.

\subsection{\texorpdfstring{General-$p$ Gaussian copula map}{General-p Gaussian copula map}}

Let $(G_a,G_c)$ be standard bivariate normal with correlation $\rho$. With $\tp=\Phi^{-1}(p)$ and $Z_a=\ind\{G_a\le\tp\}$,
\[
    \E[Z_aZ_c]=\Pr(G_a\le\tp,G_c\le\tp)=\Phi_2(\tp,\tp;\rho).
\]
Therefore
\[
    \Rnorm_{ac}=\frac{\Phi_2(\tp,\tp;\rho)-p^2}{p(1-p)}=\Gp(\rho).
\]
By Plackett's identity,
\[
    \frac{\partial}{\partial\rho}\Phi_2(\tp,\tp;\rho)=\varphi_2(\tp,\tp;\rho),
\]
so
\begin{equation}
    \Gp'(\rho)=\frac{\varphi_2(\tp,\tp;\rho)}{p(1-p)}.
    \label{eq:Gp_derivative}
\end{equation}
At equal thresholds,
\[
    \varphi_2(\tp,\tp;\rho)
    =\frac{1}{2\pi\sqrt{1-\rho^2}}
    \exp\left\{-\frac{\tp^2}{1+\rho}\right\}.
\]
For $p=1/2$, $\tp=0$ and
\[
    \Phi_2(0,0;\rho)=\frac14+\frac{1}{2\pi}\arcsin(\rho),
\]
which yields
\[
    \mathcal{G}_{1/2}(\rho)=\frac{2}{\pi}\arcsin(\rho).
\]

\subsection{Bounds on feasible normalized covariance}

For any two Bernoulli-$p$ random variables $Z_a,Z_c$,
\[
    \max\{0,2p-1\}\le \E[Z_aZ_c]\le p.
\]
Therefore
\[
    \Rnorm_{ac}=\frac{\E[Z_aZ_c]-p^2}{p(1-p)}\le\frac{p-p^2}{p(1-p)}=1.
\]
If $p\le1/2$, then $\E[Z_aZ_c]\ge0$, so
\[
    \Rnorm_{ac}\ge-\frac{p^2}{p(1-p)}=-\frac{p}{1-p}.
\]
If $p>1/2$, then $\E[Z_aZ_c]\ge2p-1$, so
\[
    \Rnorm_{ac}\ge\frac{2p-1-p^2}{p(1-p)}=-\frac{1-p}{p}.
\]
Thus
\[
    -\min\left\{\frac{p}{1-p},\frac{1-p}{p}\right\}\le\Rnorm_{ac}\le1.
\]

\section{Additional Discussion}
\label{app:additional_discussion}

\subsection{Primitive observations and unit--block outcomes}
\label{app:primitive_observations}

The theory only requires a potential outcome $Y_{ib}(Z)$ for every evaluated unit--block cell, but many platform applications generate these outcomes by aggregating finer observations. Let $\mathcal U_b$ denote the set of user, session, request, location, or opportunity units observed in block $b$, and let $S_{uib}\ge0$ be a pre-treatment exposure or opportunity weight for opportunity $u$ and experimental unit $i$ in block $b$. The unit--block traffic weight is
\[
    m_{ib}=\sum_{u\in\mathcal U_b}S_{uib}.
\]
If $Q_{uib}(Z)$ is the primitive response contribution, such as click, conversion, revenue, sales, or service completion, the unit--block outcome can be written as
\begin{equation}
    Y_{ib}(Z)=\frac{1}{m_{ib}\vee1}\sum_{u\in\mathcal U_b}S_{uib}Q_{uib}(Z).
    \label{eq:item_block_outcome}
\end{equation}
This layer is important in the recommendation experiments because user demand induces both traffic weights and an empirical exposure graph; users need not be treatment-assignment units.

A generic microfoundation is
\begin{equation}
    Q_{uib}(Z)=r^0_{uib}+\Delta_{uib}Z_{ib}
    +\gamma\sum_{j\ne i}\kappa_{ij}S_{ujb}Z_{jb}
    +\eta_{uib}Z_{i,b-1}+\xi_{uib},
    \label{eq:user_micro_model}
\end{equation}
where $\kappa_{ij}$ measures similarity, substitutability, exposure overlap, geographic proximity, or any other pre-treatment relation. Aggregating \eqref{eq:user_micro_model} through \eqref{eq:item_block_outcome} yields a reduced-form unit--block model with a direct effect, contemporaneous network spillover, and lagged own-unit carryover.

\subsection{\texorpdfstring{General-$p$ HT score and fixed baseline adjustment}{General-p HT score and fixed baseline adjustment}}

This discussion records the HT score and fixed adjustment used in the main text. For common marginal treatment probability $p$, define the general-$p$ HT score
\begin{equation}
    \psi_a(Z_a)=\frac{Z_a}{p}-\frac{1-Z_a}{1-p}=\frac{Z_a-p}{p(1-p)}=\frac{Z_a-p}{v}.
    \label{eq:psi_def}
\end{equation}
Let $\tilde\alpha_a$ be any fixed baseline adjustment, such as a historical forecast or pre-experiment prediction. The baseline-adjusted HT estimator is
\begin{equation}
    \widehat\tau_{\tilde\alpha}=\frac{1}{M}\sum_{a\in\Ocal}\psi_a(Z_a)\{Y_a^{\mathrm{obs}}-\tilde\alpha_a\}.
    \label{eq:estimator}
\end{equation}
The adjustment $\tilde\alpha_a$ need not equal the structural baseline in the potential outcome model. The structural baseline $\mu_a$ is a conceptual component of the potential-outcome schedule: it is the part of the outcome that would remain under the working model before direct treatment and exposure terms are added. The fixed adjustment $\tilde\alpha_a$ is instead an observable pre-treatment prediction or forecast that the analyst subtracts from the observed outcome. Because $\tilde\alpha_a$ is fixed before assignment and $\E[\psi_a]=0$, it cannot create bias in the HT identity; its role is to reduce variance when it approximates $\mu_a$.

The traffic-weighted empirical version is
\[
    \widehat\tau_{w,\tilde\alpha}=\frac{1}{W_N}\sum_{a\in\Ocal}w_a\psi_a(Z_a)\{Y_a^{\mathrm{obs}}-\tilde\alpha_a\},
\]
where $w_a$ is fixed before treatment assignment. In simulations we report both the unweighted and traffic-weighted estimators when unit traffic is highly skewed.

The key HT identities are
\begin{equation}
    \E[\psi_a]=0,
    \qquad
    \E[\psi_aZ_a]=1,
    \qquad
    \E[\psi_aZ_c]=\Rnorm_{ac}.
    \label{eq:basic_identities}
\end{equation}

\subsection{Network-time jackknife}

A practical alternative to \eqref{eq:model_assisted_hac} is a network-time buffered jackknife. Partition the network-time graph into blocks $\mathcal C_1,\ldots,\mathcal C_J$ of bounded diameter or use graph clusters crossed with temporal blocks. For each $j$, remove or downweight $\mathcal C_j$ together with a buffer neighborhood and recompute the estimator, yielding $\widehat\tau^{(-j)}$. The jackknife variance
\[
    \widehat V_{PJ}=\frac{J-1}{J}\sum_{j=1}^J
    \left(\widehat\tau^{(-j)}-\frac{1}{J}\sum_{\ell=1}^J\widehat\tau^{(-\ell)}\right)^2
\]
can be viewed as a recomputation-based variance diagnostic. This connects naturally to recent Neyman-jackknife ideas for design-based inference under interference \citep{parkwager2026jackknife}. The network-time HAC estimator remains the main analytic object because its truncation and stochastic errors can be matched directly to the graph weak-dependence rates. The buffered jackknife is an implementation diagnostic; a formal result would additionally require deletion-block growth, buffer, and remainder conditions, so the displayed formula is not asserted to be automatically consistent.

\end{document}

%% file: tables/tab_default_retail.tex
\begin{tabular}{l ccc c ccc c ccc}
\toprule
 & \multicolumn{3}{c}{Linear} && \multicolumn{3}{c}{Nonlinear} && \multicolumn{3}{c}{Demand} \\
\cmidrule{2-4}\cmidrule{6-8}\cmidrule{10-12}
Design & Bias & Std. & RMSE && Bias & Std. & RMSE && Bias & Std. & RMSE \\
\midrule
Independent & -0.183 & 0.025 & 0.185 && -0.237 & 0.024 & 0.239 && -0.122 & 0.026 & 0.125 \\
RBSD & -0.195 & 0.020 & 0.196 && -0.255 & 0.020 & 0.255 && -0.165 & 0.018 & 0.166 \\
Cluster & -0.115 & 0.043 & 0.123 && -0.159 & 0.047 & 0.166 && -0.187 & 0.051 & 0.194 \\
Static-OCD & -0.115 & 0.064 & 0.131 && -0.160 & 0.066 & 0.173 && -0.186 & 0.066 & 0.197 \\
COSTA & -0.036 & 0.055 & 0.066 && -0.048 & 0.052 & 0.071 && -0.015 & 0.054 & 0.056 \\
COSTA-LP & -0.046 & 0.057 & 0.073 && -0.072 & 0.056 & 0.091 && -0.027 & 0.056 & 0.062 \\
\bottomrule
\end{tabular}

%% file: tables/tab_default_movielens.tex
\begin{tabular}{l ccc c ccc c ccc}
\toprule
 & \multicolumn{3}{c}{Linear} && \multicolumn{3}{c}{Nonlinear} && \multicolumn{3}{c}{Demand} \\
\cmidrule{2-4}\cmidrule{6-8}\cmidrule{10-12}
Design & Bias & Std. & RMSE && Bias & Std. & RMSE && Bias & Std. & RMSE \\
\midrule
Independent & -0.182 & 0.034 & 0.185 && -0.233 & 0.035 & 0.235 && -0.118 & 0.035 & 0.123 \\
RBSD & -0.196 & 0.026 & 0.198 && -0.255 & 0.023 & 0.256 && -0.162 & 0.024 & 0.164 \\
Cluster & -0.107 & 0.091 & 0.141 && -0.161 & 0.090 & 0.185 && -0.191 & 0.103 & 0.217 \\
Static-OCD & -0.112 & 0.113 & 0.159 && -0.160 & 0.119 & 0.200 && -0.198 & 0.123 & 0.233 \\
COSTA & -0.030 & 0.097 & 0.102 && -0.055 & 0.096 & 0.111 && -0.025 & 0.105 & 0.108 \\
COSTA-LP & -0.039 & 0.092 & 0.100 && -0.076 & 0.090 & 0.117 && -0.036 & 0.092 & 0.098 \\
\bottomrule
\end{tabular}

%% file: tables/tab_estimator_retail.tex
\begin{tabular}{lllrrr}
\toprule
Model & Estimator & Design & Bias & Std. & RMSE \\
\midrule
Linear & HT & COSTA & -0.023 & 0.109 & 0.112 \\
Linear & HT & COSTA-LP & -0.039 & 0.079 & 0.088 \\
Linear & Hájek & COSTA & -0.036 & 0.055 & 0.066 \\
Linear & Hájek & COSTA-LP & -0.046 & 0.057 & 0.073 \\
Linear & DIM & COSTA & -0.065 & 0.040 & 0.076 \\
Linear & DIM & COSTA-LP & -0.075 & 0.044 & 0.087 \\
\addlinespace[0.25em]
Nonlinear & HT & COSTA & -0.052 & 0.113 & 0.124 \\
Nonlinear & HT & COSTA-LP & -0.075 & 0.084 & 0.112 \\
Nonlinear & Hájek & COSTA & -0.048 & 0.052 & 0.071 \\
Nonlinear & Hájek & COSTA-LP & -0.072 & 0.056 & 0.091 \\
Nonlinear & DIM & COSTA & -0.081 & 0.037 & 0.089 \\
Nonlinear & DIM & COSTA-LP & -0.103 & 0.047 & 0.113 \\
\addlinespace[0.25em]
Demand & HT & COSTA & -0.019 & 0.099 & 0.101 \\
Demand & HT & COSTA-LP & -0.030 & 0.077 & 0.082 \\
Demand & Hájek & COSTA & -0.015 & 0.054 & 0.056 \\
Demand & Hájek & COSTA-LP & -0.027 & 0.056 & 0.062 \\
Demand & DIM & COSTA & -0.043 & 0.037 & 0.057 \\
Demand & DIM & COSTA-LP & -0.055 & 0.044 & 0.070 \\
\bottomrule
\end{tabular}

%% file: tables/tab_cluster_kro_retail.tex
\begin{tabular}{llrrr}
\toprule
Model & Design & Bias & Std. & RMSE \\
\midrule
Linear & Cluster & -0.109 & 0.059 & 0.124 \\
Linear & COSTA & -0.035 & 0.058 & 0.068 \\
Linear & COSTA cluster no-kro & -0.089 & 0.449 & 0.457 \\
Linear & COSTA cluster kro & -0.018 & 0.220 & 0.221 \\
\addlinespace[0.25em]
Nonlinear & Cluster & -0.153 & 0.064 & 0.166 \\
Nonlinear & COSTA & -0.051 & 0.059 & 0.078 \\
Nonlinear & COSTA cluster no-kro & -0.066 & 0.438 & 0.443 \\
Nonlinear & COSTA cluster kro & -0.028 & 0.220 & 0.222 \\
\addlinespace[0.25em]
Demand & Cluster & -0.184 & 0.068 & 0.196 \\
Demand & COSTA & -0.020 & 0.058 & 0.061 \\
Demand & COSTA cluster no-kro & -0.079 & 0.423 & 0.430 \\
Demand & COSTA cluster kro & -0.022 & 0.222 & 0.223 \\
\bottomrule
\end{tabular}

%% file: tables/tab_inference_validation.tex
\resizebox{\linewidth}{!}{%
\begin{tabular}{@{}lllcccccc@{}}
\toprule
Dataset & Outcome model & Design & DC cov. & GATE cov. & RC cov. & Width & Width/MC & SE/SD \\
\midrule
RetailRocket & Linear & COSTA & 0.920 & 0.843 & 0.920 & 0.231 & 1.048 & 1.015 \\
 &  & COSTA-LocalPenalty & 0.904 & 0.713 & 0.904 & 0.144 & 1.023 & 1.001 \\
\addlinespace[2pt]
 & Nonlinear & COSTA & 0.903 & 0.851 & 0.903 & 0.294 & 1.006 & 1.002 \\
 &  & COSTA-LocalPenalty & 0.920 & 0.696 & 0.920 & 0.198 & 1.052 & 1.063 \\
\addlinespace[2pt]
 & Demand & COSTA & 0.899 & 0.882 & 0.899 & 0.191 & 0.992 & 0.994 \\
 &  & COSTA-LocalPenalty & 0.891 & 0.788 & 0.891 & 0.122 & 0.971 & 0.986 \\
\addlinespace[2pt]
MovieLens & Linear & COSTA & 0.922 & 0.909 & 0.922 & 0.394 & 1.058 & 1.030 \\
 &  & COSTA-LocalPenalty & 0.897 & 0.841 & 0.897 & 0.267 & 0.991 & 0.981 \\
\addlinespace[2pt]
 & Nonlinear & COSTA & 0.913 & 0.897 & 0.913 & 0.474 & 1.034 & 0.999 \\
 &  & COSTA-LocalPenalty & 0.932 & 0.842 & 0.932 & 0.356 & 1.100 & 1.078 \\
\addlinespace[2pt]
 & Demand & COSTA & 0.900 & 0.899 & 0.900 & 0.313 & 1.014 & 0.968 \\
 &  & COSTA-LocalPenalty & 0.888 & 0.864 & 0.888 & 0.229 & 0.967 & 0.991 \\
\bottomrule
\end{tabular}%
}

%% file: tables/tab_inference_studentized.tex
\resizebox{\linewidth}{!}{%
\begin{tabular}{@{}lllrrrrrr@{}}
\toprule
Dataset & Outcome model & Design & Mean & SD & $q_{0.05}$ & $q_{.50}$ & $q_{0.95}$ & Trunc. \\
\midrule
RetailRocket & Linear & COSTA & -0.026 & 0.985 & -1.534 & -0.014 & 1.605 & 0.000 \\
 &  & COSTA-LocalPenalty & 0.045 & 0.999 & -1.517 & 0.009 & 1.693 & 0.000 \\
\addlinespace[2pt]
 & Nonlinear & COSTA & -0.008 & 0.998 & -1.604 & -0.016 & 1.668 & 0.000 \\
 &  & COSTA-LocalPenalty & 0.035 & 0.941 & -1.573 & 0.056 & 1.559 & 0.000 \\
\addlinespace[2pt]
 & Demand & COSTA & -0.005 & 1.006 & -1.726 & -0.019 & 1.593 & 0.000 \\
 &  & COSTA-LocalPenalty & -0.013 & 1.014 & -1.699 & 0.005 & 1.695 & 0.000 \\
\addlinespace[2pt]
MovieLens & Linear & COSTA & -0.066 & 0.971 & -1.563 & -0.106 & 1.545 & 0.000 \\
 &  & COSTA-LocalPenalty & -0.056 & 1.018 & -1.644 & -0.078 & 1.672 & 0.000 \\
\addlinespace[2pt]
 & Nonlinear & COSTA & 0.062 & 1.000 & -1.447 & 0.002 & 1.733 & 0.000 \\
 &  & COSTA-LocalPenalty & -0.025 & 0.927 & -1.432 & -0.087 & 1.552 & 0.000 \\
\addlinespace[2pt]
 & Demand & COSTA & 0.083 & 1.033 & -1.436 & 0.034 & 1.806 & 0.000 \\
 &  & COSTA-LocalPenalty & -0.003 & 1.009 & -1.644 & -0.070 & 1.761 & 0.000 \\
\bottomrule
\end{tabular}%
}

%% file: tables/tab_retail_item_count_rmse.tex
\begin{tabular}{llrrrr}
\toprule
Model & Design & 5000 & 10000 & 20000 & 50000 \\
\midrule
Linear & Independent & 0.186 & 0.186 & 0.184 & 0.182 \\
Linear & RBSD & 0.220 & 0.215 & 0.216 & 0.214 \\
Linear & Cluster & 0.126 & 0.124 & 0.123 & 0.121 \\
Linear & Static-OCD & 0.173 & 0.143 & 0.139 & 0.127 \\
Linear & COSTA & 0.088 & 0.068 & 0.054 & 0.051 \\
Linear & COSTA-LP & 0.093 & 0.081 & 0.074 & 0.076 \\
\addlinespace[0.25em]
Nonlinear & Independent & 0.231 & 0.228 & 0.228 & 0.227 \\
Nonlinear & RBSD & 0.274 & 0.271 & 0.271 & 0.273 \\
Nonlinear & Cluster & 0.160 & 0.166 & 0.158 & 0.152 \\
Nonlinear & Static-OCD & 0.193 & 0.182 & 0.160 & 0.158 \\
Nonlinear & COSTA & 0.094 & 0.078 & 0.062 & 0.055 \\
Nonlinear & COSTA-LP & 0.112 & 0.110 & 0.102 & 0.098 \\
\addlinespace[0.25em]
Demand & Independent & 0.130 & 0.126 & 0.125 & 0.122 \\
Demand & RBSD & 0.225 & 0.224 & 0.223 & 0.219 \\
Demand & Cluster & 0.197 & 0.196 & 0.198 & 0.201 \\
Demand & Static-OCD & 0.234 & 0.210 & 0.204 & 0.197 \\
Demand & COSTA & 0.085 & 0.061 & 0.050 & 0.045 \\
Demand & COSTA-LP & 0.089 & 0.080 & 0.076 & 0.068 \\
\bottomrule
\end{tabular}

%% file: tables/tab_movielens_item_count_rmse.tex
\begin{tabular}{llrrrr}
\toprule
Model & Design & 5000 & 10000 & 20000 & 50000 \\
\midrule
Linear & Independent & 0.191 & 0.189 & 0.190 & 0.190 \\
Linear & RBSD & 0.217 & 0.216 & 0.219 & 0.219 \\
Linear & Cluster & 0.157 & 0.172 & 0.162 & 0.162 \\
Linear & Static-OCD & 0.177 & 0.197 & 0.184 & 0.184 \\
Linear & COSTA & 0.103 & 0.108 & 0.104 & 0.104 \\
Linear & COSTA-LP & 0.114 & 0.121 & 0.126 & 0.126 \\
\addlinespace[0.25em]
Nonlinear & Independent & 0.233 & 0.231 & 0.236 & 0.236 \\
Nonlinear & RBSD & 0.271 & 0.270 & 0.271 & 0.271 \\
Nonlinear & Cluster & 0.176 & 0.192 & 0.194 & 0.194 \\
Nonlinear & Static-OCD & 0.190 & 0.211 & 0.222 & 0.222 \\
Nonlinear & COSTA & 0.111 & 0.125 & 0.110 & 0.110 \\
Nonlinear & COSTA-LP & 0.131 & 0.143 & 0.141 & 0.141 \\
\addlinespace[0.25em]
Demand & Independent & 0.131 & 0.124 & 0.130 & 0.130 \\
Demand & RBSD & 0.224 & 0.224 & 0.224 & 0.224 \\
Demand & Cluster & 0.218 & 0.241 & 0.229 & 0.229 \\
Demand & Static-OCD & 0.237 & 0.255 & 0.260 & 0.260 \\
Demand & COSTA & 0.106 & 0.108 & 0.109 & 0.109 \\
Demand & COSTA-LP & 0.114 & 0.133 & 0.128 & 0.128 \\
\bottomrule
\end{tabular}

%% file: tables/tab_retail_topk_rmse.tex
\begin{tabular}{llrrr}
\toprule
Model & Design & 5 & 10 & 20 \\
\midrule
Linear & Independent & 0.185 & 0.186 & 0.186 \\
Linear & RBSD & 0.215 & 0.215 & 0.215 \\
Linear & Cluster & 0.122 & 0.124 & 0.129 \\
Linear & Static-OCD & 0.127 & 0.143 & 0.157 \\
Linear & COSTA & 0.063 & 0.068 & 0.071 \\
Linear & COSTA-LP & 0.073 & 0.081 & 0.084 \\
\addlinespace[0.25em]
Nonlinear & Independent & 0.228 & 0.228 & 0.228 \\
Nonlinear & RBSD & 0.271 & 0.271 & 0.271 \\
Nonlinear & Cluster & 0.160 & 0.166 & 0.168 \\
Nonlinear & Static-OCD & 0.162 & 0.182 & 0.197 \\
Nonlinear & COSTA & 0.069 & 0.078 & 0.083 \\
Nonlinear & COSTA-LP & 0.100 & 0.110 & 0.116 \\
\addlinespace[0.25em]
Demand & Independent & 0.126 & 0.126 & 0.126 \\
Demand & RBSD & 0.224 & 0.224 & 0.224 \\
Demand & Cluster & 0.199 & 0.196 & 0.199 \\
Demand & Static-OCD & 0.199 & 0.210 & 0.222 \\
Demand & COSTA & 0.055 & 0.061 & 0.069 \\
Demand & COSTA-LP & 0.074 & 0.080 & 0.083 \\
\bottomrule
\end{tabular}

%% file: tables/tab_movielens_topk_rmse.tex
\begin{tabular}{llrrr}
\toprule
Model & Design & 5 & 10 & 20 \\
\midrule
Linear & Independent & 0.189 & 0.189 & 0.189 \\
Linear & RBSD & 0.216 & 0.216 & 0.216 \\
Linear & Cluster & 0.164 & 0.172 & 0.176 \\
Linear & Static-OCD & 0.177 & 0.197 & 0.213 \\
Linear & COSTA & 0.093 & 0.108 & 0.118 \\
Linear & COSTA-LP & 0.111 & 0.121 & 0.133 \\
\addlinespace[0.25em]
Nonlinear & Independent & 0.231 & 0.231 & 0.231 \\
Nonlinear & RBSD & 0.270 & 0.270 & 0.270 \\
Nonlinear & Cluster & 0.181 & 0.192 & 0.199 \\
Nonlinear & Static-OCD & 0.196 & 0.211 & 0.220 \\
Nonlinear & COSTA & 0.106 & 0.125 & 0.136 \\
Nonlinear & COSTA-LP & 0.130 & 0.143 & 0.151 \\
\addlinespace[0.25em]
Demand & Independent & 0.124 & 0.124 & 0.124 \\
Demand & RBSD & 0.224 & 0.224 & 0.224 \\
Demand & Cluster & 0.232 & 0.241 & 0.235 \\
Demand & Static-OCD & 0.240 & 0.255 & 0.269 \\
Demand & COSTA & 0.102 & 0.108 & 0.123 \\
Demand & COSTA-LP & 0.121 & 0.133 & 0.136 \\
\bottomrule
\end{tabular}

%% file: tables/tab_retail_p_rmse.tex
\begin{tabular}{llrrr}
\toprule
Model & Design & 0.1 & 0.2 & 0.5 \\
\midrule
Linear & Independent & 0.188 & 0.186 & 0.186 \\
Linear & RBSD & 0.203 & 0.209 & 0.215 \\
Linear & Cluster & 0.148 & 0.136 & 0.124 \\
Linear & Static-OCD & -- & -- & 0.143 \\
Linear & COSTA & 0.146 & 0.097 & 0.068 \\
Linear & COSTA-LP & 0.130 & 0.090 & 0.081 \\
\addlinespace[0.25em]
Nonlinear & Independent & 0.247 & 0.243 & 0.228 \\
Nonlinear & RBSD & 0.265 & 0.275 & 0.271 \\
Nonlinear & Cluster & 0.189 & 0.169 & 0.166 \\
Nonlinear & Static-OCD & -- & -- & 0.182 \\
Nonlinear & COSTA & 0.165 & 0.117 & 0.078 \\
Nonlinear & COSTA-LP & 0.139 & 0.110 & 0.110 \\
\addlinespace[0.25em]
Demand & Independent & 0.137 & 0.131 & 0.126 \\
Demand & RBSD & 0.166 & 0.199 & 0.224 \\
Demand & Cluster & 0.218 & 0.209 & 0.196 \\
Demand & Static-OCD & -- & -- & 0.210 \\
Demand & COSTA & 0.192 & 0.100 & 0.061 \\
Demand & COSTA-LP & 0.141 & 0.083 & 0.080 \\
\bottomrule
\end{tabular}

%% file: tables/tab_movielens_p_rmse.tex
\begin{tabular}{llrrr}
\toprule
Model & Design & 0.1 & 0.2 & 0.5 \\
\midrule
Linear & Independent & 0.203 & 0.195 & 0.189 \\
Linear & RBSD & 0.203 & 0.213 & 0.216 \\
Linear & Cluster & 0.246 & 0.200 & 0.172 \\
Linear & Static-OCD & -- & -- & 0.197 \\
Linear & COSTA & 0.244 & 0.158 & 0.108 \\
Linear & COSTA-LP & 0.187 & 0.138 & 0.121 \\
\addlinespace[0.25em]
Nonlinear & Independent & 0.256 & 0.249 & 0.231 \\
Nonlinear & RBSD & 0.265 & 0.275 & 0.270 \\
Nonlinear & Cluster & 0.255 & 0.223 & 0.192 \\
Nonlinear & Static-OCD & -- & -- & 0.211 \\
Nonlinear & COSTA & 0.292 & 0.181 & 0.125 \\
Nonlinear & COSTA-LP & 0.203 & 0.161 & 0.143 \\
\addlinespace[0.25em]
Demand & Independent & 0.141 & 0.129 & 0.124 \\
Demand & RBSD & 0.168 & 0.200 & 0.224 \\
Demand & Cluster & 0.287 & 0.259 & 0.241 \\
Demand & Static-OCD & -- & -- & 0.255 \\
Demand & COSTA & 0.276 & 0.175 & 0.108 \\
Demand & COSTA-LP & 0.211 & 0.148 & 0.133 \\
\bottomrule
\end{tabular}

%% file: tables/tab_retail_block_rmse.tex
\begin{tabular}{llrrr}
\toprule
Model & Design & 4 & 8 & 16 \\
\midrule
Linear & Independent & 0.186 & 0.185 & 0.181 \\
Linear & RBSD & 0.215 & 0.196 & 0.188 \\
Linear & Cluster & 0.124 & 0.123 & 0.119 \\
Linear & Static-OCD & 0.143 & 0.131 & 0.122 \\
Linear & COSTA & 0.068 & 0.066 & 0.055 \\
Linear & COSTA-LP & 0.081 & 0.073 & 0.056 \\
\addlinespace[0.25em]
Nonlinear & Independent & 0.228 & 0.239 & 0.239 \\
Nonlinear & RBSD & 0.271 & 0.255 & 0.250 \\
Nonlinear & Cluster & 0.166 & 0.166 & 0.168 \\
Nonlinear & Static-OCD & 0.182 & 0.173 & 0.168 \\
Nonlinear & COSTA & 0.078 & 0.071 & 0.070 \\
Nonlinear & COSTA-LP & 0.110 & 0.091 & 0.082 \\
\addlinespace[0.25em]
Demand & Independent & 0.126 & 0.125 & 0.123 \\
Demand & RBSD & 0.224 & 0.166 & 0.143 \\
Demand & Cluster & 0.196 & 0.194 & 0.188 \\
Demand & Static-OCD & 0.210 & 0.197 & 0.193 \\
Demand & COSTA & 0.061 & 0.056 & 0.051 \\
Demand & COSTA-LP & 0.080 & 0.062 & 0.047 \\
\bottomrule
\end{tabular}

%% file: tables/tab_movielens_block_rmse.tex
\begin{tabular}{llrrr}
\toprule
Model & Design & 4 & 8 & 16 \\
\midrule
Linear & Independent & 0.189 & 0.185 & 0.182 \\
Linear & RBSD & 0.216 & 0.198 & 0.188 \\
Linear & Cluster & 0.172 & 0.141 & 0.129 \\
Linear & Static-OCD & 0.197 & 0.159 & 0.138 \\
Linear & COSTA & 0.108 & 0.102 & 0.094 \\
Linear & COSTA-LP & 0.121 & 0.100 & 0.083 \\
\addlinespace[0.25em]
Nonlinear & Independent & 0.231 & 0.235 & 0.241 \\
Nonlinear & RBSD & 0.270 & 0.256 & 0.249 \\
Nonlinear & Cluster & 0.192 & 0.185 & 0.169 \\
Nonlinear & Static-OCD & 0.211 & 0.200 & 0.180 \\
Nonlinear & COSTA & 0.125 & 0.111 & 0.098 \\
Nonlinear & COSTA-LP & 0.143 & 0.117 & 0.098 \\
\addlinespace[0.25em]
Demand & Independent & 0.124 & 0.123 & 0.121 \\
Demand & RBSD & 0.224 & 0.164 & 0.140 \\
Demand & Cluster & 0.241 & 0.217 & 0.205 \\
Demand & Static-OCD & 0.255 & 0.233 & 0.209 \\
Demand & COSTA & 0.108 & 0.108 & 0.088 \\
Demand & COSTA-LP & 0.133 & 0.098 & 0.076 \\
\bottomrule
\end{tabular}

%% file: tables/tab_cluster_kro_movielens.tex
\begin{tabular}{llrrr}
\toprule
Model & Design & Bias & Std. & RMSE \\
\midrule
Linear & Cluster & -0.119 & 0.124 & 0.172 \\
Linear & COSTA & -0.031 & 0.104 & 0.108 \\
Linear & COSTA cluster no-kro & -0.152 & 0.284 & 0.321 \\
Linear & COSTA cluster kro & -0.054 & 0.204 & 0.211 \\
\addlinespace[0.25em]
Nonlinear & Cluster & -0.149 & 0.121 & 0.192 \\
Nonlinear & COSTA & -0.053 & 0.113 & 0.125 \\
Nonlinear & COSTA cluster no-kro & -0.164 & 0.305 & 0.346 \\
Nonlinear & COSTA cluster kro & -0.057 & 0.209 & 0.216 \\
\addlinespace[0.25em]
Demand & Cluster & -0.200 & 0.135 & 0.241 \\
Demand & COSTA & -0.023 & 0.106 & 0.108 \\
Demand & COSTA cluster no-kro & -0.115 & 0.293 & 0.314 \\
Demand & COSTA cluster kro & -0.025 & 0.212 & 0.213 \\
\bottomrule
\end{tabular}

%% file: tables/tab_estimator_movielens.tex
\begin{tabular}{lllrrr}
\toprule
Model & Estimator & Design & Bias & Std. & RMSE \\
\midrule
Linear & HT & COSTA & -0.024 & 0.171 & 0.173 \\
Linear & HT & COSTA-LP & -0.032 & 0.140 & 0.143 \\
Linear & Hájek & COSTA & -0.030 & 0.097 & 0.102 \\
Linear & Hájek & COSTA-LP & -0.039 & 0.092 & 0.100 \\
Linear & DIM & COSTA & -0.086 & 0.049 & 0.099 \\
Linear & DIM & COSTA-LP & -0.101 & 0.048 & 0.111 \\
\addlinespace[0.25em]
Nonlinear & HT & COSTA & -0.049 & 0.184 & 0.190 \\
Nonlinear & HT & COSTA-LP & -0.069 & 0.146 & 0.161 \\
Nonlinear & Hájek & COSTA & -0.055 & 0.096 & 0.111 \\
Nonlinear & Hájek & COSTA-LP & -0.076 & 0.090 & 0.117 \\
Nonlinear & DIM & COSTA & -0.106 & 0.051 & 0.117 \\
Nonlinear & DIM & COSTA-LP & -0.129 & 0.048 & 0.138 \\
\addlinespace[0.25em]
Demand & HT & COSTA & -0.020 & 0.161 & 0.162 \\
Demand & HT & COSTA-LP & -0.033 & 0.126 & 0.130 \\
Demand & Hájek & COSTA & -0.025 & 0.105 & 0.108 \\
Demand & Hájek & COSTA-LP & -0.036 & 0.092 & 0.098 \\
Demand & DIM & COSTA & -0.064 & 0.054 & 0.084 \\
Demand & DIM & COSTA-LP & -0.073 & 0.054 & 0.091 \\
\bottomrule
\end{tabular}